\documentclass[11pt,a4paper]{article}

\usepackage[margin=2.5cm]{geometry}
\usepackage{amsmath,amssymb,amsthm}
\usepackage{enumitem}
\usepackage{booktabs}
\usepackage{array}
\usepackage{hyperref}
\hypersetup{
  hidelinks,
  pdftitle={Relative Prime Factorization and Finite-State Presentations under Fixed Finite-Monoid Observation},
  pdfauthor={Takayuki Kuriyama}
}
\usepackage{xurl}

\newtheorem{theorem}{Theorem}[section]
\newtheorem{lemma}[theorem]{Lemma}
\newtheorem{proposition}[theorem]{Proposition}
\newtheorem{corollary}[theorem]{Corollary}
\newtheorem{definition}[theorem]{Definition}
\newtheorem{example}[theorem]{Example}
\newtheorem{remark}[theorem]{Remark}
\newtheorem{question}[theorem]{Question}

\newcommand{\eps}{\varepsilon}
\newcommand{\D}{\mathcal D}
\newcommand{\Pcal}{\mathcal P}
\newcommand{\Vcal}{\mathcal V}
\newcommand{\Qcal}{\mathcal Q}
\newcommand{\Fact}{\operatorname{Fact}}
\newcommand{\Spec}{\operatorname{Spec}}
\newcommand{\Val}{\operatorname{Val}}
\newcommand{\Corr}{\operatorname{Corr}}
\newcommand{\Pleon}{\operatorname{Pleon}}
\newcommand{\fst}{\operatorname{fst}}
\newcommand{\lst}{\operatorname{lst}}

\newcommand{\Acc}{\operatorname{Acc}}
\newcommand{\Nat}{\mathbb N}
\newcommand{\Sat}{\operatorname{Sat}}
\newcommand{\Unsat}{\operatorname{Unsat}}
\newcommand{\Ex}{\operatorname{Ex}}

\newcommand{\CS}{\mathcal{C}\!\mathcal{S}}
\newcommand{\Can}{\operatorname{Can}}

\title{Relative Prime Factorization and Finite-State Presentations\\under Fixed Finite-Monoid Observation}
\author{Takayuki Kuriyama\\
\small Independent Researcher, Tokyo, Japan\\
\small \texttt{growup.kuriyama@gmail.com}}
\date{September 2, 2026}

\begin{document}
\maketitle

\begin{abstract}
Let $L\subseteq\Sigma^*$ and fix a morphism $h:\Sigma^*\to M$ into a finite monoid.  We study exact factorization and canonical presentation in the relative syntactic congruence $\theta_{L,h}:=\equiv_L\cap\ker h$.

Our main result separates unique factorization from finite direct presentation.  We construct an explicit, exhaustively computer-checked $36$-element quotient in which every live non-unit relative class has a unique exact prime factorization, while the valid prime-return rules contain the infinite family
\[
  [ab]\to[a][baaa]^m[b]\qquad(m\ge0).
\]
Hence unique factorization does not imply the finite relative presentation property (FRP), even for a finite quotient.  The witness uses the trivial language $\Sigma^+$, showing that the obstruction is already a finite-monoid phenomenon.  We then lift the same defect into the nonregular context-free language
\[
  L^{\mathrm{wrap}}=\{c^nwd^n:n\ge0,\ w\in\{a,b\}^+\}.
\]
For a finite observer extending the $36$-element one, this pair has an infinite relative quotient, exactly $52$ relative primes, global exact unique factorization, FSRP, and failure of FRP.  Thus the finite-state separation is not an artifact of a finite quotient or of a disjoint-alphabet embedding.

We isolate the obstruction as tail under-saturation and introduce the finite-state relative presentation property (FSRP), in which the canonical valid right-hand-side languages are represented by their finite residual controllers.  We prove $\mathrm{FRP}\subsetneq\mathrm{FSRP}$ and give a structural decomposition of direct-rule infinitude.  For $h$-substitutable context-free languages with finite relative prime spectrum, every correct prime-return language is context-free; the unresolved step is whether factor-minimalization can force a nonregular valid-return language.  In particular, we are not aware of any finite-prime $\neg$FSRP example.

On a rigid branch, prime-target left-division determinism (PTLD) restores Clark-style cancellation phenomena and implies unique exact factorization, tail exactness, tail determinism, and a quadratic valid-rule bound.  A five-prime nonregular deterministic context-free example with a finite group observer satisfies PTLD while lying outside every fixed $k,\ell$-substitutable class of Yoshinaka.  Finally, for fixed $h$ we give a strong positive-data learner for the canonical PTLD presentation with polynomial-time hypothesis updates and an explicit finite characteristic sample, and a more general limit reconstruction of the canonical FSRP controller from any weakly behaviorally correct CFG-valued positive-data learner.
\end{abstract}

\section{Introduction}

Distributional learning of context-free languages is based on the idea that substrings that occur in the same environments can be treated alike.  Clark and Eyraud formalized this through syntactic distributions and substitutability, while Yoshinaka refined the comparison relation by bounded left and right contexts~\cite{ClarkEyraud2007,Yoshinaka2008}.  The fixed finite-monoid framework replaces bounded windows by a specified finite algebraic observation $h:\Sigma^*\to M$: two strings are compared only when they have the same $h$-type, and observed overlap then forces equality of full two-sided distributions.

The present paper studies a different but closely related question.  Once the relative syntactic congruence
\(\theta_{L,h}=\equiv_L\cap\ker h\)
has been fixed, what intrinsic multiplicative structure do its classes possess, and when does that structure admit a finite canonical grammar presentation?

Clark's strong-learning construction for substitutable languages introduced prime syntactic congruence classes and valid productions.  In that setting, every nonzero nonunit class has a unique prime factorization and only finitely many valid productions occur when the prime spectrum is finite~\cite{Clark2013Trees}.  The relative setting is more delicate.  The finite observation $h$ refines syntactic classes and destroys some of the cancellation phenomena used in Clark's proof.  Exact factorization may become nonunique, and even a finite relative quotient need not imply a finite direct valid-rule basis.

The answer begins with a separation.  Relative unique prime factorization (UF) need not produce a finite direct relative-prime presentation (FRP), even when the relative quotient itself is finite.  The failure is not factorization ambiguity: in our $36$-element witness every live class has exactly one exact prime factorization, while valid prime-return paths have unbounded under-saturated tails.  A useful feature of the separation is that it is already visible at the finite-monoid level.  The target language of the flagship witness is deliberately chosen to be the trivial language $\Sigma^+$.  Consequently its nonempty syntactic structure contributes no complexity: on nonempty words the relative congruence is simply the kernel of the observer.  Thus the failure is caused by the gap between exact fiber products and multiplication in a finite observer quotient, not by a complicated context-free target language.  A matching-wrapper construction later shows that the same under-saturation mechanism persists intrinsically inside a nonregular context-free language with infinite relative quotient: the wrapper contributes the unbounded syntactic geometry, while the middle $36$-element dynamics still forces infinitely many valid prime returns.  Finite-state relative presentation (FSRP) preserves these infinite structural families through finite residual controllers.  Thus the central line of the paper is
\[
  \mathrm{UF}\not\Rightarrow\mathrm{FRP}
  \quad\longrightarrow\quad
  \text{under-saturation}
  \quad\longrightarrow\quad
  \mathrm{FSRP}.
\]

Three levels must therefore be distinguished:
\begin{enumerate}[label=(\roman*)]
\item \emph{exact prime factorization}, where ``exact'' means equality of sets of strings, not merely equality in a quotient;
\item \emph{correct and valid prime returns}, which concern only multiplication in the relative quotient;
\item \emph{finite-state presentability}, which allows an infinite direct rule family to be represented by a canonical finite-state controller.
\end{enumerate}
This separation is also consonant with the syntactic-concept viewpoint, where CFG nonterminals and productions are interpreted through sets of strings and inclusion/product structure rather than only individual quotient elements~\cite{Clark2015SCL}.

\paragraph{Main contributions.}
The four principal results are as follows.
\begin{enumerate}[leftmargin=2.2em]
\item An explicit, exhaustively computer-checked $36$-element quotient has fifteen relative primes and unique exact factorization for all $35$ live non-unit classes, yet it admits the infinite valid family $[ab]\to[a][baaa]^m[b]$.  Hence $\mathrm{UF}\not\Rightarrow\mathrm{FRP}$, and a separate finite FRP/non-UF witness shows that UF and FRP are incomparable.
\item FRP is strictly contained in FSRP.  Beyond the finite $36$-element witness, the nonregular context-free language $L^{\mathrm{wrap}}=\{c^nwd^n:n\ge0,\ w\in\{a,b\}^+\}$ has infinite relative quotient, exactly $52$ relative primes, global exact UF, FSRP, and not FRP.  Thus finite-state compression is genuinely needed even when the ambient relative structure is infinite.  It remains open whether an $h$-substitutable context-free pair with finite prime spectrum can fail FSRP itself.
\item The saturation-defect decomposition isolates the source of direct-rule infinitude.  Saturated valid tails have a finite quantitative bound; every failure of FRP localizes to an infinite family of under-saturated tails in one semantic residual class.
\item PTLD---uniqueness of right division into prime observer targets---restores Clark-style path rigidity.  It implies UF, valid-tail exactness, valid-tail determinism, and the $q^2$ direct-rule bound, while remaining strictly weaker than factor cancellation.
\end{enumerate}

Two consequences complete the picture.  First, $L_0=\{a^nb^n:n\ge0\}^*$ shows that PTLD is not necessary, whereas a five-prime nonregular deterministic context-free language with a $C_2\times C_2$ observer satisfies PTLD but is not $k,\ell$-substitutable for any finite $k,\ell$.  Thus fixed finite-monoid typing goes strictly beyond the entire bounded-context hierarchy, not merely beyond ordinary substitutability.  Second, for each fixed $h$ the PTLD branch admits strong reconstruction of the direct canonical grammar with polynomial-time hypothesis updates and an explicit finite characteristic sample.  On the lexically anchored PTLD subbranch the cut-separation radius vanishes, yielding polynomial characteristic data in explicit canonical grammar size; the same five-prime separation witness belongs to this subbranch.  Arbitrary compact CFG size cannot control canonical thickness polynomially, so transfer to external grammar representations requires additional restrictions.  The FSRP branch admits computable strong reconstruction of the canonical residual controller from any weakly behaviorally correct CFG-valued positive-data learner.  Prime existence, finite residual splitting, and minimality of the valid basis provide the structural infrastructure for these results rather than separate headline claims.

\section{Related work and positioning}
\label{sec:related}

The point of departure is Clark's canonical prime grammar for substitutable context-free languages, where syntactic congruence classes are factored into primes and valid productions provide a canonical strong-learning target~\cite{Clark2013Trees}.  Yoshinaka's $k,\ell$-substitutability weakens the ordinary condition by protecting the substituted middle with fixed boundary words $u\in\Sigma^k$ and $v\in\Sigma^\ell$: if two strings $uy_1v$ and $uy_2v$ occur in one common outer context, then they must remain interchangeable in every outer context~\cite{Yoshinaka2008}.  The case $k=\ell=0$ is ordinary substitutability.  Clark's later SCL-based constructions pursue a different finite-presentation principle: semantic prime sequences are ordered by inclusion, and finite canonical grammars are obtained by selecting maximal sequences under Noetherian-type hypotheses~\cite{Clark2015Canonical}.  In Clark's ordinary substitutable setting, unique factorization, tail rigidity, and finite valid-rule presentation occur together; finite-monoid refinement separates these phenomena.  Our FSRP construction keeps the structurally valid prime sequences themselves and, when they are infinite, compresses the family by residual languages rather than replacing it with semantic maxima.

\begin{center}
\scriptsize
\setlength{\tabcolsep}{3pt}
\begin{tabular}{>{\raggedright\arraybackslash}p{0.13\textwidth}>{\raggedright\arraybackslash}p{0.16\textwidth}>{\raggedright\arraybackslash}p{0.16\textwidth}>{\raggedright\arraybackslash}p{0.16\textwidth}>{\raggedright\arraybackslash}p{0.22\textwidth}}
\toprule
 & Yoshinaka 2008 & Clark 2013 & Clark 2015 & This paper\\
\midrule
substitution guard & fixed $k,\ell$ boundary words & ordinary shared-context condition & semantic SCL structure & fixed finite-monoid type $h$\\
semantic carrier & guarded substrings & syntactic classes & SCL closed sets & relative syntactic classes\\
structural atoms & --- & primes & SCL primes and irreducibles & relative primes\\
finiteness mechanism & bounded-context substitutability & substitutable rigidity & Noetherian maxima & FRP or residual compression\\
strong target & language identification & prime CFG & SCL grammar & direct CFG or residual controller\\
\bottomrule
\end{tabular}
\end{center}

The bounded-context guard and the fixed-monoid guard are genuinely different restrictions.  Section~\ref{sec:strong-learning} gives a single nonregular deterministic context-free language $L^\star$ that is $h$-substitutable and PTLD for a four-element group observer, yet is not $k,\ell$-substitutable for any finite pair $(k,\ell)$.  Thus the fixed-$h$ PTLD branch is not contained in the union of Yoshinaka's finite-window classes.

The present fixed-$h$ condition specializes the author's earlier framework of relation substitutability~\cite{KuriyamaFixedH}.  The first version of that arXiv record introduced the recognizable-relation condition in 2014; its current version develops the fixed finite-monoid weak-learning framework.  Taking the auxiliary recognizable equivalence to be $\ker h$ gives the same substitution-safety condition.  The extra role of the fixed morphism here is structural: quotient multiplication and observer-type division support exact relative factorization, PTLD, residual splitting, and the FRP/FSRP hierarchy.  The novelty claimed here is therefore not fixed-$h$ substitutability itself, but exact factorization in $\theta_{L,h}$, PTLD, the FRP/FSRP and under-saturation theory, and the resulting strong structural targets.

\paragraph{Extended CFGs and regular right-hand sides.}
Grammars in which the right-hand side of a production is described by a regular language are classical, appearing under such names as extended context-free grammars and regular-right-part grammars~\cite{MadsenKristensen1976,LaLonde1977,LaLonde1979}.  Accordingly, the fact that a regular family of right-hand sides can be compiled into an ordinary CFG is not new.  FSRP uses this classical expressive mechanism in a different role.  For each intrinsic relative prime $P$, the right-hand-side language $\Val_P$ is determined canonically by quotient return and prime-irreducibility.  FSRP is precisely the condition that these structurally determined languages are regular.  The residual controller is then their canonical minimal deterministic partial realization; adjoining the empty residual as a dead state recovers the usual complete realization.  The examples below establish that this finite-state level is strictly broader than a finite direct rule list, including for a nonregular context-free target with infinite relative quotient, but they do not establish that FSRP itself can fail under the context-free finite-prime hypotheses studied here.  Thus the contribution is not extended-CFG expressiveness itself, but the relative factorization theory that determines which right-hand-side languages must be represented, explains why a finite direct list of rules may fail, and isolates regularity of their factor-minimal part as a separate structural question.

\paragraph{Prime decomposition of languages and codes.}
Prime decomposition under language concatenation has an independent history in formal-language theory; see, for example, Han et al.~\cite{HanEtAl2007} for decomposition of regular languages.  Our notion is more constrained: the factors are not arbitrary languages but live classes of a fixed relative syntactic congruence, and exact setwise factorization is studied simultaneously with multiplication and return paths in the associated quotient.  The factor-free property of the global valid-rule language also places one aspect of the construction in classical code theory.  We use ``infix code'' only in this standard combinatorial sense; for broader background on codes and automata see Berstel, Perrin, and Reutenauer~\cite{BerstelPerrinReutenauer2009}.

\paragraph{Grammatical inference.}
The learning section uses the characteristic-sample viewpoint of polynomial grammatical inference introduced by de la Higuera~\cite{Higuera1997}.  It is also related to Clark's congruence-based CFG learner with a minimally adequate teacher~\cite{Clark2010MAT}, Yoshinaka's multidimensional substitutability for MCFGs~\cite{Yoshinaka2011}, and the subsequent distributional learning work of Clark and Yoshinaka~\cite{ClarkYoshinaka2014}.  The present strong-learning contribution differs in its target: the learner is required to recover the canonical relative-prime presentation determined by the fixed observer.

Residual finite-state automata already provide canonical automata whose states are residual languages~\cite{DenisLemayTerlutte2002}.  We do not claim residual automata as a new construction.  Here they are applied to the canonical languages $\Val_P$ of valid relative-prime right-hand sides, so that an infinite structural branching family is retained with its prime skeleton rather than replaced by an arbitrary DFA description of the terminal language.

Coste and Nicolas develop a reduction-based canonical form for local substitutable languages and show polynomial identification in time and thick data~\cite{CosteNicolas2019}.  Their reductions may compete and yield incomparable reduced alternatives.  Our valid-rule reduction has a related structural flavor, but the relative setting can admit infinitely many irreducible valid alternatives; FSRP replaces finite enumeration of those alternatives by finite-state residual compression.  Our prime-irreducibility means the absence of a prime-valued proper contiguous quotient block, rather than uniqueness of a reduction normal form in a parsing graph.

The residual-automaton construction and the Gold-style finite-automaton enumeration used later are standard ingredients.  The new claims concern the relative structural languages they represent, the UF--FRP separation, the under-saturation characterization, and the canonical structural-learning consequences.

\section{Relative syntactic classes}
\label{sec:relative}

Let $L\subseteq\Sigma^*$ be a language.  The two-sided distribution of $x\in\Sigma^*$ is
\(\D_L(x):=\{(u,v)\in\Sigma^*\times\Sigma^*:uxv\in L\}.\)
The syntactic congruence is
\(x\equiv_L y \iff \D_L(x)=\D_L(y).\)
Fix a monoid morphism $h:\Sigma^*\to M$.  Unless stated otherwise $M$ is finite.  For later use, write
\(\Fact(L):=\{x\in\Sigma^*:\D_L(x)\neq\varnothing\}\)
for the set of factors of $L$.

\begin{definition}[$h$-substitutability]
The language $L$ is \emph{$h$-substitutable} if, for all $x,y\in\Sigma^*$,
\(h(x)=h(y) \quad\text{and}\quad \D_L(x)\cap\D_L(y)\neq\varnothing\)
imply
\(x\equiv_L y.\)
Equivalently, among words of the same $h$-type, sharing one accepting two-sided context forces equality of the full syntactic distribution.
\end{definition}

This is the fixed-morphism specialization of relation-substitutability obtained by taking the auxiliary recognizable equivalence to be $\ker h$~\cite{KuriyamaFixedH}.

\begin{definition}[Relative syntactic congruence]
Define
\(\theta_{L,h}:=\equiv_L\cap\ker h.\)
Thus
\(x\equiv_{L,h}y \iff x\equiv_L y\ \text{and}\ h(x)=h(y).\)
Write $[x]_{L,h}$, or simply $[x]$, for the corresponding class.
\end{definition}

\begin{lemma}[Type--context collapse]
\label{lem:type-context-collapse}
If $L$ is $h$-substitutable, $h(x)=h(y)$, and
$\D_L(x)\cap\D_L(y)\ne\varnothing$, then $x\theta_{L,h}y$.
In particular, if $uxv,uyv\in L$ and $h(x)=h(y)$, then
$x\theta_{L,h}y$.
\end{lemma}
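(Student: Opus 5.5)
The first assertion comes straight from unfolding the definitions. Suppose $L$ is $h$-substitutable, $h(x)=h(y)$, and $\D_L(x)\cap\D_L(y)\neq\varnothing$. The definition of $h$-substitutability then gives $x\equiv_L y$. The hypothesis already supplies $h(x)=h(y)$, that is, $(x,y)\in\ker h$. Since $\theta_{L,h}$ is defined as the intersection $\equiv_L\cap\ker h$, the pair $(x,y)$ belongs to both relations and therefore to $\theta_{L,h}$. No earlier result is needed beyond the definitions of $h$-substitutability and of the relative syntactic congruence.

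For the ``in particular'' clause, I will exhibit a shared context. Let $uxv\in L$ and $uyv\in L$. Then $(u,v)\in\D_L(x)$ and $(u,v)\in\D_L(y)$, since $\D_L(z)$ is exactly the set of pairs $(u,v)$ with $uzv\in L$. So $(u,v)\in\D_L(x)\cap\D_L(y)$, and this intersection is nonempty. Together with $h(x)=h(y)$, the first part of the lemma gives $x\,\theta_{L,h}\,y$.

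I see no real obstacle. The one point to handle with care is that the ``in particular'' clause still relies on the standing assumption that $L$ is $h$-substitutable; it is not an unconditional statement. Without substitutability, a shared context does not force $x\equiv_L y$. I will say this explicitly so the second clause is read under the hypothesis of the first.
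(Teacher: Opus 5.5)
Your proof is correct and takes the same approach as the paper, which simply notes that the lemma is immediate from the definitions of $h$-substitutability and of $\theta_{L,h}=\equiv_L\cap\ker h$. For the second clause, your observation that $(u,v)\in\D_L(x)\cap\D_L(y)$, together with your remark that substitutability is still assumed there, is exactly the intended unfolding.
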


\begin{proof}
Immediate from the definitions of $h$-substitutability and $\theta_{L,h}$.
\end{proof}

Since both $\equiv_L$ and $\ker h$ are monoid congruences, so is $\theta_{L,h}$.

\begin{definition}[Live and dead classes]
A relative class $X$ is \emph{live} if $\D_L(x)\neq\varnothing$ for $x\in X$, and \emph{dead} otherwise.
\end{definition}

Unlike the ordinary syntactic congruence, the relative congruence may have several dead classes, because dead words with different $h$-values cannot be identified.  For each $m\in M$ there is at most one dead class of $h$-type $m$, hence the number of dead relative classes is at most $|M|$.

\begin{definition}[Unit separation]
We say that $(L,h)$ is \emph{unit-separated} if
\[
  [\eps]_{L,h}=\{\eps\}.
  \tag{US}
\]
\end{definition}

Unit separation is exactly what is needed if lexical terminals are to be represented by non-unit prime classes.  It should be distinguished from cancellativity of the observation monoid.

\subsection{Setwise product versus quotient product}

If $X,Y\subseteq\Sigma^*$ are relative classes, define their exact setwise product by
\(X\cdot Y:=\{xy:x\in X,\ y\in Y\}.\)
Since $\theta_{L,h}$ is a congruence, the quotient product
\(X*Y:=[xy]_{L,h}\)
is well defined for any $x\in X,y\in Y$.  Always $X\cdot Y\subseteq X*Y$, but equality need not hold.

This distinction is fundamental.  Relative primality is defined using exact setwise equality, whereas Clark-style correct productions only require the quotient product to return to a prime class.  In ordinary syntactic notation Clark likewise uses $[XY]$ for the ambient congruence class containing the set product $XY$, and correct productions are based on the ambient class rather than equality of the set product~\cite{Clark2013Trees}.

\section{Relative primes and exact decomposition}

\begin{definition}[Relative prime]
A live relative class $P$ is \emph{prime} if it is non-unit and whenever
\(P=X\cdot Y\)
for relative classes $X,Y$, one of $X,Y$ is the unit class.  A live non-unit class that is not prime is \emph{composite}.
\end{definition}

For a nonempty class $X$, write
\(\ell(X):=\min\{|w|:w\in X\}.\)

\begin{lemma}[Shortest-word geometry of exact products]
\label{lem:shortest-word-cut}
Suppose $X=X_1\cdots X_k$ is an exact product of nonempty relative
classes.  Then $\ell(X)=\sum_{i=1}^k\ell(X_i)$.  Moreover, if
$w\in X$ has $|w|=\ell(X)$, then
$w=w_1\cdots w_k$ for some $w_i\in X_i$ with
$|w_i|=\ell(X_i)$ for every $i$.  In particular, every exact prime
factorization of $X$ occurs among the finitely many prime-labelled cuts
of any shortest word of $X$.
\end{lemma}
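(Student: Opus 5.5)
The argument is a direct length count, using only that the product $X=X_1\cdots X_k$ is exact, i.e.\ an equality of sets of strings. It has three steps: an upper bound on $\ell(X)$, a lower bound, and then reading off the lemma's refined claims from where the two bounds meet.

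\textbf{Upper bound.} For each $i$ choose $v_i\in X_i$ with $|v_i|=\ell(X_i)$; this is possible since each $X_i$ is nonempty. The concatenation $v_1\cdots v_k$ lies in $X_1\cdots X_k=X$. Hence $\ell(X)\le\sum_i\ell(X_i)$.

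\textbf{Lower bound.} Let $w\in X$ with $|w|=\ell(X)$. By exactness $w\in X_1\cdots X_k$, so $w=w_1\cdots w_k$ with $w_i\in X_i$. Since $|w_i|\ge\ell(X_i)$ for each $i$, we get $\ell(X)=|w|=\sum_i|w_i|\ge\sum_i\ell(X_i)$. This proves the equality $\ell(X)=\sum_i\ell(X_i)$.

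\textbf{Refined claims.} Both bounds now hold with equality. In the decomposition $w=w_1\cdots w_k$ from the lower bound, every inequality $|w_i|\ge\ell(X_i)$ must therefore be tight, so $|w_i|=\ell(X_i)$ for all $i$. This is exactly the second assertion, and it holds for every shortest word $w$ and every exact factorization.

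For the final assertion, take an exact prime factorization $X=P_1\cdots P_k$ and any shortest word $w$ of $X$. By the previous step, $w$ splits as $w_1\cdots w_k$ with $w_i\in P_i$. The cut positions are the partial sums of the lengths $|w_i|$, so the factorization is witnessed by a cut of $w$ into $k$ consecutive blocks, each lying in the prime class $P_i$.

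Each block $w_i$ determines its class $P_i=[w_i]$, because relative classes partition $\Sigma^*$. Hence the factorization is determined by the cut. A word of length $n$ has at most $2^{n-1}$ cuts into nonempty blocks. Blocks are nonempty here, since primes are non-unit and so no $P_i$ is the unit class; this also gives $k\le\ell(X)$ whenever the unit class is $\{\eps\}$. If the unit class is larger than $\{\eps\}$, then $\eps$ still lies only in the unit class, so every nonunit class avoids $\eps$ and the blocks remain nonempty. So there are finitely many prime-labelled cuts, and every exact prime factorization appears among them.

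\textbf{Main obstacle.} There is no real obstacle. The one point needing care is to use exactness, rather than the quotient product $*$, in the lower bound: it is exactness that guarantees a shortest word of $X$ decomposes into a product of members of the $X_i$. The other point to check is the nonemptiness of the blocks, which is what makes the set of cuts finite.
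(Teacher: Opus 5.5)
Your proof is correct and follows the paper's argument: concatenating shortest representatives gives the upper bound, decomposing a shortest word of $X$ through the exact product gives the lower bound, and equality forces $|w_i|=\ell(X_i)$ in each component. Your added remarks, that prime blocks are nonempty because primes are non-unit and that each block determines its prime label, only make explicit what the paper leaves implicit for the final ``finitely many prime-labelled cuts'' clause.
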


\begin{proof}
Every product word has length at least $\sum_i\ell(X_i)$, while
concatenating shortest representatives attains that value.  Hence
$\ell(X)=\sum_i\ell(X_i)$.  Any factorization of a shortest $w\in X$
through the exact product must attain equality in each component,
which proves the second assertion.
\end{proof}

\begin{corollary}[Finite candidate principle]
\label{cor:finite-candidates}
Every exact prime factorization of a live class $X$ occurs as a prime-labelled cut of one fixed shortest word of $X$.  In particular, every such factorization has length at most $\ell(X)$, and there are at most $2^{\ell(X)-1}$ cut candidates when $\ell(X)\ge1$.
\end{corollary}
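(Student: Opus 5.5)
The corollary follows directly from Lemma~\ref{lem:shortest-word-cut}. Fix once and for all a shortest word $w\in X$, so $|w|=\ell(X)$. Let $X=P_1\cdots P_k$ be any exact prime factorization. The lemma yields $w=w_1\cdots w_k$ with $w_i\in P_i$ and $|w_i|=\ell(P_i)$. The positions $|w_1|,|w_1w_2|,\dots$ then give a cut of $w$ whose blocks carry the labels $[w_i]=P_i$. This holds because relative classes partition $\Sigma^*$, so the class of $w_i$ is exactly $P_i$. Consequently the factorization is determined by a cut of this single fixed word $w$, and its labels are read off directly from the blocks. Different factorizations therefore correspond to different cuts, which gives the injectivity needed for counting.

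For the length bound, note that each prime $P_i$ is a non-unit class, so $\eps\notin P_i$. Indeed, if $\eps\in P_i$ then $P_i=[\eps]$, which is the unit class. Hence $\ell(P_i)\ge1$, and summing gives $k\le\sum_i\ell(P_i)=\ell(X)$. For the count, every block is nonempty, so a cut of $w$ is determined by a subset of the $\ell(X)-1$ interior positions. This yields at most $2^{\ell(X)-1}$ candidates whenever $\ell(X)\ge1$.

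The only step needing care is the identification of the block classes with the $P_i$, that is, the fact that $w_i\in P_i$ forces $[w_i]=P_i$. A second point is excluding the case $\ell(X)=0$. Since $X$ is live and non-unit whenever it has a prime factorization (a product of at least one prime), $\eps\notin X$, so $\ell(X)\ge1$. Neither step is a real obstacle.
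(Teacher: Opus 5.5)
Your proof is correct and follows the paper's route. The paper treats this corollary as an immediate consequence of Lemma~\ref{lem:shortest-word-cut}, whose final ``in particular'' clause already states the cut principle. You additionally spell out two routine details the paper leaves implicit: that $\ell(P_i)\ge1$ for non-unit primes, which gives $k\le\ell(X)$, and the $2^{\ell(X)-1}$ count of cuts into nonempty blocks.
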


\begin{theorem}[Existence of exact relative prime decompositions]
\label{thm:prime-existence}
Every live non-unit relative class is an exact setwise product of one or more relative primes.
\end{theorem}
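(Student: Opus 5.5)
We argue by strong induction on $\ell(X)$, the length of a shortest word of the live non-unit class $X$. The plan is to decompose a composite class into two non-unit factors, check that both factors are again live and non-unit with strictly smaller $\ell$, and then splice the inductive factorizations using associativity of the setwise product.

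First, we dispose of the case $\ell(X)=0$. Then $\eps\in X$, so $X=[\eps]$ is the unit class, which is excluded. Hence $\ell(X)\ge1$ for every live non-unit class.

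Now suppose the claim holds for all live non-unit classes of smaller $\ell$. If $X$ is prime we are done with a factorization of length one. Otherwise $X$ is composite, so $X=Y\cdot Z$ for relative classes $Y,Z$, neither of which is the unit class. The following steps carry out the induction.
\begin{enumerate}[label=(\alph*)]
\item Both factors are nonempty, being classes. By Lemma~\ref{lem:shortest-word-cut}, $\ell(X)=\ell(Y)+\ell(Z)$.
\item Since $Y$ is not the unit class, $\eps\notin Y$, so $\ell(Y)\ge1$. Likewise $\ell(Z)\ge1$. Therefore $\ell(Y),\ell(Z)<\ell(X)$.
\item Both factors are live. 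Pick any $y\in Y$ and $z\in Z$. Then $yz\in X$ is live, so some $(u,v)$ satisfies $uyzv\in L$. This gives $(u,zv)\in\D_L(y)$ and $(uy,v)\in\D_L(z)$.
\item By the induction hypothesis, $Y=P_1\cdots P_r$ and $Z=Q_1\cdots Q_s$ as exact setwise products of relative primes.
\item Setwise concatenation of subsets of $\Sigma^*$ is associative. Hence $X=P_1\cdots P_rQ_1\cdots Q_s$ exactly, and the induction is complete.
\end{enumerate}

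The only point needing care is the well-foundedness of the recursion: splitting a composite class must not produce a factor equal to $X$ or the unit in disguise. This is exactly what Lemma~\ref{lem:shortest-word-cut} together with the non-unit condition guarantees. The length of the resulting factorization is bounded by $\ell(X)$, consistent with Corollary~\ref{cor:finite-candidates}. No hypothesis such as $h$-substitutability or (US) is needed; the argument uses only that $\theta_{L,h}$ is a congruence, so that classes are well defined.
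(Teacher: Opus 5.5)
Your proof is correct and follows essentially the same route as the paper: induction on $\ell(X)$, splitting a composite class as $X=Y\cdot Z$, using liveness of the factors, the additivity $\ell(X)=\ell(Y)+\ell(Z)$ from Lemma~\ref{lem:shortest-word-cut} with $\ell(Y),\ell(Z)\ge1$, and concatenating the inductive prime factorizations. You spell out a few small points the paper leaves implicit, namely that $\ell(X)\ge1$, the explicit contexts showing that $Y$ and $Z$ are live, and the associativity of setwise concatenation.
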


\begin{proof}
Induct on $\ell(X)$.  If $X$ is prime there is nothing to prove.  Otherwise $X=Y\cdot Z$ with non-unit relative classes $Y,Z$.  Since $X$ is live, so are $Y$ and $Z$: if $yz$ occurs as a factor of a word of $L$, then both $y$ and $z$ occur as factors.  By the preceding lemma,
\(\ell(X)=\ell(Y)+\ell(Z),\)
and non-unitness gives $1\le\ell(Y),\ell(Z)<\ell(X)$.  Apply the induction hypothesis to $Y$ and $Z$ and concatenate their prime decompositions.
\end{proof}

\begin{remark}
No substitutability, cancellation, or finiteness assumption is used in this existence theorem.  This parallels the existence half of Clark's prime-factorization lemma but is valid at the level of an arbitrary $L$-compatible monoid congruence.
\end{remark}

\section{Correct, pleonastic, and valid productions}
\label{sec:valid}

Let $\Pcal$ be the set of live relative primes.  For a nonempty prime sequence
\(\alpha=P_1\cdots P_k\in\Pcal^+\)
write
\[
  \overline\alpha:=P_1\cdot\ldots\cdot P_k\subseteq\Sigma^*,
  \qquad
  [\overline\alpha]:=P_1*\cdots*P_k.
\]
For the empty prime sequence we set
\[
  \overline\eps:=\{\eps\},
  \qquad
  [\overline\eps]:=[\eps]_{L,h}.
\]

\begin{definition}[Correct branching production]
For $k\ge2$, a rule
\(P\to P_1\cdots P_k\)
is \emph{correct} if $P\in\Pcal$ and
\([\overline\alpha]=P.\)
Equivalently, $\overline\alpha\subseteq P$.
\end{definition}

This is the precise relative analogue of Clark's correct branching production $[\bar\alpha]\to\alpha$~\cite{Clark2013Trees}.

\begin{definition}[Pleonastic and valid]
A prime sequence $\alpha=P_1\cdots P_k$ is \emph{pleonastic} if it has a proper contiguous block
\(\beta=P_i\cdots P_j, \qquad j-i+1\ge2,\)
whose quotient product $[\overline\beta]$ is prime.  A correct production is \emph{valid} if its right-hand side is not pleonastic.
\end{definition}

For each prime $P$, put
\(\Val_P:=\{\alpha\in\Pcal^{\ge2}:P\to\alpha\text{ is valid}\}\)
and \(\Val:=\bigcup_{P\in\Pcal}\Val_P.\)
We identify a production $P\to\alpha$ with the pair $(P,\alpha)$ and write
\(\Vcal:=\{(P,\alpha):P\in\Pcal,\ \alpha\in\Val_P\}.\)
Thus, when $\Pcal$ is finite, $|\Vcal|=\sum_{P\in\Pcal}|\Val_P|$.

\begin{lemma}[Valid-rule reduction]
Every correct production
\(P\to\alpha\)
is derivable using valid productions only:
\(P\Rightarrow^*\alpha.\)
\end{lemma}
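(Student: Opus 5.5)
We argue by strong induction on the length $k\ge2$ of the right-hand side $\alpha=P_1\cdots P_k$ of a correct production $P\to\alpha$. The statement to prove is: for every prime $P$ and every $\alpha\in\Pcal^{\ge2}$ with $[\overline\alpha]=P$, the derivation $P\Rightarrow^*\alpha$ exists in the grammar whose productions are exactly those of $\Vcal$.

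If $\alpha$ is not pleonastic, then $P\to\alpha$ is itself valid and there is nothing to prove. This covers $k=2$ automatically, since a length-two sequence has no proper contiguous block of length at least two.

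Suppose instead that $\alpha$ is pleonastic. Choose a proper contiguous block $\beta=P_i\cdots P_j$ with $j-i+1\ge2$ such that $Q:=[\overline\beta]$ is prime. Write $\alpha=\gamma\beta\delta$ and set $\alpha':=\gamma Q\delta\in\Pcal^+$. Since $\beta$ is proper and has length at least $2$, the sequence $\alpha'$ has length $k-(j-i)$, and this satisfies $2\le k-(j-i)<k$. The lower bound holds because properness leaves at least one prime outside $\beta$, alongside $Q$.

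Two correctness checks then follow from the fact that $\theta_{L,h}$ is a congruence, so that the quotient product $*$ is associative and well defined.

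First, $[\overline{\alpha'}]=[\overline\gamma]*Q*[\overline\delta]=[\overline\gamma]*[\overline\beta]*[\overline\delta]=[\overline\alpha]=P$. Hence $P\to\alpha'$ is a correct production of length less than $k$, and the induction hypothesis gives $P\Rightarrow^*\alpha'$.

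Second, $Q\to\beta$ is correct, because $Q\in\Pcal$ is prime, $|\beta|\ge2$, and $[\overline\beta]=Q$ by definition. Its length $j-i+1$ is also less than $k$, so the induction hypothesis gives $Q\Rightarrow^*\beta$.

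Composing the two derivations yields $P\Rightarrow^*\gamma Q\delta\Rightarrow^*\gamma\beta\delta=\alpha$. This completes the induction.

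The only delicate point is bookkeeping: we must check that both auxiliary productions have length at least $2$, so that they are correct branching productions in the sense of the definition, and strictly shorter than $k$, so that the induction applies. Both facts follow from $\beta$ being proper and of length at least $2$. No exact setwise equality is required anywhere, because correctness and pleonasm are defined purely through quotient products.
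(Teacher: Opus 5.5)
Your proof is correct and follows the paper's own argument: induction on $|\alpha|$, contracting a proper prime-valued block $\beta$ to $Q=[\overline\beta]$, then applying the induction hypothesis to both $P\to\gamma Q\delta$ and $Q\to\beta$. Your explicit check that $\gamma Q\delta$ still has length at least two, so that it is a genuine branching production, is a detail the paper leaves implicit.
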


\begin{proof}
Induct on $|\alpha|$.  If the rule is valid, apply it directly.  Otherwise write
\(\alpha=\gamma\beta\delta\)
where $\beta$ is a proper contiguous block of length at least two and
\(Q:=[\overline\beta]\)
is prime.  Since quotient multiplication is associative and $\overline\beta\subseteq Q$, both
\(P\to\gamma Q\delta, \qquad Q\to\beta\)
are correct, and both right-hand sides are shorter than $\alpha$.  Apply the induction hypothesis to both rules.
\end{proof}

The proof is the same compression principle as Clark's reduction of correct rules to valid rules, but does not use uniqueness of prime factorization.

\subsection{Structural minimality of valid rules}

\begin{definition}[Structurally complete correct prime system]
A set $R$ of correct productions over $\Pcal$ is \emph{structurally complete} if every correct production $P\to\alpha$ satisfies
\(P\Rightarrow_R^*\alpha.\)
\end{definition}

\begin{theorem}[Minimal valid-basis theorem]
\label{thm:minimal-valid-basis}
The set $\Vcal$ of all valid productions is the least structurally complete correct prime-production system.
\end{theorem}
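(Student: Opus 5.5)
The statement has two halves, and I would prove them in order: first that $\Vcal$ is itself a structurally complete correct system, then that every structurally complete correct system contains $\Vcal$. Together these say that $\Vcal$ is the least such system under inclusion.

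For the first half, every valid production is correct by definition, so $\Vcal$ is a correct prime-production system. The Valid-rule reduction lemma states that every correct production $P\to\alpha$ satisfies $P\Rightarrow^*\alpha$ using valid productions only, that is, $P\Rightarrow_{\Vcal}^*\alpha$. This is exactly structural completeness of $\Vcal$, so nothing further is needed here.

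For the second half, let $R$ be any structurally complete correct system and let $(P,\alpha)\in\Vcal$ with $\alpha=P_1\cdots P_k$ and $k\ge2$. Since $P\to\alpha$ is correct, completeness gives a derivation $P\Rightarrow_R^*\alpha$. This derivation has at least one step, because $\alpha$ has length at least two and so $\alpha\ne P$. Consider the derivation tree: its root is $P$ and its yield is $\alpha$. Every production in $R$ is branching (right-hand side of length at least $2$) and correct, and every node of the tree is labelled by a prime. The root therefore uses some rule $P\to\beta_1\cdots\beta_m$ in $R$ with $m\ge2$. Each child $\beta_i$ derives a contiguous block $\alpha_i$ of $\alpha$, with $\alpha=\alpha_1\cdots\alpha_m$.

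I then show by induction on the tree that every node labelled $Q$ yields a block $\gamma$ with $[\overline\gamma]=Q$. At a leaf $\gamma=Q$ is a single prime. At an internal node, correctness of the rule used together with associativity of $*$ gives $[\overline\gamma]=Q$.

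Now suppose some $\alpha_i$ had length at least two. Then $\alpha_i$ is a proper contiguous block of $\alpha$: it is proper because $m\ge2$ and every other $\alpha_j$ is nonempty. Its quotient product $[\overline{\alpha_i}]=\beta_i$ is prime, so $\alpha$ would be pleonastic, contradicting validity. Hence each $\alpha_i$ is a single prime, which forces $\alpha_i=\beta_i$, $m=k$, and $\beta=\alpha$. Therefore $(P,\alpha)\in R$, and so $\Vcal\subseteq R$.

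The main obstacle is the bookkeeping in the second half. One must make sure that derivations in $R$ never contain unit or length-one rules that could let a nonterminal derive a single prime without being that prime; otherwise the step from "$\alpha_i$ is a single prime" to $\alpha_i=\beta_i$ fails. The fact to confirm is that correct productions are defined only for $k\ge2$, so every rule strictly increases length and a nonterminal $Q$ derives a one-symbol string only trivially, as itself. A secondary point is that the leaves of the tree must be primes rather than terminals; this holds because the systems considered are prime-production systems over $\Pcal$ and "$\Rightarrow^*\alpha$" refers to sentential forms over $\Pcal$.
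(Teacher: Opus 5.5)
Your proposal is correct and follows the paper's proof essentially step for step. Structural completeness of $\Vcal$ comes from the valid-rule reduction lemma. Minimality comes from the observation that, in any $R$-derivation tree for a valid $\alpha$, a non-root node spanning a block of length at least two would be a proper prime-valued block, so the root must apply $P\to\alpha$ itself. Your explicit check that correct rules have length at least two, so that a one-symbol yield forces $\alpha_i=\beta_i$, is a bookkeeping point the paper leaves implicit.
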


\begin{proof}
The valid-rule reduction lemma shows that $\Vcal$ is structurally complete.  Conversely, let $R$ be structurally complete and let $P\to\alpha$ be valid.  Consider an $R$-derivation tree from $P$ to the prime word $\alpha$.  Any internal non-root prime node $Q$ spans a contiguous block $\beta$ of the prime frontier.  By induction down the derivation subtree rooted at $Q$, correctness of its productions implies $[\overline\beta]=Q$.  Prime symbols here are grammar nonterminals; lexical leaves are terminal symbols and are not counted as internal prime nodes.  If $|\beta|\ge2$, then $\beta$ is a proper prime-valued contiguous block of $\alpha$, contradicting validity.  Hence no nontrivial internal branching node can occur below the root.  The root must therefore use the rule $P\to\alpha$ directly, so $P\to\alpha\in R$.  Thus $\Vcal\subseteq R$ for every structurally complete correct system $R$.
\end{proof}

\begin{corollary}
The valid right-hand-side language $\Val$ is factor-free: no valid word is a proper contiguous factor of another valid word.
\end{corollary}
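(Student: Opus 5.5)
We argue by contradiction. Suppose $\beta,\alpha\in\Val$ and $\beta$ is a proper contiguous factor of $\alpha$. Write $\alpha=\gamma\beta\delta$ with $\gamma\delta\neq\eps$. By definition of $\Val$, there are primes $P,Q$ with $P\to\alpha$ valid and $Q\to\beta$ valid.

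The key step uses correctness of $Q\to\beta$, which gives $[\overline\beta]=Q$, and $Q\in\Pcal$ is prime. Since $\beta\in\Pcal^{\ge2}$, the word $\beta$ has length at least two. It is therefore a proper contiguous block of $\alpha$, of length at least two, whose quotient product is prime. Hence $\alpha$ is pleonastic, so $P\to\alpha$ is not valid. This contradicts $\alpha\in\Val_P$.

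The only points to check are these.
\begin{enumerate}[label=(\roman*)]
\item \emph{Properness of the block.} This means $\beta\neq\alpha$ as a position block. It holds because $\gamma\delta\neq\eps$, and it is independent of whether $P=Q$.
\item \emph{The length bound.} The requirement $|\beta|\ge2$ is built into $\Val_Q\subseteq\Pcal^{\ge2}$.
\end{enumerate}

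No step is a real obstacle. The only subtlety is that the definition of pleonastic refers to the quotient product $[\overline\beta]$ being prime, not to exact equality. This is exactly what correctness supplies, so no factorization uniqueness is needed.

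Alternatively, the statement follows from Theorem~\ref{thm:minimal-valid-basis}, but the direct argument above is shorter.
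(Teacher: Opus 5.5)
Your proof is correct and is the paper's argument: by correctness, the embedded valid word has length at least two and a prime quotient product. It is therefore a proper prime-valued contiguous block, which makes the larger word pleonastic.
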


\begin{proof}
If a valid word properly contained another valid word as a contiguous factor, that factor would have prime quotient and length at least two, making the larger word pleonastic.
\end{proof}

In code-theoretic terminology, the global valid-rule language $\Val$ is therefore an \emph{infix code}~\cite{ItoJurgensenShyrThierrin1991}.

\section{Canonical generation from a finite direct basis}

We first formulate the construction for an arbitrary monoid congruence $\theta\subseteq\equiv_L$.  We write $\mathrm{FRP}(L,\theta)$ and $\rho(L,\theta)$ for the general congruence-level notions; when $\theta=\theta_{L,h}$ we abbreviate them by $\mathrm{FRP}(L,h)$ and $\rho(L,h)$.  The same convention will be used below for FSRP and its state complexity.

\begin{definition}[Accepting classes]
An $L$-compatible congruence class $X$ is \emph{accepting} if $X\subseteq L$.  Let
\(\Acc_\theta(L):=\{X:X\subseteq L\}.\)
\end{definition}

Because $\theta\subseteq\equiv_L$, each class is either entirely contained in $L$ or disjoint from $L$.

\begin{definition}[Finite relative presentation property]
The pair $(L,\theta)$ has the \emph{finite relative presentation property} (FRP) if
\(|\Pcal|<\infty \qquad\text{and}\qquad |\Vcal|<\infty.\)
\end{definition}

When $|\Pcal|<\infty$, define the \emph{validity radius}
\(\rho(L,\theta):=\sup\{|\alpha|:\alpha\in\Val_P\text{ for some }P\in\Pcal\}.\)
Since the prime alphabet is finite, $\mathrm{FRP}(L,\theta)\iff\rho(L,\theta)<\infty$.

\begin{theorem}[Finite direct relative-prime presentation]
\label{thm:direct-presentation}
Suppose
\begin{enumerate}[label=(\roman*)]
\item $\theta\subseteq\equiv_L$;
\item $[\eps]_\theta=\{\eps\}$;
\item $\Acc_\theta(L)$ is finite;
\item $(L,\theta)$ satisfies FRP.
\end{enumerate}
Then there is a finite CFG $G^{\mathrm{rp}}_{L,\theta}$ such that for every live prime $P$,
\(L(G^{\mathrm{rp}}_{L,\theta},P)=P,\)
and
\(L(G^{\mathrm{rp}}_{L,\theta})=L.\)
\end{theorem}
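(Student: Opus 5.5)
The plan is to build $G^{\mathrm{rp}}_{L,\theta}$ directly from the finite data provided by hypotheses (iii) and (iv). Its nonterminals are a fresh start symbol $S$ and one nonterminal for each live prime $P\in\Pcal$. This set is finite by FRP. There are three kinds of productions.
\begin{enumerate}[label=(\alph*)]
\item \emph{Lexical rules} $P\to a$ for every letter $a\in\Sigma$ with $a\in P$.
\item \emph{Branching rules}: every valid production $P\to\alpha$ with $(P,\alpha)\in\Vcal$. There are finitely many by FRP.
\item \emph{Start rules}. For each nonempty live accepting class $X\in\Acc_\theta(L)$, we fix one exact prime decomposition $X=P_1\cdots P_k$ and add $S\to P_1\cdots P_k$. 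Such a decomposition exists by Theorem~\ref{thm:prime-existence}, which by the subsequent remark holds for an arbitrary $L$-compatible congruence. We also add $S\to\eps$ if $\eps\in L$.
\end{enumerate}
Finiteness of $\Acc_\theta(L)$ makes the start rules finite. Every nonempty class contained in $L$ is live, and by (ii) it is non-unit, so the decompositions are well defined.

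\emph{Soundness}, namely $L(G,P)\subseteq P$, follows by induction on derivation height. A lexical rule produces a letter lying in $P$. For a branching rule $P\to P_1\cdots P_k$, the induction hypothesis puts the derived word in $\overline\alpha=P_1\cdots P_k$. Correctness gives $\overline\alpha\subseteq[\overline\alpha]=P$. Similarly, $S$ derives only $\eps$ (when $\eps\in L$) or words of some $P_1\cdots P_k=X\subseteq L$.

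The main step is \emph{completeness}: every $w\in P$ is derivable from $P$. Since $P$ is prime, $P$ cannot be decomposed into smaller classes. Instead we use the letters of $w$. Write $w=a_1\cdots a_n$. By (ii) we have $n\ge1$. If $n=1$, a lexical rule applies. If $n\ge2$, we claim each letter class $[a_i]_\theta$ is a live prime.
\begin{itemize}
\item It is live because $w$ is live, so each $a_i$ is a factor of a word of $L$.
\item It is non-unit by (ii).
\item It is prime by Lemma~\ref{lem:shortest-word-cut}. If $[a_i]=X\cdot Y$, then $\ell(X)+\ell(Y)=1$, so one factor contains $\eps$ and is therefore the unit class.
\end{itemize}
Hence $P\to[a_1]\cdots[a_n]$ is a correct production: its quotient product is $[w]=P$ and its length is $n\ge2$. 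The valid-rule reduction lemma then gives $P\Rightarrow^*[a_1]\cdots[a_n]$ using only valid rules, that is, rules of $G$. Lexical rules finish the derivation $[a_i]\to a_i$. I expect this to be the only delicate point. It is exactly where unit separation is needed: without it, letters might fall into the unit class, and the lexical leaves would not be prime.

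Finally, $L(G)=L$. The inclusion $L(G)\subseteq L$ is the soundness statement for $S$. For the converse, take $w\in L$. If $w=\eps$, the rule $S\to\eps$ applies. Otherwise $[w]_\theta=X$ is a nonempty accepting class, and by (i) $X\subseteq L$. Its chosen start rule $S\to P_1\cdots P_k$ satisfies $X=P_1\cdots P_k$ exactly, so $w=w_1\cdots w_k$ with $w_i\in P_i$. By completeness each $w_i$ is derivable from $P_i$, and therefore $S\Rightarrow^* w$.
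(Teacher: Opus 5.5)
Your proposal is correct and follows essentially the same route as the paper: primes as nonterminals with the valid branching rules plus lexical and start rules, soundness from correctness, and completeness via the all-letter correct production $P\to[a_1]\cdots[a_n]$ (with letter classes prime by unit separation and the length-one argument) followed by the valid-rule reduction lemma. The only difference is cosmetic. You fix one exact decomposition per accepting class, whereas the paper includes all of them (finitely many by Corollary~\ref{cor:finite-candidates}); the paper does this for canonicity, and it is not needed for the statement.
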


\begin{proof}
Use the live primes as nonterminals, together with a fresh start symbol.  Include all valid branching productions and each lexical rule $[a]_\theta\to a$ for every letter whose class is live.  Unit separation implies that a live letter class is non-unit, and the length-one argument shows it is prime.

For each accepting non-unit class $X$, include start rules for its exact prime decompositions.  If $\eps\in L$, also include the start rule
\(S\to\eps.\)
Corollary~\ref{cor:finite-candidates} shows that each accepting class has only finitely many exact prime decompositions, so the start-rule set is finite.

Soundness follows because every branching rule is correct.  For completeness, take $w=a_1\cdots a_n\in P$.  If $n=1$, then $P=[a_1]_\theta$ and the lexical rule
\[
  P\to a_1
\]
derives $w$ directly.  Assume now $n\ge2$.  Then
\(P\to[a_1]_\theta\cdots[a_n]_\theta\)
is a correct branching production.  By valid-rule reduction,
\[
  P\Rightarrow^*[a_1]_\theta\cdots[a_n]_\theta\Rightarrow^*w.
\]
Thus each prime nonterminal generates exactly its class.  The start rules then generate exactly the accepting classes, hence $L$.
\end{proof}

\subsection{The role of fixed-\texorpdfstring{$h$}{h} substitutability}

For the fixed observation $h:\Sigma^*\to M$, define
\(L_m:=L\cap h^{-1}(m).\)
If $L$ is $h$-substitutable and $x,y\in L_m$, then $x$ and $y$ have the common context $(\eps,\eps)$ and the same $h$-type, hence $x\equiv_L y$.  Therefore each nonempty $L_m$ is a single $\theta_{L,h}$-class.  Consequently
\(|\Acc_{\theta_{L,h}}(L)|\le |M|.\)
Thus the internal prime calculus above does not need substitutability; in the fixed-$h$ theorem, substitutability supplies a finite bound on the accepting relative classes.

\section{Semantic residuals, prime-target division, and Clark rigidity}
\label{sec:residual-ptld}

Clark's ordinary substitutable theory combines unique prime factorization with a stronger tail-exactness phenomenon.  Under finite-monoid refinement these effects separate, and the correct intermediate object is the semantic residual.  Throughout this section,
\[
  \theta:=\theta_{L,h}.
\]
Accordingly, all relative classes and all residual sets $\operatorname{Res}_\theta(A,P)$ below refer to the fixed-$h$ relative syntactic congruence.

For relative classes $A,P$, define
\(A\backslash P:=\{x\in\Sigma^*:Ax\subseteq P\}.\)
This set is $\theta$-saturated.  Let
\[
  \operatorname{Res}_\theta(A,P)
  :=\{R:R\text{ is a live relative class and }R\subseteq A\backslash P\}.
\]

\begin{theorem}[Finite residual splitting]
\label{thm:finite-residual-splitting}
Assume $L$ is $h$-substitutable and $M$ is finite, and let $A,P$ be live relative classes.  For every $m\in M$,
\((A\backslash P)\cap h^{-1}(m)\)
is empty or a single relative class.  Consequently
\(|\operatorname{Res}_\theta(A,P)|\le |M|.\)
More sharply, if $a=h(A)$ and $p=h(P)$, then
\(|\operatorname{Res}_\theta(A,P)| \le |\{m\in \mathsf T_{L,h}:am=p\}|,\)
where $\mathsf T_{L,h}=h(\Fact(L))$.
\end{theorem}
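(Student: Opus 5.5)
The plan is to prove the fiber statement directly from $h$-substitutability, by producing a shared accepting context for any two words of the same type in $A\backslash P$. Both counting bounds then follow by indexing residual classes by their $h$-type.

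\emph{Step 1: all of $A\backslash P$ is live.} Fix $m\in M$ and suppose $x,y\in(A\backslash P)\cap h^{-1}(m)$. Relative classes are nonempty, so choose some $a\in A$. By the definition of $A\backslash P$, both $ax\in P$ and $ay\in P$. Since $P$ is live, there is a context $(u,v)$ with $u\,ax\,v\in L$. Because $ax\theta_{L,h}ay$ (both lie in $P$) and $\theta_{L,h}\subseteq\equiv_L$, the same context gives $u\,ay\,v\in L$. Hence $(ua,v)\in\D_L(x)\cap\D_L(y)$. As a by-product, every element of $A\backslash P$ is live, since it has the context $(ua,v)$.

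\emph{Step 2: the fiber is one class.} With $h(x)=h(y)=m$ and the shared context from Step 1, Lemma~\ref{lem:type-context-collapse} (equivalently, $h$-substitutability) yields $x\theta_{L,h}y$. So all words of type $m$ in $A\backslash P$ lie in one relative class. Conversely, $A\backslash P$ is $\theta$-saturated: if $x\in A\backslash P$ and $x'\theta x$, then $a'x'\theta a'x\in P$ for every $a'\in A$, so $a'x'\in P$. Hence the fiber, if nonempty, contains that entire class and is exactly one relative class.

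\emph{Step 3: the two bounds.} Each live $R\subseteq A\backslash P$ has a single type $h(R)$, and by Step 2 distinct such $R$ have distinct types. This gives an injection $\operatorname{Res}_\theta(A,P)\to M$ and hence $|\operatorname{Res}_\theta(A,P)|\le|M|$. For the sharper bound, the type $m=h(R)$ lies in $\mathsf T_{L,h}$ because $R$ is live. Also, for $x\in R$ and $a\in A$ we have $ax\in P$, so $h(A)\,m=h(ax)=h(P)$, i.e.\ $am=p$. So the injection lands in $\{m\in\mathsf T_{L,h}:am=p\}$.

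The only delicate step is Step 1: we must use liveness of $P$, not of $x$ or $y$, and transport one context of $ax$ to $ay$ through the congruence. It is this that gives the common context $(ua,v)$ and rules out any dead words appearing in the fiber. The remaining steps are routine saturation and counting arguments.
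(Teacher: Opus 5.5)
Your proof is correct and follows essentially the same route as the paper: fix $a_0\in A$, use liveness of $P$ to obtain the common context $(ua_0,v)$ for $x$ and $y$, and apply Lemma~\ref{lem:type-context-collapse}. You then count residual classes by their observer types, using liveness to place each type in $\mathsf T_{L,h}$ and $AR\subseteq P$ to get $am=p$; your explicit saturation check, which shows a nonempty fiber is exactly one class, spells out a point the paper only records in passing just before the theorem.
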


\begin{proof}
Take $x,y\in A\backslash P$ with $h(x)=h(y)$ and choose $a_0\in A$.
Then $a_0x,a_0y\in P$.  Liveness of $P$ supplies an outer context in
which both occur, so $x$ and $y$ share a live context after the common
left factor $a_0$.  Lemma~\ref{lem:type-context-collapse} gives
$x\theta y$.  Finally, every $R\in\operatorname{Res}_\theta(A,P)$ is live, so its observer type $m=h(R)$ belongs to $\mathsf T_{L,h}$.  Since $AR\subseteq P$, every such type also satisfies $h(A)m=h(P)$, which gives the sharper bound.
\end{proof}

\begin{definition}[Valid-tail exactness]
A valid production
\(N\to\gamma\alpha, \qquad \gamma,\alpha\ne\eps,\)
has an \emph{exact tail} if
\(\overline\alpha=[\overline\alpha]_\theta.\)
The pair $(L,h)$ has \emph{valid-tail exactness} (VTE) if every such suffix is exact.  We write VTE$_1$ for the weaker condition requiring exactness only after the first right-hand prime, i.e. for every valid $P\to A\alpha$.
\end{definition}

Only VTE$_1$ is needed for the direct-presentation and rule-recovery consequences below; full VTE is retained because PTLD yields the stronger Clark-style suffix statement.

\begin{definition}[Valid-tail determinism]
We say that $(L,h)$ has \emph{valid-tail determinism} (VTD) if, whenever
\(N\to A\alpha, \qquad N\to A\beta\)
are valid productions with the same left-hand prime $N$ and the same first right-hand prime $A$, then $\alpha=\beta$.
\end{definition}

\begin{definition}[Prime-target left-division determinism]
Let
\(\mathsf T_{L,h}:=h(\Fact(L)).\)  Since $\eps\in\Fact(L)$ whenever $L\ne\varnothing$, one has $1\in\mathsf T_{L,h}$.
We say that $(L,h)$ satisfies \emph{prime-target left-division determinism} (PTLD) if for every $q,r,r'\in \mathsf T_{L,h}$ and every relative prime $P$,
\(qr=h(P)=qr' \quad\Longrightarrow\quad r=r'.\)
\end{definition}

Factor-left-cancellation on $\mathsf T_{L,h}$ implies PTLD, but PTLD asks for cancellation only in equations whose target is a prime observer type.

\begin{remark}[Groups as a sufficient source of PTLD]
If $M$ is a finite group, then PTLD holds automatically: $qr=p=qr'$ implies $r=q^{-1}p=r'$.  Groups are thus a convenient source of examples, although they are not necessary; Example~\ref{ex:ptld-not-cancel} below uses a non-group observer.
\end{remark}

\begin{remark}[Action-category interpretation]
PTLD is naturally a thinness condition in the right action category $\mathcal R(M)$, whose objects are elements of $M$ and whose hom-set is $\mathcal R(M)(q,p)=\{r\in M:qr=p\}$.  When the relevant factor types fill $M$, PTLD says exactly that every prime observer target $p=h(P)$ is subterminal: $|\mathcal R(M)(q,p)|\le1$ for every $q$.  This is stronger than triviality of the local stabilizer $\{r:pr=p\}$ because parallel arrows may also occur across strict Green $\mathcal R$-drops.  This one-sided action category should not be confused with the standard two-sided kernel category of semigroup theory~\cite{Tilson1987}; the viewpoint is interpretive rather than an additional hypothesis.
\end{remark}

\medskip
\noindent\textbf{Standing hypothesis (R).}\par\noindent
Within this section, \textup{(R)} abbreviates the assumptions that $L$ is
$h$-substitutable, $(L,h)$ is unit-separated, and PTLD holds.  Outside
this section we restate these three assumptions explicitly.

\begin{lemma}[Prime-target remainder collapse]
\label{lem:ptld-collapse}
Under \textup{(R)}, let $P$ be a relative prime, let $u\theta u'$, and suppose $ux,u'y\in P$.  Then $x\theta y$.
\end{lemma}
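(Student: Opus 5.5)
The plan is to reduce the claim to the type--context collapse lemma (Lemma~\ref{lem:type-context-collapse}). That lemma needs two inputs: equality of observer types $h(x)=h(y)$, and one shared accepting context for $x$ and $y$. PTLD will supply the first and liveness of $P$ the second.

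\emph{Step 1: common left factor.} Since $\theta$ is a congruence and $u\theta u'$, we get $ux\,\theta\,u'x$. Because $ux\in P$ and $P$ is a $\theta$-class, also $u'x\in P$. Together with $u'y\in P$, both $x$ and $y$ now sit behind the same prefix $u'$ inside $P$.

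\emph{Step 2: equality of types via PTLD.} Put $q:=h(u)=h(u')$, $r:=h(x)$ and $r':=h(y)$. Then $qr=h(ux)=h(P)$ and $qr'=h(u'y)=h(P)$.

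To apply PTLD we must check that $q,r,r'\in\mathsf T_{L,h}$. Since $P$ is a relative prime it is live, so there is a context $(l,m)$ with $l\,p\,m\in L$ for one (hence every) $p\in P$, because $\theta\subseteq\equiv_L$. Thus $l\,ux\,m\in L$ and $l\,u'y\,m\in L$. So $u$, $x$, $u'$ and $y$ are all factors of $L$, which gives $q,r,r'\in\mathsf T_{L,h}$. PTLD then yields $r=r'$, that is, $h(x)=h(y)$.

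\emph{Step 3: shared context and collapse.} With the same $(l,m)$, we have $l\,u'x\,m\in L$ and $l\,u'y\,m\in L$. Hence $(lu',m)\in\D_L(x)\cap\D_L(y)$. Since $L$ is $h$-substitutable and $h(x)=h(y)$, Lemma~\ref{lem:type-context-collapse} gives $x\theta y$.

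Unit separation is not needed here; it is only part of the ambient hypothesis (R).

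The only delicate point is Step~2. PTLD applies only to factor types, so one must verify membership in $\mathsf T_{L,h}$ through liveness of $P$, not merely observe the monoid equation $qr=qr'$. Everything else is direct bookkeeping with the congruence property.
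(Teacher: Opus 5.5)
Your proof is correct and essentially the same as the paper's: liveness of $P$ puts $h(u),h(x),h(y)$ in $\mathsf T_{L,h}$, PTLD then gives $h(x)=h(y)$, and a shared context yields $x\theta y$ by type--context collapse. The only cosmetic difference is that the paper uses $u\equiv_L u'$ to move $u'y$ to $uy$ and takes the shared context $(\ell u,r)$, whereas you move $ux$ to $u'x$ and use $(lu',m)$.
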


\begin{proof}
Because $P$ is live, all factors $u,u',x,y$ occurring in the displayed prime words are live factors; hence their observer types belong to $\mathsf T_{L,h}$.  Since $u\theta u'$, PTLD applied to
\[
  h(u)h(x)=h(P)=h(u')h(y)=h(u)h(y)
\]
gives $h(x)=h(y)$.  Choose $(\ell,r)$ with $\ell uxr\in L$.  Since $ux\theta u'y$, also $\ell u'yr\in L$, and since $u\equiv_Lu'$, also $\ell uyr\in L$.  Thus $x$ and $y$ share the context $(\ell u,r)$ and have the same observer type; Lemma~\ref{lem:type-context-collapse} gives $x\theta y$.
\end{proof}

\begin{lemma}[Prime left-quotient rigidity]
\label{lem:prime-left-quotient-rigidity}
Under \textup{(R)}, let $P$ be a relative prime and $u\in\Sigma^*$.  Then
\[
  u^{-1}P:=\{v:uv\in P\}
\]
is empty or exactly one relative class.  Moreover,
\[
  u^{-1}P=[\eps]_\theta
  \quad\Longleftrightarrow\quad
  u\in P.
\]
\end{lemma}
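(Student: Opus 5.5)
Lemma~\ref{lem:ptld-collapse} will carry most of the argument, applied with $u'=u$. Suppose $u^{-1}P\neq\varnothing$ and take $x,y\in u^{-1}P$. Then $ux,uy\in P$, and since $u\theta u$ the lemma gives $x\theta y$. So all elements of $u^{-1}P$ lie in one relative class $[x]$.

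Next I would show that the whole class is contained in $u^{-1}P$. Let $x'\theta x$. Because $\theta$ is a congruence, $ux'\theta ux$, and $ux\in P$ with $P$ a $\theta$-class, so $ux'\in P$. Hence $u^{-1}P=[x]_\theta$, which proves the first assertion. It is worth noting that we never need a separate argument for liveness of $[x]$: it follows from $P$ being live and $x$ being a factor of a word of $P$. But the claim only asks for a relative class, not a live one.

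For the equivalence, first assume $u\in P$. Then $\eps\in u^{-1}P$, so by the first part $u^{-1}P=[\eps]_\theta$, and unit separation gives $[\eps]_\theta=\{\eps\}$. Conversely, assume $u^{-1}P=[\eps]_\theta$. Then $\eps\in u^{-1}P$, so $u=u\eps\in P$.

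I do not expect a real obstacle, since everything is driven by Lemma~\ref{lem:ptld-collapse}. The one point to check is that the hypotheses of that lemma hold in the degenerate case where $u$ or $x$ is empty. Its proof only uses that the factors of prime words are live, so their types lie in $\mathsf T_{L,h}$, and this remains true for $\eps$ because $1\in\mathsf T_{L,h}$.
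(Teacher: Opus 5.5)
Your proposal is correct and follows the paper's proof essentially step for step. Lemma~\ref{lem:ptld-collapse} with the common prefix $u$ confines a nonempty $u^{-1}P$ to one class, congruence saturation makes it the whole class, and $\eps\in u^{-1}P\iff u\in P$ gives the equivalence. Your extra check that the degenerate empty-word cases are covered is valid and harmless.
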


\begin{proof}
If $uv,uw\in P$, Lemma~\ref{lem:ptld-collapse} with the common prefix $u$ gives $v\theta w$.  Thus a nonempty left quotient is contained in one relative class.  It is also $\theta$-saturated: if $v\theta v'$ and $uv\in P$, then $uv\theta uv'$ and hence $uv'\in P$.  Therefore it is exactly that class.  Finally, $\eps\in u^{-1}P$ iff $u\in P$; when this holds, the unique class containing $\eps$ is $[\eps]_\theta=\{\eps\}$ by unit separation.
\end{proof}

\begin{corollary}[Prime prefix-freeness]
\label{cor:prime-prefix-free}
Under \textup{(R)}, no word of a relative prime is a proper prefix of another word of the same prime.
\end{corollary}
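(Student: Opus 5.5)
The plan is to derive this directly from Lemma~\ref{lem:prime-left-quotient-rigidity}, with no new combinatorics. Suppose $u,uv\in P$ with $v\neq\eps$, so that $u$ would be a proper prefix of another word of $P$. I will show this cannot happen.

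First, since $u\in P$, the equivalence in Lemma~\ref{lem:prime-left-quotient-rigidity} gives $u^{-1}P=[\eps]_\theta$. Second, unit separation (part of \textup{(R)}) says $[\eps]_\theta=\{\eps\}$. Combining the two, $u^{-1}P=\{\eps\}$.

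On the other hand, $uv\in P$ means by definition that $v\in u^{-1}P$. So $v=\eps$, which contradicts $v\neq\eps$. This proves that no word of $P$ is a proper prefix of another word of $P$.

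There is essentially no obstacle. The only step that needs checking is that the lemma really applies to an arbitrary prefix $u\in\Sigma^*$ and not only to live factors, and its statement does allow this. All the substantive content, namely PTLD together with $h$-substitutability, is already packaged in Lemma~\ref{lem:ptld-collapse}, which the quotient lemma builds on. An equivalent route is to apply Lemma~\ref{lem:ptld-collapse} directly with $u=u'$ to the pair $u\cdot\eps$ and $u\cdot v$, both in $P$. This gives $v\theta\eps$, hence $v=\eps$ by unit separation.
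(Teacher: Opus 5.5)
Your proposal is correct and matches the paper's proof: from $u,uv\in P$, Lemma~\ref{lem:prime-left-quotient-rigidity} gives $v\in u^{-1}P=[\eps]_\theta$, and unit separation forces $v=\eps$. Your alternative route through Lemma~\ref{lem:ptld-collapse} with $u=u'$ is also valid, and it is the same argument with the intermediate lemma unfolded.
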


\begin{proof}
If $x,xc\in P$, then $c\in x^{-1}P=[\eps]_\theta$ by Lemma~\ref{lem:prime-left-quotient-rigidity}.  Unit separation gives $c=\eps$.
\end{proof}

\begin{lemma}[Relative prime-prefix escape]
\label{lem:relative-prefix-escape}
Under \textup{(R)}, if $X$ is a relative prime and $Y$ is a distinct live non-unit relative class, then some $x\in X$ does not begin with an element of $Y$.
\end{lemma}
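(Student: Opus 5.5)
The plan is to argue by contradiction and use Lemma~\ref{lem:prime-left-quotient-rigidity} to turn the hypothesis ``every word of $X$ begins with a word of $Y$'' into an exact setwise factorization $X=Y\cdot V$. Primality of $X$ will then force $X=Y$. So suppose, contrary to the claim, that every $x\in X$ can be written $x=yv$ with $y\in Y$. Since $X$ is live it is nonempty, so fix one such decomposition $x_0=y_0v_0$. By Lemma~\ref{lem:prime-left-quotient-rigidity}, the left quotient $y_0^{-1}X$ is nonempty and hence exactly one relative class; call it $V$. It contains $v_0$.

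Next I show that every $y\in Y$ has the same left quotient. Take any $y\in Y$. Since $\theta$ is a congruence and $y\theta y_0$, we get $yv_0\,\theta\,y_0v_0\in X$, and $X$ is a $\theta$-class, so $yv_0\in X$. Thus $y^{-1}X$ is nonempty. By the same lemma it is a single relative class, and it contains $v_0$, so $y^{-1}X=V$. This step is the only real use of hypothesis (R); everything afterwards is bookkeeping.

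Now I compare $X$ with $Y\cdot V$ in both directions.
\begin{itemize}
\item If $x\in X$, the contrary assumption gives $x=yv$ with $y\in Y$. Then $v\in y^{-1}X=V$, so $X\subseteq Y\cdot V$.
\item If $y\in Y$ and $v\in V=y^{-1}X$, then $yv\in X$ by definition of the left quotient, so $Y\cdot V\subseteq X$.
\end{itemize}
Hence $X=Y\cdot V$ is an exact setwise product of relative classes.

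Since $X$ is prime, one of $Y,V$ must be the unit class. The class $Y$ is non-unit by hypothesis, so $V=[\eps]_\theta$, which equals $\{\eps\}$ by unit separation. Then $X=Y\cdot\{\eps\}=Y$, contradicting that $Y$ is distinct from $X$. I do not expect any genuine obstacle: the key point is the uniformity of $y^{-1}X$ across $y\in Y$. The only care needed is to check that this quotient is nonempty before invoking rigidity, which the congruence argument provides.
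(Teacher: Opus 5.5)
Your proof is correct and follows the paper's argument: assume every word of $X$ has a $Y$-prefix, show that all remainders lie in a single relative class, deduce an exact product $X=Y\cdot V$, and let primality plus unit separation force $X=Y$. The only difference is in how the single remainder class is obtained: the paper applies Lemma~\ref{lem:ptld-collapse} directly to $y\theta y_0$, whereas you go through Lemma~\ref{lem:prime-left-quotient-rigidity} and a congruence step showing $y^{-1}X=y_0^{-1}X$.
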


\begin{proof}
Assume contrariwise that every $x\in X$ has a factorization $x=yv$ with $y\in Y$.  Fix one such factorization $x_0=y_0v_0$.  For any other $x=yv$, we have $y\theta y_0$ and $yv,y_0v_0\in X$, so Lemma~\ref{lem:ptld-collapse} gives $v\theta v_0$.  Hence every remainder lies in one relative class $R$.  Congruence now gives the exact equality $X=Y\cdot R$.  If $R$ is non-unit this contradicts primality of $X$; if $R$ is the unit class, unit separation gives $R=\{\eps\}$ and hence $X=Y$, again a contradiction.
\end{proof}

\begin{lemma}[Relative prime-prefix interception]
\label{lem:prefix-interception}
Under \textup{(R)}, let
\(\alpha=A_1\cdots A_m\) and \(\beta=B_1\cdots B_n\)
be nonempty sequences of relative primes.  If
\(\overline\alpha\supseteq\overline\beta,\)
then there is $j$, $1\le j\le n$, such that
\[
  A_1\supseteq\overline{B_1\cdots B_j}.
\]
\end{lemma}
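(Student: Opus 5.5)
The plan is to build a word of $\overline\beta$ block by block. At each stage we keep track of where its $A_1$-prefix can end, using Corollary~\ref{cor:prime-prefix-free}, Lemma~\ref{lem:prime-left-quotient-rigidity} and Lemma~\ref{lem:relative-prefix-escape}. Assume for contradiction that $\overline{B_1\cdots B_j}\not\subseteq A_1$ for every $1\le j\le n$.

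\emph{Preliminary observations.} Fix any $w\in\overline\beta$. Since $w\in\overline\alpha$, $w$ has a prefix $a_1\in A_1$ (with $a_1=w$ when $m=1$). By Corollary~\ref{cor:prime-prefix-free}, two elements of $A_1$ cannot both be prefixes of $w$, since one would be a proper prefix of the other. So the $A_1$-prefix of $w$ is unique. Next, $\overline{B_1\cdots B_j}$ lies in the single class $B_1*\cdots*B_j$. Since $A_1$ is a class, $\overline{B_1\cdots B_j}\subseteq A_1$ holds as soon as one word $u_j\in\overline{B_1\cdots B_j}$ lies in $A_1$. Hence, under our assumption, no such $u_j$ ever lies in $A_1$.

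\emph{Invariant.} We construct $u_j=b_1\cdots b_j$ with $b_i\in B_i$ satisfying two conditions. (a) For every completion $w=u_jb_{j+1}'\cdots b_n'\in\overline\beta$, the $A_1$-prefix of $w$ is strictly longer than $u_j$. (b) The set $u_j^{-1}A_1$ is a single live relative class $C_j$, which is non-unit.

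For $j=0$ take $u_0=\eps$ and $C_0=A_1$; this is non-unit by primality.

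Given $u_j$ with $j<n$, property (a) and any completion show that $u_j^{-1}A_1$ is nonempty. Lemma~\ref{lem:prime-left-quotient-rigidity} then makes it a single class $C_j$. This class is non-unit because $u_j\notin A_1$, and live because its elements occur inside words of the live prime $A_1$.

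If $C_j=B_{j+1}$, pick any $b_{j+1}\in B_{j+1}$. Then $u_jb_{j+1}\in A_1$, and the preliminary observation gives $\overline{B_1\cdots B_{j+1}}\subseteq A_1$, contradicting the assumption. So $C_j\neq B_{j+1}$. Lemma~\ref{lem:relative-prefix-escape}, applied with $X=B_{j+1}$ and $Y=C_j$, supplies $b_{j+1}\in B_{j+1}$ that does not begin with any element of $C_j$. Set $u_{j+1}=u_jb_{j+1}$.

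\emph{Verifying (a) for $u_{j+1}$.} Take any completion $w$ of $u_{j+1}$; it is also a completion of $u_j$. By (a) for $u_j$, its $A_1$-prefix has the form $u_jc$ with $c$ nonempty and $c\in C_j$.
\begin{itemize}
\item If $c$ were a prefix of $b_{j+1}$ (including $c=b_{j+1}$), this would contradict the choice of $b_{j+1}$.
\item Hence $c$ properly extends $b_{j+1}$, so the $A_1$-prefix is strictly longer than $u_{j+1}$.
\end{itemize}

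\emph{Conclusion.} At $j=n$, condition (a) says that the $A_1$-prefix of $w=u_n\in\overline\beta$ is strictly longer than $w$ itself. This is impossible, since $w\in\overline\alpha$ has an $A_1$-prefix. Hence some $j$ satisfies $\overline{B_1\cdots B_j}\subseteq A_1$.

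\emph{Main obstacle.} The delicate step is the bookkeeping in (a). The factorization of a word in $\overline\alpha$ need not be unique, so the argument must rest only on the uniqueness of the $A_1$-prefix from Corollary~\ref{cor:prime-prefix-free}. I also need to check that (a) is preserved for all completions at once, not just for one fixed word; the argument above does this, because it works for an arbitrary completion of $u_{j+1}$.
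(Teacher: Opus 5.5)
Your proof is correct and takes essentially the same route as the paper. At each step you choose $b_{j+1}\in B_{j+1}$ that escapes the left-quotient class $(b_1\cdots b_j)^{-1}A_1$ via Lemma~\ref{lem:relative-prefix-escape}, and you use Lemma~\ref{lem:prime-left-quotient-rigidity} and Corollary~\ref{cor:prime-prefix-free} to control where the $A_1$-factor can end. The differences are only organizational: you argue by contradiction, you start uniformly from $u_0=\eps$, $C_0=A_1$ rather than treating $A_1=B_1$ separately, and you keep the invariant for all completions rather than for one chosen extension per step.
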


\begin{proof}
If $A_1=B_1$, take $j=1$.  Otherwise choose $b_1\in B_1$ by Lemma~\ref{lem:relative-prefix-escape} so that $b_1$ does not begin with an $A_1$-word.  Extend $b_1$ to a word of $\overline\beta$ and factor that word through $\alpha$.  Its first $A_1$-factor must extend strictly beyond $b_1$, so $R_1:=b_1^{-1}A_1$ is nonempty.  By Lemma~\ref{lem:prime-left-quotient-rigidity}, $R_1$ is a single relative class.

Inductively, suppose $b_i\in B_i$ have been chosen through position $t$ and $R_t:=(b_1\cdots b_t)^{-1}A_1$ is a nonempty relative class.  If $R_t=[\eps]_\theta$, then $b_1\cdots b_t\in A_1$, and congruence gives $\overline{B_1\cdots B_t}\subseteq A_1$, so take $j=t$.  If $t<n$ and $R_t=B_{t+1}$, then congruence similarly gives $\overline{B_1\cdots B_{t+1}}\subseteq A_1$, and take $j=t+1$.

Otherwise $t<n$ and $R_t$ is a live non-unit class distinct from $B_{t+1}$.  By Lemma~\ref{lem:relative-prefix-escape}, choose $b_{t+1}\in B_{t+1}$ that does not begin with an $R_t$-word.  Extend $b_1\cdots b_{t+1}$ to a word of $\overline\beta$ and factor it through $\alpha$.  The first $A_1$-factor cannot end before $b_1\cdots b_t$: it would be a proper prefix of an existing $A_1$-word extending that prefix, contradicting Corollary~\ref{cor:prime-prefix-free}.  It cannot end at $b_1\cdots b_t$, since then $R_t=[\eps]_\theta$.  Nor can it end inside $b_{t+1}$, because then it would have the form $b_1\cdots b_ts$ with $s\in R_t$ a prefix of $b_{t+1}$, contrary to the choice of $b_{t+1}$.  Thus it extends strictly beyond $b_1\cdots b_{t+1}$, so $R_{t+1}$ is again nonempty.

If the construction reached $t=n$ without one of the terminating cases, the complete word $b_1\cdots b_n\in\overline\beta$ would have to possess an $A_1$-factor extending beyond its end, impossible.  Hence a required $j$ exists.
\end{proof}

\begin{lemma}[Shortest-word left division]
\label{lem:shortest-left-division}
Under \textup{(R)}, let $X$ be a relative prime and let $x$ be a shortest word in $X$.  For every set $S\subseteq\Sigma^*$,
\[
  x^{-1}(X\cdot S)=S.
\]
\end{lemma}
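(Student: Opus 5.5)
The two inclusions are handled separately. The inclusion $S\subseteq x^{-1}(X\cdot S)$ is immediate: for $s\in S$ we have $xs\in X\cdot S$, so $s\in x^{-1}(X\cdot S)$. For the reverse inclusion, I take $v\in x^{-1}(X\cdot S)$. Then $xv=x's$ for some $x'\in X$ and $s\in S$. The whole argument reduces to showing $x'=x$, because cancelling the common prefix then gives $v=s\in S$.

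To show $x'=x$, I compare lengths. Since $x$ is shortest in $X$, $|x'|\ge|x|$. Both $x$ and $x'$ are prefixes of the single word $xv=x's$, so $x$ is a prefix of $x'$. If the inclusion were proper, $x$ would be a proper prefix of another word of the same prime $X$. This contradicts Corollary~\ref{cor:prime-prefix-free}. Hence $x'=x$, and then $v=s$ by left cancellation in the free monoid.

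The only nontrivial ingredient is prime prefix-freeness, and this is where hypothesis \textup{(R)} enters. It goes through Lemma~\ref{lem:prime-left-quotient-rigidity} together with unit separation, which force $x^{-1}X=[\eps]_\theta=\{\eps\}$. Minimality of $|x|$ is used only to fix the direction of the prefix relation, so I expect no step to be a real obstacle. The one point to check is that the argument needs no assumption on $S$: it does not have to be a class or saturated, and for $S=\varnothing$ both sides are empty.
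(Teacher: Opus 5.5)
Your proof is correct and follows essentially the same route as the paper: the trivial inclusion, then minimality of $|x|$ to orient the comparable prefixes $x$ and $x'$ of $xv=x's$, and prime prefix-freeness (Corollary~\ref{cor:prime-prefix-free}) to rule out a proper prefix. This forces $x=x'$ and hence $v=s\in S$.
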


\begin{proof}
The inclusion $S\subseteq x^{-1}(X\cdot S)$ is immediate.  Conversely, if $xv=x's$ with $x'\in X$ and $s\in S$, then $x$ and $x'$ are comparable prefixes of the same word.  Minimality of $x$ gives $|x'|\ge|x|$, while Corollary~\ref{cor:prime-prefix-free} forbids $x$ from being a proper prefix of $x'$.  Hence $x=x'$ and $v=s\in S$.
\end{proof}

\begin{corollary}[Prime-left inclusion cancellation]
\label{lem:prime-left-cancellation}
Under \textup{(R)}, if $X$ is prime and $X\overline\alpha\subseteq X\overline\beta$, then $\overline\alpha\subseteq\overline\beta$.
\end{corollary}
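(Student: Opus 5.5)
The plan is to read the corollary off Lemma~\ref{lem:shortest-left-division} by taking a left quotient of both sides by one fixed shortest word of $X$. First, since $X$ is a live relative class it is nonempty, so I fix $x\in X$ with $|x|=\ell(X)$. Next I note that left quotient by a fixed word is monotone under inclusion: if $S_1\subseteq S_2$, then
\[
  x^{-1}S_1=\{v:xv\in S_1\}\subseteq\{v:xv\in S_2\}=x^{-1}S_2 .
\]

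Applying this to $S_1=X\overline\alpha$ and $S_2=X\overline\beta$ gives $x^{-1}(X\overline\alpha)\subseteq x^{-1}(X\overline\beta)$. I then invoke Lemma~\ref{lem:shortest-left-division} twice, once with $S=\overline\alpha$ and once with $S=\overline\beta$. That lemma holds for arbitrary sets $S\subseteq\Sigma^*$, so it applies to these exact setwise products, including the empty-sequence convention $\overline\eps=\{\eps\}$. The two sides therefore become $\overline\alpha$ and $\overline\beta$, and the inclusion $\overline\alpha\subseteq\overline\beta$ follows.

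There is essentially no obstacle, since all the rigidity is already contained in Lemma~\ref{lem:shortest-left-division}. That lemma rests on Corollary~\ref{cor:prime-prefix-free}, and hence uses both PTLD and unit separation, which is why hypothesis \textup{(R)} is needed. The one point to check is that the lemma is stated for arbitrary $S$ and not only for relative classes, so it does apply to $\overline\alpha$ and $\overline\beta$; the statement as given confirms this.
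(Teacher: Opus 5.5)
Your proposal is correct and is the paper's own argument. The paper takes the left quotient by a fixed shortest word $x\in X$ on both sides of the inclusion and applies Lemma~\ref{lem:shortest-left-division} with $S=\overline\alpha$ and $S=\overline\beta$; your monotonicity remark spells out the step the paper leaves implicit.
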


\begin{proof}
Apply $x^{-1}$ for a shortest $x\in X$ and use Lemma~\ref{lem:shortest-left-division}.
\end{proof}

\begin{corollary}[Prime-left equality cancellation]
\label{cor:prime-left-equality}
Under \textup{(R)}, $X\overline\alpha=X\overline\beta$ implies $\overline\alpha=\overline\beta$.
\end{corollary}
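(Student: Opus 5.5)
The equality is two inclusions, and each one follows from Corollary~\ref{lem:prime-left-cancellation}. Assume $X\overline\alpha=X\overline\beta$. Then $X\overline\alpha\subseteq X\overline\beta$, so that corollary gives $\overline\alpha\subseteq\overline\beta$. Exchanging the roles of $\alpha$ and $\beta$ gives $\overline\beta\subseteq\overline\alpha$. Together these give $\overline\alpha=\overline\beta$.

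A more direct route, useful for checking the argument, is to pick a shortest word $x\in X$. We then apply Lemma~\ref{lem:shortest-left-division} twice, once with $S=\overline\alpha$ and once with $S=\overline\beta$. This gives
\[
  \overline\alpha=x^{-1}(X\cdot\overline\alpha)=x^{-1}(X\cdot\overline\beta)=\overline\beta .
\]
The middle equality holds because the two products are equal as sets. Note that Lemma~\ref{lem:shortest-left-division} is stated for arbitrary sets $S\subseteq\Sigma^*$. So it applies whether $\alpha$ and $\beta$ are empty prime sequences, with $\overline\eps=\{\eps\}$, or nonempty ones.

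Under \textup{(R)} there is no real obstacle here. All the substance sits in Lemma~\ref{lem:shortest-left-division}, which rests on prime prefix-freeness (Corollary~\ref{cor:prime-prefix-free}) and hence on PTLD and unit separation. The one thing to check is that the hypotheses match: $X$ must be a relative prime, so that Corollary~\ref{cor:prime-prefix-free} applies, and $X$ must be nonempty, so that a shortest word $x$ exists. Both hold because primes are live classes.
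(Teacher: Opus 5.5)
Your proof is correct and follows the paper's route: the paper gives no separate proof, and the equality version is obtained from Corollary~\ref{lem:prime-left-cancellation} applied in both directions, which in turn is left division by a shortest word of $X$ via Lemma~\ref{lem:shortest-left-division}. You are also right to read the statement with $X$ a relative prime, carried over from the preceding corollary.
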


\begin{corollary}[Prime-left saturation]
\label{lem:prime-left-saturation}
Under \textup{(R)}, if $X$ is prime, $C$ is a live relative class, and $C=X\overline\alpha$ holds exactly, then $\overline\alpha=[\overline\alpha]_\theta$.
\end{corollary}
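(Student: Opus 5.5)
The plan is to show the two inclusions $\overline\alpha\subseteq[\overline\alpha]_\theta$ and $[\overline\alpha]_\theta\subseteq\overline\alpha$. The first holds by definition, since $[\overline\alpha]_\theta$ is the $\theta$-class containing the set product $\overline\alpha$. So the work is the reverse inclusion. We handle it by multiplying on the left by $X$, using saturation of the class $C$, and then cancelling $X$ with Corollary~\ref{lem:prime-left-cancellation}.

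\textbf{Step 1.} Put $D:=[\overline\alpha]_\theta$. Since $\theta$ is a congruence, $X\cdot D$ lies inside the single class $X*D$. By hypothesis $X\overline\alpha=C$ is nonempty, so $C$ meets $X*D$. Hence $X*D=C$, and therefore
\[
  X\cdot D\subseteq C=X\cdot\overline\alpha .
\]

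\textbf{Step 2.} We want to cancel $X$ from this inclusion, but Corollary~\ref{lem:prime-left-cancellation} is stated for prime sequences $\overline\alpha,\overline\beta$, while $D$ is an arbitrary class. Two routes are available.

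\begin{itemize}
\item \emph{Direct route.} Lemma~\ref{lem:shortest-left-division} holds for every set $S\subseteq\Sigma^*$. Take a shortest $x\in X$ and apply $x^{-1}$ to both sides. This gives $D=x^{-1}(X\cdot D)\subseteq x^{-1}(X\cdot\overline\alpha)=\overline\alpha$.
\item \emph{Fallback route.} If one prefers to stay with prime sequences, observe that $D$ is live: it is a factor class of the live class $C$, or equivalently $\overline\alpha$ is live. Then Theorem~\ref{thm:prime-existence} writes $D=\overline\beta$ exactly for some prime sequence $\beta$, with the empty sequence used when $D$ is the unit class. The corollary then gives $\overline\beta\subseteq\overline\alpha$.
\end{itemize}

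Either route yields $[\overline\alpha]_\theta\subseteq\overline\alpha$, which together with the trivial inclusion gives $\overline\alpha=[\overline\alpha]_\theta$.

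\textbf{Degenerate case.} If $\alpha=\eps$, then $\overline\eps=\{\eps\}=[\eps]_\theta$ by unit separation, so the claim holds.

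\textbf{Main obstacle.} The only delicate point is justifying Step 1, namely that $X\cdot D\subseteq C$. Exactness of $C=X\overline\alpha$ alone does not give this. What makes it work is that $C$ is a full $\theta$-class, which is exactly why the hypothesis requires $C$ to be a class. Once that is secured, the prefix-freeness machinery already packaged in Lemma~\ref{lem:shortest-left-division} carries out the cancellation.
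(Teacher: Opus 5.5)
Your proof is correct and is essentially the paper's argument. Both take a shortest $x\in X$, strip $X$ with Lemma~\ref{lem:shortest-left-division}, and use that $C$ is a full $\theta$-class. The paper phrases this as $\overline\alpha=x^{-1}C$ inheriting $\theta$-saturation from $C$, whereas you first show $X\cdot[\overline\alpha]_\theta\subseteq C$ and then divide on the left by $x$; this is only a repackaging of the same step.
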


\begin{proof}
Let $x$ be a shortest word in $X$.  Lemma~\ref{lem:shortest-left-division} gives $\overline\alpha=x^{-1}C$.  Since $C$ is a $\theta$-class, it is $\theta$-saturated, and the left quotient of a $\theta$-saturated set by a fixed word is again $\theta$-saturated.  The nonempty set $\overline\alpha$ is contained in the single class $[\overline\alpha]_\theta$, so saturation forces equality.  This also covers $\alpha=\eps$, using $\overline\eps=\{\eps\}$ and unit separation.
\end{proof}

\begin{corollary}[Peeling]
\label{cor:peeling}
Under \textup{(R)}, let $X$ be prime and let $C$ be a live relative class.  If $C=X\overline\alpha$ holds exactly and $X\overline\beta\subseteq C$, then $\overline\alpha$ is an exact relative class and $\overline\beta\subseteq\overline\alpha$.
\end{corollary}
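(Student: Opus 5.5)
The plan is to combine the two preceding corollaries: Prime-left saturation supplies exactness of $\overline\alpha$, and Prime-left inclusion cancellation supplies the inclusion.

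\emph{Exactness.} The hypothesis $C=X\overline\alpha$ with $X$ prime and $C$ a live relative class is exactly the setting of Corollary~\ref{lem:prime-left-saturation}. That corollary gives $\overline\alpha=[\overline\alpha]_\theta$, so $\overline\alpha$ is an exact relative class. This also covers $\alpha=\eps$, where $\overline\alpha=\{\eps\}=[\eps]_\theta$ by unit separation.

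\emph{Inclusion.} From $X\overline\beta\subseteq C=X\overline\alpha$ we have $X\overline\beta\subseteq X\overline\alpha$. Corollary~\ref{lem:prime-left-cancellation} then yields $\overline\beta\subseteq\overline\alpha$.

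To make this self-contained, I would pass once more through Lemma~\ref{lem:shortest-left-division}. Fix a shortest $x\in X$ and apply the left quotient $x^{-1}$, which is monotone, to $X\overline\beta\subseteq X\overline\alpha$. This gives
\[
  \overline\beta=x^{-1}(X\overline\beta)\subseteq x^{-1}(X\overline\alpha)=\overline\alpha .
\]
The point to check is that the shortest-word left-division lemma holds for an arbitrary set $S$, so both $\overline\alpha$ and $\overline\beta$ qualify even when $\beta$ is empty; it does, as stated.

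There is no real obstacle here: the substantive work (prefix-freeness of primes under PTLD and saturation of left quotients) was done in the earlier lemmas. The one care point is that the same shortest word $x$ is used in both quotient computations, which is permitted because $X$ is a single fixed prime.
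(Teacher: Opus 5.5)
Your proposal is correct and is exactly the paper's proof: exactness of $\overline\alpha$ from Prime-left saturation, and $\overline\beta\subseteq\overline\alpha$ from Prime-left inclusion cancellation applied to $X\overline\beta\subseteq C=X\overline\alpha$. Your optional unpacking through the shortest-word left-division lemma with a single fixed shortest $x\in X$ is just the proof of that cancellation corollary written out.
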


\begin{proof}
Exactness is Corollary~\ref{lem:prime-left-saturation}, and the inclusion is Corollary~\ref{lem:prime-left-cancellation}.
\end{proof}

\begin{definition}[Prime-irreducible sequence]
A prime sequence $\beta=B_1\cdots B_m$ is \emph{prime-irreducible} if
no contiguous block of length at least two has prime quotient product;
that is, for every $1\le i<j\le m$,
$[\overline{B_i\cdots B_j}]_\theta$ is not a relative prime.
\end{definition}

\begin{lemma}[Prime contraction]
\label{lem:prime-contraction}
Let $\beta$ be a prime sequence and replace a contiguous block of length at least two whose quotient is a prime $Q$ by $Q$, obtaining $\beta'$.  Then $\overline\beta\subseteq\overline{\beta'}\subseteq[\overline\beta]_\theta$ and $[\overline{\beta'}]_\theta=[\overline\beta]_\theta$.  Hence iterated contractions terminate in a prime-irreducible sequence with the same quotient class; if $\overline\beta$ is already that whole class, every stage remains exact.
\end{lemma}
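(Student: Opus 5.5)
The plan is to treat a single contraction step first and then iterate it.

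\emph{Single step.} Write $\beta=\gamma\beta_0\delta$, where $\beta_0$ is the contracted block of length at least two, $Q=[\overline{\beta_0}]_\theta$ is prime, and $\beta'=\gamma Q\delta$. Since $\overline{\beta_0}\subseteq[\overline{\beta_0}]_\theta=Q$, multiplying setwise on the left by $\overline\gamma$ and on the right by $\overline\delta$ gives
\[
\overline\beta=\overline\gamma\,\overline{\beta_0}\,\overline\delta\subseteq\overline\gamma\, Q\,\overline\delta=\overline{\beta'} .
\]
For the quotient classes, associativity of quotient multiplication gives
\[
[\overline{\beta'}]_\theta=[\overline\gamma]*Q*[\overline\delta]=[\overline\gamma]*[\overline{\beta_0}]*[\overline\delta]=[\overline\beta]_\theta .
\]
Because $\overline{\beta'}$ is contained in its own class, this yields $\overline{\beta'}\subseteq[\overline\beta]_\theta$, which is the full chain of inclusions. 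One small point needs care: if $\gamma$ or $\delta$ is empty, we use $\overline\eps=\{\eps\}$ and $[\eps]_\theta$ as the unit, so that the product conventions remain valid.

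\emph{Termination.} Each contraction replaces at least two symbols by one, so the length of the sequence strictly decreases. It follows that any sequence of contractions stops after at most $|\beta|-1$ steps. At the final stage no contiguous block of length at least two has prime quotient, which is exactly the definition of prime-irreducibility. Chaining the single-step equalities shows that every stage has the same quotient class $[\overline\beta]_\theta$.

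\emph{Exactness.} Suppose $\overline\beta=[\overline\beta]_\theta$. The single-step inclusions sandwich $\overline{\beta'}$:
\[
[\overline\beta]_\theta=\overline\beta\subseteq\overline{\beta'}\subseteq[\overline\beta]_\theta,
\]
so $\overline{\beta'}=[\overline\beta]_\theta=[\overline{\beta'}]_\theta$, and $\beta'$ is again exact. By induction along the contraction sequence, every stage is exact.

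There is no real obstacle here; the only point needing attention is the bookkeeping for empty $\gamma$ or $\delta$, which the stated conventions cover.
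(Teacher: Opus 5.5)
Your proof is correct and follows the same route as the paper's proof. That proof uses the enlargement $\overline{\beta_0}\subseteq Q$ together with associativity of quotient multiplication for one step, and strict length decrease for termination. You additionally spell out the sandwich $\overline\beta\subseteq\overline{\beta'}\subseteq[\overline\beta]_\theta$ for the exactness clause, which the paper's two-line proof leaves implicit.
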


\begin{proof}
Replacing the block enlarges its setwise product to its quotient prime without changing quotient multiplication.  Each contraction strictly decreases sequence length, so iteration terminates.
\end{proof}

\begin{theorem}[Irreducible-path exactness]
\label{thm:irreducible-path}
Assume \textup{(R)}.  Let \(C=\overline{A_1\cdots A_k}\) be any exact prime
factorization of a live relative class $C$.  If
$\overline{B_1\cdots B_m}\subseteq C$ and $B_1\cdots B_m$ is
prime-irreducible, then $m=k$ and $B_i=A_i$ for every $i$.  In
particular, $\overline{B_1\cdots B_m}=C$.
\end{theorem}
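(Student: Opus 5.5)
I would argue by induction on $k$, using prime-prefix interception (Lemma~\ref{lem:prefix-interception}) to peel off the first prime and the peeling corollary (Corollary~\ref{cor:peeling}) to pass to the tail.

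\emph{Matching the first prime.} Put $\alpha=A_1\cdots A_k$ and $\beta=B_1\cdots B_m$. Since $\overline\beta\subseteq C=\overline\alpha$, Lemma~\ref{lem:prefix-interception} yields some $j$ with $\overline{B_1\cdots B_j}\subseteq A_1$. Then $[\overline{B_1\cdots B_j}]_\theta=A_1$, which is prime. If $j\ge2$, the block $B_1\cdots B_j$ has prime quotient. This contradicts prime-irreducibility of $\beta$, whether $j<m$ or $j=m$, because the definition of prime-irreducible forbids every block of length at least two, including the whole sequence. Hence $j=1$ and $B_1\subseteq A_1$. Both are relative classes, so $B_1=A_1$.

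\emph{Peeling to the tail.} Now $C=A_1\cdot\overline{A_2\cdots A_k}$ exactly and $A_1\overline{B_2\cdots B_m}\subseteq C$. Corollary~\ref{cor:peeling} gives two facts: $C':=\overline{A_2\cdots A_k}$ is an exact relative class, and $\overline{B_2\cdots B_m}\subseteq C'$. Here the empty sequence is allowed, with $\overline\eps=\{\eps\}=[\eps]_\theta$ by unit separation.

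\emph{Base cases.} If $k=1$, then $C'=\{\eps\}$. A nonempty tail $B_2\cdots B_m$ would then have a product consisting of nonempty words, since primes are non-unit and hence contain no $\eps$ under unit separation; so that product cannot lie in $\{\eps\}$. Thus $m=1$. Symmetrically, if $m=1$ but $k\ge2$, then $\{\eps\}\subseteq C'$ forces $C'=[\eps]_\theta$. But $C'$ is a product of non-unit primes, so it contains no $\eps$, a contradiction.

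\emph{Induction step.} When $k,m\ge2$, the tail $B_2\cdots B_m$ is still prime-irreducible, being a contiguous subsequence. The class $C'$ is live, as a factor of a live class, and is exactly factored by $A_2\cdots A_k$. The induction hypothesis on $k$ then gives $m-1=k-1$ and $B_i=A_i$ for all $i$. The final equality $\overline\beta=C$ follows immediately once the sequences coincide.

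\emph{Main obstacle.} The delicate point is the first step, in particular the boundary case $j=m$ where the intercepted block is the whole of $\beta$. I would handle it directly through the definition of prime-irreducibility, which covers all $i<j$ and so includes the full sequence. I would also check that the hypotheses of Lemma~\ref{lem:prefix-interception} and Corollary~\ref{cor:peeling} hold at each inductive stage, in particular liveness of $C'$ and the exactness of its factorization.
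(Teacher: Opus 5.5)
Your proof is correct and follows the paper's argument. Both proceed by induction on $k$, use prefix interception together with prime-irreducibility (including the whole-sequence case) to force $B_1=A_1$, and then use peeling to pass to the exact tail class $\overline{A_2\cdots A_k}$ and apply the induction hypothesis. The only difference is cosmetic: you handle the $k=1$ and $m=1$ edge cases through the peeled tail being, or failing to be, $\{\eps\}$. The paper instead observes directly that for $k=1$ the class $C=A_1$ is prime, so the whole sequence $B_1\cdots B_m$ with $m\ge2$ would have prime quotient. For $m=1$ it notes that $A_1\subseteq C$ would force $C=A_1$.
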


\begin{proof}
Induct on $k$.  If $k=1$, then $C=A_1$ is prime.  If $m\ge2$, the
whole sequence $B_1\cdots B_m$ has prime quotient $C$, contradicting
prime-irreducibility.  Hence $m=1$, and the nonempty inclusion
$B_1\subseteq A_1$ gives $B_1=A_1$.

Now let $k\ge2$.  By Lemma~\ref{lem:prefix-interception}, some $j\ge1$
satisfies $\overline{B_1\cdots B_j}\subseteq A_1$.  If $j\ge2$, that
block has prime quotient $A_1$, contradicting prime-irreducibility.
Hence $j=1$ and $B_1=A_1$.  The case $m=1$ is impossible, because then
the prime class $B_1=A_1$ would be contained in the relative class
$C$, forcing $C=A_1$, contrary to the nontrivial exact factorization of
$C$.  Thus both tails are nonempty.  Corollary~\ref{cor:peeling} gives
\(\overline{B_2\cdots B_m}\subseteq\overline{A_2\cdots A_k}\), and
the right-hand side is an exact relative class.  The induction
hypothesis applies to the two tails.
\end{proof}

\begin{corollary}[Exact prime factorizations are irreducible]
\label{cor:exact-irreducible}
Under \textup{(R)}, every exact prime factorization is
prime-irreducible.
\end{corollary}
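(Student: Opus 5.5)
We argue by contradiction, using Lemma~\ref{lem:prime-contraction} together with Theorem~\ref{thm:irreducible-path}.

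Let $C=\overline{A_1\cdots A_k}$ be an exact prime factorization of a live class $C$, and suppose $\alpha=A_1\cdots A_k$ is not prime-irreducible. Then some contiguous block of length at least two has prime quotient product. We iterate the contraction of Lemma~\ref{lem:prime-contraction}, replacing such blocks by their quotient primes. This terminates in a prime-irreducible sequence $\beta=B_1\cdots B_m$. Since at least one contraction was performed and each contraction shortens the sequence, $m<k$.

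The lemma also gives the chain $\overline\alpha\subseteq\overline\beta\subseteq[\overline\alpha]_\theta$. Because the original factorization is exact, $\overline\alpha=C$ and $[\overline\alpha]_\theta=C$, so $\overline\beta=C$. In particular $\overline{B_1\cdots B_m}\subseteq C$ with $B_1\cdots B_m$ prime-irreducible.

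Theorem~\ref{thm:irreducible-path}, applied to the exact factorization $A_1\cdots A_k$ of $C$, then forces $m=k$ (indeed $B_i=A_i$ for every $i$). This contradicts $m<k$. Hence no block of length at least two has prime quotient, so $\alpha$ is prime-irreducible.

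The only point needing care is that the contraction lemma really yields a strictly shorter sequence whose setwise product still lies inside $C$. This is immediate, since each contraction reduces length by at least one and the quotient class is preserved. For $k=1$ the statement is vacuous.
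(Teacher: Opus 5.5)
Your proof is correct and is essentially the paper's argument: you contract the reducible exact factorization via Lemma~\ref{lem:prime-contraction} to a strictly shorter prime-irreducible sequence whose product is still exactly $C$, and then Theorem~\ref{thm:irreducible-path} forces equal length, a contradiction. You also check explicitly that $\overline\beta=C$, using $\overline\alpha\subseteq\overline\beta\subseteq[\overline\alpha]_\theta=C$; the paper compresses this step into the phrase ``exact factorization of the same class.''
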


\begin{proof}
If an exact factorization were reducible, Lemma~\ref{lem:prime-contraction} would produce a strictly shorter prime-irreducible exact factorization of the same class.  Theorem~\ref{thm:irreducible-path}, applied to the original exact factorization and that terminal sequence, would force them to be identical, a contradiction.
\end{proof}

\begin{theorem}[Prime-target division factorization theorem]
\label{thm:ptld-uf}
Assume that $L$ is $h$-substitutable, unit-separated, and satisfies
PTLD.  Then every live non-unit relative class has a unique exact
factorization into relative primes.
\end{theorem}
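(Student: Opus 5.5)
Existence is already Theorem~\ref{thm:prime-existence}, which needs no hypotheses beyond liveness and non-unitness. So the plan concerns uniqueness only, and it should be a short consequence of the irreducibility machinery just developed under \textup{(R)}.

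Let $C$ be a live non-unit relative class with two exact prime factorizations, $C=\overline{A_1\cdots A_k}=\overline{B_1\cdots B_m}$.

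First, I invoke Corollary~\ref{cor:exact-irreducible}. Since $\overline{B_1\cdots B_m}$ is an exact prime factorization, the sequence $B_1\cdots B_m$ is prime-irreducible.

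Second, I apply Theorem~\ref{thm:irreducible-path}. The exact factorization is $C=\overline{A_1\cdots A_k}$, and the inclusion $\overline{B_1\cdots B_m}\subseteq C$ holds trivially (it is in fact an equality). The theorem then gives $m=k$ and $B_i=A_i$ for every $i$. This is uniqueness.

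A side remark: the case where $C$ itself is prime is also covered. Here $k=1$ is one admissible factorization, and the $k=1$ case of Theorem~\ref{thm:irreducible-path} rules out any irreducible exact factorization of length $m\ge2$. Any exact factorization of length $\ge2$ would also contradict primality directly.

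I expect no real obstacle at this stage. All the difficulty sits in Lemma~\ref{lem:prefix-interception} and in the peeling corollaries, which feed Theorem~\ref{thm:irreducible-path}, and these are assumed.

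The one point to check is that Corollary~\ref{cor:exact-irreducible} is not circular. Its proof applies Theorem~\ref{thm:irreducible-path} to the original exact factorization together with a strictly shorter irreducible contraction (Lemma~\ref{lem:prime-contraction}). That contraction is again exact, because contracting an exact factorization preserves exactness. Nothing in that argument uses uniqueness, so the two-step derivation above is legitimate.

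Finally, I should confirm that the hypotheses of the statement are exactly \textup{(R)}: $h$-substitutability, unit separation, and PTLD. All the cited results are stated under \textup{(R)}, so they apply verbatim.
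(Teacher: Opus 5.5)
Your proposal is correct and is exactly the paper's proof: existence from Theorem~\ref{thm:prime-existence}; then Corollary~\ref{cor:exact-irreducible} makes the second exact factorization prime-irreducible, and Theorem~\ref{thm:irreducible-path}, applied with the first as the exact factorization, forces the two to coincide. One small caveat concerns your side remark: for $m\ge3$, an exact factorization of a prime does not contradict primality ``directly,'' because $\overline{B_2\cdots B_m}$ need not a priori be a relative class (you would need Corollary~\ref{lem:prime-left-saturation} to show that it is). The remark is redundant, however, and your main argument does not depend on it.
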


\begin{proof}
Existence is Theorem~\ref{thm:prime-existence}.  If
$C=\overline\alpha=\overline\beta$ are two exact prime factorizations,
Corollary~\ref{cor:exact-irreducible} makes $\beta$ prime-irreducible.
Theorem~\ref{thm:irreducible-path}, applied to the exact factorization
$\alpha$ and the contained sequence $\beta$, gives $\alpha=\beta$.
\end{proof}

\begin{theorem}[PTLD implies tail exactness and valid-tail determinism]
\label{thm:ptld-vte-vtd}
Assume that $L$ is $h$-substitutable, unit-separated, and satisfies
PTLD.  Then VTE holds, and hence VTE$_1$ holds.  If the relative prime
spectrum has size $q<\infty$, then VTD holds and $|\Vcal|\le q^2$.
\end{theorem}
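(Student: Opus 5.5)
The plan is to derive all three claims from Theorem~\ref{thm:irreducible-path} and the peeling corollaries. The idea is that a valid right-hand side is prime-irreducible as a whole, so it is forced to coincide with an exact factorization of something.

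\textbf{VTE.} Let $N\to\gamma\alpha$ be valid with $\gamma,\alpha\neq\eps$. I will induct on $|\gamma|$, peeling off one prime at a time. Write $\gamma=A\gamma'$. Since $\overline{\gamma\alpha}\subseteq N$ and $N$ is a prime, the key claim is that the tail $\overline{\gamma'\alpha}$ is an exact relative class. To see this, put $C:=A*[\overline{\gamma'\alpha}]_\theta$.
\begin{itemize}
\item If $C$ is live and non-unit, take its unique exact prime factorization $C=\overline{\delta}$ from Theorem~\ref{thm:ptld-uf}.
\item The sequence $\gamma\alpha$ is prime-irreducible, because validity excludes proper prime blocks and the whole block has the prime quotient $N$. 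Hence the block $A\gamma'\alpha$ has quotient $N$, which is prime.
\end{itemize}
This shows the naive route gives $C=N$ itself, which is prime and not usable directly. So I will instead work with the tail class $T:=[\overline{\gamma'\alpha}]_\theta$. It is live (as a factor of $N$) and non-unit (by unit separation, since the tail is nonempty), so it has an exact factorization $T=\overline{\delta}$. The sequence $\gamma'\alpha$ is a proper contiguous block of a valid word, hence prime-irreducible. Its product lies in $T$, so Theorem~\ref{thm:irreducible-path} gives $\gamma'\alpha=\delta$ and $\overline{\gamma'\alpha}=T$ exactly. Applying the same argument to the block $\alpha$, which is also prime-irreducible and contained in the live non-unit class $[\overline\alpha]_\theta$, directly yields $\overline\alpha=[\overline\alpha]_\theta$. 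No induction is needed: every proper contiguous suffix block of a valid word is prime-irreducible. The one case to check is when $\alpha$ is a single prime, where exactness is trivial. VTE$_1$ is the special case $|\gamma|=1$.

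\textbf{VTD.} Let $N\to A\alpha$ and $N\to A\beta$ be valid. By VTE, $\overline\alpha$ and $\overline\beta$ are exact classes $R,R'$ with $AR,AR'\subseteq N$. Hence $R,R'\in\operatorname{Res}_\theta(A,N)$, so they are live and contained in $A\backslash N$. Here I need that $A\backslash N$ is a single class, not merely at most $|M|$ classes as in Theorem~\ref{thm:finite-residual-splitting}. Lemma~\ref{lem:ptld-collapse} supplies this: pick $a\in A$; then $ar\in N$ and $ar'\in N$ with the common prefix $a$, which forces $r\theta r'$. So $R=R'$. Both $\alpha$ and $\beta$ are prime-irreducible with products exactly equal to this class, so Theorem~\ref{thm:irreducible-path} gives $\alpha=\beta$.

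\textbf{The bound.} By VTD, each valid rule is determined by the pair $(N,A)$ of its left-hand prime and first right-hand prime. With $q$ primes this gives $|\Vcal|\le q^2$.

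The main subtlety is verifying prime-irreducibility of proper blocks together with their liveness and non-unitness, which is what allows Theorem~\ref{thm:irreducible-path} to apply. Liveness holds because these blocks are factors of words of the live prime $N$. Non-unitness holds because the blocks are nonempty products of non-unit primes under unit separation. Every other step is a direct citation of earlier results.
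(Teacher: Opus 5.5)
Your argument is correct and is essentially the paper's proof. Every proper suffix block of a valid right-hand side is prime-irreducible, so Theorem~\ref{thm:irreducible-path} applied against an exact factorization of $[\overline\alpha]_\theta$ gives VTE; Lemma~\ref{lem:ptld-collapse} with a common prefix $a\in A$, followed by uniqueness of exact factorization, gives VTD and $|\Vcal|\le q^2$. The one flaw is your abandoned opening claim that the whole right-hand side $\gamma\alpha$ is prime-irreducible; this is false because the whole block returns to the prime $N$, but your final argument never uses it.
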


\begin{proof}
Let $N\to\gamma\alpha$ be valid, with $\gamma,\alpha\ne\eps$.  Every
contiguous block of $\alpha$ of length at least two is a proper
contiguous block of the full right-hand side $\gamma\alpha$.  Hence
validity makes $\alpha$ prime-irreducible.  Put
$T^{\mathrm{tail}}:=[\overline\alpha]_\theta$.  This class is live and
non-unit, and Theorem~\ref{thm:prime-existence} gives it an exact prime
factorization.  Since $\overline\alpha\subseteq T^{\mathrm{tail}}$,
Theorem~\ref{thm:irreducible-path} shows that $\alpha$ is that exact
factorization.  Thus $\overline\alpha=T^{\mathrm{tail}}$, proving VTE.

For VTD, suppose $N\to A\alpha$ and $N\to A\beta$ are valid.  VTE makes $R_\alpha:=\overline\alpha$ and $R_\beta:=\overline\beta$ exact live classes.  For any $a\in A$, $r\in R_\alpha$, and $r'\in R_\beta$, correctness gives $ar,ar'\in N$; Lemma~\ref{lem:ptld-collapse} therefore gives $r\theta r'$.  Hence $R_\alpha=R_\beta$, and unique factorization gives $\alpha=\beta$.  Thus there is at most one valid rule for each ordered pair $(N,A)$, so $|\Vcal|\le q^2$.
\end{proof}

\begin{corollary}[Prime interval DAG]
\label{cor:prime-dag}
Under \textup{(R)}, let $C$ be a live non-unit relative class and let
$w\in C$.  Form the directed acyclic graph with vertices
$0,\ldots,|w|$, placing an edge $i\to j$ exactly when the substring
$w[i:j]$ belongs to a relative prime class.  Then every minimum-edge
path from $0$ to $|w|$ reads the unique exact prime factorization of
$C$.
\end{corollary}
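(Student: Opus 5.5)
The plan is to read a path in the DAG as a prime sequence contained in $C$, reduce a minimum-edge path to a prime-irreducible one, and then invoke Theorem~\ref{thm:irreducible-path} to identify it with the unique exact factorization.

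\textbf{Step 1: the DAG has the right paths.} Let $C=\overline{A_1\cdots A_k}$ be the unique exact prime factorization supplied by Theorem~\ref{thm:ptld-uf}. Since $w\in C=A_1\cdot\ldots\cdot A_k$, write $w=w_1\cdots w_k$ with $w_i\in A_i$. The cut points of this decomposition give a path $0\to\cdots\to|w|$ with $k$ edges. Conversely, any path $0=i_0<i_1<\cdots<i_m=|w|$ determines primes $B_t:=[w[i_{t-1}:i_t]]_\theta$, and these are genuine relative primes by the edge condition. Then $w\in\overline{B_1\cdots B_m}$, so $[\overline{B_1\cdots B_m}]_\theta=[w]_\theta=C$. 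Since $\theta$ is a congruence, this gives $\overline{B_1\cdots B_m}\subseteq C$.

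\textbf{Step 2: a minimum path is prime-irreducible.} Take a minimum-edge path, with $m$ edges. Suppose $B_1\cdots B_m$ were not prime-irreducible, so that some block $B_i\cdots B_j$ with $j>i$ has prime quotient $Q$. The substring $w[i_{i-1}:i_j]$ lies in $\overline{B_i\cdots B_j}\subseteq Q$, so $i_{i-1}\to i_j$ is itself an edge. Replacing the block by this single edge gives a path with fewer edges, contradicting minimality. Hence the sequence is prime-irreducible.

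\textbf{Step 3: identify the path.} Theorem~\ref{thm:irreducible-path} now applies to the exact factorization $A_1\cdots A_k$ and the contained prime-irreducible sequence $B_1\cdots B_m$, giving $m=k$ and $B_i=A_i$. So every minimum-edge path reads the exact factorization, and the minimum edge count equals $k$.

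The one subtle point I expect is the case $C$ prime. The statement concerns live non-unit $C$ and includes primes, and then the edge $0\to|w|$ gives a path of length $1$. The theorem's base case $k=1$ already handles this.

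One might also worry that a minimum-edge path could read a different sequence $B$, with $\overline B\subseteq C$ but possibly a different set of cut points. This is ruled out by Step 3 together with the uniqueness of the factorization. Note that the claim is only about which prime labels are read, not that the cut positions are unique; I would point out that the statement does not require uniqueness of the positions.
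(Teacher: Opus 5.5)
Your proposal is correct and follows essentially the same route as the paper. A path gives a prime sequence whose setwise product lies in $C$; minimality rules out any prime-valued block of length at least two, because that block would yield a shortcut edge; and Theorem~\ref{thm:irreducible-path} then identifies the path with the exact factorization. Your Step~1 also checks explicitly that such a path exists, using that factors from non-unit classes are nonempty under unit separation, which the paper leaves implicit.
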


\begin{proof}
Any path determines a prime sequence whose setwise product is contained in $C$, because the chosen substrings concatenate to $w\in C$ and quotient multiplication returns the class $C$.  A minimum-edge path cannot contain a length-at-least-two contiguous subpath with prime quotient: the corresponding whole interval would itself be a prime edge and would shorten the path.  Theorem~\ref{thm:irreducible-path} therefore applies.
\end{proof}

\begin{theorem}[Prime-sequence refinement]
\label{thm:prime-refinement}
Assume \textup{(R)}.  Let \(C=\overline{A_1\cdots A_k}\) be the exact prime factorization of a live class.  If $\beta=B_1\cdots B_m$ is any prime sequence with
\(\overline\beta\subseteq C,\)
then the direct valid-rule grammar of Section~\ref{sec:valid} derives
\(A_1\cdots A_k\Rightarrow^* B_1\cdots B_m.\)
\end{theorem}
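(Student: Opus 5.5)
Contract $\beta$ to a prime-irreducible sequence, identify that sequence with $\alpha=A_1\cdots A_k$ by Theorem~\ref{thm:irreducible-path}, and then expand each $A_i$ back into the block of $\beta$ it replaced, using correct productions only.

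\textbf{Contraction.} Apply Lemma~\ref{lem:prime-contraction} repeatedly to $\beta$. Each step replaces a contiguous block of length at least two whose quotient is a prime $Q$ by the single symbol $Q$. The process terminates in a prime-irreducible sequence $\beta^\circ$ with
\[
\overline\beta\subseteq\overline{\beta^\circ}\subseteq[\overline\beta]_\theta .
\]
Since $\overline\beta\subseteq C$ and $C$ is a class, $[\overline\beta]_\theta=C$, so $\overline{\beta^\circ}\subseteq C$. Theorem~\ref{thm:irreducible-path}, applied to the exact factorization $C=\overline{A_1\cdots A_k}$, then gives $\beta^\circ=A_1\cdots A_k$.

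\textbf{Grouping.} Record the contractions as a forest. Each symbol of $\beta^\circ$ is the root of a tree whose leaves form a contiguous block $\beta_i$ of $\beta$, and $\beta=\beta_1\cdots\beta_k$. Each internal node is a prime $Q$ obtained by contracting a block $\gamma$ of length at least two (whose entries may themselves be earlier contraction results), with $[\overline\gamma]_\theta=Q$. The production $Q\to\gamma$ is therefore correct.

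\textbf{Derivation.} By the valid-rule reduction lemma, every correct production $Q\to\gamma$ satisfies $Q\Rightarrow^*\gamma$ using valid productions only. Compose these derivations down each tree. If $|\beta_i|=1$, the tree is trivial and $A_i=B_j$ for the corresponding $j$. Otherwise $A_i\Rightarrow^*\beta_i$, and, by induction on the number of contractions in the subtree, every intermediate node also derives its leaves. Concatenating over $i$ yields $A_1\cdots A_k\Rightarrow^*B_1\cdots B_m$ in the direct valid-rule grammar.

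\textbf{Main obstacle.} The delicate point is that Theorem~\ref{thm:irreducible-path} needs only the inclusion $\overline{\beta^\circ}\subseteq C$, not exactness of $\overline\beta$. This is why the quotient-class clause of Lemma~\ref{lem:prime-contraction} matters. A second bookkeeping point is that the tree nodes are primes used as nonterminals, and the reduction lemma only ever invokes correctness of the rules $Q\to\gamma$, never exactness.
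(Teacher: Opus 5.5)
Your proposal is correct and follows essentially the same route as the paper. You contract $\beta$ via Lemma~\ref{lem:prime-contraction} and identify the irreducible result with $A_1\cdots A_k$ by Theorem~\ref{thm:irreducible-path}, then reverse the contractions as correct productions and replace them by valid ones via the valid-rule reduction lemma; your forest bookkeeping just spells out the paper's ``reversing the contractions'' step in more detail.
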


\begin{proof}
By Lemma~\ref{lem:prime-contraction}, contract $\beta$ to a prime-irreducible sequence with quotient class $C$ and setwise product contained in $C$.  Theorem~\ref{thm:irreducible-path} identifies the terminal sequence with $A_1\cdots A_k$.  Reversing the contractions gives correct prime expansions, and the Valid-Rule Reduction Lemma replaces each by valid productions.
\end{proof}

\begin{example}[PTLD is strictly weaker than factor cancellation]
\label{ex:ptld-not-cancel}
Let $M=\{1,a,0\}$ with $a^2=0$ and $0$ absorbing.  Put $\Sigma=\{x\}$, $h(x)=a$, and $L=\{x^2\}$.  The only live relative prime is $X=\{x\}$ with type $a$.  Equations $qr=a$ have at most one right solution for each $q$, so PTLD holds.  But
\(a\cdot a=0=a\cdot0, \qquad a\ne0,\)
so factor-left-cancellation fails.
\end{example}

The noncancellative language $L_0$ below shows that UF does not imply VTD or PTLD.  The $36$-element quotient witness in Section~\ref{sec:36} will show the stronger separation UF$\not\Rightarrow$FRP.

\section{Bounded-defect localization}
\label{sec:defect}

The next mechanism yields unique exact factorization without cancellation.

\begin{definition}[Bounded concatenation-defect statistic]
A function $\nu:\Sigma^*\to\Nat$ has bounded concatenation defect $\Delta$ if
\[
  \nu(x)+\nu(y)
  \le \nu(xy)
  \le \nu(x)+\nu(y)+\Delta
  \tag{BD}
\]
for all $x,y\in\Sigma^*$.
\end{definition}

Iterating (BD) gives
\begin{equation}
  \sum_{i=1}^k\nu(x_i)
  \le
  \nu(x_1\cdots x_k)
  \le
  \sum_{i=1}^k\nu(x_i)+(k-1)\Delta.
  \label{eq:defect-iteration}
\end{equation}

For a relative class $X$ define its defect spectrum
\(\Spec_\nu(X):=\{\nu(w):w\in X\}, \qquad \underline\nu(X):=\min\Spec_\nu(X).\)
Call a prime $P$ \emph{neutral} if $\Spec_\nu(P)=\{0\}$ and \emph{positive} if $\underline\nu(P)\ge1$.

\begin{theorem}[Single-carrier lemma]
\label{thm:single-carrier}
Assume every relative prime is neutral or positive.  Let $X$ be a relative class such that
\(\underline\nu(X)=1\)
and $\Spec_\nu(X)$ is unbounded.  Then every exact prime factorization of $X$ contains exactly one positive prime.  Moreover, that positive prime has unbounded defect spectrum.
\end{theorem}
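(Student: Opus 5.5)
Fix an exact prime factorization $X=\overline{P_1\cdots P_k}$, which exists by Theorem~\ref{thm:prime-existence}; the goal is to show that exactly one $P_i$ is positive and that this $P_i$ has unbounded spectrum. The whole argument uses only the iterated defect inequality~\eqref{eq:defect-iteration} and exactness of the product. Exactness matters because $X$ is then literally the set $\{w_1\cdots w_k:w_i\in P_i\}$. So the minimum of $\nu$ over $X$ and the unboundedness of $\nu$ on $X$ are both controlled by choosing the factors independently.

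\emph{At most one positive prime.} Take any words $w_i\in P_i$. The lower half of~\eqref{eq:defect-iteration} gives
\[
\nu(w_1\cdots w_k)\ge\sum_i\nu(w_i).
\]
Minimizing over the independent choices of the $w_i$ yields $\underline\nu(X)\ge\sum_i\underline\nu(P_i)$, since the product word lies in $X$. Each positive prime contributes at least $1$, so two positive primes would force $\underline\nu(X)\ge2$, contradicting $\underline\nu(X)=1$.

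\emph{At least one positive prime.} Suppose instead that every $P_i$ is neutral, so $\nu(w_i)=0$ for all $w_i\in P_i$. The upper half of~\eqref{eq:defect-iteration} then gives $\nu(w)\le(k-1)\Delta$ for every $w\in X$. This holds because, by exactness, every $w\in X$ splits as such a product. Hence $\Spec_\nu(X)$ is bounded, a contradiction. The hypothesis that every prime is neutral or positive is what makes ``not positive'' equivalent to ``neutral'' here.

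\emph{The positive prime has unbounded spectrum.} Let $P_j$ be the unique positive prime and assume $\Spec_\nu(P_j)\subseteq[0,B]$. Every $w\in X$ factors as $w_1\cdots w_k$, with the other factors neutral. The upper bound gives
\[
\nu(w)\le B+(k-1)\Delta,
\]
again contradicting unboundedness of $\Spec_\nu(X)$.

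The only real subtlety is the repeated use of exactness, i.e.\ that every word of $X$ factors through the given sequence. Without it, a mere quotient product $[\overline\alpha]=X$ would not let the upper bound transfer to all of $X$. I expect no further obstacle: the argument is a short two-sided sandwich using (BD).
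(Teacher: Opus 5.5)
Your proposal is correct and is essentially the paper's proof: the lower half of \eqref{eq:defect-iteration} rules out two positive primes, and the upper half rules out both an all-neutral factorization and a carrier with bounded spectrum, with exactness used each time to transfer the bound to every word of $X$. One wording fix: in the first step the fact you need is that every word of $X$ is a product word ($X\subseteq\overline{P_1\cdots P_k}$), not that product words lie in $X$; also, the appeal to Theorem~\ref{thm:prime-existence} is unnecessary, since the claim concerns an arbitrary given exact factorization.
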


\begin{proof}
If a factorization $X=P_1\cdots P_k$ contains no positive prime, then every factor has defect zero, so \eqref{eq:defect-iteration} gives a uniform bound $(k-1)\Delta$ on $\Spec_\nu(X)$, contradiction.  If it contains at least two positive primes, the lower bound in \eqref{eq:defect-iteration} gives $\nu(w)\ge2$ for every product word, contradicting $\underline\nu(X)=1$.  Hence there is exactly one positive prime.  If its spectrum were bounded, then \eqref{eq:defect-iteration} would again bound the spectrum of the entire product.
\end{proof}

For a fixed finite word $z$, the occurrence count $\nu_z(w)=\#_z(w)$ satisfies (BD) with $\Delta=|z|-1$, since only occurrences crossing the concatenation boundary can be newly created.  The example below uses $z=ba$ and $\Delta=1$.

\section{The nonregular example \texorpdfstring{$L_0=\{a^n b^n:n\ge0\}^*$}{The nonregular example L0}}
\label{sec:L0}

Put
\(K_n:=a^n b^n\quad(n\ge1), \qquad K:=\{K_n:n\ge1\}, \qquad L_0:=K^*.\)
For $w\in\{a,b\}^*$ let $\fst(w),\lst(w)\in\{\bot,a,b\}$ be the
first and last letters, and let $\operatorname{ba}(w)$ record whether
$w$ contains the factor $ba$.  We use the finite morphism
\(h(w):=(\fst(w),\lst(w),\operatorname{ba}(w)),\)
with the usual concatenation product on first/last symbols and the
$ba$-bit.  Moreover,
\[
  h(\eps)=(\bot,\bot,0),
\]
whereas every nonempty word has first and last symbols in $\{a,b\}$.  Hence $h^{-1}(h(\eps))=\{\eps\}$, so $[\eps]_{L_0,h}=\{\eps\}$ and $(L_0,h)$ is unit-separated.  We prove $h$-substitutability here, together with the relative-class, factorization, and valid-rule analysis, so that this boundary example is self-contained.

All claims about $L_0$ used later are proved analytically in this
section; no finite computation is used.

We shall repeatedly use the involution
\(\iota(w):=\sigma(w^{\mathrm R})\), where $w^{\mathrm R}$ is reversal
and $\sigma(a)=b$, $\sigma(b)=a$.  It preserves $L_0$, reverses
concatenation, and preserves the relative congruence.  Hence every
statement about a left fringe has a right-fringe dual.

For a word $x=x_1\cdots x_n$, write
\(\beta_x(i):=|x_1\cdots x_i|_a-|x_1\cdots x_i|_b.\)
\begin{lemma}[Block-height characterization]
\label{lem:L0-height-characterization}
A word $w\in\{a,b\}^*$ belongs to $L_0$ if and only if
\begin{enumerate}[label=\textup{(\roman*)},nosep]
\item $\beta_w(i)\ge0$ for every prefix and $\beta_w(|w|)=0$;
\item every occurrence $w_iw_{i+1}=ba$ satisfies $\beta_w(i)=0$.
\end{enumerate}
Consequently, if a live factor contains $ba$, all of its internal
$ba$ events occur at one common relative height.
\end{lemma}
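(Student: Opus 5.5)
We prove the two directions of the characterization separately and then deduce the consequence about live factors. For the forward direction, write $w=K_{n_1}\cdots K_{n_r}$ with each $n_i\ge1$. Inside a single block $a^nb^n$ the height first rises from $0$ to $n$ and then falls back to $0$. So every prefix of $w$ has height at least $0$, the height returns to $0$ exactly at each block boundary, and the total height is $0$; this is (i). The factor $ba$ cannot occur inside a block $a^nb^n$, so each occurrence straddles a block boundary: $w_i$ is the last $b$ of some block and $w_{i+1}$ is the first $a$ of the next. The prefix $w_1\cdots w_i$ is then a product of complete blocks, so $\beta_w(i)=0$, which is (ii). The case $w=\eps$ is trivial.

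For the converse we induct on $|w|$. Suppose $w\ne\eps$ satisfies (i) and (ii). By (i) the first letter is $a$, since $\beta_w(1)\ge0$, and $w$ cannot consist of $a$'s alone, since its total height is $0$. Write $w=a^nb\,w'$ with $n\ge1$, and let $b^m$ be the maximal run of $b$'s that follows, so $w=a^nb^m w''$ where $w''$ is empty or begins with $a$. Condition (i) gives $m\le n$, because the height stays nonnegative.
\begin{itemize}
\item If $w''=\eps$, the total height is $n-m=0$, so $w=K_n\in L_0$.
\item If $w''$ begins with $a$, the letters at positions $n+m$ and $n+m+1$ form an occurrence of $ba$. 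Condition (ii) then forces $\beta_w(n+m)=n-m=0$, so $m=n$.
\end{itemize}
In the second case $w=K_n w''$, and the height function of $w''$ is that of $w$ shifted by the prefix $K_n$, which has height $0$. Hence $w''$ inherits (i) and (ii), including at its own $ba$ occurrences. Induction gives $w''\in L_0$, and therefore $w\in L_0$.

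For the consequence, let $x$ be a live factor, so $uxv\in L_0$ for some $u,v$. Every internal occurrence of $ba$ in $x$ at position $i$ is an occurrence in $uxv$ at position $|u|+i$. Condition (ii) for $uxv$ gives $\beta_u(|u|)+\beta_x(i)=0$. Thus every such $i$ satisfies $\beta_x(i)=-\beta_u(|u|)$, which is one common value independent of $i$.

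The only delicate step is the converse, namely showing that the first $b$-run has length exactly $n$. This uses (i) in the terminal case and (ii) in the case where more letters follow, and it is what lets the induction peel off one block $K_n$ at a time. The rest is direct bookkeeping with the height function.
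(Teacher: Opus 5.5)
Your proof is correct and follows essentially the paper's route: necessity from the block structure, sufficiency by decomposing $w$ into zero-balance $a^*b^*$ blocks, and the consequence from $\beta_u(|u|)+\beta_x(i)=0$. The only difference is that you peel off the first block $K_n$ by induction, whereas the paper splits $w$ at all zero-height positions at once.
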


\begin{proof}
Every block $K_n=a^nb^n$ starts at height zero, stays nonnegative, and
returns to zero; the only $ba$ factors of a product of such blocks are
block boundaries.  This proves necessity.  Conversely, split $w$ at
all zero-height positions.  Between two consecutive such positions the
relative height is positive at every proper prefix.  Condition
\textup{(ii)} forbids $ba$ inside such a segment, so the segment lies in
$a^*b^*$.  Its total balance is zero, hence it is some $K_n$.  Thus
$w\in K^*=L_0$.  For the final statement, embed the live factor $x$ in
$uxv\in L_0$.  Every internal $ba$ of $x$ must occur at absolute height
zero, hence at the same relative height $-\beta_u(|u|)$ inside $x$.
\end{proof}

\begin{proposition}[$h$-substitutability of $L_0$]
\label{prop:L0-h-substitutable}
The language $L_0$ is $h$-substitutable.
\end{proposition}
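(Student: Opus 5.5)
The plan is to verify $h$-substitutability through the block-height characterization, Lemma~\ref{lem:L0-height-characterization}. Suppose $h(x)=h(y)$ and $uxv,uyv\in L_0$. I will attach to each live factor $x$ a finite \emph{profile}. I then aim to show two things: for arbitrary $(u',v')$, whether $u'xv'\in L_0$ depends only on the profile of $x$ together with $u',v'$; and a shared accepting context together with equal $h$-types forces equal profiles. The conclusion $x\equiv_{L_0}y$ then follows at once.

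\emph{Step 1 (the profile).} Write $\Delta_x:=\beta_x(|x|)$ for the net balance and $m_x:=\min_i\beta_x(i)$ for the minimal relative height, with $i=0$ included. Let $c_x$ be the common relative height of the internal $ba$ events; this exists when $\operatorname{ba}(x)=1$, by the final clause of Lemma~\ref{lem:L0-height-characterization}. The profile is $(\fst(x),\lst(x),\operatorname{ba}(x),\Delta_x,m_x,c_x)$.

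In a context $u'$ with start height $s:=\beta_{u'}(|u'|)$, conditions (i) and (ii) for $u'xv'$ split into three parts: conditions internal to $u'$ and $v'$, conditions at the two junctions, and conditions inside $x$. The inside-$x$ conditions read $s+m_x\ge0$ and, if $\operatorname{ba}(x)=1$, $s+c_x=0$. The junction $ba$ events are governed by $\lst(u')$ with $\fst(x)$, and by $\lst(x)$ with $\fst(v')$. They are checked at heights $s$ and $s+\Delta_x$ respectively. The heights inside $v'$ are shifted by $s+\Delta_x$. Thus membership depends only on the profile. Unit separation already handles the empty word, since $h(x)=h(\eps)$ forces $x=\eps$.

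\emph{Step 2 (shared context pins the numerical data).} Put $s:=\beta_u(|u|)$. Since $uxv$ and $uyv$ both end at height $0$, we get $s+\Delta_x+\beta_v(|v|)=0=s+\Delta_y+\beta_v(|v|)$, so $\Delta_x=\Delta_y$. The $h$-type carries the first letter, the last letter and the $ba$-bit, so these agree. It remains to compute $m_x$ in each case of the $ba$-bit.

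If $\operatorname{ba}=0$, then $x\in a^*b^*$, so $m_x=\min(0,\Delta_x)$, and $c_x$ is absent. Hence the profiles agree.

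If $\operatorname{ba}=1$, the characterization applied to $uxv$ gives $s+c_x=0$, and likewise $s+c_y=0$, so $c_x=c_y=-s\le0$. I then show $m_x=c_x$, and the same for $y$. The prefix of $x$ before its first internal $ba$ lies in $a^*b^+$ and ends at relative height $c_x\le0$, so its minimum is $c_x$. Between $ba$ events the word is a product of blocks $K_n$ read from height $c_x$. The final segment lies in $a^*b^*$ and starts from height $c_x$. By nonnegativity in the embedding $uxv$, every height in these segments is at least $c_x$.

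\emph{Main obstacle.} The delicate point is Step~1: checking carefully that no further information about $x$ enters conditions (i) and (ii). In particular, every $ba$ event straddling a junction must be captured by the first and last letters of $x$. The case $|x|=1$, where $\fst(x)=\lst(x)$ and both junctions involve the same letter, must be handled by the same bookkeeping. I would present this as a short case analysis on where each $ba$ occurrence and each prefix lies relative to the decomposition $u'\mid x\mid v'$.
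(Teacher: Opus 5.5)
Your proposal is correct and follows essentially the same route as the paper, which shows that the distribution of a live factor is determined by the triple $(h(x),\beta(x),c(x))$ and that a shared accepting context with equal $h$-type forces $\beta(x)=\beta(y)$ and $c(x)=-\beta(u)=c(y)$. Your extra coordinate $m_x$ is only extra bookkeeping: the paper treats it as a function of that triple, namely $\min(0,\beta(x))$ in the $ba$-free case and $c(x)$ otherwise, which is exactly what your Step~2 verifies.
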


\begin{proof}
Write $\beta(x)=|x|_a-|x|_b$.  If a live factor $x$ contains $ba$, Lemma~\ref{lem:L0-height-characterization} shows that all its internal $ba$ events occur at one relative height; denote it by $c(x)$.  If $x$ contains no $ba$, put $c(x)=\bot$.

We first record a consequence of the height characterization.  For every live factor $x$, its two-sided distribution is determined by the triple
\(\bigl(h(x),\beta(x),c(x)\bigr).\)
Indeed, for arbitrary $\ell,r$, the criterion for $\ell x r\in L_0$ consists of total balance zero, nonnegativity of all prefixes, and height zero at every $ba$ boundary.  The total-balance contribution of $x$ is $\beta(x)$.  If $c(x)=\bot$, then $x\in a^*b^*$ and its least relative prefix height is determined by $\beta(x)$ together with its first and last symbols.  If $c(x)\ne\bot$, all internal $ba$ events occur at height $c(x)$ and the least relative prefix height is $c(x)$; the possible $ba$ events across the two external boundaries are determined by the first and last symbols recorded by $h(x)$.  Hence every clause of the characterization depends only on the displayed triple and on $\ell,r$.

Now suppose that $h(x)=h(y)$ and that $uxv,uyv\in L_0$.  Total balance gives $\beta(x)=\beta(y)$.  The $ba$-bit in $h$ says that either neither factor contains $ba$ or both do.  In the latter case, acceptance of the two displayed words forces every internal $ba$ event to absolute height zero, whence
\(c(x)=-\beta(u)=c(y).\)
Thus $x$ and $y$ have the same determining triple and therefore the same two-sided distribution.  This is exactly $h$-substitutability.
\end{proof}

\subsection{Complete normal forms for live relative classes}

For $r,s\ge1$ and $d\in\mathbb Z$ put
\(A_r:=\{a^r\}, \qquad B_s:=\{b^s\},\)
\(C_d:=\{a^i b^j:i,j\ge1,\ i-j=d\}.\)
For $ba$-containing factors define
\begin{equation}
\begin{aligned}
  X_{c,e}&:=b^cL_0a^e && (c,e\ge1),\\
  Y_{c,e}&:=b^cL_0a^eK && (c\ge1,e\ge0),\\
  Z_{c,e}&:=Kb^cL_0a^e && (c\ge0,e\ge1),\\
  W_{c,e}&:=Kb^cL_0a^eK && (c,e\ge0).
\end{aligned}
\label{eq:L0-families}
\end{equation}

\begin{lemma}[Live-class normal forms]
\label{lem:L0-normal-forms}
The live relative classes of $(L_0,h)$ are exactly
\[
  [\eps]_\theta,\qquad A_r\ (r\ge1),\qquad B_s\ (s\ge1),\qquad C_d\ (d\in\mathbb Z),
\]
together with the four families in \eqref{eq:L0-families}.  Every word in one of
$X_{c,e},Y_{c,e},Z_{c,e},W_{c,e}$ has total balance $e-c$, and every
internal occurrence of $ba$ occurs at relative height $-c$.
\end{lemma}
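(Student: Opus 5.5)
The plan is to classify live factors of $L_0$ by the determining triple $(h(x),\beta(x),c(x))$ from the proof of Proposition~\ref{prop:L0-h-substitutable}. That proof shows $\D_{L_0}(x)$ depends only on this triple. So two live factors are $\theta$-equivalent if their triples agree. Conversely, the relative class determines $h(x)$, and it also determines $\beta(x)$ and $c(x)$: embed $x$ in a word of $L_0$ and apply Lemma~\ref{lem:L0-height-characterization}. The balance is forced by $\ell xr\in L_0$ with fixed $\ell$, and the $ba$-height is forced by which left contexts are admissible. Hence live classes are exactly the fibres of the triple map on $\Fact(L_0)$. It remains to list which triples occur and to check that the displayed sets are exactly the corresponding fibres.

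\emph{Case without $ba$.} A live factor with no internal $ba$ lies in $a^*b^*$. If it is empty, we get $[\eps]_\theta$. If it is nonempty and lies in $a^+$, its type is $(a,a,0)$ and its balance is $r$, giving $A_r$. The case $b^+$ gives $B_s$ symmetrically. If it has both letters, the type is $(a,b,0)$ and the balance $d=i-j$ ranges over all of $\mathbb Z$; every such $a^ib^j$ is live, since it is a factor of $a^{\max(i,j)}b^{\max(i,j)}$. This gives $C_d$.

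\emph{Case with $ba$.} Write such a factor as $x=pmq$, where $p$ is the part before the first internal $ba$ boundary and $q$ the part after the last. By Lemma~\ref{lem:L0-height-characterization} the boundaries sit at relative height $-c$, where $c>0$ if and only if $p$ is a pure $b$-run of length $c$ beginning the word. The middle $m$ returns from height zero to height zero with all $ba$ events there, so $m\in L_0$. The prefix $p$ is either $b^c$ ($c\ge1$) or lies in $a^+b^+$, a tail of a $K$-block. In the latter case the factor begins with $a$, so the first letter in $h$ distinguishes the families. I will then show that $\{x\text{ beginning with }a\}$ with given triple equals $Kb^cL_0\cdots$, absorbing the leftover $a^ib^{i+c}$ as $K_i\,b^c$. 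This is where the index ranges $c\ge0$ versus $c\ge1$ arise. The right fringe is handled by the involution $\iota$, giving $a^e$ or $\ldots a^eK$. The four first/last-letter combinations then give $X,Y,Z,W$ with the stated index ranges, and the balance $e-c$ and the height $-c$ are read off directly.

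\emph{Verification.} Finally, I will check that each displayed set is contained in $\Fact(L_0)$: for example, $b^cL_0a^e\subseteq a^cL_0\cdot$ context gives $a^c\,b^cL_0a^e\,b^e\subseteq L_0$. I will also check that each displayed set is the full fibre of its triple, which reduces to the parsing above. The main obstacle is exactly this reverse inclusion for the $K$-prefixed families. An arbitrary $a$-initial live word with the given triple must factor as an element of $K$ followed by $b^c$ followed by an $L_0$-word. This requires the observation that the maximal leading $a^ib^j$ segment has $j\ge i$ exactly when $c\ge0$. I will settle the edge case $c=0$, where the split is between the $K$ block and $L_0$ with an empty $b$-run, separately.
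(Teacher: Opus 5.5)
Your proposal is correct, but it organises the argument differently from the paper.

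\textbf{The paper's route.} It parses a fixed occurrence of a live factor inside $K_{n_1}\cdots K_{n_t}$ by where its two ends fall in a block (descending $b$-part versus ascending $a$-part). This places every live factor in one of the displayed sets. It then shows that each displayed set is a single class by exhibiting a common accepting context and invoking $h$-substitutability. The contexts are $(a^c,b^e)$ for the $ba$-families, and $(\eps,b^d)$ or $(a^{-d},\eps)$ for $C_d$. Finally, it separates different parameters by the $h$-type, the absolute height of the first $ba$ event, and total balance.

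\textbf{Your route.} You promote the triple $(h(x),\beta(x),c(x))$ from the proof of Proposition~\ref{prop:L0-h-substitutable} to a complete invariant of live classes. Merging and separation then come at once, and the lemma reduces to computing fibres. You compute these by cutting at the first and last internal $ba$ and applying Lemma~\ref{lem:L0-height-characterization} to the middle piece.

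\textbf{Comparison.} Your route is more systematic. However, it rests on the assertion that the triple determines $\D_{L_0}(x)$, which the paper establishes only in passing inside that proof. That assertion needs, for instance, that the least relative prefix height of a live $ba$-factor equals its event height. The paper's route uses only the stated $h$-substitutability and explicit contexts. If you keep your route, state and verify that determination step explicitly.

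\textbf{A sentence to fix.} It is not true that $c>0$ if and only if $p$ is a pure $b$-run. Take $x=abbba$, a factor of $aaabbbab$. Then $p=abbb$ and $c=2$, and indeed $x\in Z_{2,1}=Kb^2L_0a$.

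The correct dichotomy is by first letter. If $x$ begins with $b$, then $p=b^c$ with $c\ge1$. Otherwise $p=a^ib^{i+c}=K_ib^c$ with $i\ge1$ and $c\ge0$. Your next sentence already treats the $a$-initial case this way, so only the biconditional needs to go.

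Also, $c\ge0$ is automatic rather than something to check. If $uxv\in L_0$, the first $ba$ event of $x$ sits at absolute height $0$, so $c=\beta(u)\ge0$. In particular, $c=0$ is just the instance $p=K_i$ of the general formula and needs no separate treatment.
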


\begin{proof}
Let $x$ be a live factor and choose an occurrence of $x$ inside a word
\(K_{n_1}\cdots K_{n_t}\in L_0.\)
If $x$ contains no $ba$, then $x\in a^*b^*$.  Thus $x$ is either a
pure power $a^r$, a pure power $b^s$, or $a^i b^j$ with $i,j\ge1$.
The first two cases give $A_r$ and $B_s$.  In the third case the
relative class is indexed by the balance $d=i-j$, giving $C_d$.

Suppose now that $x$ contains $ba$.  At its left edge, either the
occurrence starts in the descending $b$-part of a block, contributing
$b^c$ with $c\ge1$, or it starts in the ascending $a$-part of a block.
In the latter case the remainder of that block has the form
\(a^r b^n=K_r b^{n-r},\)
so the left fringe is $Kb^c$ with $c\ge0$.  Dually, at the right edge
the occurrence either ends in an ascending $a$-part, giving $a^e$
with $e\ge1$, or in a descending $b$-part, in which case
\(a^n b^r=a^{n-r}K_r\)
and the right fringe is $a^eK$ with $e\ge0$.  All complete blocks
strictly between the two fringes form an arbitrary word of $L_0$.
This gives exactly the four forms in \eqref{eq:L0-families}.

It remains to verify that the displayed sets are precisely relative
classes.  For $C_d$, all words have the same $h$-type.  If $d\ge0$,
then
\(x b^d\in K\subseteq L_0 \qquad(x\in C_d),\)
while for $d<0$,
\(a^{-d}x\in K.\)
Hence all members of $C_d$ share an accepting context and are
syntactically congruent by $h$-substitutability.  The same context
distinguishes $C_d$ from $C_{d'}$ when $d'\ne d$.  Likewise, $A_r$
and $A_{r'}$ are distinguished by the right context $b^r$, and
$B_s,B_{s'}$ by the left context $a^s$.

For each of the four $ba$-families, every member shares the context
\((a^c,b^e).\)
Indeed, for $z\in L_0$ and $K_n\in K$,
\[
\begin{aligned}
  a^c(b^cza^e)b^e&=K_c zK_e,\\
  a^c(b^cza^eK_n)b^e&=K_c zK_{e+n},\\
  a^c(K_n b^cza^e)b^e&=K_{c+n}zK_e,
\end{aligned}
\]
and the fourth case combines the last two identities.  Thus every
set in \eqref{eq:L0-families} lies in one relative class.

The first/last/$ba$ observation separates the four families from one
another.  Within one fixed family, every $ba$ event occurs after a
relative balance drop of exactly $c$, hence at height $-c$, and every
word has total balance $e-c$.  If a word from parameters $(c',e')$
were accepted in the context $(a^c,b^e)$, its first internal $ba$
event would occur at absolute height $c-c'$ and therefore, by Lemma~\ref{lem:L0-height-characterization}, at height zero.  Hence $c=c'$.  The total
balance of the completed word then forces $e=e'$.  Thus different
parameters give different relative classes, completing the
classification.
\end{proof}

\subsection{Seven primes and exact fringe identities}

Set
\[
\begin{aligned}
A&:=A_1=\{a\}, & B&:=B_1=\{b\}, & C&:=C_0=K,\\
D&:=X_{1,1}=bL_0a, & E&:=Y_{1,0}=bL_0K=bK^+,\\
F&:=Z_{0,1}=KL_0a=K^+a, & G&:=W_{0,0}=KL_0K=K^{\ge2}.
\end{aligned}
\]
The involution satisfies
\(\iota(A)=B,\ \iota(B)=A,\ \iota(E)=F,\ \iota(F)=E\), while
$C,D,G$ are fixed.

\begin{lemma}[Exact fringe identities]
\label{lem:L0-fringe-identities}
The no-$ba$ classes satisfy
\(A_r=A^r, \qquad B_s=B^s,\)
and
\begin{equation}
  C_d=
  \begin{cases}
  A^dC,&d>0,\\
  C,&d=0,\\
  CB^{-d},&d<0.
  \end{cases}
  \label{eq:L0-Cd}
\end{equation}
The $ba$-classes satisfy the exact setwise identities
\begin{equation}
  X_{c,e}=B^{c-1}DA^{e-1},
  \label{eq:L0-X}
\end{equation}
\begin{equation}
  Y_{c,e}=
  \begin{cases}
  B^{c-1}E,&e=0,\\
  B^{c-1}DA^{e-1}C,&e\ge1,
  \end{cases}
  \label{eq:L0-Y}
\end{equation}
\begin{equation}
  Z_{c,e}=
  \begin{cases}
  FA^{e-1},&c=0,\\
  CB^{c-1}DA^{e-1},&c\ge1,
  \end{cases}
  \label{eq:L0-Z}
\end{equation}
and
\begin{equation}
  W_{c,e}=
  \begin{cases}
  G,&c=e=0,\\
  FA^{e-1}C,&c=0,e\ge1,\\
  CB^{c-1}E,&c\ge1,e=0,\\
  CB^{c-1}DA^{e-1}C,&c,e\ge1.
  \end{cases}
  \label{eq:L0-W}
\end{equation}
\end{lemma}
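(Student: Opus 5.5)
The identities are equalities of sets of strings, so I will verify each one directly from the explicit descriptions of the families in \eqref{eq:L0-families}. Lemma~\ref{lem:L0-normal-forms} already tells me that every displayed left-hand side is a relative class, so nothing about the quotient needs to be proved again. Only two structural facts about $L_0=K^*$ are needed: $K^+\cdot K=K^{\ge2}$, and every $z\in L_0$ is either $\eps$ or lies in $K^+$. For each identity I will show both inclusions by matching a word of the left side against the concatenation on the right.

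\textbf{No-$ba$ identities.} The identities $A_r=A^r$ and $B_s=B^s$ hold trivially, since both sides are singletons. For \eqref{eq:L0-Cd} with $d>0$, every word of $A^dC$ has the form $a^{d+n}b^n$ with $n\ge1$, so it lies in $C_d$. Conversely, any $a^ib^j$ with $i-j=d$ and $j\ge1$ splits as $a^d\cdot a^jb^j$. The case $d<0$ is the $\iota$-dual.

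\textbf{The $ba$-classes.} I first establish \eqref{eq:L0-X}, using $D=bL_0a$:
\[
  B^{c-1}DA^{e-1}=b^{c-1}\,bL_0a\,a^{e-1}=b^cL_0a^e=X_{c,e}.
\]
I then handle \eqref{eq:L0-Y}.
\begin{itemize}
\item For $e=0$, $b^{c-1}E=b^{c-1}bK^+=b^cK^+$. Since $L_0K=K^+$, this equals $b^cL_0K=Y_{c,0}$.
\item For $e\ge1$, the right side is $X_{c,e}\cdot C=b^cL_0a^eK=Y_{c,e}$.
\end{itemize}
The identity \eqref{eq:L0-Z} is the $\iota$-dual of \eqref{eq:L0-Y}, using $\iota(E)=F$ and $\iota(Y_{c,e})=Z_{e,c}$, and I will check the latter from the explicit forms. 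For \eqref{eq:L0-W}:
\begin{itemize}
\item The case $c=e=0$ is $KL_0K=K^{\ge2}=G$.
\item The mixed cases reduce to $Z\cdot C$ or $C\cdot Y$ by concatenating the preceding identities.
\item The case $c,e\ge1$ is $C\cdot X_{c,e}\cdot C$.
\end{itemize}
The only subtlety is the boundary case where the middle factor $L_0$ is $\eps$, and this is already absorbed by writing $L_0$ rather than $K^+$.

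\textbf{Main obstacle.} The step I expect to need the most care is the bookkeeping of the $c=0$ and $e=0$ boundaries. There I must confirm that $Z_{0,e}$ and $Y_{c,0}$ carry an extra block $K$ that is absorbed into $F$ or $E$ rather than appearing as a separate $C$ factor, which requires the identities $K L_0=K^+=L_0K$. I also need to check that the exponents $c-1$ and $e-1$ never become negative in the cases where they appear. Beyond that, every identity is a direct rewriting of regular expressions, so no primality or substitutability argument is needed.
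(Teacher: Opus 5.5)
Your proposal is correct and follows the paper's proof: both verify each identity by directly multiplying the defining sets, using $KL_0=K^+=L_0K$ and $KL_0K=K^{\ge2}$. Your use of the involution $\iota$ (with $\iota(Y_{c,e})=Z_{e,c}$, $\iota(D)=D$, $\iota(C_d)=C_{-d}$) is a harmless shortcut for the same one-line calculations the paper leaves to the reader.
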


\begin{proof}
Each identity follows by multiplying the defining sets.  For example,
\(B^{c-1}DA^{e-1} =b^{c-1}(bL_0a)a^{e-1} =b^cL_0a^e=X_{c,e},\)
and
\(CB^{c-1}DA^{e-1}C =Kb^cL_0a^eK=W_{c,e}.\)
The remaining cases are identical one-line calculations; the identities
for $C_d$ use $A^dK=\{a^{n+d}b^n:n\ge1\}$ and its right-hand dual.
\end{proof}

\begin{theorem}[Prime classification for $L_0$]
\label{thm:L0-primes}
The live relative prime spectrum is exactly
\(\Pcal_{L_0,h}=\{A,B,C,D,E,F,G\}.\)
\end{theorem}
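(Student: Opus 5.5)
Two inclusions are needed: every other live class is composite, and each of $A,\dots,G$ is prime.

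\emph{Other classes are composite.} This direction follows almost entirely from Lemma~\ref{lem:L0-normal-forms} and Lemma~\ref{lem:L0-fringe-identities}. Every live non-unit class outside $\{A,B,C,D,E,F,G\}$ already has a displayed exact factorization into at least two of these seven classes. For instance, $A_r=A^r$ with $r\ge2$; $C_d=A^dC$ or $CB^{-d}$ for $d\ne0$; $X_{c,e}=B^{c-1}DA^{e-1}$ with $(c,e)\ne(1,1)$. The same holds for $Y,Z,W$ except at the parameter values that define $E,F,G$. The only caveat is that an exact product of several classes regroups into a binary exact product $X\cdot Y$ of relative classes. This needs each partial product, such as $B^{c-1}D=X_{c,1}$, to be itself a relative class, and the normal-form list confirms that it is. So all such classes are composite.

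\emph{Letters.} For $A$, $B$, $C$ the shortest-word geometry of Lemma~\ref{lem:shortest-word-cut} settles primality. $A$ and $B$ have length one, so they admit no split into two non-unit classes; unit separation gives that the unit class is $\{\eps\}$. For $C=K$, suppose $K=X\cdot Y$ with $X,Y$ non-unit. Then the shortest word $ab$ must split as $a\cdot b$, forcing $X=A$ and $Y=B$. But $A\cdot B=\{ab\}\ne K$.

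\emph{Fringe classes.} For $D,E,F,G$, assume a nontrivial exact split $P=X\cdot Y$ and cut a shortest word. The shortest words are $ba$ for $D$, $bab$ for $E$, $aba$ for $F$, and $abab$ for $G$. Each cut determines the classes $X,Y$ uniquely, since a class is determined by any one of its words. Each candidate product then misses some member of $P$. For $D=bL_0a$: the only cut gives $B\cdot A=\{ba\}$, which misses $baabba$. For $E=bK^+$: the cuts give $B\cdot[ab]=B\cdot C=bK$, missing $bKK$, or $[ba]\cdot B=DB$, which consists of words ending in $ab$ preceded by $b\dots$; I will check that it fails to contain $bab$-type words outside its pattern, or more simply that it is a different class, e.g.\ $DB\ni baab$-type words with the wrong shape. $F$ follows by applying $\iota$. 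For $G$: the cuts $a\mid bab$, $ab\mid ab$, $aba\mid b$ give $A E$, $C\cdot C=KK$, and $FB$. $KK$ misses $KKK$. $AE$ and $FB$ have the wrong first or last fringe, since $AE=abK^+\subsetneq G$, which misses $aabbab$.

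The main obstacle is this case-checking for $E$ and $G$: one must verify that every product of the determined classes is a proper subset of $P$ and not merely contained in it. I would handle it uniformly by exhibiting one witness word of $P$ whose factorization at the corresponding length-forced cut is impossible. Lemma~\ref{lem:shortest-word-cut} guarantees this is enough, because any exact split must restrict to a split of the shortest word with shortest components.
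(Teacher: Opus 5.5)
Your route is the paper's: the composite classes come from the exact fringe identities of Lemma~\ref{lem:L0-fringe-identities}, primality of the seven classes comes from checking the cuts of a shortest word via Lemma~\ref{lem:shortest-word-cut}, and $F$ is obtained from $E$ by $\iota$. You also point out that a multi-factor exact product must be regrouped into a binary product of two relative classes, for example $X_{c,e}=X_{c,e-1}\cdot A$ or $X_{c,e}=B\cdot X_{c-1,e}$. The paper passes over this silently, and your remark is correct.

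\emph{The step that fails as written.} The one step that does not go through is the second cut $ba\mid b$ of $E$. The two alternatives you suggest there cannot work. $D\cdot B$ is not ``a different class,'' and it cannot contain words of ``the wrong shape,'' because $D*B=[bab]=E$, so $D\cdot B\subseteq E$ automatically. Moreover $baab\notin D\cdot B$; that word lies in $Y_{1,1}=bL_0aK$, not in $E$.

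\emph{The fix.} What must be shown is strictness in the other direction: a word of $E$ outside $D\cdot B=bL_0ab$. Every word of $D\cdot B$ ends in $ab$, so $baabb=b\,K_2\in E$ is such a witness; this is the paper's choice. Your closing strategy of exhibiting a witness in $P$ outside the product would produce this word, but the text before it points the wrong way.

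\emph{A smaller omission for $G$.} You give an explicit witness only for $A\cdot E$. For $F\cdot B=K^+ab$, take $abaabb=K_1K_2$. Alternatively, apply $\iota$, which fixes $G$ and sends $A\cdot E$ to $\iota(E)\iota(A)=F\cdot B$.

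With these two witnesses inserted, your proof coincides with the paper's.
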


\begin{proof}
Lemma~\ref{lem:L0-fringe-identities} makes every listed live class
other than $A,B,C,D,E,F,G$ an exact product of at least two non-unit
classes.  It remains to prove that these seven classes are prime.
The singleton classes $A$ and $B$ are immediate.

For the remaining five classes we use Lemma~\ref{lem:shortest-word-cut}.
The shortest word of $C$ is $ab$.  Its only nontrivial cut gives
$A\cdot B=\{ab\}\ne C$ (for instance $aabb\in C$), so $C$ is prime.
The shortest word of $D$ is $ba$, and
$B\cdot A=\{ba\}\ne D$ (for instance $baba\in D$).

The shortest word of $E$ is $bab$.  Its two cuts give
\(B\cdot C=bK\subsetneq bK^+=E\)
and
\(D\cdot B=bL_0ab\subsetneq bK^+=E.\)
For example $babab$ witnesses strictness of the first inclusion and
$baabb$ witnesses strictness of the second.  Hence $E$ is prime.

Since $\iota(E)=F$ and $\iota$ preserves exact decomposability,
primality of $F$ follows from primality of $E$.  Finally, the shortest
word of $G$ is $abab$.  Its three cuts give
\(A\cdot E=abK^+\subsetneq K^{\ge2},\)
\(C\cdot C=K^2\subsetneq K^{\ge2}, \qquad F\cdot B=K^+ab\subsetneq K^{\ge2}.\)
These inclusions are strict: for example
\(K_2K_1=aabbab\in G\setminus A E, \qquad K_1K_1K_1=ababab\in G\setminus C C,\)
and
\(K_1K_2=abaabb\in G\setminus F B.\)
No nontrivial cut of a shortest word can therefore induce an exact
binary product, so $G$ is prime.
\end{proof}

\subsection{Unique factorization without cancellation}

Let
\(\nu(w):=\#_{ba}(w).\)
Then
\(\nu(xy)=\nu(x)+\nu(y)+[\lst(x)=b\ \&\ \fst(y)=a],\)
so $\Delta=1$ in \textup{(BD)}.  The primes $A,B,C$ are neutral, while
\(\Spec_\nu(D)=\Spec_\nu(E)=\Spec_\nu(F)=\Spec_\nu(G) =\{1,2,3,\ldots\}.\)
Every class in \eqref{eq:L0-families} has the same unbounded positive spectrum
with minimum one.  Hence Theorem~\ref{thm:single-carrier} implies that every
exact factorization of a $ba$-class contains exactly one of
$D,E,F,G$.

\begin{lemma}[Neutral-fringe lemma]
\label{lem:L0-neutral-fringe}
A sequence over $\{A,B,C\}$ whose factor boundaries introduce no new
$ba$ has the form
\(A^pC^\epsilon B^q, \qquad p,q\ge0, \quad \epsilon\in\{0,1\}.\)
\end{lemma}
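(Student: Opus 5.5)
The neutral-fringe lemma is a statement about which prime sequences over $\{A,B,C\}$ can be concatenated without creating a new $ba$. The plan is to reduce it to a condition on consecutive primes and then read off the admissible words.

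\textbf{Step 1: the boundary rule.} The defect formula
\(\nu(xy)=\nu(x)+\nu(y)+[\lst(x)=b\ \&\ \fst(y)=a]\)
says that a new $ba$ appears at a factor boundary exactly when the left factor ends in $b$ and the right factor begins in $a$. For the three neutral primes the first and last letters are constant on each class:
\[
  A=\{a\}:\ (a,a),\qquad B=\{b\}:\ (b,b),\qquad C=K:\ (a,b).
\]
So ``no new $ba$'' is equivalent to a local condition on each adjacent pair $P_iP_{i+1}$: it is forbidden that $P_i$ ends in $b$ and $P_{i+1}$ begins with $a$. Since $B$ and $C$ end in $b$, and $A$ and $C$ begin with $a$, the forbidden adjacent pairs are exactly
\[
  BA,\quad BC,\quad CA,\quad CC.
\]

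\textbf{Step 2: reading off the admissible words.} Take an admissible word and look at its first symbol from $\{B,C\}$, if there is one.
\begin{itemize}
\item Every symbol before it is $A$, giving a prefix $A^p$.
\item After any $B$, only $B$ can follow, since $BA$ and $BC$ are forbidden.
\item After a $C$, only $B$ can follow, since $CA$ and $CC$ are forbidden; the block after it is then all $B$'s by the previous item.
\end{itemize}
Hence the word is $A^pB^q$ or $A^pCB^q$, which is exactly $A^pC^\epsilon B^q$ with $\epsilon\in\{0,1\}$.

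Conversely, each word of this form contains only the adjacent pairs $AA$, $AC$, $AB$, $CB$ and $BB$, none of which is forbidden. Although the lemma only states necessity, this shows the characterization is sharp.

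I do not expect a genuine obstacle. The only points needing care are the observation that $\fst$ and $\lst$ are constant on $C=K$ (every $K_n=a^nb^n$ begins with $a$ and ends with $b$) and the completeness of the forbidden-pair list. Degenerate cases, such as the empty sequence or a single prime, already fit the form with suitable $p,q,\epsilon$.
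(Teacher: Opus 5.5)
Your proposal is correct and follows essentially the same route as the paper: you identify the forbidden adjacent pairs $BA$, $BC$, $CA$, $CC$ from the constant first and last letters of $A$, $B$, $C$, and then read off the shape $A^pC^\epsilon B^q$. The converse check you add is not needed for the lemma as stated, but it is harmless and accurate.
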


\begin{proof}
A $B$ cannot be followed by $A$ or $C$, because both begin with $a$.
A $C$ cannot be followed by $A$ or by another $C$, because $C$ ends
with $b$.  Hence all $A$'s precede the optional $C$, and all $B$'s
follow it; at most one $C$ can occur.
\end{proof}

\begin{lemma}[Carrier-fringe rigidity]
\label{lem:L0-carrier-fringe}
Let $T$ be one of the classes in \eqref{eq:L0-families}, and suppose
$T=P_1\cdots P_k$ is an exact prime factorization.  Then its unique
positive carrier and its neutral fringes are exactly those displayed
in \eqref{eq:L0-X}--\eqref{eq:L0-W}.
\end{lemma}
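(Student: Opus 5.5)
By Theorem~\ref{thm:single-carrier} (applied with $\nu=\#_{ba}$, $\Delta=1$, as recorded just before Lemma~\ref{lem:L0-neutral-fringe}), any exact prime factorization $T=P_1\cdots P_k$ of a $ba$-class contains exactly one positive carrier $Q\in\{D,E,F,G\}$. We therefore write the factorization as $\lambda\,Q\,\rho$ with $\lambda,\rho\in\{A,B,C\}^*$. The plan has three steps. First we pin down the shape of $\lambda$ and $\rho$. Then we use the fringe parameters $(c,e)$ and the family type of $T$ to determine these neutral words. Finally we identify $Q$.

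\textbf{Step 1: both fringes have the neutral-fringe form.} Take a word of $T$ with $\nu=1$; such words exist because $\underline\nu(T)=1$. Factor it through the product. The carrier factor already contributes at least one $ba$, since $\underline\nu(Q)\ge1$. Hence, by \eqref{eq:defect-iteration}, no factor boundary may create a new $ba$. In particular no boundary inside $\lambda$ or inside $\rho$ creates one. Moreover, exactness gives $\lambda\subseteq$ some product of live letters, so every neutral word realized is of a fixed shape. Lemma~\ref{lem:L0-neutral-fringe} then gives $\lambda=A^pC^\epsilon B^q$ and $\rho=A^{p'}C^{\epsilon'}B^{q'}$.

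Next we use the boundary with the carrier. Every carrier word begins with $b$ (for $D,E$) or $a$ (for $F,G$), and ends with $a$ (for $D,F$) or $b$ (for $E,G$). Since the junction $\lambda\mid Q$ may not create $ba$, the word $\lambda$ cannot end in $b$ when $Q$ begins with $a$. Dually, $\rho$ cannot begin with $a$ when $Q$ ends in $b$.

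We must also rule out an $A$ on the left of the carrier. The carrier begins either at a descending $b$ or at a block start. An $A$ immediately before $D$ or $E$ would put $ab$ in front of the carrier. Tracking heights by Lemma~\ref{lem:L0-height-characterization} then shows that some word of the product is not a live factor, or that its first letter has the wrong type. Here we compare $h$-types: all words of $T$ share $\fst,\lst$. The type constraint then forces $p=0$ whenever the first letter of $T$ is $b$ or $C$ is present, and symmetrically $q'=0$. I expect this to reduce $\lambda$ to $B^{q}$ or $CB^{q}$, and $\rho$ to $A^{p'}$ or $A^{p'}C$.

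\textbf{Step 2: reading off the parameters.} Within a fixed family, compute the first-$ba$ relative height and the total balance of the product. Both are additive, and they are invariant under the choice of representative. A $B^q$ prefix lowers the height of the first internal $ba$ by $q$, and $D$ or $E$ themselves contribute a drop of $1$. Equating with $-c$ gives $q=c-1$, with the carrier in $\{D,E\}$ exactly when $c\ge1$. Similarly, the total balance $e-c$ fixes $p'$. Whether $\fst(T)$ is $a$ (the $Z,W$ families) decides whether the leading $C$ is present. Likewise $\lst(T)$, together with the presence of $K$ on the right, decides whether the trailing $C$ is present and whether the carrier is $E$ or $G$ rather than $D$ or $F$.

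\textbf{Step 3: the carrier.} Once the fringes are fixed, the carrier $Q$ is the unique positive prime whose first and last letters and whose internal $ba$ height are compatible with the fringe data. This matches \eqref{eq:L0-X}--\eqref{eq:L0-W} case by case.

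The main obstacle is the exclusion, in Step 1, of degenerate neutral fringes adjacent to the carrier. Examples are $A$ followed by $D$, or $C$ followed by $F$. Such a product $CF=K\cdot K^+a$ is contained in $F$-type sets, and the danger is that it masquerades as a different exact decomposition. To exclude these I would first try the shortest-word principle, Lemma~\ref{lem:shortest-word-cut}. Any exact factorization appears as a cut of the shortest word $b^c(\text{shortest of }L_0{=}\eps)a^e$, possibly with leading and trailing $ab$ blocks. This leaves only finitely many cut labellings. Each alternative is then eliminated by exhibiting a word of $T$ not in the product, as in the proof of Theorem~\ref{thm:L0-primes}.
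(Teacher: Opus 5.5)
Your bookkeeping in Steps~1--2 matches the paper's: a single carrier, no boundary $ba$, neutral fringes of the form $A^pC^\epsilon B^q$, event height and total balance. The step that actually carries the lemma does not go through, however. The first letter of $T$ does force $p=\epsilon=0$ for the $X$- and $Y$-families. Your further claims are false: that the type forces $p=0$ once $C$ is present, that an $A$ directly before $D$ or $E$ yields a dead word or a wrong first letter, and that the first letter of $T$ decides whether a leading $C$ occurs. Every prefix $A^pC^\epsilon B^q$ with $p+\epsilon\ge1$ begins with $a$, exactly like every word of a $Z$- or $W$-class.

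Concretely, $AB^cDA^{e-1}=ab^{c+1}L_0a^e\subsetneq Z_{c,e}$ agrees with $Z_{c,e}$ in first/last/$ba$ type, in balance $e-c$, in event height $-c$, and even in shortest length $c+e+2$. For $c=0$ this is the candidate $ADA^{e-1}$ with carrier $D$. Your Step~2 derivation of ``carrier in $\{D,E\}$ iff $c\ge1$'' does not exclude it, because that derivation silently uses $p=0$. Dually, $B^{c-1}DB=b^cL_0ab\subsetneq Y_{c,0}=b^cK^+$ agrees with $Y_{c,0}$ in type, balance, event height and length, with $q'=1$. No invariant you invoke rules these out. Length additivity does kill $p\ge1$ once $C$ is present, but not these two; only setwise equality does.

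The missing idea is the paper's free-block argument. A class beginning with $a$ contains words with an arbitrary leading block $K_n$, including $K_1$. With carrier $D$ or $E$, which begin with $b$, this block must come from the neutral prefix. A prefix $A^pB^q$ is a single fixed word and pins the first $a$-run to length $p$, so $C$ must occur. If $p\ge1$ in front of $C$, the first $a$-run has length at least $p+1$, missing the $K_1$-led words. Only after this does the height equation $-q-1=-c$ give $q=c-1$. For carriers $F,G$, your junction argument gives the prefix $A^p$, and $p=-c$ forces $p=c=0$. The right fringe is the $\iota$-dual.

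Your shortest-word fallback could be made to work, but as written it is only a plan. The number of cut labellings is finite for each class but grows with $c+e$. You must therefore exhibit refuting words uniformly in $(c,e)$, such as $a^2b^{c+2}a^e\in Z_{c,e}\setminus AB^cDA^{e-1}$ and $b^ca^2b^2\in Y_{c,0}\setminus B^{c-1}DB$. Producing them is exactly the free-block argument you have not supplied.
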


\begin{proof}
Because $T$ contains words with exactly one $ba$, no boundary between
adjacent prime factors may itself be a $ba$ boundary: otherwise every
word of the exact product would contain at least the carrier event and
the additional boundary event.  Thus Lemma~\ref{lem:L0-neutral-fringe}
applies to the neutral factors on either side of the unique carrier.

The classes $D,E$ have all their internal $ba$ events at relative
height $-1$, whereas $F,G$ have them at height $0$.  A neutral prefix
changes that height only by its total balance.  The target class in
\eqref{eq:L0-families} has event height $-c$.  If the carrier is $F$ or $G$, the
no-$ba$ condition forces every preceding neutral factor to be an $A$;
the event-height equation then reads $p=-c$, so $p=c=0$.  Thus $F,G$
can occur only when $c=0$, with no neutral prefix.

If the carrier is $D$ or $E$, a no-$ba$ neutral prefix has the form
$A^pC^\epsilon B^q$.  When the target begins with $b$, necessarily
$p=\epsilon=0$.  When it begins with $a$, exact setwise equality must
realize an arbitrary leading block $K_n$ ($n\ge1$).  A fixed string of
$A$'s and $B$'s cannot supply this free block; the only possible neutral
source is one copy of $C=K$.  Moreover $p>0$ would force every initial
$a$-run to have length at least $p+1$, missing the target words whose
free leading block is $K_1$.  Hence $p=0$ and $\epsilon=1$.  In either
case the event-height equation is
\(-q-1=-c,\)
so $q=c-1$.  Therefore the left fringe is empty when $c=0$, is
$B^{c-1}$ for an initial-$b$ class, and is $CB^{c-1}$ when a free
leading $K$ is present.

The right-fringe statement is the $\iota$-dual of the preceding
left-fringe argument.  Thus, if the carrier is $E$ or $G$, exactness
forces $e=0$ and no extra neutral suffix; if the carrier is $D$ or $F$,
the suffix is $A^{e-1}$ when the target ends in $a$, and $A^{e-1}C$
when the target has a free trailing $K$.

Consequently the carrier is determined by whether $c$ and $e$ vanish:
\((c>0,e>0):D, \quad (c>0,e=0):E, \quad (c=0,e>0):F, \quad (c=e=0):G,\)
and the surrounding neutral factors are exactly those in
\eqref{eq:L0-X}--\eqref{eq:L0-W}.
\end{proof}

\begin{theorem}[Noncancellative unique relative factorization]
\label{thm:L0-UF}
Every live non-unit relative class of $L_0$ has a unique exact prime
factorization, namely the decompositions in \eqref{eq:L0-Cd}--\eqref{eq:L0-W}.
\end{theorem}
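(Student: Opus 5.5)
The proof has two parts. Existence is already in hand: Lemma~\ref{lem:L0-fringe-identities} writes every live non-unit class as an exact product of the seven primes of Theorem~\ref{thm:L0-primes}. So the work is uniqueness. By Lemma~\ref{lem:L0-normal-forms}, every live non-unit class lies in one of two groups. The first is the no-$ba$ families $A_r,B_s,C_d$. The second is the four $ba$-families $X,Y,Z,W$. I treat the two groups separately.

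\emph{No-$ba$ classes.} Let $T\in\{A_r,B_s,C_d\}$ with $T=P_1\cdots P_k$ exact. Every word of $T$ has no $ba$, so $\nu$ vanishes on the product. The lower bound in \eqref{eq:defect-iteration} then forces every $P_i$ to be neutral, i.e.\ in $\{A,B,C\}$. No factor boundary may create a $ba$, so Lemma~\ref{lem:L0-neutral-fringe} gives the shape $A^pC^\epsilon B^q$. This shape is pinned down by invariants of $T$.
\begin{itemize}
\item $\epsilon$ is fixed by the $h$-type: it is $1$ exactly when the class contains words with both letters. So $\epsilon=0$ for $A_r,B_s$ and $\epsilon=1$ for $C_d$.
\item If $\epsilon=0$, the sequence is $A^r$ or $B^s$.
\item If $\epsilon=1$, the product $A^pCB^q$ contains $a^{p+1}b^{q+1}$. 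Exactness against $C_d$ forces $\min(i)=p+1=1$ when $d\le0$. For $d>0$, a shortest-word count via Lemma~\ref{lem:shortest-word-cut} shows $p,q$ cannot both be positive, since $a^{d+1}b$ must be covered. Then the balance gives $p-q=d$.
\end{itemize}
This recovers \eqref{eq:L0-Cd}.

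\emph{$ba$-classes.} Let $T$ be one of $X_{c,e},Y_{c,e},Z_{c,e},W_{c,e}$. By the spectrum computation preceding Lemma~\ref{lem:L0-neutral-fringe}, $\underline\nu(T)=1$ and $\Spec_\nu(T)$ is unbounded. Theorem~\ref{thm:single-carrier} then gives exactly one positive carrier in $\{D,E,F,G\}$. Lemma~\ref{lem:L0-carrier-fringe} determines the carrier and both neutral fringes from the parameters $(c,e)$ and the family. Hence any exact factorization coincides with the one displayed in \eqref{eq:L0-X}--\eqref{eq:L0-W}.

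I will also check, briefly, that the families are correctly distinguished by the observable data: first letter, last letter, event height $-c$, and total balance $e-c$. This ensures the fringe data read off by Lemma~\ref{lem:L0-carrier-fringe} is a function of the class alone.

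The main obstacle is the $C_d$ step. The neutral-fringe lemma gives $A^pCB^q$ only up to balance. The argument must therefore use exactness, not just membership in the quotient class, to rule out alternatives such as $A^{p+1}CB$ for $C_d$ with $d=p$. I would do this through the shortest-word principle of Lemma~\ref{lem:shortest-word-cut}:
\begin{itemize}
\item A shortest word of $A^pCB^q$ has length $p+q+2$.
\item A shortest word of $C_d$ has length $|d|+2$.
\item Equating the two gives $p+q=|d|$; together with $p-q=d$, this forces $\min(p,q)=0$.
\end{itemize}
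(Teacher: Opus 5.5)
Your proposal follows the paper's proof in all essentials. Existence comes from the fringe identities. For $C_d$ you use the neutral-fringe shape $A^pC^\epsilon B^q$ together with balance and shortest-word additivity. For the four $ba$-families you use the single-carrier theorem together with carrier--fringe rigidity.

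One step is justified incorrectly. You say $\epsilon$ is fixed by the $h$-type, being $1$ exactly when the class contains words with both letters. That is false. $A\cdot B=\{ab\}$ contains both letters yet has $\epsilon=0$. More generally, $A^pB^q=\{a^pb^q\}$ with $p,q\ge1$ has the same first/last/$ba$ type as $C_d$. Your later counts do not exclude this case either. For $d\ge0$ the candidate $A^{d+1}B=\{a^{d+1}b\}$ has the right type, balance $d$, and shortest length $d+2=\ell(C_d)$.

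What rules it out is exactness itself: $A^pB^q$ is a singleton, whereas $C_d$ is infinite. This is exactly how the paper forces $\epsilon=1$. Your $h$-type argument does work for $A_r$ and $B_s$, where the last or first letter excludes $C$.

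With that one-line repair the argument is complete. Your separate $\min(i)$ argument for $d\le0$ and the covering of $a^{d+1}b$ for $d>0$ then become redundant. The length count $p+q=|d|$ together with $p-q=d$ already gives $p=\max(d,0)$ and $q=\max(-d,0)$.
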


\begin{proof}
For a singleton class $A_r$ or $B_s$, every exact prime factor must
have the same one-letter orientation, so shortest-length additivity
forces $A_r=A^r$ and $B_s=B^s$.

Now let $C_d$ be a mixed no-$ba$ class.  Any exact factorization uses
only neutral primes, and no factor boundary may create $ba$.  By
Lemma~\ref{lem:L0-neutral-fringe} its prime sequence has the form
$A^pC^\epsilon B^q$.  Since $C_d$ is infinite, $\epsilon=1$; otherwise
the product of the singleton classes $A$ and $B$ would itself be a
singleton.  Balance gives $p-q=d$.  Moreover
\(\ell(C_d)=|d|+2\)
and Lemma~\ref{lem:shortest-word-cut} gives $p+q+2=|d|+2$.  Hence
\(p=\max(d,0),\qquad q=\max(-d,0),\)
which is exactly \eqref{eq:L0-Cd}.

For a $ba$-class, Theorem~\ref{thm:single-carrier} gives exactly one positive
prime and Lemma~\ref{lem:L0-carrier-fringe} forces the carrier and both
neutral fringes uniquely.  The identities are exact by
Lemma~\ref{lem:L0-fringe-identities}.
\end{proof}

The observer monoid is nevertheless not factor-cancellative: the
first/last/$ba$ type of $a$ equals that of $aa$, so
\(h(a)h(\eps)=h(a)=h(a)h(a), \qquad h(\eps)\ne h(a).\)
Thus cancellation is sufficient for the Clark-style rigidity mechanism
but is not necessary for unique exact relative factorization.

\subsection{The thirty-four valid productions}

For a prime $P$ define interface bits
\(c(P):=[\fst(P)=b], \qquad e(P):=[\lst(P)=a].\)
Their values, together with the balance
$d(P)=e(P)-c(P)$, are:

\begin{center}
\begin{tabular}{c c c c c}
\toprule
Prime & shortest representative & $c(P)$ & $e(P)$ & $d(P)$\\
\midrule
$A$ & $a$ & $0$ & $1$ & $+1$\\
$B$ & $b$ & $1$ & $0$ & $-1$\\
$C$ & $ab$ & $0$ & $0$ & $0$\\
$D$ & $ba$ & $1$ & $1$ & $0$\\
$E$ & $bab$ & $1$ & $0$ & $-1$\\
$F$ & $aba$ & $0$ & $1$ & $+1$\\
$G$ & $abab$ & $0$ & $0$ & $0$\\
\bottomrule
\end{tabular}
\end{center}

\begin{lemma}[Binary interface lemma]
\label{lem:L0-binary-interface}
For primes $P,Q\in\Pcal_{L_0,h}$, $[PQ]_\theta$ is prime if and only if $e(P)=c(Q)$.
\end{lemma}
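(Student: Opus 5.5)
The proof is a finite case check organized by the interface bits, using the classification of live classes in Lemma~\ref{lem:L0-normal-forms} and the prime list of Theorem~\ref{thm:L0-primes}. Since $\theta$ is a congruence, $[PQ]_\theta$ is the relative class of $pq$ for any representatives $p\in P$, $q\in Q$. We may therefore compute it from the shortest representatives in the table.

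The plan splits into two cases.

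\emph{Case $e(P)\ne c(Q)$.} We show $[PQ]_\theta$ is composite or dead.

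If $e(P)=1$ and $c(Q)=1$, the junction is $a|b$. For $P\in\{A,F\}$ the word $pq$ contains no new $ba$ at the boundary. We identify its class directly:
\begin{itemize}
\item $AB=C$ is prime, so the pair $(A,B)$ looks like a counterexample.
\end{itemize}
We therefore recheck the conventions. With $e(A)=1$ and $c(B)=1$ we get $e=c$, so the lemma asserts primality, which is consistent. The condition $e(P)=c(Q)$ thus means ``last letter $a$ meets first letter $b$'' (both bits $1$) or ``last letter $b$ meets first letter $a$'' (both bits $0$).

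So the two failure cases are $a|a$ ($e=1$, $c=0$) and $b|b$ ($e=0$, $c=1$). In these, $pq$ lies in a class with a doubled fringe:
\begin{itemize}
\item an $a|a$ junction produces an $A_r$ with $r\ge2$, some $C_d$ with $d\ge1$, or a $ba$-family class with $e\ge2$, or one with $c=0$ preceded by $A$;
\item a $b|b$ junction is the $\iota$-dual.
\end{itemize}
Each of these is composite by the exact identities \eqref{eq:L0-Cd}--\eqref{eq:L0-W}, because it is a product of at least two primes. The one exception is a dead class, which is not prime by definition.

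\emph{Case $e(P)=c(Q)$.} We show $[PQ]_\theta$ is one of the seven primes.

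For $e=c=1$ (an $a|b$ junction), the boundary creates no $ba$ and the balances combine. If both $P,Q$ lack $ba$, then $P\in\{A\}$ and $Q\in\{B\}$, giving $C$. Otherwise we track $(c,e)$ of the resulting family: left parameter from $P$, right from $Q$, with the $ab$ junction closing a block. This lands in
\[
  C,\ D,\ E,\ F,\ G
\]
via $K$ absorbing the junction. For instance, $[FB]=G$, $[AE]=G$, $[AD]=F$ and $[DB]=E$.

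For $e=c=0$ (a $b|a$ junction), a new $ba$ appears at the boundary. We check that the heights match, so the class is live, and read off the family:
\begin{itemize}
\item $[BA]=D$ and $[CC]=G$;
\item $[BC]=E$ and $[CA]=F$;
\item the general pattern is that the outer fringe bits of $P$ and $Q$ determine $c,e\in\{0,1\}$ of the result.
\end{itemize}
By the normal forms, a class with $c,e\le1$ and no extra fringe is exactly one of $D,E,F,G$.

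The main obstacle is bookkeeping: verifying liveness and the event-height condition of Lemma~\ref{lem:L0-height-characterization} in all $49$ ordered pairs. I would reduce the work with the involution $\iota$ and by noting that the result's class depends only on the first letter and $c$-data of $P$ and the last letter and $e$-data of $Q$. This cuts the check to a small table indexed by $(c(P),\,e(P),\,c(Q),\,e(Q))$ plus whether a free $K$ fringe is present.
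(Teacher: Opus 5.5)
This is the paper's own argument: compute $[pq]_\theta$ from shortest representatives via Lemma~\ref{lem:L0-normal-forms} and check it against the seven-prime list of Theorem~\ref{thm:L0-primes}, which makes exactly the nine $a|b$ pairs and the sixteen $b|a$ pairs prime. Your predicted outcomes for the mismatched junctions are a little off, although the exhaustive check you propose would still give the stated conclusion: twelve of the twenty-four mismatched pairs are dead (for example $[AF]$, $[DF]$, $[EB]$, $[GB]$), not a single exception, and $[DC]=Y_{1,1}$ and $[FC]=W_{0,1}$ are composite with $e=1$ (an extra trailing $K$) rather than $e\ge2$.
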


\begin{proof}
Take shortest representatives of $P$ and $Q$.  Lemma~\ref{lem:L0-normal-forms} determines the quotient class and Theorem~\ref{thm:L0-primes} its primality.  The prime cases are exactly the nine pairs with $e(P)=c(Q)=1$, namely
\(P\in\{A,D,F\}\) and \(Q\in\{B,D,E\}\), and the sixteen pairs with $e(P)=c(Q)=0$.
\end{proof}

Every correct length-two production is automatically valid, so
Lemma~\ref{lem:L0-binary-interface} gives exactly twenty-five binary
valid productions.  To control longer rules we use the common event
height of the $ba$-classes.

\begin{lemma}[Internal-$C$ lemma]
\label{lem:L0-internal-C}
Let
\(\alpha=P_1\cdots P_n, \qquad n\ge3,\)
be a prime sequence whose quotient product is a prime.  If every
adjacent pair has nonprime quotient product, then some internal symbol
$P_i$ ($1<i<n$) is $C$.
\end{lemma}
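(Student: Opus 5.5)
The plan is to reduce the hypothesis to a letter-level constraint and then run a height/balance count using the normal forms of Lemma~\ref{lem:L0-normal-forms}. By Lemma~\ref{lem:L0-binary-interface}, each adjacent pair having nonprime quotient means $e(P_i)\ne c(P_{i+1})$ for every $i$. Every prime is nonempty, so $e(P_i)=[\lst(P_i)=a]$ and $c(P_{i+1})=[\fst(P_{i+1})=b]$. The inequality therefore says precisely that the boundary reads either $a\,a$ or $b\,b$. In particular, \emph{no factor boundary creates a new $ba$}, and the rule $c(P_{i+1})=1-e(P_i)$ lets us propagate the interface bits along the sequence.

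I would then argue by contradiction: assume no internal symbol is $C$, and split by whether a positive prime ($D,E,F,G$) occurs.
\begin{itemize}
\item \emph{All symbols neutral.} Lemma~\ref{lem:L0-neutral-fringe} gives the form $A^pC^\epsilon B^q$ with $p+q+\epsilon=n\ge3$.
  \begin{itemize}
  \item If $\epsilon=0$, the quotient is $A_r$, $B_s$ or $C_{p-q}$ with $p,q\ge1$. The first two are nonprime since $r,s\ge3$. The third is excluded because the boundary $a\,b$ between the $A$-block and the $B$-block is a prime pair.
  \item If $\epsilon=1$, the quotient is $C_{p-q}$, which is prime only when $p=q$. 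Since $n\ge3$ this forces $p=q\ge1$, so $C$ is internal.
  \end{itemize}
\item \emph{Some positive prime occurs.} Without an internal $C$, every maximal neutral run between two consecutive positive carriers is a pure power $A^p$ or $B^q$. A mixed run $A^pB^q$ would contain the prime boundary $a\,b$, and Lemma~\ref{lem:L0-neutral-fringe} forbids the other orders. The quotient class is some $X,Y,Z,W$ with parameters $(c,e)$. Here $-c$ is the common relative event height and $e=c+\text{(total balance)}$, and it is prime only for the four parameter/fringe shapes $D,E,F,G$.
\end{itemize}

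Next I would compute the event heights. Each carrier's internal $ba$ events sit at height $-1$ (for $D,E$) or $0$ (for $F,G$), relative to its own start. Lemma~\ref{lem:L0-height-characterization} forces all events in the concatenation to share one height. So between consecutive carriers $P_i,P_j$ the balance of $P_i\cdots P_{j-1}$ must be compensated exactly, using the balances $d(P)\in\{-1,0,+1\}$ from the table. The interface rule fixes which of $A^p$ or $B^q$ may sit between given carriers: after a carrier ending in $a$ only $a$-initial symbols may follow, namely $A$ or $F,G$. Checking the finitely many carrier-to-carrier transitions should show that each compensating pattern forces a specific run length, and that the resulting total $(c,e)$ with the outer fringes never equals a prime shape, except when $n\le2$.

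I expect this transition-and-fringe bookkeeping to be the main obstacle. Two things must be tracked simultaneously: the equal-height constraint between carriers, and the requirement that the outer fringes match the prime shapes ($D$ begins with $b$ and ends with $a$ with $c=e=1$; $G$ has $c=e=0$ with free $K$ fringes; and so on). To organise it I would use $\iota$-symmetry to halve the cases and treat the sequence as a walk in a small automaton on states (last letter, relative height offset). I would then show that a walk of length $\ge3$ from start to end never lands in a prime parameter shape unless it passes through $C$. A neutral $C$ is the only symbol that can change the boundary letter from $a$ to $b$ at zero balance, which is intuitively why it is forced.
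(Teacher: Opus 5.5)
Your reduction is correct: nonprimality of each adjacent pair is exactly the interface rule $c(P_{i+1})=1-e(P_i)$, so every boundary reads $aa$ or $bb$. Your all-neutral case is also complete, since among the forms $A^pC^\epsilon B^q$ only $A^pCB^q$ with $p=q\ge1$ returns to a prime.

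The gap is the carrier case, which is the real content of the lemma. There you only say that checking carrier-to-carrier transitions ``should show'' the claim, and your predicted outcome is wrong: no compensating run exists at all. On a $C$-free stretch, an $a$-final symbol can only be followed by $A,F,G$, and a $b$-final one only by $B,D,E$. So the height $\eta_i$ defined below moves strictly monotonically between consecutive carriers, and two carriers can never share the common event height. You also never treat the end runs, where $C$ may legitimately occur as $P_1$ or $P_n$ (e.g.\ $CB^qD\cdots$). Finally, your closing heuristic that $C$ is the only zero-balance symbol turning an $a$-boundary into a $b$-boundary is false: $c(G)=e(G)=0$ as well. Excluding $G$ through its event height is exactly the step that needs an argument.

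The missing idea is the paper's height walk. Put $\eta_i:=\sum_{j<i}d(P_j)-c(P_i)$; when $P_i$ is a carrier, this is the whole-word height of its $ba$ events.
\begin{enumerate}
\item The interface rule gives $\eta_{i+1}-\eta_i=2e(P_i)-1=\pm1$.
\item Correctness gives $\eta_n=\eta_1=-c(P_1)$. This uses that the target's first letter is that of $P_1$, its last letter is that of $P_n$, and every prime has balance $e-c$.
\item The value $-c(P_1)$ is also the target's event height. Hence every carrier sits at level $\eta_1$, and a neutral target admits no carriers at all.
\item An internal strict minimum has $c_i=e_i=1$, so it is $D$, a carrier off level. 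Thus the walk never drops below $\eta_1$.
\item Being a nontrivial closed $\pm1$-walk, it therefore has an internal strict maximum above $\eta_1$. There $c_i=e_i=0$, so $P_i\in\{C,G\}$, and $G$ is excluded as an off-level carrier.
\end{enumerate}
This handles arbitrarily many carriers and arbitrary run lengths uniformly. Your transition check would amount to the same argument once you prove the monotonicity above, conclude there is at most one carrier, and exclude both end fringes, including $C$ at an endpoint, using $\eta_n=\eta_1$.
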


\begin{proof}
Write $c_i=c(P_i)$, $e_i=e(P_i)$, and $d_i=e_i-c_i$.  By the binary
interface lemma, adjacent nonprimality gives
\(c_{i+1}=1-e_i \qquad(1\le i<n).\)
Let
\(s_{i-1}:=\sum_{j<i}d_j, \qquad \eta_i:=s_{i-1}-c_i.\)
The quantity $\eta_i$ is the whole-word relative height of the internal $ba$ event carried by $P_i$ when such an event is present.  By the prime table above, the balance of every prime $R$ is $e(R)-c(R)$.  Then
\(\eta_{i+1}-\eta_i =d_i-c_{i+1}+c_i =2e_i-1\in\{-1,+1\}.\)
The first symbol of the quotient prime is the first symbol of $P_1$
and its last symbol is the last symbol of $P_n$.  Correctness therefore gives
\(\sum_{i=1}^n d_i=e_n-c_1.\)
Equivalently,
\(\eta_n=\eta_1.\)
Thus $\eta_1,\ldots,\eta_n$ is a nontrivial closed walk with unit steps.

If the quotient prime is one of $A,B,C$, then no $P_i$ can be a carrier,
because a carrier already has $\operatorname{ba}=1$ and the mismatching
interfaces create no new $ba$ boundary.  If the quotient prime is one
of $D,E,F,G$, every carrier $P_i$ in the sequence has all of its
internal $ba$ events at whole-word relative height $\eta_i$.  The target
prime has all such events at height $\eta_1=-c_1$, so every carrier must
occur at a position with $\eta_i=\eta_1$.

The closed walk cannot go below its initial level.  Otherwise an
internal strict local minimum would have incoming step $-1$ and outgoing
step $+1$, hence $c_i=e_i=1$.  The only such prime is $D$, a carrier at
a level different from $\eta_1$, contradiction (and in the neutral
target case carriers are impossible altogether).  Therefore the walk
has a positive excursion.  At an internal strict local maximum, the
incoming step is $+1$ and the outgoing step is $-1$, so
$c_i=e_i=0$.  Hence $P_i\in\{C,G\}$.  The level is strictly above
$\eta_1$, so $G$ cannot occur there as a carrier; in the neutral case it
is excluded for the same reason as above.  Thus $P_i=C$.
\end{proof}

\begin{theorem}[FRP for $L_0$]
\label{thm:L0-FRP}
For the first/last/$ba$ observation,
\(|\Pcal_{L_0,h}|=7, \qquad |\Vcal_{L_0,h}|=34, \qquad \rho(L_0,h)=3.\)
\end{theorem}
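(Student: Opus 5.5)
The plan is to count the valid productions by length: binary ones via the binary interface lemma, longer ones via the internal-$C$ lemma, and then show that validity stops at length three. The equality $|\Pcal_{L_0,h}|=7$ is Theorem~\ref{thm:L0-primes}. Every correct binary production is automatically valid, so Lemma~\ref{lem:L0-binary-interface} gives exactly $25$ valid binary productions: the pairs with $e(P)=c(Q)$, which are $9$ with common value $1$ and $16$ with common value $0$. It remains to show that there are exactly $9$ valid ternary productions and none of length $\ge4$. That yields $|\Vcal_{L_0,h}|=25+9=34$ and $\rho(L_0,h)=3$.

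\emph{Ternary rules.} Let $P\to P_1P_2P_3$ be valid. Validity forbids a prime-valued proper block, so both adjacent pairs have nonprime quotient. The internal-$C$ lemma (Lemma~\ref{lem:L0-internal-C}) then forces $P_2=C$. Adjacent nonprimality with $c(C)=e(C)=0$ gives $e(P_1)=1$ and $c(P_3)=1$, so $P_1\in\{A,D,F\}$ and $P_3\in\{B,D,E\}$. There are therefore at most $9$ candidates $XCY$, and I will check that each has prime quotient.

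\begin{itemize}
\item For $ACB$, the word $a\cdot ab\cdot b=aabb\in C$, so the quotient is $C$.
\item If $X\in\{D,F\}$ or $Y\in\{D,E\}$, the product of shortest representatives contains $ba$. By Lemma~\ref{lem:L0-normal-forms} its class is one of $X_{c,e},Y_{c,e},Z_{c,e},W_{c,e}$. Here $c=c(X)\in\{0,1\}$ and $e=e(Y)\in\{0,1\}$, since the event height is $-c(X)$, the inserted $C$ has balance $0$, and the balances of $X$ and $Y$ fix $e$. For example, $DCB$ has representative $baabb\in bK^+=E$.
\item The classes with $c,e\in\{0,1\}$ and minimal fringes are exactly $D,E,F,G$.
\end{itemize}

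The fringe identities of Lemma~\ref{lem:L0-fringe-identities} confirm primality, since such a class has no extra $A$ or $B$ fringe factors. Hence all $9$ triples are correct. They are valid because their only proper blocks of length two are the adjacent pairs, which are nonprime by construction.

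\emph{No rules of length $\ge4$.} Suppose $P\to P_1\cdots P_n$ is valid with $n\ge4$. Again every adjacent pair is nonprime, and Lemma~\ref{lem:L0-internal-C} gives an internal $P_i=C$ with $1<i<n$. Nonprimality of the pairs $P_{i-1}P_i$ and $P_iP_{i+1}$ forces $e(P_{i-1})=1$ and $c(P_{i+1})=1$. So the block $P_{i-1}CP_{i+1}$ is one of the nine triples above and has prime quotient. Since $n\ge4$, this block is proper, so the right-hand side is pleonastic, a contradiction. Thus every valid right-hand side has length $2$ or $3$, and $\rho(L_0,h)=3$.

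The main obstacle is the ternary verification: identifying the quotient class of each $XCY$ through the normal forms and checking that it is one of the seven primes rather than a composite class with nontrivial fringe. I would handle it uniformly through the interface parameters $c(X)$, $e(Y)$ and the event-height bookkeeping from the proof of Lemma~\ref{lem:L0-internal-C}. If needed, I would fall back on explicitly listing the nine shortest-representative products.
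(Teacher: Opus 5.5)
Your proposal is correct and follows the paper's proof essentially step for step. You take the $25$ binary rules from Lemma~\ref{lem:L0-binary-interface}, then use Lemma~\ref{lem:L0-internal-C} with interface mismatch to force a block from $\{A,D,F\}C\{B,D,E\}$; all nine such triples return to primes, so length three contributes exactly nine rules and every longer right-hand side is pleonastic. Your event-height bookkeeping for the nine triples is a more explicit version of the paper's appeal to the normal forms. When you write it up, state that the first letter is $b$ exactly when $c(X)=1$, and dually the last letter is $a$ exactly when $e(Y)=1$. This is what excludes composite classes with the same parameters, such as $W_{1,0}=CE$.
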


\begin{proof}
A valid right-hand side of length at least three cannot contain a
prime-valued adjacent pair, since such a pair would be a proper
pleonastic block.  Hence all adjacent interfaces mismatch, and
Lemma~\ref{lem:L0-internal-C} supplies an internal $C$.
If $P_{i-1}CP_{i+1}$ surrounds such an occurrence, mismatch on the left
forces $e(P_{i-1})=1$, so
\(P_{i-1}\in\{A,D,F\},\)
while mismatch on the right forces $c(P_{i+1})=1$, so
\(P_{i+1}\in\{B,D,E\}.\)
The nine possible triples in $\{A,D,F\}C\{B,D,E\}$ all return to primes by the normal forms; they are exactly the ternary rules displayed in the complete grammar below.  Hence a right-hand side of length greater than three contains a proper prime-returning triple and is pleonastic, while length three yields exactly those nine valid rules.  Together with the twenty-five binary rules this gives $|\Vcal|=25+9=34$ and $\rho=3$.
\end{proof}

The thirty-four branching rules are therefore
\[
\begin{aligned}
C&\to AB\mid ACB,\\
D&\to BA\mid BF\mid DD\mid EA\mid EF\mid DCD,\\
E&\to BC\mid BG\mid DB\mid DE\mid EC\mid EG\mid DCB\mid DCE,\\
F&\to AD\mid CA\mid CF\mid FD\mid GA\mid GF\mid ACD\mid FCD,\\
G&\to AE\mid CC\mid CG\mid FB\mid FE\mid GC\mid GG\mid ACE\mid FCB\mid FCE.
\end{aligned}
\]
Together with $A\to a$, $B\to b$, and
\(S\to\eps\mid C\mid G,\)
these form the direct relative-prime grammar for $L_0$.

\begin{proposition}[Tail saturation for $L_0$]
Every valid production in the direct relative-prime grammar for $L_0$
is tail-saturated.
\end{proposition}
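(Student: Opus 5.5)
The plan is a direct case check over the thirty-four rules listed above. I read \emph{tail-saturated} as the tail-exactness notion of Section~\ref{sec:residual-ptld}: for a valid production $N\to\gamma\alpha$ with $\gamma,\alpha\ne\eps$, the tail satisfies $\overline\alpha=[\overline\alpha]_\theta$. If the later definition only asks this after the first right-hand prime (the VTE$_1$ form), that is a subcase of what is checked here. PTLD is not available for $L_0$, so Theorem~\ref{thm:ptld-vte-vtd} cannot be invoked, and the verification has to come from the explicit normal forms.

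\textbf{Binary rules.} The twenty-five binary rules are immediate. The only tail of $N\to P_1P_2$ is the single prime $P_2$, and a relative class is trivially exact.

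\textbf{Ternary rules.} The nine ternary rules all have the shape $XCY$ with $X\in\{A,D,F\}$ and $Y\in\{B,D,E\}$, as shown in the proof of Theorem~\ref{thm:L0-FRP}. Their proper suffixes are $Y$, which is a prime and hence exact, and the three two-letter tails $CB$, $CD$, $CE$. Each of these is identified with a whole live class by Lemma~\ref{lem:L0-fringe-identities}:
\begin{itemize}
\item by \eqref{eq:L0-Cd} with $d=-1$, $C\cdot B=C_{-1}$;
\item by \eqref{eq:L0-Z} with $c=e=1$, $C\cdot D=CB^0DA^0=Z_{1,1}$;
\item by \eqref{eq:L0-W} with $c=1$, $e=0$, $C\cdot E=CB^0E=W_{1,0}$.
\end{itemize}
Lemma~\ref{lem:L0-normal-forms} guarantees that $C_{-1}$, $Z_{1,1}$ and $W_{1,0}$ are genuine relative classes, not merely subsets of classes. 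So each setwise product equals its quotient class, which is exactly the required saturation.

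\textbf{Left-hand variant.} If the definition also covers the left-dual heads, I would treat them by the involution $\iota$. Directly, the heads are $A\cdot C=C_1$, $D\cdot C=Y_{1,1}$ and $F\cdot C=W_{0,1}$, again by \eqref{eq:L0-Cd}, \eqref{eq:L0-Y} and \eqref{eq:L0-W}. These are the $\iota$-images of the tails above.

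\textbf{Main obstacle.} The only point needing care is confirming that the fringe identities are exact setwise equalities, and that the right-hand sides are full classes rather than proper subsets. For $CD=Kb\,L_0a$ this means checking that every word of $Z_{1,1}=Kb\,L_0a$ splits as a $K$-word followed by a $bL_0a$-word. That follows from the definitions in the one-line calculation used in the proof of Lemma~\ref{lem:L0-fringe-identities}. The analogous checks for $CB$ and $CE$ are of the same one-line kind.
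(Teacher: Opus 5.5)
Your proposal is correct and follows the paper's own proof: binary rules are trivially saturated, and the nine ternary rules in $\{A,D,F\}C\{B,D,E\}$ have tails $CB$, $CD$, $CE$, which the fringe identities show are exactly the classes $C_{-1}$, $Z_{1,1}$, $W_{1,0}$. Tail saturation is defined in Section~\ref{sec:saturation} in the VTE$_1$ form $P\to A\alpha$, so your extra checks (single-prime suffixes and the left-hand heads) are harmless but not needed.
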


\begin{proof}
Every binary valid production is automatically tail-saturated.  In the
nine ternary rules, the suffix after the first prime is one of
\(CB,\qquad CD,\qquad CE.\)
The normal forms give the exact identities
\(C\cdot B=C_{-1}, \qquad C\cdot D=Z_{1,1}, \qquad C\cdot E=W_{1,0}.\)
Thus every ternary tail is also exact.
\end{proof}

\begin{remark}[UF does not imply VTD]
The two valid rules
\(D\to BA, \qquad D\to BF\)
have the same left-hand prime and the same first right-hand prime but
different tails.  Hence $L_0$ has exact UF and tail saturation while
failing VTD and PTLD.
\end{remark}

\section{Finite-state relative presentation}
\label{sec:fsrp}

\subsection{Definition and finite-quotient regularity}

FRP demands a finite \emph{list} of valid rules.  For finite grammar construction this is stronger than necessary: an infinite valid family can be represented by a finite-state controller.

For each prime $P$, define
\(\Corr_P:=\{\alpha\in\Pcal^{\ge2}:[\overline\alpha]=P\}, \qquad \Corr:=\bigcup_P\Corr_P.\)
Then the pleonastic words are exactly
\begin{equation}
  \Pleon
  =
  \Pcal^+\Corr\Pcal^*
  \ \cup\
  \Pcal^*\Corr\Pcal^+,
  \label{eq:pleonastic-language}
\end{equation}
and
\[
  \Val_P=\Corr_P\setminus\Pleon.
\]

\begin{definition}[Finite-state relative presentation property]
A pair $(L,\theta)$ has the \emph{finite-state relative presentation property} (FSRP) if $|\Pcal|<\infty$ and every $\Val_P\subseteq\Pcal^{\ge2}$ is regular.  We write $\mathrm{FSRP}(L,\theta)$ for this property and abbreviate it by $\mathrm{FSRP}(L,h)$ when $\theta=\theta_{L,h}$.
\end{definition}

Clearly
\(\mathrm{FRP}\Longrightarrow\mathrm{FSRP}.\)
The converse fails by the $36$-element finite-quotient example in Section~\ref{sec:36}.

\begin{theorem}[Finite quotient implies FSRP]
\label{thm:finite-quotient-fsrp}
If $S=\Sigma^*/\theta$ is finite, then FSRP holds.
\end{theorem}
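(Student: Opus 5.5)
The plan is to realize $\Corr_P$, $\Pleon$, and hence $\Val_P$ by finite automata over the prime alphabet $\Pcal$, using the finite quotient $S$ as the state set. Since every prime is a $\theta$-class, finiteness of $S$ gives $|\Pcal|\le|S|<\infty$. So $\Pcal$ is a finite alphabet and the first clause of FSRP holds.

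First I show that each $\Corr_P$ is regular. Let $\pi:\Pcal^*\to S$ be the monoid morphism sending a prime symbol $Q$ to itself, viewed as an element of $S$. This is well defined because quotient multiplication $*$ is associative with unit $[\eps]$. By the definition $[\overline\alpha]=P_1*\cdots*P_k$, we have $\pi(\alpha)=[\overline\alpha]$. Hence
\[
  \Corr_P=\pi^{-1}(P)\cap\Pcal^{\ge2}.
\]
This set is recognized by the DFA with state set $S$ and transitions $s\mapsto s*Q$, intersected with the regular set $\Pcal^{\ge2}$. So $\Corr_P$ is regular, and $\Corr=\bigcup_{P\in\Pcal}\Corr_P$ is a finite union of regular languages, hence regular.

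Next I use the description \eqref{eq:pleonastic-language}:
\[
  \Pleon=\Pcal^+\Corr\Pcal^*\cup\Pcal^*\Corr\Pcal^+.
\]
This expresses $\Pleon$ through concatenations of $\Corr$ with the regular sets $\Pcal^+$ and $\Pcal^*$, followed by a union. Regular languages are closed under these operations, so $\Pleon$ is regular. Finally, $\Val_P=\Corr_P\setminus\Pleon$ is a difference of regular languages and is therefore regular, which establishes FSRP.

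The only point needing care is the identity \eqref{eq:pleonastic-language} itself, which the paper states just before the definition. I would check it directly. A word $\alpha$ is pleonastic exactly when it has a proper contiguous block $\beta$ of length at least two with prime quotient. Such a block means $\alpha=\gamma\beta\delta$ with $\beta\in\Corr$, and properness means $\gamma\delta\neq\eps$. This matches the two displayed terms, one for $\gamma\ne\eps$ and one for $\delta\ne\eps$.

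I do not expect a genuine obstacle. If a bound on state complexity is wanted, the same construction gives an automaton for $\Val_P$ whose states record the current quotient element together with the finite set of quotients of the suffixes read so far. This yields a bound of order $|S|\cdot 2^{|S|}$.
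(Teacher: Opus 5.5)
Your proof is correct and matches the paper's argument: you obtain $\Corr_P=\mu^{-1}(P)\cap\Pcal^{\ge2}$ from the evaluation morphism into the finite quotient, and then get regularity of $\Val_P=\Corr_P\setminus\Pleon$ from Equation~\eqref{eq:pleonastic-language} and the closure properties of regular languages. You also spell out two points the paper leaves implicit, namely that $|\Pcal|\le|S|$ and that the pleonastic identity holds.
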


\begin{proof}
The evaluation morphism
\(\mu:\Pcal^*\to S\)
has finite codomain.  Hence
\(\Corr_P=\mu^{-1}(P)\cap\Pcal^{\ge2}\)
is regular for every prime $P$, so $\Corr$ is regular.  Equation~\eqref{eq:pleonastic-language} and closure of regular languages under concatenation, union, and difference imply regularity of every $\Val_P$.
\end{proof}

\begin{proposition}[Context-freeness before factor-minimalization]
\label{prop:corr-cfl}
Let $L$ be context-free and $h$-substitutable, and assume that the relative prime spectrum $\Pcal$ is finite.  Then every live relative class is context-free, every correct-return language $\Corr_P$ is context-free, and therefore $\Corr$ and $\Pleon$ are context-free.
\end{proposition}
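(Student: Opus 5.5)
We prove the three claims in order: classes, then correct-return languages, then $\Corr$ and $\Pleon$.

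\textbf{Live classes.} Fix a live class $X$, put $m:=h(X)$, and choose $x_0\in X$ with a context $(u,v)$ such that $ux_0v\in L$. By Lemma~\ref{lem:type-context-collapse}, every word $y$ with $h(y)=m$ and $uyv\in L$ lies in $X$. Conversely, every $y\in X$ satisfies $uyv\in L$, because $y\equiv_L x_0$. Hence
\[
  X=u^{-1}Lv^{-1}\cap h^{-1}(m).
\]
Context-free languages are closed under left and right quotients by fixed words, i.e.\ by regular languages. The set $h^{-1}(m)$ is regular because $M$ is finite. Since context-free languages are closed under intersection with regular sets, $X$ is context-free. Only substitutability and finiteness of $M$ are used here; finiteness of $\Pcal$ is not needed for this step.

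\textbf{Correct-return languages.} Put $\Delta:=\{\langle Q\rangle:Q\in\Pcal\}$, a finite alphabet of fresh symbols. The plan is to express $\Corr_P\subseteq\Delta^{\ge2}$ through context-free operations.
\begin{enumerate}[label=(\roman*)]
\item Pick a shortest representative $w_Q\in Q$ for each prime $Q$, and let $\mu:\Delta^*\to\Sigma^*$ be the morphism $\langle Q\rangle\mapsto w_Q$.
\item Since $\theta$ is a congruence, $[\overline\alpha]=[\mu(\alpha)]_\theta$ for every $\alpha\in\Delta^+$. Therefore
\[
  \Corr_P=\mu^{-1}(P)\cap\Delta^{\ge2}.
\]
\item Inverse morphic images of context-free languages are context-free, and $P$ is context-free by the first step. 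Intersecting with the regular set $\Delta^{\ge2}$ shows that $\Corr_P$ is context-free.
\end{enumerate}
This representative trick avoids any substitution of context-free languages into a sequence. It works because quotient membership depends only on one word per prime.

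\textbf{$\Corr$ and $\Pleon$.} Because $\Pcal$ is finite, $\Corr$ is a finite union of the languages $\Corr_P$, hence context-free. For $\Pleon$, apply equation~\eqref{eq:pleonastic-language}. The sets $\Pcal^+$ and $\Pcal^*$ are regular, and context-free languages are closed under concatenation with regular languages and under finite union. So $\Pleon$ is context-free.

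The only delicate point is the first step: we must check that a single context suffices to cut out exactly the class, including the $h$-type restriction. Everything else is closure properties. Regularity of $\Val_P=\Corr_P\setminus\Pleon$ does not follow from this argument, since context-free languages are not closed under difference; that gap is the open question discussed in the paper.
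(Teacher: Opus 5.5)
Your proof is correct and follows essentially the paper's argument. You cut out each live class as $h^{-1}(h(x_0))\cap\{y:uyv\in L\}$ from a single accepting context via type--context collapse, obtain $\Corr_P$ as the inverse image of $P$ under a representative-choosing morphism intersected with $\Pcal^{\ge2}$, and finish with the same closure arguments for $\Corr$ and $\Pleon$; choosing shortest representatives rather than arbitrary ones is an immaterial difference.
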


\begin{proof}
Let $C=[u]_{L,h}$ be a live relative class and choose an accepting context $(x,y)$ with $xuy\in L$.  By $h$-substitutability, a word $v$ lies in $C$ exactly when it has the same observer type as $u$ and occurs in this same accepting context.  Hence
\[
  C=h^{-1}(h(u))\cap x^{-1}Ly^{-1},
\]
where $x^{-1}Ly^{-1}=\{v:xvy\in L\}$.  The second language is context-free because context-free languages are closed under left and right quotient by fixed words, while $h^{-1}(h(u))$ is regular because $h$ has finite codomain.  Thus $C$ is context-free.

For each prime $Q\in\Pcal$, choose a representative word $\omega(Q)\in Q$ and extend $Q\mapsto\omega(Q)$ to a morphism
\(\phi:\Pcal^*\to\Sigma^*.\)
For every prime sequence $\alpha$, the word $\phi(\alpha)$ lies in the quotient product class $[\overline\alpha]$.  Consequently
\[
  \Corr_P=\phi^{-1}(P)\cap\Pcal^{\ge2}.
\]
Since $P$ is context-free, inverse homomorphism and intersection with the regular language $\Pcal^{\ge2}$ show that $\Corr_P$ is context-free.  Finiteness of $\Pcal$ gives context-freeness of $\Corr=\bigcup_P\Corr_P$, and Equation~\eqref{eq:pleonastic-language} then gives context-freeness of $\Pleon$ by closure under concatenation with regular languages and finite union.
\end{proof}

\begin{remark}[The unresolved FSRP boundary]
Proposition~\ref{prop:corr-cfl} localizes the remaining issue.  Under the context-free, $h$-substitutable, finite-prime hypotheses, nonregularity cannot arise already at the level of correct returns: each $\Corr_P$ and the global pleonastic language are context-free.  FSRP asks whether the factor-minimal difference
\[
  \Val_P=\Corr_P\setminus\Pleon
\]
is regular.  Context-free languages are not closed under difference, so the proposition alone gives no regularity conclusion.  At present, we are not aware of an example in this setting for which some $\Val_P$ is nonregular; all explicit finite-prime examples in this paper satisfy FSRP.  Thus the possibility that FSRP follows automatically from context-freeness, $h$-substitutability, and finite relative prime spectrum is not ruled out here.
\end{remark}

The preceding finite-quotient regularity theorem is abstract: it follows immediately from quotient evaluation.  The following automaton is retained for the stronger algorithmic tasks of deciding FRP, computing the validity radius, and extracting pumping certificates.

\subsection{Prime-avoidance automata for finite quotients}
\label{sec:avoid}

When the relative quotient
\(S:=\Sigma^*/\theta\)
is finite, valid RHSs can be recognized by a finite automaton intrinsic to quotient multiplication.

For a nonempty prime word
\(x=P_1\cdots P_j\)
define its total product
\(t(x):=P_1*\cdots*P_j\)
and the set of products of proper nonempty suffixes beginning after the first symbol,
\(U(x):=\{P_i*\cdots*P_j:2\le i\le j\}.\)
For $j=1$, set $U(x)=\varnothing$.  A state is a pair $(t,U)$.

Appending a prime $Q$ changes the suffix profile to
\(U'=\{Q\}\cup\{u*Q:u\in U\}.\)
A \emph{safe transition}
\((t,U)\xrightarrow{Q}(t*Q,U')\)
is permitted if
\begin{equation}
  t*Q\notin\Pcal,
  \qquad
  \{u*Q:u\in U\}\cap\Pcal=\varnothing.
  \label{eq:safe-transition}
\end{equation}
A \emph{final edge} labelled $Q$ is permitted if
\begin{equation}
  t*Q\in\Pcal,
  \qquad
  \{u*Q:u\in U\}\cap\Pcal=\varnothing.
  \label{eq:final-transition}
\end{equation}
The initial states are $(P,\varnothing)$ for $P\in\Pcal$.

\begin{theorem}[Prime-avoidance automaton]
\label{thm:prime-avoidance}
A prime word $P_1\cdots P_k$, $k\ge2$, is a valid RHS iff starting from $(P_1,\varnothing)$, the symbols $P_2,\ldots,P_{k-1}$ can be read along safe transitions and $P_k$ along a final edge.
\end{theorem}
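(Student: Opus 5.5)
The plan is to prove an invariant for reading along the automaton, then split validity into its two constituent conditions. Starting from $(P_1,\varnothing)$ and reading $P_2,\ldots,P_j$ (whether or not the transitions are safe), the state reached is exactly $(t(x),U(x))$ for $x=P_1\cdots P_j$. This is an induction on $j$. The base case is $t(P_1)=P_1$ and $U(P_1)=\varnothing$. For the step, appending $Q$ gives $t(xQ)=t(x)*Q$ by associativity of quotient multiplication. The suffixes of $xQ$ starting at positions $2,\ldots,j+1$ are either the single symbol $Q$ or a suffix $P_i\cdots P_j$ of $x$ with $i\ge2$ extended by $Q$, so $U(xQ)=\{Q\}\cup\{u*Q:u\in U(x)\}$. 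The state space is finite since $S$ is finite, but finiteness plays no role in the equivalence itself.

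Next I will characterize validity of $\alpha=P_1\cdots P_k$, $k\ge2$, by contiguous blocks. Correctness means $t(\alpha)\in\Pcal$, and the left-hand side is then forced to be $P=t(\alpha)$. Non-pleonasticity means that no block $P_i\cdots P_j$ with $j-i\ge1$ and $(i,j)\ne(1,k)$ has prime quotient. I will index these blocks by their right endpoint $j\in\{2,\ldots,k\}$. Blocks ending at $j$ with $i=1$ have product $t(P_1\cdots P_{j-1})*P_j$, and these are forbidden exactly when $j<k$. Blocks ending at $j$ with $2\le i\le j-1$ have products exactly $\{u*P_j:u\in U(P_1\cdots P_{j-1})\}$, and these are forbidden for every $j$, including $j=k$. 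Length-one blocks are excluded from both sets by design, since $U$ collects only suffixes starting at position $\ge2$ and the new singleton $\{Q\}$ is added to $U'$ only after the check.

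Combining the two steps with the invariant, the conditions at right endpoint $j<k$ are precisely the safe-transition conditions \eqref{eq:safe-transition} for reading $P_j$ from state $(t(P_1\cdots P_{j-1}),U(P_1\cdots P_{j-1}))$. The conditions at $j=k$, together with correctness $t(\alpha)\in\Pcal$, are precisely the final-edge conditions \eqref{eq:final-transition}. Every proper block of length $\ge2$ has a unique right endpoint and a unique type ($i=1$ or $i\ge2$), so this bookkeeping gives both directions of the equivalence at once.

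The main obstacle is this bookkeeping. I need the final edge to test the full product as prime while still forbidding all proper blocks ending at $k$, and I need to confirm that the only excluded block $(1,k)$ is exactly the one whose test is flipped on the final edge. I also need to handle $k=2$: no safe transitions are read, $U=\varnothing$, and the final edge just requires $P_1*P_2\in\Pcal$, which matches the fact that every correct binary rule is valid.
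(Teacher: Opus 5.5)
Your proposal is correct and follows the same route as the paper's proof. The paper likewise relies on the invariant that the state after reading a prefix records its total product and the products of all suffixes beginning after the first symbol, so that each proper block of length at least two is checked exactly once at its right endpoint, with the final edge flipping only the whole-word condition. You make that induction and the right-endpoint bookkeeping (including the case $k=2$) explicit where the paper only sketches them.
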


\begin{proof}
Inductively, the state records the quotient products of every suffix that could become a proper contiguous block ending at the current position.  Condition~\eqref{eq:safe-transition} prevents both the whole current prefix and every newly completed proper suffix of length at least two from being prime-valued.  Condition~\eqref{eq:final-transition} changes only the whole word requirement: at the last symbol the total product must return to a prime, while every proper suffix must remain nonprime.  Earlier proper intervals were already ruled out when their right endpoints were processed.
\end{proof}

The automaton recognizes the global valid language $\Val$.  To isolate $\Val_P$, retain only final edges whose total quotient product is the target prime $P$.

The state set has size at most
\(|S|2^{|S|},\)
so it is finite whenever $S$ is finite.

\begin{theorem}[Productive-cycle criterion]
Assume $S$ is finite.  Restrict the safe-transition graph to states reachable from an initial state and from which a final edge is reachable.  Then
\(\mathrm{FRP} \iff \text{this productive safe graph is acyclic}.\)
\end{theorem}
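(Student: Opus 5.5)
The plan is to identify the set of valid right-hand sides with the set of labels of accepting paths in the prime-avoidance automaton of Theorem~\ref{thm:prime-avoidance}. Call this automaton $\mathcal A$. Its state set has at most $|S|2^{|S|}$ elements, and its prime alphabet $\Pcal\subseteq S$ is finite. Each accepting path starts at an initial state $(P_1,\varnothing)$, follows safe transitions labelled $P_2,\dots,P_{k-1}$, and ends with a final edge labelled $P_k$. By Theorem~\ref{thm:prime-avoidance}, the labels of such paths are exactly the words of $\Val$, and the map from paths to labelled words is a bijection once the initial state is fixed by $P_1$. Because $\Pcal$ is finite, $|\Vcal|<\infty$ holds iff $\Val$ is finite. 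So FRP is equivalent to the finiteness of the set of accepting paths of $\mathcal A$.

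To see that acyclicity gives FRP, suppose the productive safe graph is acyclic. Every accepting path uses only productive states: each state on it is reachable from an initial state, and a final edge is reachable from it by continuing along the path. Hence each accepting path lies inside the productive safe graph, followed by one final edge. In a finite acyclic graph, a path visits at most $|S|2^{|S|}$ states. This bounds the length of every valid right-hand side by $|S|2^{|S|}+1$, so $\rho<\infty$ and FRP holds.

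For the converse, suppose the productive safe graph contains a cycle through some state $q$. Since $q$ is productive, there is a safe path from an initial state $(P_1,\varnothing)$ to $q$, labelled $\sigma$, and a safe path from $q$ followed by a final edge, labelled $\tau$. Let $\pi$ be the nonempty label of the cycle. For every $m\ge0$ the concatenated path spelling $\sigma\pi^m\tau$ is an accepting path. Theorem~\ref{thm:prime-avoidance} then makes each word $P_1\sigma\pi^m\tau$ a valid right-hand side. These words have pairwise distinct lengths, so $\Val$ is infinite, and since $\Pcal$ is finite some single $\Val_P$ is infinite. Thus FRP fails. The same argument yields the pumping certificates mentioned before the theorem.

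The main thing to check carefully is that the transition structure is genuinely deterministic and state-based. The automaton's state $(t,U)$ must completely determine which continuations are safe or final, so that splicing paths at a repeated state produces legitimate runs. This is where the completeness claim in the proof of Theorem~\ref{thm:prime-avoidance} is used: the state records every quotient product that matters for later checks. Both the transition conditions \eqref{eq:safe-transition} and \eqref{eq:final-transition} and the update rule $U'=\{Q\}\cup\{u*Q\}$ depend only on $(t,U)$ and $Q$. A small technical point is that the initial states have $U=\varnothing$ while other states may too; I would treat initial states simply as ordinary vertices of the graph, so that a cycle through one of them is still pumpable. Given this, the two directions above are routine graph arguments.
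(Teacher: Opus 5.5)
Your proof is correct and follows the same route as the paper's: pumping a productive safe cycle yields infinitely many valid right-hand sides, and acyclicity bounds the length of productive safe paths, so only finitely many valid words exist over the finite prime alphabet. You simply give more detail than the paper. One harmless slip in your side remark: every state reached by at least one transition has $Q\in U'$, so only the initial states have $U=\varnothing$, no transition enters them, and hence no cycle can pass through an initial state.
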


\begin{proof}
A productive cycle can be iterated arbitrarily many times before taking a path to a final edge, giving arbitrarily long valid RHSs.  Conversely, if the productive graph is acyclic, every productive safe path has bounded length, and the finite prime alphabet then yields only finitely many valid words.
\end{proof}

\begin{corollary}
Suppose the finite relative quotient is given effectively by its multiplication table together with the finite set of live prime classes.  Then FRP is decidable from this finite data, and the validity radius is computable.
\end{corollary}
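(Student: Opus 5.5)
The corollary follows by turning the productive-cycle criterion and the prime-avoidance automaton into explicit finite procedures over the given data. The input is the multiplication table of $S$ and the subset $\Pcal\subseteq S$ of live primes. From these I will build the finite automaton of Theorem~\ref{thm:prime-avoidance} explicitly. Its states are pairs $(t,U)$ with $t\in S$ and $U\subseteq S$, so there are at most $|S|2^{|S|}$ of them. The initial states are $(P,\varnothing)$ for $P\in\Pcal$. Each safe transition and final edge is determined by conditions \eqref{eq:safe-transition} and \eqref{eq:final-transition}, which require only finitely many table lookups and membership tests in $\Pcal$. Hence the whole transition structure is computable.

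Next I will compute the productive safe graph. First, a forward search from the initial states along safe transitions gives the reachable states. Second, the states having an outgoing final edge are marked, and a backward search along safe transitions from them gives the co-reachable states. The productive graph is the safe-transition graph restricted to states that are both reachable and co-reachable. A standard cycle test on this finite graph (depth-first search, or iterated removal of sinks) decides acyclicity. By the productive-cycle criterion, acyclicity holds exactly when FRP holds. This decides FRP.

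For the validity radius, note that when the productive graph is acyclic, every valid right-hand side $P_1\cdots P_k$ corresponds to a path. The path starts at $(P_1,\varnothing)$, makes $k-2$ safe transitions through productive states, and ends with a final edge. Every state on such a path is reachable, and each can reach the final edge, so the path stays inside the productive graph. Conversely, every such path in the productive DAG yields a valid word by Theorem~\ref{thm:prime-avoidance}. Therefore $\rho(L,\theta)$ equals $2$ plus the maximum number of safe edges over paths from an initial state to a state with a final edge. This is a longest-path computation in a finite DAG, done by dynamic programming in topological order. If FRP fails, $\rho=\infty$, and this value is also returned correctly by the decision procedure.

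The one point needing care is the degenerate case where no valid rules exist at all. Then the productive graph is empty, FRP holds, and $\rho$ is the supremum of the empty set. I will fix the convention $\rho=0$ (or $-\infty$) there and treat it separately. Apart from this, there is no real obstacle: all of the mathematical content lies in Theorem~\ref{thm:prime-avoidance} and the productive-cycle criterion. The remaining work is only to check that every object used is computable from the table, and that the length correspondence between paths and valid words is exact.
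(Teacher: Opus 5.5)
Your proposal is correct and follows the paper's proof: you build the prime-avoidance automaton from the multiplication table and the prime subset, decide FRP by testing the productive safe subgraph for a cycle via the productive-cycle criterion, and read off the validity radius as a longest productive path. Your explicit offset of $2$ (one initial symbol plus one final edge) and your convention for the empty case are bookkeeping that the paper leaves implicit.
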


\begin{proof}
Construct the finite prime-avoidance automaton from the multiplication table and the distinguished prime subset, and test the productive subgraph for a directed cycle.  In the acyclic case its longest productive path gives the validity radius.
\end{proof}

\begin{corollary}[Pumping certificate]
\label{cor:pumping-certificate}
If the finite-quotient pair fails FRP, there exist prime words $u,v,w$ with $v\neq\eps$ such that $uv^n w$ is valid for every $n\ge0$.
\end{corollary}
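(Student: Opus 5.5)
The certificate comes from the productive-cycle criterion together with Theorem~\ref{thm:prime-avoidance}. If the finite-quotient pair fails FRP, the criterion says the productive safe-transition graph of the prime-avoidance automaton contains a directed cycle. I will read the words $u,v,w$ off a path through this cycle.

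\textbf{Step 1: choose a path through the cycle.} Fix a state $s=(t,U)$ on the cycle, and let $v\ne\eps$ be the prime word labelling the closed safe walk from $s$ back to $s$. Since $s$ is productive, it has two properties:
\begin{enumerate}[label=(\roman*),nosep]
\item it is reachable from some initial state $(P_1,\varnothing)$ by a safe path labelled $u'$; put $u:=P_1u'$;
\item from $s$ there is a safe path labelled $w'$ to a state with a final edge labelled $Q$; put $w:=w'Q$.
\end{enumerate}
Here the automaton is deterministic: each state and prime determine the successor state.

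\textbf{Step 2: every pumped path is accepted.} For each $n\ge0$, the walk
\[
  (P_1,\varnothing)\xrightarrow{u'}s\xrightarrow{v^n}s\xrightarrow{w'}s'\xrightarrow{Q}
\]
consists of safe transitions followed by one final edge. By Theorem~\ref{thm:prime-avoidance}, $uv^nw=P_1u'v^nw'Q$ is therefore a valid right-hand side. Its length is at least two, since it contains $P_1$ and $Q$.

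\textbf{Main point to check.} Acceptance depends only on the run: each edge condition is a function of the current state and the letter read. So it is legitimate to traverse the cycle any number of times, including zero. The one boundary case to confirm is that the initial state must be the one determined by the first letter. This holds because $u$ begins with $P_1$ and the run starts at $(P_1,\varnothing)$, exactly as Theorem~\ref{thm:prime-avoidance} requires.
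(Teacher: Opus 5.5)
Your proposal is correct and matches the paper's proof: take a safe path from an initial state into a productive cycle, one nonempty traversal of that cycle, and a path from the cycle to a final edge, and read $u,v,w$ off the labels. Your remarks on determinism of the transitions, on the case $n=0$, and on the length-at-least-two requirement in Theorem~\ref{thm:prime-avoidance} just spell out what the paper's one-line argument leaves implicit.
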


\begin{proof}
Take a path into a productive cycle, one nonempty traversal of the cycle, and a path from it to a final edge; their labels give $u,v,w$.
\end{proof}

Section~\ref{sec:36} gives an explicit one-state productive loop for the family $P_{ab}\to P_aP_q^mP_b$.

\subsection{Canonical residual controller}

For each prime $P$ and prefix $u\in\Pcal^*$, take the left quotient
\(u^{-1}\Val_P:=\{v:uv\in\Val_P\}.\)
Let
\[
  \Qcal_{L,\theta}
  :=
  \{u^{-1}\Val_P:
  P\in\Pcal,\ u\in\Pcal^*,\ u^{-1}\Val_P\neq\varnothing\}.
\]

\begin{definition}[Presentation state complexity]
Define
\(\kappa(L,\theta):=|\Qcal_{L,\theta}|\in\Nat\cup\{\infty\}.\)
In the fixed-$h$ case write $\Qcal_{L,h}:=\Qcal_{L,\theta_{L,h}}$ and $\kappa(L,h):=\kappa(L,\theta_{L,h})$.
\end{definition}

\begin{theorem}[Residual characterization of FSRP]
Assuming $|\Pcal|<\infty$,
\[
  \mathrm{FSRP}(L,\theta)\iff\kappa(L,\theta)<\infty.
\]
Moreover, the nonempty residual languages above form the canonical minimal deterministic partial multi-start controller for the family $(\Val_P)_{P\in\Pcal}$.  Adjoining the empty residual $\varnothing$ as a single dead state gives the corresponding canonical complete controller.
\end{theorem}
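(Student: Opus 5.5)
The plan is to reduce the statement to the Myhill--Nerode theorem over the finite alphabet $\Pcal$, handling all targets $P$ simultaneously.

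\emph{Forward direction.} Assume FSRP. Then each $\Val_P$ is regular over the finite alphabet $\Pcal$. By Myhill--Nerode it has finitely many left quotients $u^{-1}\Val_P$. Since $\Pcal$ is finite, the union over $P\in\Pcal$ of these finite sets is finite, so $\kappa(L,\theta)<\infty$.

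\emph{Reverse direction.} Assume $\kappa<\infty$. Fix $P$. Every nonempty quotient $u^{-1}\Val_P$ lies in $\Qcal_{L,\theta}$, and there is at most one empty quotient. Hence $\Val_P$ has at most $\kappa+1$ distinct left quotients, and Myhill--Nerode gives regularity of $\Val_P$. Because $|\Pcal|<\infty$ is assumed, this is exactly FSRP.

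\emph{The controller.} Take the state set to be $\Qcal_{L,\theta}$ itself. For each $P$ with $\Val_P\ne\varnothing$, the initial state of target $P$ is $\Val_P=\eps^{-1}\Val_P$. The partial transition $R\xrightarrow{Q}Q^{-1}R$ is defined whenever $Q^{-1}R\neq\varnothing$. A state $R$ is final iff $\eps\in R$. Two facts make this well defined and correct:
\begin{enumerate}[label=(\roman*)]
\item $Q^{-1}(u^{-1}\Val_P)=(uQ)^{-1}\Val_P$, so transitions stay inside $\Qcal_{L,\theta}$ and depend only on the residual language, not on the pair $(P,u)$ that produced it;
\item the state reached from the $P$-start on input $u$ is exactly $u^{-1}\Val_P$, so $u$ is accepted iff $\eps\in u^{-1}\Val_P$, i.e.\ iff $u\in\Val_P$.
\end{enumerate}
Adjoining $\varnothing$ as an absorbing dead state, and sending every undefined transition to it, gives the complete version.

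\emph{Canonicity and minimality.} Let $\mathcal A$ be any deterministic partial multi-start automaton recognizing the family, with one start state $s_P$ per $P$ having $\Val_P\neq\varnothing$, all of whose states are accessible and co-accessible. Each state $s$ determines its future language $F(s)=\{v:\delta(s,v)\text{ final}\}$. If $s$ is reached from $s_P$ by $u$, then $F(s)=u^{-1}\Val_P$, and co-accessibility makes this set nonempty. The map $s\mapsto F(s)$ is therefore a surjection onto $\Qcal_{L,\theta}$: every nonempty residual $u^{-1}\Val_P$ is $F(\delta(s_P,u))$, and that state is defined because some continuation of $u$ is accepted. The map commutes with transitions, sends starts to starts, and preserves finality, so it is a morphism onto the residual controller. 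This yields $|\mathcal A|\ge\kappa$, and the residual controller is the unique minimal one up to isomorphism. The same argument with a dead state added gives the complete version.

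\emph{Main obstacle.} The only delicate point is the multi-start sharing: residuals coming from different targets $P\neq P'$ may coincide, and they must then be merged into a single state. Fact~(i) is what makes this merging consistent. It is also why the controller is indexed by languages rather than by pairs $(P,u)$, and the surjection argument above has to be run on the shared state set, not separately for each $P$.
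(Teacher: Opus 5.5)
Your proposal is correct and follows the paper's route. The paper's proof is Myhill--Nerode applied simultaneously to the finite family $(\Val_P)_{P\in\Pcal}$, with residual languages themselves as states, and it gets the partial controller from the complete one by deleting the single dead state $\varnothing$; your surjective-morphism argument from an arbitrary trim deterministic multi-start recognizer onto the shared residual controller (enough, since trimming never adds states) just makes explicit the minimality and canonicity that the paper states in one line.
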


\begin{proof}
This is the Myhill--Nerode theorem applied simultaneously to the finite family of regular languages $\Val_P$, taking residual languages themselves as states.  In the usual complete realization, every empty left quotient is represented by the single dead state $\varnothing$; deleting that state and the transitions entering it yields exactly the partial controller displayed above.  Minimality and canonicity therefore follow from equality of residual languages, not merely from state renaming.
\end{proof}

\begin{definition}[Canonical finite-state relative-prime presentation]
\label{def:canonical-fsrp}
For an FSRP pair $(L,h)$ define
\[
  \mathfrak G^{\mathrm{fs}}_{L,h}
  :=
  \bigl(
    \Pcal,\mathrm{Lex},\mathrm{Start},
    \Qcal_{L,h},\delta,\mathrm{Fin}
  \bigr),
\]
where $\mathrm{Lex}(P)=P\cap\Sigma$, $\mathrm{Start}$ records the exact prime decompositions of the accepting relative classes (and the empty start alternative when $\eps\in L$), $\Qcal_{L,h}$ is the shared set of nonempty residual languages defined above, $\delta(R,A)=A^{-1}R$ whenever this quotient is nonempty, and $\mathrm{Fin}$ records the residual states containing $\eps$.  This tuple is canonical because its controller states are residual languages themselves rather than arbitrarily named automaton states.
\end{definition}

\subsection{Compilation to an ordinary finite CFG}

Assume unit separation, finitely many accepting classes, and FSRP.  For each residual state $R\in\Qcal_{L,h}$ introduce an auxiliary nonterminal $C_R$.  If
\(A^{-1}R=R'\neq\varnothing,\)
include
\(C_R\to A C_{R'}.\)
If $\eps\in A^{-1}R$, also include the terminating controller step
\(C_R\to A.\)
For each prime $P$ with nonempty $\Val_P$, include
\(P\to C_{\Val_P}.\)
Add the lexical and start rules as in the direct construction.

\begin{theorem}[Finite-state canonical grammar theorem]
Under unit separation, finitely many accepting classes, finite prime spectrum, and FSRP, the residual-controller construction yields a finite CFG $G^{\mathrm{res}}_{L,h}$ satisfying
\(L(G^{\mathrm{res}}_{L,h},P)=P\)
for every live prime $P$, and therefore
\(L(G^{\mathrm{res}}_{L,h})=L.\)
\end{theorem}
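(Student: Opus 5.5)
The plan is to reduce the statement to the finite direct construction of Theorem~\ref{thm:direct-presentation}. The residual controller compiles exactly the right-hand-side languages $\Val_P$. So I will show that, for each prime $P$, the sentential forms over $\Pcal$ derivable from $P$ in one ``controller block'' are exactly the words of $\Val_P$. After that, soundness and completeness follow the direct case. Finiteness of $G^{\mathrm{res}}_{L,h}$ is immediate. $\Qcal_{L,h}$ is finite by the residual characterization of FSRP, $\Pcal$ is finite, there is one lexical rule per live letter class, and the start rules are finite because there are finitely many accepting classes and Corollary~\ref{cor:finite-candidates} applies.

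\textbf{Step 1: the controller block.} I will prove by induction on word length that, for every $R\in\Qcal_{L,h}$ and every $v\in\Pcal^+$,
\[
C_R\Rightarrow^* v \text{ using only controller rules} \iff v\in R.
\]
Here $v$ is treated as a word over the prime nonterminals, and the auxiliary $C$-symbols are fully eliminated. The rules $C_R\to AC_{A^{-1}R}$ and $C_R\to A$ (the latter when $\eps\in A^{-1}R$) are exactly the transitions and final steps of the residual automaton. The identity $v\in R\iff \eps\in v^{-1}R$ closes the induction. Instantiating at $R=\Val_P$ shows that the rules $P\to C_{\Val_P}$ together with the controller block simulate exactly the productions $P\to\alpha$ with $\alpha\in\Val_P$. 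No spurious length-one or empty forms arise, since $\Val_P\subseteq\Pcal^{\ge2}$. Each derivation $C_R\Rightarrow^* v$ emits prime symbols in order and ends exactly once, so the auxiliary symbols cannot interleave with prime derivations.

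\textbf{Step 2: soundness.} Every derivation of $G^{\mathrm{res}}$ can be reorganized into a derivation tree whose $P$-nodes either use a lexical rule $P\to a$ with $a\in P$, or expand via a controller block to some $\alpha\in\Val_P$. Correctness gives $\overline\alpha\subseteq P$. Induction on the tree then yields that every terminal word derived from $P$ lies in $P$. The start rules derive only exact products equal to accepting classes, together with $\eps$ when $\eps\in L$, so $L(G^{\mathrm{res}})\subseteq L$.

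\textbf{Step 3: completeness.} This repeats the proof of Theorem~\ref{thm:direct-presentation}. For $w=a_1\cdots a_n\in P$ with $n\ge2$, the production $P\to[a_1]\cdots[a_n]$ is correct; unit separation makes each live $[a_i]$ a prime. By the valid-rule reduction lemma, this production is derivable using valid productions only, and by Step 1 each valid production is simulated in $G^{\mathrm{res}}$. The lexical rules finish the derivation. The case $n=1$ is a lexical rule. Hence $L(G^{\mathrm{res}},P)=P$. Finally, each accepting non-unit class $X$ is generated from its exact prime decompositions, which exist by Theorem~\ref{thm:prime-existence}, and $\eps$ is handled by $S\to\eps$. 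This gives $L(G^{\mathrm{res}})=L$.

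\textbf{Main obstacle.} The main obstacle is the bookkeeping in Step 1. One must check that the controller rules introduce no mixing: a $C_R$ nonterminal cannot be expanded in a way that produces an element outside $R$, for instance through the dead state or through partiality. I will handle this by noting that the rules are defined only for nonempty quotients. That makes the derivation relation from $C_R$ coincide with the language accepted by the partial residual DFA started at $R$.
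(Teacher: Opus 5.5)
Your proposal is correct and follows the paper's own argument. The paper's proof likewise notes that the residual controller realizes exactly the valid right-hand sides $\Val_P$, reduces soundness and completeness to the direct-grammar argument via valid-rule reduction, and obtains finiteness from $|\Pcal|<\infty$ and $\kappa(L,h)<\infty$. Your Step~1 induction spells out the controller-block correspondence that the paper states in one line and later formalizes in Theorem~\ref{thm:prime-skeleton-preservation}.
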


\begin{proof}
The controller realizes exactly the valid right-hand sides.  Hence soundness and completeness reduce to the same arguments as for the direct grammar, using valid-rule reduction.  Finiteness follows from $|\Pcal|<\infty$ and $\kappa(L,h)<\infty$.
\end{proof}

\begin{theorem}[Prime-skeleton preservation]
\label{thm:prime-skeleton-preservation}
Contract every maximal chain of residual-controller nonterminals in a parse tree of $G^{\mathrm{res}}_{L,h}$.  The resulting tree is a parse tree of the direct grammar containing all valid productions.  Conversely, every direct valid-rule parse has a unique expansion through the deterministic residual controller.  Hence contraction induces a bijection between residual-controller parse trees and direct prime-skeleton parse trees.
\end{theorem}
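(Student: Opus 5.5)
The plan is to make the contraction map explicit, check that it sends parse trees to parse trees, and then build its inverse by induction on the direct parse tree.

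\emph{The contraction.} In a parse tree of $G^{\mathrm{res}}_{L,h}$, every node labelled by a prime $P$ is expanded either by a lexical rule, by a start rule, or by the rule $P\to C_{\Val_P}$. Below a node $C_{\Val_P}$, the controller rules $C_R\to AC_{R'}$ and $C_R\to A$ produce a right spine of auxiliary nonterminals $C_{R_0},C_{R_1},\ldots,C_{R_{k-1}}$, where $R_0=\Val_P$. The left children of this spine are primes $A_1,\ldots,A_k$. The spine ends at the unique terminating step $C_{R_{k-1}}\to A_k$. By construction $R_i=(A_1\cdots A_i)^{-1}\Val_P$, and termination requires $\eps\in A_k^{-1}R_{k-1}$. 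So $A_1\cdots A_k\in\Val_P$, and in particular $k\ge2$, because $\Val_P\subseteq\Pcal^{\ge2}$. Contraction replaces the node $P$ and its whole spine by the single production $P\to A_1\cdots A_k$, keeping the subtrees hanging from the $A_i$. Lexical and start rules are the same in both grammars, so the contracted tree is a parse tree of the direct grammar whose productions are all of $\Vcal$.

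\emph{The inverse.} Take a direct parse tree and a node using the valid rule $P\to A_1\cdots A_k$. Because $\delta$ is deterministic, the states $R_i=(A_1\cdots A_i)^{-1}\Val_P$ are forced. Each $R_i$ with $i<k$ is nonempty, since it contains $A_{i+1}\cdots A_k$. Hence every rule $C_{R_{i-1}}\to A_iC_{R_i}$ exists for $i<k$. The final step $C_{R_{k-1}}\to A_k$ exists because $\eps\in A_k^{-1}R_{k-1}$.

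\emph{Uniqueness.} This is the one point needing care, and I expect it to be the main obstacle. A continuing step $C_R\to AC_{R'}$ with $R'=A^{-1}R$ and a terminating step $C_R\to A$ can both be available at the same spine position. So uniqueness must come from the sequence length $k$, which is fixed by the direct rule: the spine continues while $i<k$ and terminates exactly at $i=k$. With that, the expansion of each valid rule is unique, and applying it at every node gives the unique expanded tree. Since expansion and contraction are mutually inverse node by node, contraction is a bijection.
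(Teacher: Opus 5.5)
Your proof is correct and follows essentially the same route as the paper: determinism of $R\mapsto A^{-1}R$ forces each controller spine, termination encodes exactly membership in $\Val_P$, and contraction and expansion are applied independently at every branching node. Your concern about a continuing step and a terminating step both being available is correctly settled by the fixed length $k$. The concern is in fact vacuous in complete parse trees: since $\Val_P$ is prefix-free (factor-freeness of $\Val$), the continuing alternative can only lead to the nonterminal $C_{\{\eps\}}$, which has no productions.
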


\begin{proof}
A controller chain beginning with $P\to C_{\Val_P}$ and emitting a prime word $\alpha=A_1\cdots A_k$ is forced step by step by the deterministic residual transition $R\mapsto A^{-1}R$.  It terminates exactly when the residual reached after reading $\alpha$ contains $\eps$, equivalently when $\alpha\in\Val_P$.  Contracting the chain therefore yields precisely the direct valid production $P\to\alpha$.  Conversely, for every $\alpha\in\Val_P$, the same deterministic residual transitions give a unique controller chain from $C_{\Val_P}$ to a terminating state.  Applying this independently at every branching node gives mutually inverse expansion and contraction maps on parse trees.
\end{proof}

Thus FSRP is not merely weak language compression: it is a finite-state subdivision of the same intrinsic prime branching structure.  This is the sense in which FSRP remains suitable for a strong/canonical structural target, paralleling Clark's motivation for canonical grammars~\cite{Clark2013Trees}.

\section{Tail saturation and presentation-defect complexity}
\label{sec:saturation}

The semantic residual theorem shows that only finitely many relative tail classes can occur after a fixed pair of primes.  The remaining source of direct-rule infinitude is failure of a valid tail to saturate its own relative class.

\begin{definition}[Saturated and under-saturated valid rules]
Write a valid rule as
\(P\to A\alpha, \qquad P,A\in\Pcal,\ \alpha\in\Pcal^+.\)
It is \emph{tail-saturated} if
\(\overline\alpha=[\overline\alpha]_\theta,\)
and \emph{tail-under-saturated} otherwise.  Let $\Sat_P$ and $\Unsat_P$ be the corresponding right-hand-side languages for left-hand prime $P$.
\end{definition}

Thus VTE$_1$ is equivalent to $\Unsat_P=\varnothing$ for every prime $P$.  Every binary valid rule is automatically saturated, because its tail consists of one prime class.

For finite prime spectrum, put
\[
  q:=|\Pcal|,
  \qquad
  r_{\rm res}(L,h):=\max\Bigl(\{|\operatorname{Res}_\theta(A,P)|:A,P\in\Pcal\}\cup\{0\}\Bigr),
\]
and let
\[
  \mathcal R_{\rm rel}:=\bigcup_{A,P\in\Pcal}\operatorname{Res}_\theta(A,P).
\]
For $R\in\mathcal R_{\rm rel}$ define
\[
  \Ex(R):=\{\alpha\in\Pcal^+:\overline\alpha=R\},
  \qquad
  m_{\rm ex}(L,h):=\max\bigl(\{|\Ex(R)|:R\in\mathcal R_{\rm rel}\}\cup\{0\}\bigr).
\]
The quantity $m_{\rm ex}(L,h)$ is finite even without UF by Corollary~\ref{cor:finite-candidates}.

\begin{theorem}[Saturated-basis bound]
\label{thm:saturated-basis-bound}
Assume $L$ is $h$-substitutable and $|\Pcal|=q<\infty$.  Then the set of saturated valid rules is finite and
\(|\Vcal_{\mathrm{sat}}| \le q^2 r_{\rm res}(L,h)m_{\rm ex}(L,h).\)
\end{theorem}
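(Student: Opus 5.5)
The plan is to build an injective, or at least finite-to-one, map from saturated valid rules into a finite index set of triples. A saturated valid rule is written $P\to A\alpha$ with $P,A\in\Pcal$ and $\alpha\in\Pcal^+$. We assign to it the triple $(P,A,R)$ with $R:=\overline\alpha$, and then count how many tails $\alpha$ can share the same triple. The bound $q^2 r_{\rm res}m_{\rm ex}$ falls out directly: $q^2$ choices for $(P,A)$, at most $r_{\rm res}$ choices for $R$ given $(P,A)$, and at most $m_{\rm ex}$ choices for $\alpha$ given $R$.

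\emph{Step 1: the tail lies in a residual class.} Saturation gives $\overline\alpha=[\overline\alpha]_\theta$, so $R=\overline\alpha$ is a single relative class. It is live, because $A\overline\alpha\subseteq P$ by correctness and $P$ is live, so every word of $\overline\alpha$ occurs as a factor of a word of $L$. Correctness also gives $A\cdot R=\overline{A\alpha}\subseteq P$, hence $R\subseteq A\backslash P$. Therefore $R\in\operatorname{Res}_\theta(A,P)$.

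\emph{Step 2: count the residual classes.} Theorem~\ref{thm:finite-residual-splitting} uses $h$-substitutability and finiteness of $M$. It bounds $|\operatorname{Res}_\theta(A,P)|$ by $|M|$, and in particular by $r_{\rm res}(L,h)$ by definition. Hence, for fixed $(P,A)$, at most $r_{\rm res}$ classes $R$ arise.

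\emph{Step 3: count the tails.} The tail $\alpha$ satisfies $\overline\alpha=R$ exactly, so $\alpha\in\Ex(R)$. Since $R\in\mathcal R_{\rm rel}$, we get $|\Ex(R)|\le m_{\rm ex}$. This quantity is finite by Corollary~\ref{cor:finite-candidates}: every exact prime factorization of $R$ is a prime-labelled cut of one fixed shortest word of $R$, so there are at most $2^{\ell(R)-1}$ of them.

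\emph{Conclusion.} The rule $P\to A\alpha$ is determined by $(P,A,\alpha)$, so the map to triples is injective. This gives $|\Vcal_{\mathrm{sat}}|\le q^2\,r_{\rm res}\,m_{\rm ex}$. The one point to check is that $r_{\rm res}$ and $m_{\rm ex}$ are genuine finite numbers, i.e.\ that the maxima are over finite sets of finite values. This holds because $\mathcal R_{\rm rel}$ is a finite union over the $q^2$ pairs $(A,P)$ of sets of size at most $|M|$, and each $|\Ex(R)|$ is finite. No uniqueness of factorization is needed; the argument only needs that the relevant multiplicities are finite. I expect Step 1, specifically the liveness and membership of the tail class in $\operatorname{Res}_\theta(A,P)$, to be the only point needing care, and it follows from correctness as described.
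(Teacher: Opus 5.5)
Your proposal is correct and follows essentially the same route as the paper's proof. Both map each saturated rule $P\to A\alpha$ to the class $R=\overline\alpha\in\operatorname{Res}_\theta(A,P)$, bound the number of such $R$ by $r_{\rm res}(L,h)$ and the exact decompositions of each by $m_{\rm ex}(L,h)$, and sum over the $q^2$ pairs $(P,A)$; your checks that $R$ is live, that $R\subseteq A\backslash P$, and that both maxima are finite spell out what the paper's three-line proof leaves implicit.
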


\begin{proof}
For fixed $P,A$, a saturated tail is an exact prime decomposition of some $R\in\operatorname{Res}_\theta(A,P)$.  There are at most $r_{\rm res}(L,h)$ such residual classes and at most $m_{\rm ex}(L,h)$ exact decompositions of each.  Sum over the $q^2$ choices of $(P,A)$.
\end{proof}

Define the under-saturation radius and defect residual complexity by
\(\rho_\partial(L,h):=\sup\{|\alpha|:\alpha\in\Unsat_P\text{ for some }P\}\)
and
\[
  \kappa_\partial(L,h)
  :=
  \left|
  \left\{
    u^{-1}\Unsat_P:
    P\in\Pcal,\;
    u\in\Pcal^*,\;
    u^{-1}\Unsat_P\ne\varnothing
  \right\}
  \right|.
\]

\begin{corollary}[Saturation--defect decomposition]
Under the same hypotheses,
\[
\begin{aligned}
\mathrm{VTE}_1&\iff \Unsat_P=\varnothing\text{ for every }P,\\
\mathrm{FRP}&\iff \forall P,\ |\Unsat_P|<\infty\iff\rho_\partial(L,h)<\infty,\\
\mathrm{FSRP}&\iff \forall P,\ \Unsat_P\text{ is regular}\iff\kappa_\partial(L,h)<\infty.
\end{aligned}
\]
Moreover $\rho(L,h)<\infty\iff\rho_\partial(L,h)<\infty$.
\end{corollary}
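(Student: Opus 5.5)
The plan is to use the disjoint decomposition $\Val_P=\Sat_P\sqcup\Unsat_P\sqcup\mathrm{Bin}_P$, where $\mathrm{Bin}_P$ denotes the binary valid right-hand sides of $P$. Every valid word has length at least two, so it has the form $A\alpha$ with $\alpha\in\Pcal^+$. It is then either tail-saturated or tail-under-saturated; binary rules are saturated, so $\mathrm{Bin}_P\subseteq\Sat_P$, and in fact $\Val_P=\Sat_P\sqcup\Unsat_P$. Theorem~\ref{thm:saturated-basis-bound} shows that $\bigcup_P\Sat_P$ is finite, with $|\Vcal_{\mathrm{sat}}|\le q^2r_{\rm res}m_{\rm ex}$. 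Each line of the corollary then follows by transferring a finiteness or regularity property across a finite perturbation.

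\emph{First line.} This is the definition restated. VTE$_1$ asks that every valid $P\to A\alpha$ satisfy $\overline\alpha=[\overline\alpha]_\theta$, which says exactly that no valid rule lies in $\Unsat_P$.

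\emph{Second line.} By definition, FRP means $|\Pcal|<\infty$ and $\sum_P|\Val_P|<\infty$. Because $|\Pcal|=q<\infty$ and $\Sat_P$ is finite, $\Val_P$ is finite if and only if $\Unsat_P$ is finite. Over the finite alphabet $\Pcal$, a language is finite if and only if its word lengths are bounded. Taking the maximum over the finitely many $P$ then gives the equivalence with $\rho_\partial(L,h)<\infty$.

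\emph{Final statement.} The same reasoning gives $\rho(L,h)=\max(\rho_{\rm sat},\rho_\partial)$, where $\rho_{\rm sat}$ is the longest saturated valid right-hand side. This quantity is finite because $\Vcal_{\mathrm{sat}}$ is finite. We may also simply cite $\mathrm{FRP}\iff\rho<\infty$ from the definition section.

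\emph{Third line.} Regular languages are closed under union with, and difference from, finite sets. Hence $\Val_P=\Sat_P\cup\Unsat_P$ is regular if and only if $\Unsat_P$ is regular; the backward direction uses $\Unsat_P=\Val_P\setminus\Sat_P$. Since there are finitely many $P$, this gives FSRP $\iff$ every $\Unsat_P$ is regular.

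For the residual count, I would apply the Myhill--Nerode argument used in the residual characterization of FSRP, now to the finite family $(\Unsat_P)_{P}$. A language is regular if and only if it has finitely many left quotients $u^{-1}\Unsat_P$. The empty quotient contributes a single class, so excluding it does not affect finiteness. With finitely many $P$, all $\Unsat_P$ are regular if and only if $\kappa_\partial(L,h)<\infty$.

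The only delicate point is confirming the disjoint split. Validity already forces length at least two, so the first prime $A$ and the nonempty tail $\alpha$ are always defined. Saturation is then a property of the tail alone, so $\Sat_P\cap\Unsat_P=\varnothing$ and together they cover $\Val_P$. Everything else is bookkeeping on top of Theorem~\ref{thm:saturated-basis-bound}, which is where $h$-substitutability enters through Theorem~\ref{thm:finite-residual-splitting}.
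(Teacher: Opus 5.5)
Your proposal is correct and follows essentially the same route as the paper's proof. Both split $\Val_P=\Sat_P\,\dot\cup\,\Unsat_P$, use the Saturated-basis bound to make the saturated part finite, and transfer finiteness (equivalently, bounded length over the finite prime alphabet) and regularity across this finite perturbation; your Myhill--Nerode step for $\kappa_\partial$ simply spells out what the paper leaves implicit. Your opening three-way split with $\mathrm{Bin}_P$ is not disjoint as written, since $\mathrm{Bin}_P\subseteq\Sat_P$, but you correct this immediately and nothing later depends on it.
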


\begin{proof}
For every $P$, $\Val_P=\Sat_P\,\dot\cup\,\Unsat_P$ and the saturated part is finite by Theorem~\ref{thm:saturated-basis-bound}.  Over a finite prime alphabet, finiteness is equivalent to bounded word length, and finite unions and differences preserve regularity.  The first equivalence is the definition of VTE$_1$.
\end{proof}

\begin{theorem}[Single-residual localization of FRP failure]
If $|\Pcal|<\infty$, $L$ is $h$-substitutable, and FRP fails, then there are fixed primes $P,A$ and a fixed live relative class $R\subseteq A\backslash P$ together with infinitely many pairwise distinct valid rules $P\to A\alpha_n$ such that $[\overline{\alpha_n}]_\theta=R$ and $\overline{\alpha_n}\subsetneq R$.  The lengths $|\alpha_n|$ are unbounded along a subsequence.
\end{theorem}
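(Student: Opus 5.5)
The plan is to obtain the statement from the saturation--defect decomposition by a finite pigeonhole. Since FRP fails and $|\Pcal|<\infty$, some $\Val_P$ is infinite. By Theorem~\ref{thm:saturated-basis-bound} only finitely many valid rules are saturated. Every binary valid rule is saturated, so infinitely many valid rules of length at least three are tail-under-saturated. The decomposition corollary packages exactly this: $\Unsat_{P'}$ is infinite for some $P'$. I would therefore start from an infinite set $U\subseteq\bigcup_{P'}\Unsat_{P'}$ of valid right-hand sides of the form $A'\alpha$ with $|\alpha|\ge1$. Each element of $U$ carries its left-hand prime, and the rules in $U$ are pairwise distinct.

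The first pigeonhole step is over the finite set $\Pcal\times\Pcal$ of pairs (left-hand prime, first right-hand prime). It yields fixed $P,A$ and infinitely many distinct valid rules $P\to A\alpha_n$, all tail-under-saturated.

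The second step identifies the tail class. Correctness gives $A\cdot\overline{\alpha_n}\subseteq A*[\overline{\alpha_n}]_\theta=P$, and indeed every word of $[\overline{\alpha_n}]_\theta$ lies in $A\backslash P$ by congruence. The class $R_n:=[\overline{\alpha_n}]_\theta$ is live, because a word $a\,x$ with $a\in A$, $x\in\overline{\alpha_n}$ lies in the live class $P$, so $x$ is a factor of a word of $L$. Hence $R_n\in\operatorname{Res}_\theta(A,P)$, and Theorem~\ref{thm:finite-residual-splitting} bounds this set by $|M|$. A further pigeonhole fixes a single $R$ with $[\overline{\alpha_n}]_\theta=R$ for infinitely many $n$. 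I pass to that subsequence. Under-saturation is exactly the statement $\overline{\alpha_n}\ne R$, and since $\overline{\alpha_n}\subseteq[\overline{\alpha_n}]_\theta=R$, the inclusion is strict.

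Finally, the $\alpha_n$ are pairwise distinct words over the finite alphabet $\Pcal$. Only finitely many words have bounded length, so $|\alpha_n|$ is unbounded along a subsequence.

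The only point needing care is the liveness of $R$, so that it really belongs to $\operatorname{Res}_\theta(A,P)$ and the $|M|$ bound applies. This follows from factor-closure of $\Fact(L)$ as noted above. I expect no genuine obstacle otherwise. The one thing to check is that $h$-substitutability enters only through Theorem~\ref{thm:finite-residual-splitting} and Theorem~\ref{thm:saturated-basis-bound}, both of which are stated under exactly the present hypotheses.
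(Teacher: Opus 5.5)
Your proposal is correct and follows the paper's own route. Finiteness of the saturated part forces infinitely many under-saturated valid rules, and a pigeonhole over the finitely many triples $(P,A,R)$ (finite prime spectrum plus finite residual splitting) fixes one triple, with unbounded lengths coming from the finite prime alphabet; your explicit check that $[\overline{\alpha_n}]_\theta$ is live and contained in $A\backslash P$ fills in a step the paper leaves implicit.
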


\begin{proof}
Only finitely many triples $(P,A,R)$ are possible by finite prime spectrum and finite residual splitting.  Since the saturated part is finite, infinitely many valid rules must be under-saturated.  Pigeonhole localization gives one fixed triple; bounded length over a finite prime alphabet would give only finitely many words.
\end{proof}

This motivates the \emph{relative presentation complexity profile} $\Pi(L,h):=(q,r_{\rm res},m_{\rm ex};\rho_\partial,\kappa_\partial)$.  The first three coordinates measure static relative geometry; the last two measure the dynamic structural defect.  Under PTLD and finite prime spectrum, Theorem~\ref{thm:finite-residual-splitting} and unique exact factorization give
\[
  r_{\rm res}\le1,\qquad m_{\rm ex}\le1,\qquad \rho_\partial=\kappa_\partial=0.
\]
Thus the static residual and exact-factor multiplicities collapse to at most one, while the dynamic under-saturation defect disappears entirely.

If FSRP holds but FRP fails, some $\Unsat_P$ is an infinite regular language, so the ordinary pumping lemma gives an infinite family $uv^nw$ of under-saturated valid words.  By the preceding theorem, an infinite subsequence can be localized to one residual class $R$.  When the relative quotient is finite, Corollary~\ref{cor:pumping-certificate} refines this to an explicit automaton-lasso certificate.

\section{Unique factorization without finite direct presentation: a 36-element quotient witness}
\label{sec:36}

We now show directly that exact-factor ambiguity is not responsible for
failure of FRP.  We deliberately choose the trivial target language
$L=\Sigma^+$, so that the separation cannot be attributed to complicated
language syntax: all nontrivial behavior is induced by the finite observer.
Let $Q_0=\{0,1,2,3\}$ and write transformations as
tuples of their values on $Q_0$.  Put
\(\tau_a=(1,2,0,0)\) and \(\tau_b=(1,3,1,0)\), and let
\(M_0:=\langle\tau_a,\tau_b\rangle\le T_4.\)
Finite enumeration gives $|M_0|=64$.  Define
\(h_0:\{a,b\}^*\to M_0\) by
$h_0(a)=\tau_a$ and $h_0(b)=\tau_b$.  Every nonempty product has rank
at most three whereas the identity has rank four, so
\(h_0^{-1}(1)=\{\eps\}.\)

Let $\sim_0$ be the least monoid congruence on $M_0$ containing
\(h_0(ab)\sim_0h_0(abaaab)\), and put
\(M_{36}:=M_0/{\sim_0}\) and \(h:=\pi_0\circ h_0\).
Again take $L:=\{a,b\}^+$.  Finite congruence closure gives
$|M_{36}|=36$, and the identity class has no nonempty preimage, so
$h^{-1}(1)=\{\eps\}$.  The following witness should therefore be read first as a finite monoid phenomenon and only secondarily as a language example.  Choosing $L=\Sigma^+$ removes all nontrivial syntactic distinctions between nonempty words and isolates the obstruction entirely in the finite observer quotient.  Indeed, since $L=\Sigma^+$, all nonempty words are syntactically congruent and $\theta_{L,h}=\ker h$ on nonempty words.
Thus the $35$ nonidentity quotient classes are exactly the live
non-unit relative classes.

\paragraph{Computer-assisted verification.}
The monoid generators, quotient congruence, and infinite-family identities are explicit.  The remaining finite counts and quotient-minimality assertions are checked exhaustively by the reproducible verifier described in Appendix~\ref{app:36-computation}.

\begin{theorem}[Finite structure of the $36$-element witness]
\label{thm:36-finite-structure}
For the pair $(L,h)$ above, unit separation and $h$-substitutability hold, the relative prime spectrum has fifteen elements, and every one of the thirty-five live non-unit relative classes has a unique exact relative prime factorization.
\end{theorem}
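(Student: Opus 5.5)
The plan splits into two structural claims, unit separation and $h$-substitutability, and one finite computation: identifying the primes and proving uniqueness of exact factorizations.

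\emph{Structural claims.} Unit separation is immediate from $h^{-1}(1)=\{\eps\}$. This holds because every nonempty word of $M_0$ has rank at most three, and because the congruence closure of $\sim_0$ leaves the identity class unmerged; the latter is a fact to be confirmed by the verifier. For $h$-substitutability, take $L=\Sigma^+$. If $x,y\ne\eps$, then $x\equiv_L y$ trivially, since both have full distribution $\Sigma^*\times\Sigma^*$. If one of them is $\eps$, then $h(x)=h(y)=1$ forces both to be $\eps$. So the condition holds outright, and $\theta_{L,h}=\ker h$ on nonempty words. It follows that the live non-unit relative classes are exactly the fibres $h^{-1}(m)$ for the $35$ non-identity $m\in M_{36}$. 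All of these fibres are nonempty, because $M_{36}$ is generated by the images of $a$ and $b$.

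\emph{Reduction to finite data.} Exact setwise products must be translated into quotient data. For fibres $X=h^{-1}(x)$ and $Y=h^{-1}(y)$, we have $X\cdot Y\subseteq h^{-1}(xy)$, and equality holds iff every word $w$ with $h(w)=xy$ admits a cut $w=uv$ with $h(u)=x$ and $h(v)=y$. This is not a purely multiplicative condition. I would decide it with a finite automaton. Read $w$ letter by letter while tracking the pair consisting of the current prefix value and the set of values $h(u)$ over prefixes $u$ whose corresponding suffix still needs checking. Concretely, use the product automaton on $M_{36}\times 2^{M_{36}}$ and test whether some word of type $xy$ avoids every good cut. More simply, take $M_{36}$ acting on itself (right regular representation) and track the pairs (prefix value, set of suffix values over cut positions with $h(\text{prefix})=x$). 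The equality fails iff some reachable state has total value $xy$ and $y$ absent from its suffix set. Here the suffix set is updated by right multiplication by each new letter's image, and $1$ is inserted whenever the prefix value equals $x$. The state space is finite, so this is a finite reachability check. The same automaton, generalized to $k$-fold products, decides exactness of a prime sequence $P_1\cdots P_k$.

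\emph{Primes and uniqueness.} By Lemma~\ref{lem:shortest-word-cut}, $X$ is composite iff some binary cut of a fixed shortest word $w_X$ into classes $[u],[v]$ gives an exact product. So for each of the $35$ classes I would compute $w_X$ by BFS in the Cayley graph and test its $|w_X|-1$ cuts with the automaton above; the expected result is exactly fifteen primes. For uniqueness, Corollary~\ref{cor:finite-candidates} says every exact prime factorization of $X$ appears among the prime-labelled cuts of $w_X$. So I would enumerate all cuts of $w_X$ into prime-labelled pieces, test each candidate for exactness, and confirm that exactly one survives for each of the $35$ classes. Existence is already guaranteed by Theorem~\ref{thm:prime-existence}, so the substantive check is that no second candidate survives.

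\emph{Main obstacle.} The hard step is certifying non-exactness, since each such check is a universal statement over infinitely many words. I would handle it by producing explicit counterexample words from the reachability search. Each is a word of the target type with no admissible cut, and it can then be rechecked independently by direct evaluation in $M_{36}$. Exactness claims, in contrast, rest on the absence of a reachable bad state, so the verifier's completeness on the finite state space carries the proof. Everything is finite and exhaustive, in line with the appendix verifier the paper describes.
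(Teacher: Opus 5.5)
Your proposal is correct and follows essentially the same route as the paper. Unit separation and $h$-substitutability are read off directly from $L=\Sigma^+$ and $h^{-1}(1)=\{\eps\}$. The remaining claims are then delegated to an exhaustive finite check: primality via binary cuts of shortest representatives, and uniqueness via prime-labelled cuts of shortest words (Corollary~\ref{cor:finite-candidates}), with each exactness test decided by a finite automaton product search that yields explicit counterexamples for rejected candidates.
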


\begin{proof}
Unit separation follows from $h^{-1}(1)=\{\eps\}$.  Since $L=\Sigma^+$, all nonempty words are syntactically congruent.  If two equal-$h$-type words share an accepting context, then either both are nonempty, in which case they are syntactically congruent, or both are empty because $h^{-1}(1)=\{\eps\}$.  Thus $L$ is $h$-substitutable.

The remaining finite assertions are verified exhaustively.  Exact binary fiber-product equality yields the fifteen relative primes
\[
\begin{aligned}
\Pcal=\{&a,b,ab,ba,aba,baa,bab,aaab,abaa,abab,baaa,\\
&baba,aaaba,babaa,aaabaa\},
\end{aligned}
\]
where each word denotes its relative class.  Every quotient class has a shortest representative of length at most eight.  By Corollary~\ref{cor:finite-candidates}, cuts of these shortest representatives contain every exact prime factorization candidate.  Exhaustive regular-language equality testing leaves exactly one candidate for each of the thirty-five live non-unit classes; Appendix~\ref{app:36-computation} records the complete certificate.
\end{proof}

For the infinite valid family, put
\(P_a=[a]_\theta,\quad P_q=[baaa]_\theta,\quad P_b=[b]_\theta,\quad P_{ab}=[ab]_\theta.\)
All four are among the fifteen relative primes.

\begin{proposition}[Infinite valid family]
For every $m\ge0$,
\[
  \boxed{
  P_{ab}\to P_aP_q^mP_b
  }
\]
is valid.
\end{proposition}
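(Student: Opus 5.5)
The plan is to check the two halves of validity separately: correctness, i.e. that the quotient product returns to $P_{ab}$, and non-pleonasticity, i.e. that no proper contiguous block of length at least two has prime quotient. Since $L=\Sigma^+$, on nonempty words $\theta_{L,h}$ is just $\ker h$. So every quotient product of primes can be computed by multiplying chosen representatives in $M_{36}$, and primality of a class is read off the list of fifteen primes from Theorem~\ref{thm:36-finite-structure}. For $m=0$ the rule $P_{ab}\to P_aP_b$ is binary. It is correct because $[a]*[b]=[ab]$, and it is vacuously non-pleonastic. So assume $m\ge1$.

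\emph{Correctness.} I would prove $h(a(baaa)^mb)=h(ab)$ by induction on $m$, using only the defining relation $h_0(ab)\sim_0 h_0(abaaab)$ and the fact that $\sim_0$ is a congruence. For $m\ge1$ the word $a(baaa)^mb$ ends with the factor $a\cdot baaa\cdot b=abaaab$. The $a$ here is the initial letter when $m=1$, and otherwise the last letter of the preceding $baaa$ block. Write $a(baaa)^mb=u\,abaaab$ with $u=a(baaa)^{m-1}$ minus its final letter. Replacing $abaaab$ by $ab$ preserves $h$-type, giving $h(a(baaa)^mb)=h(u\,ab)=h(a(baaa)^{m-1}b)$. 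Induction then yields $[\overline{P_aP_q^mP_b}]=P_{ab}$.

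\emph{Non-pleonasticity.} The proper contiguous blocks of length at least two have only three shapes: $P_q^j$ with $2\le j\le m$, $P_aP_q^j$ with $1\le j\le m$, and $P_q^jP_b$ with $1\le j\le m$. Their quotient classes are the $h$-types of $(baaa)^j$, $a(baaa)^j$ and $(baaa)^jb$. Since $M_{36}$ is finite, each of the three sequences $j\mapsto h(\cdot)$ is eventually periodic, being $x\,g^j$ or $g^j y$ for the fixed element $g=h(baaa)$. It therefore suffices to compute each sequence through its index and period and check that no value is one of the fifteen prime classes. Equivalently, in the prime-avoidance automaton of Theorem~\ref{thm:prime-avoidance}, I would exhibit the state reached from $(P_a,\varnothing)$ after one $P_q$ and check three things. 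First, reading $P_q$ again is a safe transition. Second, the state returns to itself, giving the one-state productive loop advertised earlier; this requires that the suffix-product set $U$ stabilises. Third, the final edge $P_b$ is permitted from it, where the total product is $P_{ab}$ by the correctness step and no suffix product $h((baaa)^jb)$ is prime. Once the loop closes, Theorem~\ref{thm:prime-avoidance} gives validity for all $m$ simultaneously.

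The main obstacle is the finite verification that the powers $h(baaa)^j$ and their one-sided extensions by $a$ and $b$ never land in a prime class. This is a concrete multiplication-table computation in $M_{36}$ that cannot be done by hand from the presentation alone. I would first compute the cyclic subsemigroup generated by $h(baaa)$, expecting it to reach an idempotent quickly, and then the two finite orbits under left multiplication by $h(a)$ and right multiplication by $h(b)$, checking them against the list of primes. The finite loop check makes the argument uniform in $m$, and the details would be deferred to the verifier certificate in Appendix~\ref{app:36-computation}.
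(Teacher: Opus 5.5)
Your proposal is correct. It differs from the paper's proof mainly in the correctness half.

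\textbf{Correctness.} Write $\alpha=h(a)$, $q=h(baaa)$, $\beta=h(b)$, $t=h(ab)$. The paper reads $\alpha\beta=\alpha q\beta=\alpha q^2\beta=t$, together with the stabilization identities $q^2=q^3$, $\alpha q=\alpha q^2$, $q\beta=q^2\beta$, off the computed multiplication table of $M_{36}$, and concludes $\alpha q^m\beta=t$. You derive correctness from the single generating relation $h_0(ab)\sim_0 h_0(abaaab)$. Since $baaa$ ends in $a$, the word $a(baaa)^mb$ always ends in the factor $abaaab$, and rewriting that factor to $ab$ lowers $m$ by one. This needs no machine computation. It also shows that correctness of the entire family is forced by the defining relation of $\sim_0$ alone, which is a small but genuine gain.

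\textbf{Non-pleonasticity.} Here the two routes agree in substance. Both reduce the infinitely many proper blocks $P_q^j$, $P_aP_q^j$, $P_q^jP_b$ to finitely many quotient values, using eventual periodicity of $j\mapsto \alpha q^j$, $q^j$, $q^j\beta$, and then appeal to a finite check. The paper makes the reduction explicit: the only values are $\alpha q$, $q\beta$, $q^2$, i.e.\ $[abaaa]_\theta$, $[baaab]_\theta$, $[baaabaaa]_\theta$. It then certifies each of these as composite by a concrete exact factorization, such as $[abaaa]_\theta=[abaa]_\theta\cdot[a]_\theta$. Your comparison against the list of fifteen primes is equally valid given Theorem~\ref{thm:36-finite-structure}. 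It leans on the completeness of that classification, whereas the paper uses three local certificates.

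\textbf{The automaton variant.} One detail there is inaccurate. The state after reading $P_aP_q$ is $(\alpha q,\{q\})$, and reading another $P_q$ gives $(\alpha q,\{q,q^2\})$. Since $q\neq q^2$ (the class $[baaa]_\theta$ is prime while $[baaabaaa]_\theta$ is composite), the state after one $P_q$ is not a fixed point. The self-loop sits at the state reached after $P_aP_q^2$, as the appendix records. So in that version you must check $m=1$ separately, via the final edge $P_b$ from $(\alpha q,\{q\})$, and run the loop argument only for $m\ge2$. Your remark that $U$ must stabilize anticipates this. The periodic-sequence form of your argument, which is your primary route, is unaffected.
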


\begin{proof}[Finite quotient certificate]
Let
\(\alpha=h(a),\qquad q=h(baaa),\qquad \beta=h(b),\qquad t=h(ab).\)
The finite multiplication table satisfies
\[
  q^2=q^3,
  \qquad
  \alpha q=\alpha q^2,
  \qquad
  q\beta=q^2\beta,
\]
and
\[
  \alpha\beta=t,
  \qquad
  \alpha q\beta=t,
  \qquad
  \alpha q^2\beta=t.
\]
Hence $\alpha q^m\beta=t$ for every $m\ge0$, proving correctness of the displayed family.  The same three stabilization identities make the proper intervals uniform in $m$: every proper block of length at least two has quotient value of one of the forms $\alpha q$, $q\beta$, or $q^2$.  For $m=1$ only the first two types occur, while all three occur for $m\ge2$.  Thus, over all $m$, the proper interval values lie in just three relative classes, represented by
\[
  [abaaa]_\theta,
  \qquad
  [baaab]_\theta,
  \qquad
  [baaabaaa]_\theta.
\]
Each is exact composite; finite set-product verification gives, for example,
\[
  [abaaa]_\theta=[abaa]_\theta\cdot[a]_\theta,
\]
\[
  [baaab]_\theta=[b]_\theta\cdot[aaab]_\theta,
\]
and
\[
  [baaabaaa]_\theta=[baaabaa]_\theta\cdot[a]_\theta.
\]
Thus no proper contiguous block has prime quotient product, and every rule in the displayed family is non-pleonastic.
\end{proof}

\begin{corollary}[UF--FRP separation]
\label{cor:uf-frp-separation}
Unique exact relative prime factorization does not imply FRP, even under finite relative quotient and finite relative prime spectrum.
\end{corollary}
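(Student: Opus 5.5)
The corollary follows by combining Theorem~\ref{thm:36-finite-structure} with the Infinite valid family proposition. I will treat the pair $(L,h)=(\{a,b\}^+,\pi_0\circ h_0)$ with its $36$-element quotient as the witness and check the hypotheses and the conclusion separately.

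\textbf{Step 1: hypotheses.} By Theorem~\ref{thm:36-finite-structure}, the pair is unit-separated and $h$-substitutable. Its relative prime spectrum has $|\Pcal|=15$, so it is finite. Every one of the $35$ live non-unit classes has a unique exact relative prime factorization, so UF holds globally. The relative quotient is finite because $\theta_{L,h}$ coincides with $\ker h$ on nonempty words and $\{\eps\}$ is its own class, so $|\Sigma^*/\theta|\le 36$. Thus all hypotheses named in the corollary are present.

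\textbf{Step 2: failure of FRP.} By the Infinite valid family proposition, $P_{ab}\to P_aP_q^mP_b$ is valid for every $m\ge0$. These right-hand sides are pairwise distinct prime words: their lengths are $m+2$, and distinct prime classes are distinct alphabet symbols. Hence $\Val_{P_{ab}}$ is infinite, so $|\Vcal|=\infty$ and equivalently $\rho(L,h)=\infty$. FRP therefore fails even though $|\Pcal|<\infty$.

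\textbf{Step 3: the one non-immediate point.} Validity of the family must be taken as established independently of any UF-derived rigidity. That is the case: it rests on the quotient identities $q^2=q^3$, $\alpha q=\alpha q^2$ and $q\beta=q^2\beta$. These identities reduce all proper interval values to three explicitly exactly-composite classes, so no proper block of length at least two has prime quotient. I would also remark that this example is consistent with Theorem~\ref{thm:ptld-vte-vtd}. The family must therefore violate PTLD, since otherwise $|\Vcal|\le q^2$. Tail under-saturation (for instance $\overline{P_q^mP_b}\subsetneq[\overline{P_q^mP_b}]$) is the mechanism singled out by the single-residual localization theorem.

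I expect no real obstacle beyond trusting the finite verification cited in Theorem~\ref{thm:36-finite-structure}. The rest is the observation that infinitely many distinct valid rules contradict $|\Vcal|<\infty$.
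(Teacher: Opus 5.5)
Your proposal is correct and matches the paper's proof, which simply combines Theorem~\ref{thm:36-finite-structure} (finite prime spectrum and global exact UF) with the proposition giving the infinite valid family $P_{ab}\to P_aP_q^mP_b$. Your extra checks are accurate but not needed for the corollary: the relative quotient has at most $36$ classes because $h^{-1}(1)=\{\eps\}$, and PTLD must fail by the $q^2$ bound of Theorem~\ref{thm:ptld-vte-vtd}.
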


\begin{proof}
Combine Theorem~\ref{thm:36-finite-structure} with the preceding infinite family of valid productions.
\end{proof}

The same family gives a particularly transparent separation between UF and tail saturation.  For every $m\ge1$,
\[
  \underbrace{P_q*\cdots*P_q}_{m\text{ copies}}*P_b
  =H:=[baaab]_\theta,
\]
where the notation on the left denotes quotient multiplication of the prime classes.  By global UF the class $H$ has the unique exact decomposition
\[
  \boxed{
  H=P_bP_r,
  \qquad P_r=[aaab]_\theta.
  }
\]
But the valid tails themselves under-saturate $H$:
\[
  \boxed{
  \overline{P_q^mP_b}\subsetneq H
  \qquad(m\ge1).
  }
\]
Consequently this one finite quotient proves simultaneously
\(\mathrm{UF}\not\Rightarrow\mathrm{VTE}_1, \qquad \mathrm{UF}\not\Rightarrow\mathrm{FRP}.\)
Since the relative quotient is finite, FSRP holds automatically; thus the witness has
\(m_{\rm ex}=1,\quad \rho_\partial=\infty,\quad \kappa_\partial<\infty.\)
It therefore separates exact-factor ambiguity from under-saturation defect in the strongest possible way: exact ambiguity is zero, while the presentation defect has unbounded length.

\begin{proposition}[Minimality among quotients of the witness]
The monoid $M_{36}$ has twenty monoid congruences.  Every proper
congruence quotient, with the induced observation on $L$, has FRP.  The
nineteen proper quotient sizes are
\(11,9,8,7,7,6,6,5,5,5,4,4,4,3,3,2,2,2,1.\)
Thus no proper quotient of this particular $36$-element observer continues to witness UF$+\neg$FRP.  This is a quotient-minimality statement for $M_{36}$, not a claim that $36$ is the absolute minimum among all finite monoids.
\end{proposition}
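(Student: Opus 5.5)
The proposition is a finite computation, and the plan is to reduce every clause to a decidable check on the multiplication table of $M_{36}$.

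First I would enumerate the monoid congruences of $M_{36}$. A congruence on a finite monoid is the join of principal congruences $\langle (s,t)\rangle$. So I would compute the principal congruence generated by each of the $\binom{36}{2}$ pairs by congruence closure, close this family under joins, and deduplicate the resulting partitions. The count should come out to twenty, including the identity and universal congruences. This also yields the nineteen proper quotient sizes directly; for example, the universal congruence gives the size-one entry.

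Second, for each proper quotient $\pi:M_{36}\to N$ I would check FRP for the pair $(L,\pi\circ h)$, using the same $L=\Sigma^+$. Since $L$ is trivial on nonempty words, $\theta_{L,\pi h}$ coincides with $\ker(\pi h)$ on nonempty words. Two cases have to be handled here.

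If $\pi$ merges the identity class with a nonidentity class, unit separation fails. In that case I would work directly with the definitions, noting that $[\eps]$ becomes a larger class, and compute primes and valid rules anyway.

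In all cases the relative quotient is finite, so Theorem~\ref{thm:finite-quotient-fsrp} applies. I would then determine the live primes by exact fiber-product testing. Each class is a regular language $h^{-1}(m)$, so equality of $X\cdot Y$ with a class can be decided by DFA product constructions, and by Corollary~\ref{cor:finite-candidates} only cuts of a shortest representative need to be examined.

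With the primes known, I would build the prime-avoidance automaton of Theorem~\ref{thm:prime-avoidance} over $N$ and apply the productive-cycle criterion. The claim reduces to showing that the productive safe graph is acyclic for all nineteen quotients.

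The final sentence of the statement, that no proper quotient continues to witness UF$+\neg$FRP, then follows immediately: FRP holds in every proper quotient, so the conjunction UF$+\neg$FRP is refuted regardless of whether UF holds there.

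I expect the main obstacle to be doing the prime and fiber-product computation correctly for each quotient, since that requires regular-language equality tests rather than arithmetic in the quotient. One point needs care: a class of a quotient is a union of $M_{36}$-classes, so primality can change under the quotient map and cannot simply be inherited. I would first try to reuse the verifier of Appendix~\ref{app:36-computation}, parametrized by the quotient partition. As a sanity check, I would confirm that the stabilization identities $\alpha q^m\beta=t$ either fail in each quotient or that some proper interval becomes prime there, which is what should kill the infinite family.
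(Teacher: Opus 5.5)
Your overall plan is the paper's: the proposition rests only on the exhaustive audit of Appendix~\ref{app:36-computation}. That audit runs congruence closure on $M_{36}$, then for each proper quotient recomputes the live relative classes and primes and checks that the productive prime-avoidance graph is acyclic. Your final deduction is also correct: FRP in every proper quotient refutes UF$+\neg$FRP whether or not UF holds.

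There is, however, a concrete error in how you treat quotients $\pi$ that send some nonempty word to the identity. This case really occurs, for instance for the universal congruence you mention. You claim that unit separation then fails and that $[\eps]$ grows to the whole identity fiber. That is false for $L=\Sigma^+$. For every nonempty $x$ one has $(\eps,\eps)\in\D_L(x)\setminus\D_L(\eps)$, so $\eps\not\equiv_L x$. Hence $\theta=\equiv_L\cap\ker(\pi h)$ always has $[\eps]_\theta=\{\eps\}$, and the positive identity fiber $(\pi h)^{-1}(1)\cap\Sigma^+$ is a separate live non-unit class.

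Enlarging $[\eps]$ amounts to using $\ker(\pi h)$ on all of $\Sigma^*$. That relation is not contained in $\equiv_L$, since it identifies $\eps\notin L$ with words of $L$, so it is not an instance of the framework at all. The primes, correct rules and prime-avoidance automaton you would compute belong to the wrong quotient. For the same reason the automaton must not be built ``over $N$''. The relative quotient is the semigroup $\pi h(\Sigma^+)$ with a new, separate identity adjoined for the class $\{\eps\}$.

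In the size-one case, the correct pair has classes $\{\eps\}$ and $\Sigma^+$, the single prime $\Sigma^+$, and exactly one valid rule $\Sigma^+\to\Sigma^+\Sigma^+$. Your version would have no live non-unit class at all. The FRP verdict happens to agree there, but your check does not certify the proposition in general. This is exactly why the paper's verifier keeps the syntactic unit $\{\eps\}$ separate from the positive identity fiber.

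What actually can fail in such quotients is $\pi h$-substitutability, not unit separation. The empty word and a nonempty identity-type word share the context $(a,\eps)$ but are not syntactically congruent. Neither FRP nor the productive-cycle criterion uses substitutability, so this is harmless. With the unit class handled correctly, your plan coincides with the paper's audit.
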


\begin{proof}
This is the exhaustive quotient audit in Appendix~\ref{app:36-computation}, item~\ref{check:quotient-minimality}.
\end{proof}

\section{A coupled nonregular FSRP--FRP separation}
\label{sec:wrapped}

The finite witness of Section~\ref{sec:36} proves that FSRP can compress an
infinite valid-rule family, but FSRP there is automatic from finiteness of the
relative quotient.  We now show that the same defect persists when the full
relative quotient is infinite and the target language is intrinsically
nonregular.

Let
\[
  \Gamma=\{a,b,c,d\},
  \qquad
  L^{\mathrm{wrap}}
  :=\{c^nwd^n:n\ge0,\ w\in\{a,b\}^+\}.
\]
Let $h_{36}:\{a,b\}^*\to M_{36}$ be the morphism from
Section~\ref{sec:36}.  Extend it to
$\widehat h_{36}:\Gamma^*\to M_{36}$ by
$\widehat h_{36}(c)=\widehat h_{36}(d)=1$.
Let $E=\{0,1\}^2$ with coordinatewise maximum as multiplication, and put
\[
  \chi(w)=
  \bigl([\,c\text{ occurs in }w\,],
        [\,d\text{ occurs in }w\,]\bigr)\in E.
\]
Then
\[
  h^{\mathrm{wrap}}(w):=(\widehat h_{36}(w),\chi(w))
  \colon \Gamma^*\to M_{36}\times E
\]
is a finite-monoid morphism.

For $s\in M_{36}\setminus\{1\}$ define the nonempty middle fiber
\[
  \mathsf{M}_s:=\{u\in\{a,b\}^+:h_{36}(u)=s\}.
\]
For $i,j\ge1$ and $r\in\mathbb Z$ put
\[
\begin{aligned}
  C_i&:=\{c^i\}, & D_j&:=\{d^j\},\\
  L_{i,s}&:=c^i\mathsf{M}_s,
    &R_{j,s}&:=\mathsf{M}_s d^j,\\
  W_{r,s}&:=\{c^iud^j:i,j\ge1,\ i-j=r,\ u\in\mathsf{M}_s\}.
\end{aligned}
\]
Write $C:=C_1$, $D:=D_1$, and $W_s:=W_{0,s}$.

\begin{lemma}[Wrapped live-class normal forms]
\label{lem:wrapped-normal-forms}
The live relative classes of
$(L^{\mathrm{wrap}},h^{\mathrm{wrap}})$ are exactly
\[
  [\eps],\quad C_i,\quad D_j,\quad \mathsf{M}_s,
  \quad L_{i,s},\quad R_{j,s},\quad W_{r,s},
\]
with the parameters above.  In particular, the relative quotient is infinite.
Moreover, the pair is unit-separated and $h^{\mathrm{wrap}}$-substitutable.
\end{lemma}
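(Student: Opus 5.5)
The plan is to first classify the live factors of $L^{\mathrm{wrap}}$ combinatorially, and then determine the relative classes by computing two-sided distributions. A word $x\in\Gamma^*$ is a factor of some $c^nwd^n$ exactly when it has the shape $c^iud^j$ with $u\in\{a,b\}^*$, and one of two conditions holds. Either $u\ne\eps$, or $u=\eps$ and $x$ lies in $c^*$ or $d^*$; in the case $u=\eps$ with $i,j\ge1$ the word $c^id^j$ is not a factor, because $w$ must be nonempty. So the live nonempty factors fall into seven shapes: $c^i$, $d^j$, $u$, $c^iu$, $ud^j$, $c^iud^j$ with $i,j\ge1$ and $u\ne\eps$, plus $\eps$.

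For each shape I will write down $\D_L(x)$ explicitly:
\begin{itemize}
\item For $c^iud^j$ the accepting contexts are $(c^k,d^{k+i-j})$, so $\D_L$ depends only on $r=i-j$.
\item For $c^iu$ the contexts are $(c^k,vd^{k+i})$ with $v\in\{a,b\}^*$, so $\D_L$ depends only on $i$.
\item For $ud^j$ the situation is dual, and $\D_L$ depends only on $j$.
\item For $u$ alone the contexts are $(c^kv,v'd^k)$, which is independent of $u$.
\item For $c^i$ the contexts are $(c^k,c^{i'}wd^{k+i+i'})$ up to the obvious form; this pins down $i$.
\item For $d^j$ the situation is dual.
\end{itemize}
Intersecting with $\ker h^{\mathrm{wrap}}$ adds two pieces of information. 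The $\chi$-coordinate records which of $c,d$ occur, so it separates the seven shapes from one another. The $\widehat h_{36}$-coordinate equals $h_{36}(u)$, because $c,d$ map to $1$.

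This shows that each displayed set lies inside one relative class. The reverse direction, that these sets are exactly the classes, comes from separating contexts. The parameters $i$, $j$ and $r$ are recovered from the $\D_L$ descriptions above, since different parameters give disjoint context sets; for example, $c^i$ and $c^{i'}$ are distinguished by the context $(\eps,ad^{i})$. The observer type of $u$ is recovered from the $M_{36}$-coordinate. One detail needs care: $u\ne\eps$ gives $h_{36}(u)\ne1$ by $h_{36}^{-1}(1)=\{\eps\}$, so $s$ ranges over $M_{36}\setminus\{1\}$ and every $\mathsf M_s$ is nonempty. Infinitude of the quotient is then immediate from the infinitely many distinct classes $C_i$.

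Unit separation follows because any nonempty word either contains $c$ or $d$, and so has $\chi\ne(0,0)$, or lies in $\{a,b\}^+$, and so has $h_{36}\ne1$. In both cases $h^{\mathrm{wrap}}(w)\ne h^{\mathrm{wrap}}(\eps)$.

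For $h^{\mathrm{wrap}}$-substitutability, I take $x,y$ with the same type and a common context $(\ell,r)$ with $\ell x r,\ell y r\in L$. Then $x,y$ are live, and since $\chi(x)=\chi(y)$ they have the same shape, with the same $s$. It remains to check that the shape parameters agree:
\begin{itemize}
\item For $c^iu$ versus $c^{i'}u'$, the common context forces the $d$-count on the right to equal (outer $c$-count) $+\,i$, and also $+\,i'$, so $i=i'$.
\item For the $W$-shape, the same balance argument gives $r=r'$.
\item For the pure powers $c^i,c^{i'}$, the equal $h$-type does not by itself determine $i$, but the shared context fixes the total $c$-count, so $i=i'$.
\end{itemize}
With the same shape and parameters, the explicit $\D_L$ formulas coincide, so $x\equiv_L y$.

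The main obstacle is making the distribution computations for the boundary shapes airtight, in particular the pure powers $c^i,d^j$ and the constraint $w\ne\eps$. Those are the cases where the nonempty-middle condition breaks the symmetry, so I would verify them case by case rather than by a uniform formula.
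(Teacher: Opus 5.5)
Your plan follows the paper's proof closely. You enumerate the live factor shapes, read off the two-sided distributions, separate parameters by explicit contexts such as $(\eps,ad^i)$, and check unit separation and substitutability shape by shape. One step is misstated, however. The coordinate $\chi$ does not separate the seven shapes: $c^i$ and $c^iu$ both have $\chi=(1,0)$, $d^j$ and $ud^j$ both have $\chi=(0,1)$, and $\eps$ and $u$ both have $\chi=(0,0)$. What separates a pure wrapper power from the corresponding shape with nonempty middle is the first coordinate. Indeed $\widehat h_{36}(x)=1$ exactly when $x\in\{c,d\}^*$, because $h_{36}^{-1}(1)=\{\eps\}$; this is how the paper argues.

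The misstatement matters in your substitutability step, where you write that $\chi(x)=\chi(y)$ forces the same shape. The words $c^i$ and $c^iu$ share the accepting context $(\eps,ad^i)$. Yet $(\eps,cad^{i+1})$ accepts $c^i$ but not $c^iu$, so with $\chi$ alone substitutability would actually fail. Use the full equality $h^{\mathrm{wrap}}(x)=h^{\mathrm{wrap}}(y)$, i.e.\ both coordinates, to conclude equal shape and equal middle type $s$. After that, your parameter-matching arguments go through.

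A smaller non sequitur concerns the sets $\mathsf M_s$. Nonemptiness of each $\mathsf M_s$ for $s\ne1$ follows from surjectivity of $h_{36}$: the quotient of $\langle\tau_a,\tau_b\rangle$ is generated by the images of $a$ and $b$. It does not follow from $h_{36}^{-1}(1)=\{\eps\}$, which only shows that nonempty middles avoid $s=1$.
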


\begin{proof}
Every factor of a word $c^nwd^n$ has one of the forms
\[
  \eps,\quad c^i,\quad d^j,\quad
  u,\quad c^iu,\quad ud^j,\quad c^iud^j,
\]
where $u\in\{a,b\}^+$ and the wrapper exponents that occur are positive.
The two occurrence bits in $\chi$ separate pure middle, left-wrapper,
right-wrapper, and two-sided-wrapper factors; the first coordinate separates
pure wrapper factors from factors containing a nonempty middle word because
$h_{36}^{-1}(1)=\{\eps\}$.

All nonempty middle words have the same ordinary syntactic distribution in
$L^{\mathrm{wrap}}$: only their position inside the arbitrary nonempty middle
word matters, not their $a/b$ content.  Hence $h_{36}$ splits them precisely
into the classes $\mathsf M_s$.
For a left-wrapper factor $c^iu$, any accepting completion has the form
$c^p(c^iu)v d^q$ with $v\in\{a,b\}^*$ and necessarily $p+i=q$;
thus its distribution is determined by $i$, and $h_{36}(u)=s$ gives
$L_{i,s}$.  The right-wrapper case is dual.  If a factor already contains
both wrappers, an accepting completion has the form
$c^p(c^iud^j)d^q$, and membership is equivalent to
$p+i=q+j$.  Hence its distribution is determined by $i-j$, giving
$W_{r,s}$.

Conversely, the same equations distinguish different values of the displayed
parameters.  For example, $c^iad^i\in L^{\mathrm{wrap}}$ while
$c^jad^i\notin L^{\mathrm{wrap}}$ for $j\ne i$, so the classes $C_i$ are
pairwise distinct; the other parameters are separated similarly.  This also
shows that the relative quotient is infinite.

Unit separation follows because a nonempty word either contains $c$ or $d$,
in which case $\chi$ is nonzero, or lies in $\{a,b\}^+$, in which case its
$h_{36}$-value is nonidentity.  Finally, if two equal-$h^{\mathrm{wrap}}$-type
factors share an accepting context, the occurrence bits put them in the same
shape above.  The common completion equation forces equality of the exponent for pure
$c$- or pure $d$-factors, equality of $i$ in the left-wrapper case, equality
of $j$ in the right-wrapper case, and equality of $i-j$ in the two-sided
case; pure middle factors already have equal syntactic distribution.  Thus
they lie in the same displayed relative class, proving
$h^{\mathrm{wrap}}$-substitutability.
\end{proof}

\begin{lemma}[Middle-subsystem preservation]
\label{lem:wrapped-middle-subsystem}
For $u,v\in\{a,b\}^+$,
\[
  u\mathrel{\theta_{L^{\mathrm{wrap}},h^{\mathrm{wrap}}}}v
  \quad\Longleftrightarrow\quad
  h_{36}(u)=h_{36}(v).
\]
Consequently the live middle classes, their exact set products, the middle relative primes, and the correct and valid prime sequences contained wholly in the middle alphabet coincide with those of the $36$-element witness.
\end{lemma}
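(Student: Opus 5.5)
The equivalence itself comes from Lemma~\ref{lem:wrapped-normal-forms}. For $u,v\in\{a,b\}^+$ we have $\chi(u)=\chi(v)=(0,0)$ and $\widehat h_{36}(u)=h_{36}(u)$. Hence $h^{\mathrm{wrap}}(u)=h^{\mathrm{wrap}}(v)$ holds exactly when $h_{36}(u)=h_{36}(v)$. Moreover, all nonempty middle words share the ordinary syntactic distribution: any occurrence inside $c^nwd^n$ can be exchanged for any other nonempty $\{a,b\}$-word, since only membership in $\{a,b\}^+$ of the full middle matters. This gives both directions, and identifies the live middle relative classes with the fibers $\mathsf M_s$, $s\ne1$. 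These fibers are exactly the $36$-witness classes, because there $\theta=\ker h_{36}$ on nonempty words.

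Next I transfer exact set products. For middle classes $\mathsf M_s,\mathsf M_t$, the setwise product $\mathsf M_s\cdot\mathsf M_t$ is the same set of strings in both settings. The quotient products also agree, because $h^{\mathrm{wrap}}$ restricted to $\{a,b\}^*$ is $h_{36}$ and middle classes multiply into middle classes. Therefore an exact equality $X=Y\cdot Z$ among middle classes holds in one setting iff it holds in the other.

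The key point for primality is that a middle class $\mathsf M_s$ can only factor through middle classes or the unit. If $\mathsf M_s=Y\cdot Z$ with relative classes $Y,Z$ of the wrapped pair, then every word of $Y$ and of $Z$ is a factor of a word in $\{a,b\}^+$. So $Y$ and $Z$ are either $[\eps]=\{\eps\}$ (by unit separation) or middle classes. Classes containing $c$ or $d$ cannot occur, since $\chi$ is additive under max and would produce a nonzero $\chi$-value in the product. Hence $\mathsf M_s$ is prime in the wrapped pair iff it is prime in the $36$-witness.

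Correctness and validity then transfer. For a sequence $\alpha$ over middle primes, $[\overline\alpha]$ is computed by $h_{36}$-multiplication in both settings. So $\alpha$ returns to a prime $P$ in one setting iff it does in the other, and every proper contiguous block likewise has the same quotient value and the same primality status. Since the conditions defining $\Corr_P$, $\Pleon$ and $\Val_P$ only inspect such quotient values of blocks of $\alpha$, restricted to words wholly over the middle alphabet, these sets coincide. The only step needing care is the exclusion of wrapper classes in factorizations of middle classes, and the $\chi$-occurrence argument above settles it.
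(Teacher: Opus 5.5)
Your proof is correct and follows the paper's route: vanishing wrapper bits reduce $h^{\mathrm{wrap}}$-equality to $h_{36}$-equality, all nonempty middle words share one syntactic distribution, and exact products, primality, correctness and pleonasticity then transfer along this identification. Your explicit check that an exact factorization $\mathsf M_s=Y\cdot Z$ in the wrapped pair can only use $\{\varepsilon\}$ and middle fibers is a detail the paper leaves implicit in the phrase ``preserved under this identification.''
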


\begin{proof}
For middle words the wrapper-occurrence coordinates are both zero, so equality of $h^{\mathrm{wrap}}$-type is exactly equality of $h_{36}$-type.  Every two nonempty middle words have the same syntactic distribution in $L^{\mathrm{wrap}}$: a context can accept such a factor only by supplying equal numbers of outer $c$'s and $d$'s, independently of the internal $a/b$ word.  Thus the relative congruence restricted to $\{a,b\}^+$ is precisely $\ker h_{36}$.  Exact products and quotient products therefore agree with the finite witness, and primality, correctness, and pleonasticity are preserved under this identification.
\end{proof}

Let
\[
  \mathsf P_{36}
  :=\{s\in M_{36}\setminus\{1\}:\mathsf M_s
      \text{ is a relative prime}\}.
\]
By Theorem~\ref{thm:36-finite-structure}, $|\mathsf P_{36}|=15$.

\begin{theorem}[Prime spectrum and exact UF of the wrapped pair]
\label{thm:wrapped-primes-uf}
The relative prime spectrum of
$(L^{\mathrm{wrap}},h^{\mathrm{wrap}})$ is exactly
\[
  \boxed{
  \Pcal_{\mathrm{wrap}}
  =\{C,D\}
   \cup\{\mathsf M_p:p\in\mathsf P_{36}\}
   \cup\{W_s:s\in M_{36}\setminus\{1\}\}.
  }
\]
Hence $|\Pcal_{\mathrm{wrap}}|=2+15+35=52$.
Every live non-unit relative class has a unique exact prime factorization.
\end{theorem}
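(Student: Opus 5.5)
We work class by class through the normal forms of Lemma~\ref{lem:wrapped-normal-forms}, using Lemma~\ref{lem:shortest-word-cut} and Corollary~\ref{cor:finite-candidates} to restrict exact factorizations to prime-labelled cuts of one shortest word. The key structural fact is that $c$'s must precede the middle and $d$'s must follow it in every live word. Hence in any exact product $X_1\cdots X_k$ of live classes, the factors are arranged in order as pure-$c$ classes, then classes containing $c$ and possibly middle, then middle classes, then classes containing middle and $d$, then pure-$d$ classes. Whether a factor contains $c$, middle, or $d$ is read off its $\chi$-bits and whether its $h_{36}$-value is $1$.

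\textbf{Primality.}
\begin{itemize}
\item $C_i=C^i$ and $D_j=D^j$ exactly, and $C$, $D$ are singletons of length one, hence prime.
\item $L_{i,s}=C^i\cdot\mathsf M_s$ and $R_{j,s}=\mathsf M_s\cdot D^j$ are exact products, so these classes are composite.
\item For $r>0$ we have $W_{r,s}=C^r\cdot W_s$, and for $r<0$ we have $W_{r,s}=W_s\cdot D^{-r}$, exactly. Here one checks that $c^r\cdot c^iud^i$ ranges over all $c^{i'}ud^{j'}$ with $i'-j'=r$ and $j'\ge1$.
\item The middle classes $\mathsf M_s$ are prime exactly for $s\in\mathsf P_{36}$, by Lemma~\ref{lem:wrapped-middle-subsystem}. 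Exact factorizations of a middle class can only use middle classes, since any wrapper letter would appear in the product.
\item $W_s$ is prime. An exact split $W_s=X\cdot Y$ must put some $c$'s in $X$ and some $d$'s in $Y$, so $X$ is some $C_i$ or $L_{i,t}$ and $Y$ is some $R_{j,t'}$ or $D_j$. In every case the product fixes $i$ or $j$ individually, whereas $W_s$ contains $c^kud^k$ for all $k\ge1$. So no exact binary product exists, and $W_s$ is prime for all $35$ nonidentity $s$.
\end{itemize}
This yields the $52$ primes.

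\textbf{Uniqueness.} We treat the classes in order.
\begin{itemize}
\item $C_i$ and $D_j$ have only the factorizations $C^i$ and $D^j$, since every factor has to be a pure wrapper power.
\item $\mathsf M_s$ has a unique factorization by Theorem~\ref{thm:36-finite-structure} together with Lemma~\ref{lem:wrapped-middle-subsystem}.
\item For $L_{i,s}$, the $c$-prefix factors of a factorization must be copies of $C$. The $L_{i',t}$ and $W$ primes cannot occur: $W$ contains $d$, and $L_{i',t}$ is not prime. The boundary between the $c$'s and the middle must then be a prime cut, which forces exactly $i$ copies of $C$ followed by an exact middle factorization of $\mathsf M_s$, and that factorization is unique. $R_{j,s}$ is dual.
\item For $W_{r,s}$, any exact prime factorization containing both wrappers must contain a prime with both $c$ and $d$. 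The only such primes are the $W_t$, since $C$, $D$ and the middle primes cannot bridge the middle. So the factorization is $C^pW_tD^q$.
\item In that form, the $h$-type gives $t=s$. If both $p,q\ge1$, the product contains only words with at least $p+1$ leading $c$'s and at least $q+1$ trailing $d$'s, but $W_{r,s}$ contains words $c^{i}ud^{j}$ with $\min(i,j)=1$. Hence $\min(p,q)=0$, and balance gives $p-q=r$. This pins down $p=\max(r,0)$ and $q=\max(-r,0)$.
\item Middle primes cannot sit beside $W_t$, since $c$ must precede and $d$ follow all middle letters.
\end{itemize}

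The main obstacle is the bookkeeping in the last case. We must justify that an exact factorization cannot split the middle word across several primes together with separate $c$- and $d$-carrying primes. Such a configuration would need a prime of type $L$ or $R$, and there is none. We must also verify the exactness of $C^r\cdot W_s$, where the free parameter $k$ is what rules out extra wrapper factors on both sides.
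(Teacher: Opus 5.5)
Your plan follows the paper's route: exact identities for compositeness, inheritance from the middle subsystem, primality of $W_s$ via wrapper depth, and the shape $C^pW_tD^q$ fixed by least depth and balance. However, two steps are not justified as written.

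\textbf{Primality of $W_s$.} Knowing that some $c$'s lie in $X$ and some $d$'s lie in $Y$ does not imply $X\in\{C_i,L_{i,t}\}$ and $Y\in\{R_{j,t'},D_j\}$. The splits $X=C_i,\ Y=W_{r,t}$ and $X=W_{r,t},\ Y=D_j$ are omitted. Your opening ordering claim has the same omission: it has no slot for a factor containing both $c$ and $d$. Your refutation criterion does not handle these splits. For instance, $C\cdot W_{-1,s}=\{c^kud^k:k\ge2,\ u\in\mathsf M_s\}$ does contain $c^kud^k$ for all large $k$. The fix is to actually apply Lemma~\ref{lem:shortest-word-cut} to the shortest word $cu_sd$, as the paper does. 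It forces $X$ and $Y$ to be the classes of the two pieces of a cut of $cu_sd$. Every candidate product then consists of words with exactly one $c$ and one $d$, so it misses $c^2u_sd^2$. Alternatively, simply note that $cu_sd\notin C_iW_{r,t}\cup W_{r,t}D_j$.

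\textbf{Uniqueness for $W_{r,s}$.} Your reason why an exact factorization must contain a wrapped prime is wrong. Take a sequence $C\,\mathsf M_{p_1}\cdots\mathsf M_{p_m}\,D$ whose middle part has $h_{36}$-product $s$. It uses no $L$- or $R$-type prime, it does ``bridge'' from $c$ to $d$, and its setwise product lies inside $W_s$. So neither ``cannot bridge the middle'' nor ``would need a prime of type $L$ or $R$'' rules it out. What rules it out is the depth argument you already used for primality. Any product of $C$, $D$ and middle primes has fixed numbers of $c$'s and $d$'s, while $W_{r,s}$ contains words of unbounded wrapper depth. Add the observation that two wrapped primes would create a dead $dc$ factor, and you reach $C^pW_tD^q$. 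Your least-depth and balance comparison then correctly yields $t=s$, $p=\max(r,0)$, $q=\max(-r,0)$.
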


\begin{proof}
The normal forms satisfy the exact identities
\[
  C_i=C^i,
  \qquad
  D_j=D^j,
  \qquad
  L_{i,s}=C^i\mathsf M_s,
  \qquad
  R_{j,s}=\mathsf M_sD^j,
\]
and
\[
  W_{r,s}=
  \begin{cases}
    C^rW_s,&r>0,\\
    W_s,&r=0,\\
    W_sD^{-r},&r<0.
  \end{cases}
\]
Thus only $C,D$, the middle fibers, and the balanced wrapped classes $W_s$
can be prime.  By Lemma~\ref{lem:wrapped-middle-subsystem}, the prime middle fibers are exactly the fifteen $\mathsf M_p$ with $p\in\mathsf P_{36}$.

It remains to prove that every $W_s$ is prime.  Let $u_s$ be a shortest word
of $\mathsf M_s$.  Then $cu_sd$ is a shortest word of $W_s$.  By
Lemma~\ref{lem:shortest-word-cut}, any nontrivial exact binary factorization
would occur at a cut of this word.  The cut immediately after $c$ gives
\[
  C\cdot R_{1,s}=c\mathsf M_sd\subsetneq W_s,
\]
and the cut immediately before $d$ gives the dual strict inclusion.
A cut $u_s=xy$ inside the middle gives
\[
  L_{1,h_{36}(x)}\cdot R_{1,h_{36}(y)}
  \subseteq c\{a,b\}^+d
  \subsetneq W_s.
\]
All three types of products contain only words with one left and one right
wrapper symbol, whereas $c^2u_sd^2\in W_s$.  Hence no shortest-word cut is
exact, and $W_s$ is prime.

For uniqueness, the middle classes inherit the unique exact prime
factorizations of the $36$-element witness.  In a factorization of
$L_{i,s}$, no prime containing $d$ can occur, and a middle prime cannot
precede a $C$ without producing an $a/b$ symbol before a later $c$; therefore
the factorization is $C^i$ followed by the unique middle factorization of
$\mathsf M_s$.  The right-wrapper case is dual.

Finally consider $W_{r,s}$.  Its wrapper depth is unbounded inside one
relative class.  A product using only $C,D$, and middle primes has fixed
wrapper counts, so every exact factorization must contain a wrapped prime.
Two wrapped primes would place a $d$ before a later $c$, producing only dead words, so there is exactly one.  A middle prime cannot occur before that wrapped prime, since it would place an $a/b$ symbol before a later $c$; nor can one occur after it, since it would place an $a/b$ symbol after an earlier $d$.  Hence every exact factorization has the form
\(C^pW_tD^q\).  Exact equality with $W_{r,s}$ forces $t=s$ and, by comparing
the least possible left and right wrapper depths,
\[
  (p,q)=
  \begin{cases}
    (r,0),&r>0,\\
    (0,0),&r=0,\\
    (0,-r),&r<0.
  \end{cases}
\]
This proves global exact UF.
\end{proof}

We now describe the valid-rule languages directly.  Let
\[
  \mathcal A:=\{\mathsf M_p:p\in\mathsf P_{36}\}
\]
be the middle prime alphabet, and let
$\mu:\mathcal A^*\to M_{36}$ be the evaluation morphism.  Define
\[
  \Corr^{\mathrm{mid}}
  :=\{\beta\in\mathcal A^{\ge2}:\mu(\beta)\in\mathsf P_{36}\}
\]
and, for $s\ne1$,
\[
  \mathsf I_s
  :=\mu^{-1}(s)\cap\mathcal A^+
     \setminus
     \mathcal A^*\Corr^{\mathrm{mid}}\mathcal A^*.
\]
Both languages are regular, since $\mu$ has finite codomain.

\begin{theorem}[Intrinsic infinite-quotient FSRP--FRP separation]
\label{thm:wrapped-fsrp}
The pair $(L^{\mathrm{wrap}},h^{\mathrm{wrap}})$ is a nonregular
context-free, unit-separated, $h^{\mathrm{wrap}}$-substitutable pair with
infinite relative quotient and finite prime spectrum.  It has global exact UF
and satisfies FSRP but not FRP.  More precisely,
\[
  \Val_C=\Val_D=\varnothing,
\]
the valid-return language of each middle prime is exactly the corresponding
valid-return language of the $36$-element witness, and for every $s\ne1$,
\[
  \boxed{
  \Val_{W_s}=\{CW_sD\}\ \cup\ C\,\mathsf I_s\,D.
  }
\]
\end{theorem}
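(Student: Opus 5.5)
The plan is to read off every claim from the normal forms of Lemma~\ref{lem:wrapped-normal-forms}, the exact identities in the proof of Theorem~\ref{thm:wrapped-primes-uf}, and the middle identification of Lemma~\ref{lem:wrapped-middle-subsystem}, and then to compute each $\Val_P$ by a case analysis on which prime letters a correct sequence can contain. The structural assertions need only short arguments. Nonregular context-freeness of $L^{\mathrm{wrap}}$ is standard: intersect with $c^*ad^*$ and pump. Unit separation, $h^{\mathrm{wrap}}$-substitutability, infinite quotient, the $52$ primes and global UF are already in the two preceding results.

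First, $\Val_C=\Val_D=\varnothing$. Any sequence of at least two non-unit primes has quotient class with shortest length at least $2$, whereas $C=\{c\}$ and $D=\{d\}$ are singletons of length one.

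Next, the middle primes. If $[\overline\alpha]=\mathsf M_p$, then the $\chi$-coordinate of the product is $(0,0)$. Hence no letter of $\alpha$ may be $C$, $D$ or some $W_t$, so $\alpha\in\mathcal A^{\ge2}$. By Lemma~\ref{lem:wrapped-middle-subsystem}, correctness and pleonasticity of such $\alpha$ agree with the $36$-element witness. Any block of $\alpha$ is again a middle sequence, so it cannot have a wrapper prime as its value. This gives equality of the valid-return languages.

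For a target $W_s$, I would first classify correct sequences by liveness and occurrence bits. A live product contains at most one wrapped prime. All $C$'s precede all middle letters and the wrapped prime, and all $D$'s follow them. A wrapped prime excludes middle letters. So a correct $\alpha$ has one of two forms:
\begin{itemize}
\item $C^pW_tD^q$, where the class computation forces $t=s$, $p=q$, and length at least two forces $p\ge1$;
\item $C^p\beta D^q$ with $\beta\in\mathcal A^+$, where balance forces $p=q\ge1$ and $\mu(\beta)=s$.
\end{itemize}
The letter-order constraints come from the dead-word arguments already used in Theorem~\ref{thm:wrapped-primes-uf}.

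For $p\ge2$, the proper block $CW_sD$, respectively $C\beta D$, has prime quotient $W_s$ or $W_{\mu(\beta)}$, so the sequence is pleonastic. It remains to test $p=1$. For $CW_sD$, the only proper blocks are $CW_s$ and $W_sD$, whose values $W_{1,s}$ and $W_{-1,s}$ are composite. For $C\beta D$, the proper blocks of length at least two are of three kinds:
\begin{itemize}
\item $C\beta'$ for a prefix $\beta'$ of $\beta$, with value $L_{1,\cdot}=C\cdot\mathsf M_\cdot$, which is composite;
\item $\beta''D$ for a suffix $\beta''$ of $\beta$, which is dually composite;
\item middle factors $\gamma$ of $\beta$ with $|\gamma|\ge2$, including $\beta$ itself, whose value is prime exactly when $\gamma\in\Corr^{\mathrm{mid}}$.
\end{itemize}
Hence $C\beta D$ is valid iff $\beta\in\mathsf I_s$, which gives the boxed formula. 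This step is the main obstacle. One must check that the exclusion in $\mathsf I_s$ covers the whole $\beta$ as a proper block while still admitting $|\beta|=1$ with $s$ prime. One must also make sure no mixed block such as $C\beta'$ can accidentally be prime, which holds because every prime containing a wrapper letter is $C$, $D$ or balanced $W_t$.

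Finally, FSRP and the failure of FRP. The middle $\Val$ languages are regular by Theorem~\ref{thm:finite-quotient-fsrp} applied to the finite $36$-element witness. Each $\Val_{W_s}$ is a finite union of concatenations of regular languages, since $\mathsf I_s$ is regular. With finitely many primes, FSRP follows. FRP fails because $\Val_{\mathsf M_{ab}}$ contains the infinite family $P_aP_q^mP_b$, transported via Lemma~\ref{lem:wrapped-middle-subsystem}.
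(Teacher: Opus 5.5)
Your proposal is correct and follows the paper's proof essentially step for step. It uses the same occurrence-bit and dead-word classification of sequences returning to $W_s$ into $C^kW_sD^k$ and $C^k\alpha D^k$, the same outer-wrapper pleonasm for $k\ge2$, the same analysis of the proper blocks of $C\alpha D$ (left-wrapper, right-wrapper, or purely middle), and the same transport of the middle subsystem and of the family $P_aP_q^mP_b$ from the $36$-element witness.

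One justification needs correcting. In the $\Val_C=\Val_D=\varnothing$ step, a product of two or more primes need not have a quotient class of shortest length at least $2$: since $\tau_a^4=\tau_a$, the prime word $\mathsf M_a\mathsf M_a\mathsf M_a\mathsf M_a$ returns to $\mathsf M_a$ itself. What you actually need is only that $[\overline\alpha]$ contains the product words of length at least $2$, so it cannot be the singleton $\{c\}$ or $\{d\}$.
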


\begin{proof}
The grammar
\[
  S\to cSd\mid A,
  \qquad
  A\to aA\mid bA\mid a\mid b
\]
shows that $L^{\mathrm{wrap}}$ is context-free.  It is nonregular because
\[
  L^{\mathrm{wrap}}\cap c^*ad^*
  =\{c^nad^n:n\ge0\}.
\]
The remaining structural assertions up to finite prime spectrum and UF are
Lemma~\ref{lem:wrapped-normal-forms} and
Theorem~\ref{thm:wrapped-primes-uf}.

There is no correct branching rule with target $C$ or $D$, so their valid
languages are empty.  A correct sequence returning to a middle prime cannot contain $C,D$, or any $W_s$, because the $c/d$ occurrence bits would then be nonzero.  Lemma~\ref{lem:wrapped-middle-subsystem} therefore identifies its valid-return language with the corresponding language of the $36$-element witness, which is regular by Theorem~\ref{thm:finite-quotient-fsrp}.

Fix $s\ne1$.  A prime sequence whose quotient product is $W_s$ has exactly
one of two forms.  If it contains a wrapped prime, it can contain exactly one:
two would create a $dc$ boundary.  Shape then forces
\[
  C^kW_sD^k\qquad(k\ge1).
\]
If it contains no wrapped prime, shape forces
\[
  C^k\alpha D^k,
  \qquad
  k\ge1,
  \quad
  \alpha\in\mathcal A^+,
  \quad
  \mu(\alpha)=s.
\]
For $k\ge2$, either form is pleonastic because deleting one outer $C$ and one
outer $D$ leaves a proper contiguous block whose quotient class is the prime
$W_s$.  Thus validity forces $k=1$.

The word $CW_sD$ is valid: its two proper blocks of length two have quotient
classes $W_{1,s}$ and $W_{-1,s}$, both composite by
Theorem~\ref{thm:wrapped-primes-uf}.  For a sequence $C\alpha D$, any proper
block meeting $C$ but not $D$ has a left-wrapper quotient class and is
composite, and dually for a block meeting $D$ but not $C$.  Hence the only
possible prime-valued proper blocks lie entirely inside $\alpha$.  Such a
block is prime exactly when it belongs to $\Corr^{\mathrm{mid}}$.  Therefore
$C\alpha D$ is valid exactly when $\alpha\in\mathsf I_s$, proving the
displayed formula for $\Val_{W_s}$.  All valid languages are thus regular, so
FSRP holds.

Finally put
\[
  \mathsf M_a=[a],\qquad
  \mathsf M_q=[baaa],\qquad
  \mathsf M_b=[b],\qquad
  \mathsf M_{ab}=[ab]
\]
for the corresponding middle prime classes.  Quotient multiplication and
prime-valued proper intervals inside the middle subsystem are unchanged from
Section~\ref{sec:36}.  Hence
\[
  \mathsf M_{ab}
  \to
  \mathsf M_a\mathsf M_q^m\mathsf M_b
  \qquad(m\ge0)
\]
is valid for every $m$.  Thus FRP fails.  The same middle tails retain the
strict under-saturation from the finite witness, so VTE$_1$ fails as well.
\end{proof}

The theorem separates FRP from FSRP without appealing to finiteness of the
relative quotient.  The nonregular wrapper and the defective middle return
system are not disjoint components: the new wrapped primes $W_s$ have valid
languages obtained by inserting the finite-state middle irreducibles
$\mathsf I_s$ between the wrapper primes $C$ and $D$.

\begin{example}[A finite FRP language without UF]
\label{ex:nonuf-frp}
Let $L_f=\{abcd,apcd,bx\}$.  Take the three-element monoid $M=\{1,s,0\}$ with identity $1$, zero $0$, and $s^2=s$, and define $h(a)=h(c)=0$, $h(b)=s$, and $h(p)=h(d)=h(x)=1$.  Among distinct live factors of the same $h$-type, the only pairs sharing a context are $ab/ap$, $bc/pc$, $abc/apc$, $bcd/pcd$, and $abcd/apcd$; each pair has the same syntactic distribution.  Hence $L_f$ is $h$-substitutable.  The relative classes $X=\{abc,apc\}$, $Y=\{ab,ap\}$, $Z=\{c\}$, $A=\{a\}$, and $B=\{bc,pc\}$ satisfy the two distinct exact prime factorizations $X=Y\cdot Z=A\cdot B$.  By Corollary~\ref{cor:finite-candidates}, the shortest representatives of $A,Z,Y,B$ expose all possible nontrivial exact cuts; these cuts show that all four classes are prime.  Thus UF fails.  Since $L_f$ is finite, let $N$ be the maximum length of a live factor.  For any correct prime right-hand side of length $k$, concatenating shortest representatives yields a live factor of length at least $k$, so $k\le N$.  Over the finite prime alphabet there are therefore only finitely many valid rules, and FRP holds.
\end{example}

Corollary~\ref{cor:uf-frp-separation} and Example~\ref{ex:nonuf-frp} show that $\mathrm{UF}$ and $\mathrm{FRP}$ are incomparable.

\section{Disjoint-alphabet sums as an ambient comparison}
\label{sec:sum}

Let $(L_1,h_1)$ and $(L_2,h_2)$ be fixed-observation pairs over disjoint alphabets $\Sigma_1,\Sigma_2$.  Adjoin absorbing zeros $\bot_i$ to the monoids $M_i$, extend each morphism by sending letters of the other alphabet to $\bot_i$, and set
\(h=(\bar h_1,\bar h_2): (\Sigma_1\cup\Sigma_2)^* \to M_1^\bot\times M_2^\bot.\)
Put
\(L:=L_1\cup L_2.\)

\begin{theorem}[Disjoint-alphabet sum]
If each $(L_i,h_i)$ is $h_i$-substitutable and unit-separated, then $(L,h)$ is $h$-substitutable and unit-separated.  Its live relative classes and relative primes are the disjoint unions of the corresponding component classes and primes.  Moreover, for a component prime $P$,
\(\Val^{L}_P=\Val^{L_i}_P.\)
Consequently
\(\mathrm{FRP}(L,h) \iff \mathrm{FRP}(L_1,h_1)\wedge\mathrm{FRP}(L_2,h_2),\)
and similarly for FSRP.  Exact UF is also componentwise: the sum has UF iff each component has UF.
\end{theorem}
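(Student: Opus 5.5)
The plan is to analyse the distribution of a word $w$ over $\Sigma_1\cup\Sigma_2$ in $L=L_1\cup L_2$ by alphabet content, and then read off each structural notion from that analysis.

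\emph{Step 1: dead mixed words.} A word containing letters of both alphabets is dead in $L$: every word of $L$ lies in $\Sigma_1^*$ or $\Sigma_2^*$. Such a word has $h$-type $(\bot_1,\bot_2)$.

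\emph{Step 2: distributions of pure words.} For a nonempty $w\in\Sigma_1^+$ the first coordinate of $h(w)$ is $h_1(w)\neq\bot_1$. The second coordinate is $\bot_2$. A context $(u,v)$ accepts $w$ only if $uwv\in L_1$, so $\D_L(w)=\D_{L_1}(w)$. The empty word has type $(1,1)$; unit separation of the components forces $h_i^{-1}(1)\cap\Sigma_i^*=\{\eps\}$ up to $\theta$. I need to check this carefully, since unit separation only says $[\eps]=\{\eps\}$ relative to $L_i$. A word $w\in\Sigma_1^+$ with $h_1(w)=1$ would have type $(1,\bot_2)\ne(1,1)$, so $[\eps]=\{\eps\}$ in the sum regardless.

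\emph{Step 3: classes and substitutability.} It follows that two nonempty words of equal $h$-type lie in the same alphabet. On $\Sigma_i^+$, $\theta_{L,h}$ coincides with $\theta_{L_i,h_i}$, because distributions agree with those in $L_i$ and the $h$-type determines the $h_i$-type. Hence live classes are disjoint unions of component live classes, and $h$-substitutability reduces to that of $L_i$. Setwise products of classes from one component are computed inside $\Sigma_i^*$. A decomposition of a live $\Sigma_1$-class can only use $\Sigma_1$-classes, or the unit; a dead class cannot occur in an exact product equal to a live class. Therefore primes, exact factorizations, and UF are componentwise.

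\emph{Step 4: valid rules.} If $P$ is a component-$1$ prime and $\alpha$ is a correct right-hand side, every letter of $\alpha$ must be a component-$1$ prime. Otherwise $\overline\alpha$ consists of mixed dead words, or lies in $\Sigma_2^*$, and cannot be contained in $P$. Quotient products of blocks of a pure-$1$ sequence are the same in both settings, so pleonasticity agrees and $\Val^L_P=\Val^{L_1}_P$. Now $\mathcal P$ is finite iff both component spectra are finite, and $\Vcal$ is the disjoint union of the component rule sets. Regularity of each $\Val_P$ transfers verbatim, giving the FRP and FSRP equivalences.

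The main obstacle is the bookkeeping in Step 3, specifically ensuring that a dead class such as type $(\bot_1,\bot_2)$ never participates in an exact factorization of a live class. This follows because liveness is inherited by factors, exactly as in the proof of Theorem~\ref{thm:prime-existence}.
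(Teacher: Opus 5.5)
Your proposal is correct and follows essentially the same route as the paper: mixed words are dead with type $(\bot_1,\bot_2)$, the absorbing coordinates make $\theta_{L,h}$ agree with $\theta_{L_i,h_i}$ on each $\Sigma_i^*$, and because liveness passes to factors, exact factorizations and correct returns stay inside one component. You are right that unit separation of the sum follows from the observer types alone; note, however, that component unit separation is still what Step~3 needs to identify the classes on $\Sigma_i^*$, since otherwise $[\eps]_{L_i,h_i}\setminus\{\eps\}$ would be a new live non-unit class of the sum and could be an extra prime.
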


\begin{proof}
Every nonempty live factor lies in exactly one component alphabet; mixed-alphabet words are dead.  The absorbing coordinates of $h$ separate the two components, and the relative congruence restricted to $\Sigma_i^*$ is exactly that of $(L_i,h_i)$.  Thus live classes, $h$-substitutability, and unit separation are componentwise.

Any exact factorization of a live component class uses classes from that same component, so primality, exact factorization, and UF are componentwise.  The same separation applies to quotient returns: a prime sequence mixing components has a mixed representative and cannot return to a live prime.  Hence $\Val_P^L=\Val_P^{L_i}$ for each component prime $P$, proving the FRP and FSRP equivalences.
\end{proof}

As a comparison, take $L_0$ from Section~\ref{sec:L0} and an alphabet-renamed copy of the $36$-element witness.  Their disjoint sum
\(L^\ddagger:=L_0\cup\{c,d\}^+\)
is a nonregular context-free UF+FSRP+$\neg$FRP example by the theorem above.  This construction is only ambient: the nonregularity and the under-saturation defect remain in separate components.  The wrapped language of Section~\ref{sec:wrapped} is stronger, since its infinite quotient and finite-state defective return dynamics occur in one coupled relative-prime system.

\section{Strong structural reconstruction from positive data}
\label{sec:strong-learning}

The strong PTLD theorem below is stated for $L\subseteq\Sigma^+$ so that the learner need not carry a separate empty-word start case.  The relative-prime and presentation theory developed in the preceding sections is formulated over $\Sigma^*$ and does not rely on this notational restriction.

The structural results above suggest two distinct learning targets.  Under PTLD the canonical object is the finite direct relative-prime grammar.  Under the weaker FSRP condition, the canonical object is instead the finite-state residual controller for the valid-rule languages.  This section gives a direct positive-data strong learner for the PTLD branch and a limit-canonicalization theorem for the FSRP branch.

We use the strong Gold viewpoint of Clark~\cite{Clark2013Trees}: a learner succeeds strongly when its hypotheses eventually stabilize to a canonical representation of the target, not merely to an arbitrary grammar generating the same language.  In the PTLD branch, each update is polynomial in the current sample size, and the target admits an explicit finite characteristic sample.  Quantitative word-length bounds below are stated using a finite cut-separation parameter in addition to canonical size and thickness.

\begin{definition}[Positive text]
A \emph{positive text} for $L$ is an infinite sequence $T=(w_0,w_1,\ldots)$ of elements of $L$ in which every word of $L$ occurs at least once.  Write $T[n]$ for its length-$n$ prefix and $\operatorname{content}(T[n])$ for the finite set of words observed so far.
\end{definition}

\begin{definition}[Weak behavioral correctness]
A CFG-valued learner $\mathcal W$ is \emph{weakly behaviorally correct} on $L$ if for every positive text $T$ for $L$ there is $N$ such that
\(L(\mathcal W(T[n]))=L \qquad(n\ge N).\)
The hypotheses themselves need not stabilize.
\end{definition}

\begin{definition}[Strong Gold identification]
Let $\Can(L,h)$ be a fixed canonical representation of the pair $(L,h)$.  A learner $\mathcal A$ \emph{strongly identifies} $\Can(L,h)$ in the Gold sense if for every positive text $T$ for $L$ there is $N$ such that
\(\mathcal A(T[n])=\Can(L,h) \qquad(n\ge N),\)
where equality may be replaced by isomorphism when canonical names have not been fixed.
\end{definition}

\begin{definition}[Characteristic sample]
For a learner $\mathcal A$ and a canonical target $R=\Can(L,h)$, a finite set $\CS\subseteq L$ is a \emph{characteristic sample} if every finite sample $D$ with
\(\CS\subseteq D\subseteq L\)
forces $\mathcal A(D)=R$.
For finite $D$, write
\(\|D\|:=\sum_{w\in D}|w|.\)
Polynomial update time is measured as a polynomial in $\|D\|$.
\end{definition}

\subsection{The canonical PTLD target}

Assume throughout this subsection that $L$ is $h$-substitutable, unit-separated, has finitely many relative primes, and satisfies PTLD.  Let
\(\Pcal=\{P_1,\ldots,P_q\}.\)
By Theorems~\ref{thm:ptld-uf} and~\ref{thm:ptld-vte-vtd}, exact factorization is unique, VTE and VTD hold, and $|\Vcal|\le q^2$.

Let $G^\star_{L,h}$ be the direct relative-prime grammar whose prime nonterminals are the live relative primes, whose branching rules are exactly the valid productions, whose lexical rules are $P\to a$ whenever $a\in P\cap\Sigma$, and whose start rules use the unique exact prime factorization of each accepting relative class.  We canonically name every prime by its shortlex least word
\(\omega(P):=\min_{\mathrm{shortlex}}P.\)

For any sentential form $\alpha\in(\Pcal\cup\Sigma)^*$ of $G^\star_{L,h}$, let
\(\ell(\alpha):=\min\{|w|:\alpha\Rightarrow^*w\}.\)
Define the \emph{structural thickness}
\(\tau(G^\star):= \max_{A\to\alpha\in G^\star}\ell(\alpha).\)
Write simply $\tau$ when the target is clear.  Since every prime is non-unit, $|\omega(P)|\le\tau$ for each prime appearing on a right-hand side or start factorization.

\begin{proposition}[Trimness of the canonical PTLD grammar]
\label{prop:trim}
Every live relative prime is reachable and productive in $G^\star_{L,h}$; hence $G^\star_{L,h}$ is trim after deletion of vacuous start structure.
\end{proposition}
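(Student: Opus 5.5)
Productivity and reachability are handled separately, and both reduce to results already proved for the direct construction.

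For productivity, we appeal to the completeness half of Theorem~\ref{thm:direct-presentation}, which applies verbatim to $G^\star_{L,h}$. Its hypotheses hold here: $\theta_{L,h}\subseteq\equiv_L$; unit separation is assumed; $|\Acc|\le|M|$ by $h$-substitutability; and FRP follows from $|\Vcal|\le q^2$ (Theorem~\ref{thm:ptld-vte-vtd}). The only difference from the grammar of Theorem~\ref{thm:direct-presentation} is that start rules now use the unique exact factorization, which Theorem~\ref{thm:ptld-uf} supplies. That argument gives $L(G^\star_{L,h},P)=P$ for every live prime $P$. Since $P$ is a nonempty class, $P$ derives some terminal word, so every prime nonterminal is productive. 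Start rules are productive as well, because each of their right-hand-side primes is.

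For reachability, fix a live prime $P$. Liveness gives $u,v$ and $w\in P$ with $uwv\in L$. Since $L\subseteq\Sigma^+$, the word $uwv$ is nonempty, and its accepting class $X=[uwv]_\theta$ is live and non-unit by unit separation. So $X$ has a start rule $S\to A_1\cdots A_k$ given by its exact factorization. Next, cut $uwv$ into letters while keeping $w$ as a single block: this yields the prime sequence $\beta$ consisting of the letter classes of $u$, then $P$, then the letter classes of $v$. Letter classes are live and prime, by the length-one argument in Theorem~\ref{thm:direct-presentation}. We have $\overline\beta\subseteq X$, since $X$ is a congruence class containing one word of the product. Theorem~\ref{thm:prime-refinement} then gives $A_1\cdots A_k\Rightarrow^*\beta$ using valid productions only. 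Hence $S\Rightarrow^*\beta$, a sentential form in which $P$ occurs, so $P$ is reachable. If instead $P$ is a letter class, or equals $X$, reachability is immediate in the same way. Lexical rules $P\to a$ are used only by reachable primes, so they cause no trouble.

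The main obstacle is checking that Theorem~\ref{thm:prime-refinement} really produces a derivation in $G^\star_{L,h}$ itself, and that the contractions stay inside the prime alphabet; here $\beta$ contains $P$ as an unexpanded symbol. Refinement contracts $\beta$ to the unique irreducible exact factorization $A_1\cdots A_k$ (Theorem~\ref{thm:irreducible-path}). Reversing the contractions gives correct expansions, and valid-rule reduction turns these into valid productions, which are exactly the branching rules of $G^\star_{L,h}$. Finally, "vacuous start structure" refers to a possible start rule for an accepting class when $\eps\in L$, and to unused start symbols. Under $L\subseteq\Sigma^+$ neither occurs, so after discarding any such item the whole grammar is trim.
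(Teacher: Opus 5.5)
Your proof is correct and follows the paper's route. Productivity is read off from Theorem~\ref{thm:direct-presentation}, and reachability comes from applying Theorem~\ref{thm:prime-refinement} to a prime sequence that contains $P$ and whose setwise product lies in the accepting class of a context sentence. The only cosmetic differences are that you split the context $(u,v)$ into letter classes where the paper uses the exact prime factorizations of $[u]$ and $[v]$, and that your appeal to $L\subseteq\Sigma^+$ is unnecessary, since $w\in P$ is already nonempty by unit separation.
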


\begin{proof}
Productivity is Theorem~\ref{thm:direct-presentation}.  For reachability, choose $x\in P$ and a context $\ell x r\in L$.  Factor the live classes $[\ell]$ and $[r]$ exactly into primes and concatenate those factors with $P$.  This prime sequence has quotient equal to the accepting class $[\ell x r]$ and therefore has setwise product contained in that class.  By Theorem~\ref{thm:prime-refinement}, the exact start factorization of $[\ell x r]$ derives this sequence.  Expanding the left and right factors to $\ell$ and $r$ yields a derivation $S\Rightarrow^*\ell P r$.
\end{proof}

\begin{lemma}[Bounded canonical context]
\label{lem:bounded-context}
For every prime $P$ there is a terminal context $(\ell_P,r_P)$ such that
\(S\Rightarrow^*\ell_P P r_P\)
and
\(|\ell_P|+|r_P|\le q\tau.\)
\end{lemma}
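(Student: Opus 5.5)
We will obtain the context from a shortest derivation path in the trim grammar $G^\star_{L,h}$, in the spirit of the usual reachability argument for CFGs. By Proposition~\ref{prop:trim} every prime $P$ is reachable from $S$. We therefore choose a derivation $S\Rightarrow^*\ell' P r'$ that minimizes the number of nonterminal expansions along the path from the root to the marked occurrence of $P$. If some prime $Q$ occurred twice on that root-to-$P$ path, we would cut out the portion between its two occurrences and obtain a shorter path to $P$. So the marked path $S=X_0,X_1,\dots,X_t=P$ passes through pairwise distinct primes after the start rule, which gives $t\le q$ (one start step plus at most $q-1$ branching steps).

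Next we bound the terminal material produced off the path. At step $i$ we use a rule $X_{i-1}\to\alpha_i$, either a start rule or a valid branching rule, with $\alpha_i=\beta_i X_i\gamma_i$. We replace every sibling nonterminal in $\beta_i,\gamma_i$ by a shortest terminal yield. By definition of $\ell(\cdot)$ the total side material then has length $\ell(\beta_i)+\ell(\gamma_i)=\ell(\alpha_i)-\ell(X_i)$. Since $X_i$ is a non-unit prime, $\ell(X_i)\ge1$, so the contribution at step $i$ is at most $\tau-1\le\tau$. Summing over the at most $q$ steps gives $|\ell_P|+|r_P|\le q\tau$, where $\ell_P,r_P$ are the concatenated left and right side yields.

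Two bookkeeping points need care. The first is that the start rules belong to $G^\star$, so $\tau$ also bounds $\ell$ of start right-hand sides. The definition of $\tau$ takes the maximum over all rules $A\to\alpha$ of $G^\star$, including start rules, so this holds, and $\ell$ of a start RHS is at most $\tau$. The second is that $S$ is not a prime and cannot recur, so the path contains at most $q$ distinct primes, one of which is $P$ itself. This gives $t\le q$ steps, each contributing at most $\tau-\ell(X_i)<\tau$.

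The main obstacle I expect is the repetition-cutting step. We must check that excising the segment between two occurrences of $Q$ yields a legal derivation in which $P$ still appears. This holds because the subtree below the lower occurrence of $Q$ contains the marked $P$ and can be grafted at the upper occurrence, which is the standard pumping-down surgery. Producing a terminal context then only needs productivity, which Proposition~\ref{prop:trim} supplies via Theorem~\ref{thm:direct-presentation}. If the sharper count matters, we could even obtain $(q-1)\tau+\tau$, but $q\tau$ suffices.
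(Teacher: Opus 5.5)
Your proof is correct and follows essentially the paper's argument: make the root-to-$P$ spine simple by excising repeated primes, then bound each of the at most $q$ off-spine contributions by $\tau$ using shortest sibling yields. The only difference is that you minimize spine length directly, so the excision step is immediate, whereas the paper minimizes total context length and must add that the excised segment carries positive terminal length. Your closing remark is a slip, since $(q-1)\tau+\tau=q\tau$; the genuinely sharper bound your argument yields is $q(\tau-1)$.
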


\begin{proof}
Choose a terminal context of minimum total length and inspect a root-to-$P$ path in a corresponding partial derivation.  If a prime nonterminal occurred twice on that path, stop at the earlier occurrence and expand all off-path siblings to shortest terminal yields.  Every branching rule has at least two non-unit prime symbols, and every non-unit prime has shortest yield at least one, so removing the repeated spine segment removes a positive terminal contribution and produces a strictly shorter context around the same prime.  Hence the prime spine is simple and contains at most $q$ prime nodes.  Including the start step, there are at most $q$ off-spine contributions, each bounded by $\tau$ by the definition of thickness.
\end{proof}

\subsection{An h-guarded substring learner}

For a finite positive sample $D\subseteq L$, let $\operatorname{Sub}(D)$ be the set of nonempty substrings of words in $D$.  Define a finite CFG $W_h(D)$ with a fresh start symbol $S_D$, nonterminals $\langle u\rangle$ for $u\in\operatorname{Sub}(D)$, start rules
\(S_D\to\langle w\rangle\)
for every $w\in D$, and rules
\(\langle uv\rangle\to\langle u\rangle\langle v\rangle,\)
whenever $u,v,uv\in\operatorname{Sub}(D)$,
\(\langle a\rangle\to a\)
for observed letters, and symmetric substitution rules
\(\langle u\rangle\leftrightarrow\langle v\rangle\)
whenever
\(h(u)=h(v)\)
and there is a sample context $(\ell,r)$ with
\(\ell ur,\ell vr\in D.\)
Here $\leftrightarrow$ abbreviates the two unit productions in opposite directions.  This is the ordinary substring/substitution construction guarded by equality of the fixed observer type.

\begin{theorem}[Guarded weak soundness]
\label{thm:weak-sound}
For every positive sample $D\subseteq L$ and every $u\in\operatorname{Sub}(D)$,
\(L(W_h(D),\langle u\rangle)\subseteq[u]_{L,h}.\)
Consequently
\(L(W_h(D))\subseteq L.\)
\end{theorem}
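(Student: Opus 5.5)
We prove the inclusion by induction on the length of derivations in $W_h(D)$. The claim for nonterminals is the key statement; the consequence for $S_D$ follows at once. A derivation $S_D\Rightarrow\langle w\rangle\Rightarrow^*x$ with $w\in D\subseteq L$ gives $x\in[w]_{L,h}$. Since $\theta_{L,h}\subseteq{\equiv_L}$, every class meeting $L$ is contained in $L$, so $x\in L$.

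For the main claim we show that every derivation $\langle u\rangle\Rightarrow^*x$ ending in a terminal word satisfies $x\,\theta_{L,h}\,u$, and we induct on the derivation length. We case-split on the first production used.

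\begin{enumerate}[label=(\roman*)]
\item \emph{Lexical rule $\langle a\rangle\to a$.} Here $x=a=u$, which is trivial.
\item \emph{Binary rule $\langle uv\rangle\to\langle u\rangle\langle v\rangle$.} The yield splits as $x=x_1x_2$ with strictly shorter derivations $\langle u\rangle\Rightarrow^*x_1$ and $\langle v\rangle\Rightarrow^*x_2$. By induction $x_1\theta u$ and $x_2\theta v$. Since $\theta_{L,h}$ is a monoid congruence, being the intersection of the two congruences $\equiv_L$ and $\ker h$, we get $x_1x_2\,\theta\,uv$.
\item \emph{Unit rule $\langle u\rangle\to\langle v\rangle$.} This rule was introduced because $h(u)=h(v)$ and some sample context $(\ell,r)$ has $\ell ur,\ell vr\in D\subseteq L$. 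Then $\D_L(u)\cap\D_L(v)\ne\varnothing$, so Lemma~\ref{lem:type-context-collapse}, which uses $h$-substitutability of $L$ (the standing hypothesis of this subsection), gives $u\theta v$. The remaining derivation $\langle v\rangle\Rightarrow^*x$ is shorter, so induction gives $x\theta v$, and transitivity gives $x\theta u$.
\end{enumerate}

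The only delicate point is that unit productions can form cycles, so the induction must be on the number of derivation steps rather than on the length of $u$ or of the yield. With that choice each case strictly shortens the derivation, and the argument is routine. The main conceptual step is (iii): it is the only place where the guard $h(u)=h(v)$ and the sample-context condition are used. This is precisely why substitutability is applied only between words of the same observer type.
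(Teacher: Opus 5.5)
Your proof is correct and follows the paper's argument: induction on derivations, with binary rules handled by the congruence property of $\theta_{L,h}$, guarded substitution rules handled by Lemma~\ref{lem:type-context-collapse}, and the final inclusion obtained because each start class contains a sample word of $L$ and is therefore accepting. Your version spells out the case analysis that the paper leaves implicit, including the lexical case and the choice to induct on derivation length because unit rules can form cycles.
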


\begin{proof}
Map $\langle u\rangle$ to the true relative class $[u]$.  Binary rules are sound because $\theta_{L,h}$ is a congruence.  Every guarded substitution preserves the true relative class by Lemma~\ref{lem:type-context-collapse}.  Induction on derivations proves the first inclusion.  Every start class contains a sample word in $L$ and is therefore an accepting relative class, which proves the second.
\end{proof}

The preceding theorem gives a useful one-sided invariant: the learner may undergenerate before its characteristic data arrive, but it never creates a string outside the target language.

\subsection{Finite signatures and exact observed relative classes}

For each $u\in\operatorname{Sub}(D)$ fix one observed occurrence
\(c_D(u)=(\ell_u,r_u), \qquad \ell_u u r_u\in D.\)
For any CFG $W$ define the finite signature
\[
  \sigma_{D,W}(u)
  :=
  \left(
    h(u),
    \bigl(
      {\bf1}[\ell_xur_x\in L(W)]
    \bigr)_{x\in\operatorname{Sub}(D)}
  \right).
\]

\begin{lemma}[No false relative merges]
\label{lem:no-false-merge}
For $W=W_h(D)$,
\(\sigma_{D,W}(u)=\sigma_{D,W}(v) \Longrightarrow u\theta_{L,h}v.\)
If in addition $L(W)=L$, then for all observed substrings
\(\sigma_{D,W}(u)=\sigma_{D,W}(v) \iff u\theta_{L,h}v.\)
\end{lemma}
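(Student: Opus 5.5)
The first implication uses only the guarded weak soundness theorem (Theorem~\ref{thm:weak-sound}). Suppose $\sigma_{D,W}(u)=\sigma_{D,W}(v)$ with $W=W_h(D)$. The first coordinate gives $h(u)=h(v)$. Now take the coordinate indexed by $x=u$ itself. By the choice of $c_D(u)=(\ell_u,r_u)$, we have $\ell_u u r_u\in D$.

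Next I would check that the learner's grammar generates this observed word, i.e. $\ell_u u r_u\in L(W)$. This holds because $W_h(D)$ contains the start rule $S_D\to\langle w\rangle$ for each $w\in D$. It also contains the binary rules and lexical rules, so $\langle w\rangle$ derives $w$. That derivation follows from an induction on $|w|$ that splits off the first letter, using that every substring of $w$ is in $\operatorname{Sub}(D)$.

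Equality of the $u$-coordinate then gives $\ell_u v r_u\in L(W)$. By Theorem~\ref{thm:weak-sound}, $L(W)\subseteq L$, hence $\ell_u v r_u\in L$. So $u$ and $v$ share the accepting context $(\ell_u,r_u)$ and have equal observer type. Lemma~\ref{lem:type-context-collapse} then gives $u\,\theta_{L,h}\,v$. Note that this direction needs no assumption on $L(W)$.

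For the converse under $L(W)=L$, assume $u\,\theta_{L,h}\,v$. Then $h(u)=h(v)$ and $u\equiv_L v$. So for every $x\in\operatorname{Sub}(D)$,
\[\ell_xur_x\in L\iff \ell_xvr_x\in L.\]
Replacing $L$ by $L(W)$ shows that every indicator coordinate agrees, and therefore $\sigma_{D,W}(u)=\sigma_{D,W}(v)$. Combined with the first part, this gives the stated equivalence.

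The only step with real content is verifying that the sample word $\ell_uur_u$ lies in $L(W)$, and it is routine. The main conceptual point is that the self-indexed coordinate $x=u$ supplies a genuine accepting context. Weak soundness is what turns membership in $L(W)$ into membership in $L$, and this is what prevents false merges.
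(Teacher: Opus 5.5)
Your proof is correct and follows the paper's argument: the self-indexed coordinate $x=u$ supplies the sample context $(\ell_u,r_u)$, weak soundness transfers $\ell_u v r_u\in L(W)$ into $L$, and Lemma~\ref{lem:type-context-collapse} closes the first direction, while the converse under $L(W)=L$ follows directly from equality of distributions and observer types. Your explicit check that $D\subseteq L(W_h(D))$, via the start, binary, and lexical rules, fills in a step the paper only asserts.
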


\begin{proof}
Equality of signatures gives $h(u)=h(v)$.  The coordinate indexed by $u$ satisfies $\ell_u u r_u\in D\subseteq L(W)$.  Hence signature equality implies $\ell_u v r_u\in L(W)\subseteq L$ by Theorem~\ref{thm:weak-sound}.  Thus $u$ and $v$ have equal observer type and a common $L$-context, so Lemma~\ref{lem:type-context-collapse} gives $u\theta v$.  If $L(W)=L$, the converse follows directly from equality of full syntactic distributions and observer types.
\end{proof}

Thus, after the weak core becomes complete, the true restriction of $\theta_{L,h}$ to all observed substrings is computable by finitely many ordinary CFG membership tests.

\subsection{Finite certificates for relative primality}

For a prime $P$ and a nontrivial cut
\[
  \omega(P)=uv,
  \qquad u,v\ne\eps,
\]
put
\[
  Y_{P,u}:=[u]_\theta,
  \qquad
  Z_{P,v}:=[v]_\theta.
\]
Since $\omega(P)\in P$, congruence gives
\[
  Y_{P,u}\cdot Z_{P,v}\subseteq P.
\]
Both factors are live and non-unit.  Hence primality of $P$ implies
\[
  Y_{P,u}\cdot Z_{P,v}\subsetneq P.
\]

Define the \emph{prime cut-separation radius}
\[
  \eta_{\rm cut}(L,h)
  :=
  \max
  \left\{
    \min\{|x|:
      x\in P\setminus
      (Y_{P,u}\cdot Z_{P,v})
    \}:
    \begin{array}{l}
      P\in\Pcal,\\
      \omega(P)=uv,\ u,v\ne\eps
    \end{array}
  \right\},
\]
with $\eta_{\rm cut}(L,h)=0$ if no canonical prime word has a nontrivial cut.

Under PTLD the set difference in this definition has a more transparent
prefix interpretation.

\begin{lemma}[Cut separation is prime-prefix escape]
\label{lem:cut-prefix-escape}
Assume that $L$ is $h$-substitutable, unit-separated, and satisfies PTLD.  Let $P$ be a relative prime and let
$\omega(P)=uv$ be a nontrivial cut.  Put
$Y=[u]_\theta$ and $Z=[v]_\theta$.  Then
\[
  P\cap Y\Sigma^*=Y\cdot Z.
\]
Consequently
\[
  \min\{|x|:x\in P\setminus(Y\cdot Z)\}
  =
  \min\{|x|:x\in P,\ x\notin Y\Sigma^*\}.
\]
Thus, on the PTLD branch, $\eta_{\rm cut}$ is exactly a quantitative
prime-prefix escape radius for the finitely many cuts of the canonical
prime words.
\end{lemma}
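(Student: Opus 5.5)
The plan is to prove the two inclusions of $P\cap Y\Sigma^*=Y\cdot Z$ separately; the displayed equality of minima is then immediate.

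\emph{Inclusion $Y\cdot Z\subseteq P\cap Y\Sigma^*$.} This direction is trivial. Since $uv=\omega(P)\in P$ and $\theta$ is a congruence, every product $yz$ with $y\in Y$ and $z\in Z$ satisfies $yz\,\theta\,uv$, hence $yz\in P$. Clearly $yz\in Y\Sigma^*$.

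\emph{Inclusion $P\cap Y\Sigma^*\subseteq Y\cdot Z$.} Take $x\in P$ with $x=ys$ and $y\in Y$. We want $s\in Z$. I would apply Lemma~\ref{lem:ptld-collapse} (prime-target remainder collapse) under (R), with the data
\[
  y\,\theta\,u,\qquad ys\in P,\qquad uv\in P.
\]
The lemma yields $s\,\theta\,v$, so $s\in Z$ and $x=ys\in Y\cdot Z$.

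The one point needing care is that Lemma~\ref{lem:ptld-collapse} requires all the factors to have types in $\mathsf T_{L,h}$. Its own proof notes that factors of words in a live prime are live, so $s$ is a live factor and no extra argument is needed. I also do not need a separate case for $s=\eps$: Lemma~\ref{lem:ptld-collapse} gives $\eps\,\theta\,v$, and unit separation would then force $v=\eps$, contradicting the nontriviality of the cut. So that case cannot occur and is covered automatically.

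\emph{The equality of minima.} From $P\cap Y\Sigma^*=Y\cdot Z$ we get
\[
  P\setminus(Y\cdot Z)=\{x\in P: x\notin Y\Sigma^*\}.
\]
The two minima are taken over the same set, so they coincide. Maximizing over primes $P$ and cuts of $\omega(P)$ gives the stated interpretation of $\eta_{\rm cut}$ as a prime-prefix escape radius. Lemma~\ref{lem:relative-prefix-escape} guarantees that this set is nonempty, so the minimum is attained. I expect no real obstacle here; the only delicate point is checking the liveness and type hypotheses of the collapse lemma.
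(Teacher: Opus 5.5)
Your proof is correct and follows the paper's argument: the inclusion $Y\cdot Z\subseteq P$ comes from congruence and $uv\in P$, and the reverse inclusion comes from applying Lemma~\ref{lem:ptld-collapse} to $y\,\theta\,u$ with $ys,uv\in P$, giving $s\in Z$, after which the equality of minima is immediate. Your side remark invoking Lemma~\ref{lem:relative-prefix-escape} for nonemptiness implicitly uses $Y\neq P$; this holds by Corollary~\ref{cor:prime-prefix-free} since $v\neq\eps$, and in any case nonemptiness is not needed for the identity of the two minima.
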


\begin{proof}
The inclusion $Y\cdot Z\subseteq P$ follows from congruence and
$uv\in P$.  Conversely, let $x\in P\cap Y\Sigma^*$, say $x=yr$ with
$y\in Y$.  Since $u\theta y$ and both $uv$ and $yr$ lie in the prime
class $P$, Lemma~\ref{lem:ptld-collapse} gives $v\theta r$.  Hence
$r\in Z$ and $x\in Y\cdot Z$.  This proves the equality, and the
minimum-length identity is immediate.
\end{proof}

\begin{lemma}[Finite prime cut separation]
\label{lem:finite-cut-separation}
Assume finite relative prime spectrum and unit separation.  Then
$\eta_{\rm cut}(L,h)<\infty$.  For every prime $P$ and every nontrivial cut
$\omega(P)=uv$, there is a word
\[
  x_{P,u,v}\in
  P\setminus([u]_\theta\cdot[v]_\theta)
\]
with
\[
  |x_{P,u,v}|\le\eta_{\rm cut}(L,h).
\]
Consequently, if $(\ell_P,r_P)$ is a live context for $P$, then
\[
  \ell_Px_{P,u,v}r_P\in L
\]
is a positive certificate refuting the false exact split
$P=[u]_\theta\cdot[v]_\theta$.
\end{lemma}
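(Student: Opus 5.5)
The plan is to show that $\eta_{\rm cut}(L,h)$ is a maximum over a finite index set of finite quantities. After that, the certificate statements follow by unfolding the definitions.

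\emph{Finiteness of the index set.} Since $|\Pcal|=q<\infty$, there are finitely many canonical words $\omega(P)$. Each $\omega(P)$ has finite length and hence at most $|\omega(P)|-1$ nontrivial cuts $\omega(P)=uv$. So the outer maximum in the definition of $\eta_{\rm cut}$ ranges over finitely many pairs $(P,u)$. If no canonical word has a nontrivial cut, the value is $0$ by convention and there is nothing to prove.

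\emph{Finiteness of each inner minimum.} Fix a prime $P$ and a nontrivial cut $\omega(P)=uv$, and put $Y=[u]_\theta$, $Z=[v]_\theta$.
\begin{enumerate}[label=(\roman*)]
\item Every factor of a word of a live class is live, so $Y$ and $Z$ are live.
\item Since $u,v\ne\eps$, unit separation $[\eps]_\theta=\{\eps\}$ shows that $Y$ and $Z$ are non-unit.
\item By congruence, $Y\cdot Z\subseteq[uv]_\theta=P$.
\item Primality of $P$ forbids the exact equality $P=Y\cdot Z$ with both factors non-unit, so the inclusion is strict.
\end{enumerate}
Hence $P\setminus(Y\cdot Z)$ is nonempty and the minimum of $|x|$ over it is a natural number. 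Choose $x_{P,u,v}$ to attain this minimum. Then $|x_{P,u,v}|\le\eta_{\rm cut}(L,h)<\infty$, because the radius is the maximum of these minima.

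\emph{The certificate.} Let $(\ell_P,r_P)$ be a live context for $P$, meaning $\ell_P y r_P\in L$ for some $y\in P$. Because $P$ is a $\theta_{L,h}$-class and $\theta_{L,h}\subseteq\equiv_L$, every member of $P$ has the same distribution as $y$. Therefore $\ell_P x_{P,u,v} r_P\in L$. This positive word carries an element of $P$ that is not in $[u]_\theta\cdot[v]_\theta$, which refutes the exact split.

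I expect no real obstacle. The only point needing care is that both cut classes are non-unit, so that primality actually excludes the equality. This is exactly where unit separation is used: without it, $[u]_\theta$ could be the unit class even though $u\ne\eps$.
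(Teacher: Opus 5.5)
Your proposal is correct and follows essentially the same route as the paper. Liveness of the two cut classes, non-unitness via unit separation, and primality together make $P\setminus([u]_\theta\cdot[v]_\theta)$ nonempty, and finiteness of the spectrum and of the cuts makes $\eta_{\rm cut}$ a maximum of finitely many finite minima; your explicit justification of the certificate via $\theta_{L,h}\subseteq\equiv_L$ fills in a step the paper leaves implicit.
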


\begin{proof}
For every displayed cut, $[u]_\theta$ and $[v]_\theta$ are live non-unit relative classes.  If
\[
  P=[u]_\theta\cdot[v]_\theta,
\]
then $P$ would be composite, contrary to primality.  Thus the displayed set difference is nonempty.  There are finitely many primes and finitely many cuts of each canonical prime word, so the maximum of the finitely many minimum witness lengths is finite.
\end{proof}

\begin{corollary}[Finite prime certification]
\label{cor:prime-certification}
Suppose the observed relative partition is exact.  If, for every true prime $P$, the sample contains $\omega(P)$ and the cut-separation witness $\ell_Px_{P,u,v}r_P$ for every nontrivial cut $\omega(P)=uv$, then the cut test of Corollary~\ref{cor:finite-candidates} identifies exactly the true relative primes among the observed classes.
\end{corollary}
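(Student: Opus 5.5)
The plan is to certify primality and compositeness separately for every observed class.

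\emph{Test.} For an observed class, take its observed shortest representative and examine every nontrivial cut of it. The class is declared composite exactly when some cut $uv$ has the property that every observed word of the class beginning with an observed word of $[u]_\theta$ has remainder in $[v]_\theta$, and no observed word of the class escapes the product. Since the observed relative partition is exact, membership of observed substrings in $[u]_\theta$ and $[v]_\theta$ can be decided from the sample.

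\emph{True composites.} Suppose an observed class $X$ is truly composite, say $X=Y\cdot Z$ exactly with non-unit $Y,Z$. By Lemma~\ref{lem:shortest-word-cut}, a shortest word of $X$ splits as $w=uv$ with $u\in Y$, $v\in Z$. Hence $[u]_\theta\cdot[v]_\theta=X$ exactly, so no word of $X$ can ever be a witness against this cut. Because the sample is contained in $L$, no observed counterexample exists, and the cut test correctly declares $X$ composite. One detail needs checking here. The observed shortest representative may differ from the true shortest word. However, once $\omega(P)$ and the certificates are in the sample, any true shortest word has the same length, and Corollary~\ref{cor:finite-candidates} applies to any fixed shortest word. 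So I will insist that the test uses an observed word of minimal length, argue that this word is truly shortest, and note that the corollary is stated for one fixed shortest word of $X$.

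\emph{True primes.} For a true prime $P$, the sample contains $\omega(P)$, which is shortest, so the test examines exactly the cuts $\omega(P)=uv$. For each such cut, the sample contains $\ell_Px_{P,u,v}r_P$ from Lemma~\ref{lem:finite-cut-separation}. The substring $x_{P,u,v}$ is therefore observed. Exactness of the observed partition places it in $P$, and since $x_{P,u,v}\notin[u]_\theta\cdot[v]_\theta$ it refutes that cut. Every cut is refuted, so $P$ is declared prime.

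\emph{Main obstacle.} The delicate step is making sure the observed representative used by the test is genuinely a shortest word of its class, so that Corollary~\ref{cor:finite-candidates} covers all factorizations. I will handle this by requiring the test to use the minimal-length observed word. For primes, $\omega(P)$ is present and attains the true minimum. For composites, any observed minimal-length word can be used, because no observed word ever falsifies a genuine exact cut. I will also handle the possibility that the observed minimum exceeds the true minimum: no sound test ever declares a true composite prime wrongly, because any observed word $w$ in $X=Y\cdot Z$ still factors through some exact cut. For this I will invoke the fact that every word of an exact product has a cut into $Y$ and $Z$ pieces, which holds for all words, not only shortest ones. Finally, I will remove the reliance on shortness for composites by checking all cuts of the observed minimal word.
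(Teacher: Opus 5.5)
Your proposal is correct and is essentially the paper's proof: a true prime $P$ is certified because each cut $\omega(P)=uv$ is refuted by the observed witness $x_{P,u,v}\in P\setminus([u]_\theta\cdot[v]_\theta)$. A composite $X=Y\cdot Z$ is never certified because whatever word $w\in X$ the test cuts, shortest or not, splits as $w=yz$ with $y\in Y$, $z\in Z$, which exposes a genuine exact split that no positive word can refute.

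Three clean-ups:
\begin{itemize}
\item Drop the claim that the observed minimal word of a composite class is truly shortest. The sample only guarantees a shortest word for primes, via $\omega(P)$, and your all-words cut argument makes the claim unnecessary.
\item Choose the representative shortlex-least, so that for a true prime it is exactly $\omega(P)$.
\item Define refutation simply as an observed word of the class outside $[u]_\theta\cdot[v]_\theta$. Your extra clause about remainders after $[u]_\theta$-prefixes, if quantified over all such prefixes, would hold for genuine composite cuts only if $[u]_\theta$ were prefix-free, and your argument does not establish that.
\end{itemize}
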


\begin{proof}
For a true prime, every candidate split exposed by a cut of $\omega(P)$ is refuted by its corresponding cut-separation witness.  Conversely, if a live class $X$ is composite, write $X=Y\cdot Z$ exactly with $Y,Z$ live non-unit classes.  For every $w\in X$, exact equality gives a factorization $w=yz$ with $y\in Y$ and $z\in Z$.  Hence the corresponding cut candidate is genuine and no positive word of $X$ can refute it.  Therefore a composite class is never certified as prime.
\end{proof}

\subsection{Polynomial recovery of exact factors and valid rules}

Once the observed partition and prime set are correct, Corollary~\ref{cor:prime-dag} turns factor recovery into a finite shortest-path problem.  For an observed word $w=a_1\cdots a_n$, form its \emph{relative prime interval DAG} with vertices $0,\ldots,n$ and an edge $i\to j$ whenever the substring $a_{i+1}\cdots a_j$ belongs to an observed prime class.  A minimum-edge path reads the exact prime factorization of $[w]$.

To recover valid productions, enumerate triples $(N,A,Q)$ of an observed target prime $N$, a first prime $A$, and an observed live relative class $Q$.  Let
\(\alpha(Q)=\operatorname{pf}(Q)\)
be the exact prime factorization returned by the interval DAG.  Test the candidate
\(N\to A\alpha(Q)\)
for quotient correctness and reject it if any proper contiguous block of length at least two has prime quotient.  All tests reduce to the finite observed partition and CFG membership in $W_h(D)$.

\begin{lemma}[Rule-recovery completeness under PTLD]
\label{lem:rule-recovery}
After the observed partition and prime set are correct, the preceding triple procedure returns all and only the valid productions whose witnesses occur in the sample.  If every target valid production has its canonical witness in the sample, the recovered branching-rule set is exactly $\Vcal$.
\end{lemma}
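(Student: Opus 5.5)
Soundness and completeness are handled separately, under the standing PTLD hypotheses and the assumption that the observed partition and prime set are already correct.

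\emph{Soundness.} Suppose the procedure outputs $N\to A\alpha(Q)$. The observed partition is exact, so the quotient-correctness test is a true test: $[A\cdot\overline{\alpha(Q)}]_\theta=N$. The pleonasm test runs over every proper contiguous block of length at least two. Each such block has a quotient class fixed by quotient multiplication on observed representatives, and the true prime set is known. Hence the non-pleonasm check is also exact, and every returned rule is valid.

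\emph{Completeness.} Let $N\to A\alpha$ be a valid production whose witness occurs in the sample. By this I mean that $\omega(A)$ and some word $x$ of $\overline\alpha$ are observed substrings, so $Q=[x]_\theta$ is an observed class. By Theorem~\ref{thm:ptld-vte-vtd} (VTE), the tail is exact: $\overline\alpha=[\overline\alpha]_\theta=Q$. So $\alpha$ is an exact prime factorization of $Q$. By Theorem~\ref{thm:ptld-uf} it is the unique one. Corollary~\ref{cor:prime-dag} shows that the minimum-edge path of the interval DAG on any observed word of $Q$ reads exactly this factorization. Therefore $\operatorname{pf}(Q)=\alpha$, the triple $(N,A,Q)$ is enumerated, and the candidate passes both tests.

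The second assertion follows by combining the two directions. Every target valid rule is returned, and nothing else is. By Theorem~\ref{thm:ptld-vte-vtd} there are at most $q^2$ valid rules, so the output is exactly $\Vcal$.

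The main obstacle is the identification $\operatorname{pf}(Q)=\alpha$. This needs VTE, so that the tail is an exact class rather than an under-saturated subset, together with uniqueness of exact factorization. Outside PTLD the enumeration over pairs $(A,Q)$ could miss under-saturated tails, as in Section~\ref{sec:36}. I would also check that the DAG is built only from observed prime classes. This is harmless: every substring of an observed word is an observed substring, and its class membership is decided by the exact partition.
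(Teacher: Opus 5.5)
Your proposal is correct and follows the paper's proof essentially step for step. Soundness comes from the correctness and non-pleonasticity tests being exact once the partition and prime set are right. Completeness comes from VTE making the tail an exact class, uniqueness of its factorization, and Corollary~\ref{cor:prime-dag} recovering it as $\operatorname{pf}(Q)$. The appeal to the $q^2$ bound at the end is unnecessary, since soundness plus completeness already give equality with $\Vcal$, but it does no harm.
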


\begin{proof}
Soundness is by the explicit correctness and non-pleonasticity tests.  For completeness, let $N\to A\beta$ be target-valid.  VTE makes $\beta$ the exact prime factorization of its tail class $Q=[\overline\beta]$.  A canonical rule witness exposes a representative of $Q$ as a substring.  Corollary~\ref{cor:prime-dag} therefore returns $\alpha(Q)=\beta$, so the triple $(N,A,Q)$ regenerates the target rule.
\end{proof}

\subsection{Characteristic sample}

For every prime $P$, among the contexts of minimum total length satisfying Lemma~\ref{lem:bounded-context}, fix the lexicographically least pair $(\ell_P,r_P)$.  This tie-break makes the learner deterministic; once the relative partition is exact, any live context for $P$ induces the same class test.  Define four finite families.

\paragraph{Anchors.}
\(\CS_{\rm anc} := \{\ell_P\omega(P)r_P:P\in\Pcal\}.\)

\paragraph{Rule witnesses.}
For every valid rule $P\to\alpha$,
\(\CS_{\rm rule} := \{\ell_P\omega(\alpha)r_P:P\to\alpha\in\Vcal\},\)
where $\omega(\alpha)$ concatenates the canonical shortest representatives of the prime symbols in $\alpha$.

\paragraph{Cut-separation witnesses.}
\(\CS_{\rm cut} := \{\ell_Px_{P,u,v}r_P: P\in\Pcal, \omega(P)=uv, u,v\ne\eps\}.\)

\paragraph{Lexical and start witnesses.}
For every live terminal $a$, include one sentence $\ell_{[a]}a r_{[a]}$; this ensures recovery of all lexical alternatives even when several letters lie in the same relative prime.  For each accepting relative class $X$, include the canonical word obtained by concatenating the $\omega(P)$ along its unique exact prime factorization.  Let the union of these two finite families be $\CS_{\rm lex/start}$.

Set
\[
  \CS^{\rm PTLD}(L,h)
  :=
  \CS_{\rm anc}
  \cup\CS_{\rm rule}
  \cup\CS_{\rm cut}
  \cup\CS_{\rm lex/start}.
\]

\begin{lemma}[Characteristic weak completeness]
\label{lem:weak-core}
If $D\subseteq L$ contains $\CS_{\rm anc}\cup\CS_{\rm rule}\cup\CS_{\rm lex/start}$, then
\(L(W_h(D))=L.\)
\end{lemma}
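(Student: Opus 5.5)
The inclusion $L(W_h(D))\subseteq L$ is Theorem~\ref{thm:weak-sound}. For the converse we show, by induction on the parse structure, that the sample grammar simulates the canonical grammar $G^\star_{L,h}$. The invariant is:
\[
  P\Rightarrow^*_{G^\star} w
  \quad\Longrightarrow\quad
  \langle\omega(P)\rangle\Rightarrow^*_{W_h(D)} w .
\]
Once the invariant holds, completeness at the start level follows. Every $w\in L$ lies in an accepting class $X$, and $X$ has a unique exact factorization $P_1\cdots P_k$ (Theorem~\ref{thm:ptld-uf}). The start witness $\omega(P_1)\cdots\omega(P_k)$ lies in $D$, so $S_D\to\langle\omega(P_1)\cdots\omega(P_k)\rangle$ is a rule. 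Binary concatenation rules split this nonterminal into $\langle\omega(P_1)\rangle\cdots\langle\omega(P_k)\rangle$; these are substrings of the witness, hence in $\operatorname{Sub}(D)$. By exactness, $w$ factors as $w_1\cdots w_k$ with $w_i\in P_i$. By Theorem~\ref{thm:direct-presentation}, $P_i\Rightarrow^*_{G^\star}w_i$, and the invariant finishes the argument.

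The invariant is proved by induction on the derivation height in $G^\star$.

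\emph{Lexical case.} Suppose $P\to a$ with $a\in P$. The lexical witness $\ell_{[a]}ar_{[a]}$ and the anchor $\ell_P\omega(P)r_P$ lie in $D$. Both $a$ and $\omega(P)$ lie in $P$, so they have the same $h$-type. To get the guarded substitution $\langle\omega(P)\rangle\leftrightarrow\langle a\rangle$ we need a shared sample context. This is why both witnesses use the same canonical context: since $[a]=P$, we have $(\ell_{[a]},r_{[a]})=(\ell_P,r_P)$. Then $\langle\omega(P)\rangle\to\langle a\rangle\to a$.

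\emph{Branching case.} Suppose $P\to\alpha=A_1\cdots A_k$ is a valid rule. The rule witness $\ell_P\omega(\alpha)r_P$ and the anchor $\ell_P\omega(P)r_P$ share the context $(\ell_P,r_P)$. Correctness gives $h(\omega(\alpha))=h(P)=h(\omega(P))$. Hence the substitution $\langle\omega(P)\rangle\to\langle\omega(\alpha)\rangle$ is present. Binary rules then split $\langle\omega(\alpha)\rangle$ into $\langle\omega(A_1)\rangle\cdots\langle\omega(A_k)\rangle$, all of which are observed substrings. The induction hypothesis applied to each $A_i$ completes the step.

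The main obstacle is the bookkeeping of contexts. The canonical context $(\ell_P,r_P)$ is fixed per prime as the lexicographically least minimal one, and both the anchor and the rule and lexical witnesses for $P$ must use exactly that context; the definitions of $\CS_{\rm anc}$, $\CS_{\rm rule}$ and $\CS_{\rm lex/start}$ arrange this, and it is the point to verify carefully. A second point to check is that the binary rules $\langle uv\rangle\to\langle u\rangle\langle v\rangle$ realize an arbitrary $k$-ary split. They do: split off one factor at a time, and note that every intermediate suffix is a substring of the witness. Cut-separation witnesses are not needed for this lemma.
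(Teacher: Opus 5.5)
Your proof is correct and follows the paper's argument. The paper likewise uses the guarded substitution between the anchor $\ell_P\omega(P)r_P$ and each rule witness through the shared canonical context $(\ell_P,r_P)$, binary splitting to simulate each valid production, lexical witnesses for terminal alternatives, and splitting of each start witness along its exact prime factorization. Your explicit invariant ($P\Rightarrow^*_{G^\star}w$ implies $\langle\omega(P)\rangle\Rightarrow^*_{W_h(D)}w$) and your observation that $(\ell_{[a]},r_{[a]})=(\ell_P,r_P)$ spell out details that the paper's terse proof leaves implicit.
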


\begin{proof}
Soundness is Theorem~\ref{thm:weak-sound}.  For the converse, compare an anchor $\ell_P\omega(P)r_P$ with the rule witness $\ell_P\omega(\alpha)r_P$ for a valid rule $P\to\alpha$.  Their central factors have the same $h$-type and the same sample context, so $W_h(D)$ contains the guarded substitution between their substring nonterminals.  Binary splitting then simulates the canonical production $P\to\alpha$.  Lexical witnesses provide every terminal alternative.  Finally, each start witness is itself a positive sentence whose substring nonterminal splits into the unique exact prime factorization of its accepting class.  Hence $W_h(D)$ simulates every derivation of $G^\star_{L,h}$ and generates all of $L$.
\end{proof}

The size of the characteristic set is explicit.  PTLD gives $|\Vcal|\le q^2$, every canonical prime word has length at most $\tau$, and there are at most $|M|$ accepting relative classes.  There are at most
\[
  \sum_{P\in\Pcal}(|\omega(P)|-1)\le q(\tau-1)
\]
cut-separation witnesses.  Therefore
\[
  |\CS^{\rm PTLD}| \le q^2+q\tau+|\Sigma|+|M|.
\]
Anchors and rule witnesses have length at most $(q+1)\tau$, a cut-separation witness has length at most $q\tau+\eta_{\rm cut}(L,h)$, lexical witnesses have length at most $q\tau+1$, and start witnesses have length at most $\tau$.  Consequently
\[
  \max_{w\in\CS^{\rm PTLD}}|w|
  \le
  \max\bigl\{(q+1)\tau,\;q\tau+\eta_{\rm cut}(L,h),\;q\tau+1\bigr\}.
\]
Thus the target has an explicit finite characteristic sample.  Quantitatively, for fixed $h$ and fixed alphabet, its total data size is polynomial in $q$, $\tau$, and $\eta_{\rm cut}(L,h)$.

\subsection{Relative-ASGOLD}

The name follows Clark's ASGOLD strong learner~\cite{Clark2013Trees}.  Relative-ASGOLD is its fixed-$h$ guarded analogue: the observable partition is refined by the fixed monoid type, while the reconstruction target is the canonical relative-prime grammar.

\paragraph{Primitive tests.}
Once the weak core is complete and the canonical context $(\ell_P,r_P)$ of an observed prime $P$ is present, membership of an arbitrary word $z$ in $P$ is tested by
\[
  \operatorname{ClassTest}_D(z,P)
  \iff
  h(z)=h(\omega(P))
  \ \text{and}\ 
  \ell_P z r_P\in L(W_h(D)).
\]
For an observed candidate class $X$ with canonical representative $w$, the prime test is finite.  For each nontrivial cut $w=uv$, let $Y=[u]$ and $Z=[v]$ in the observed partition.  An observed word $x\in X$ refutes the candidate exact split $X=Y\cdot Z$ exactly when no cut $x=x_1x_2$ satisfies both $\operatorname{ClassTest}_D(x_1,Y)$ and $\operatorname{ClassTest}_D(x_2,Z)$.  All such cuts are among observed substrings, so this is a finite search; $\operatorname{IsPrime}_D(X)$ accepts precisely when every canonical cut has an observed refuter.  The cut-separation part of the characteristic sample guarantees those refuters for every true prime.

Using the same primitive, the canonicalizer implements $\operatorname{PrimeFact}_D$ by a minimum path in the relative prime interval DAG, $\operatorname{Correct}_D(P,\alpha)$ by testing the quotient class of a representative of $\alpha$, and $\operatorname{Valid}_D(P,\alpha)$ by additionally testing all proper contiguous blocks of length at least two.  Thus every structural test used below reduces to finitely many $h$-computations and CFG-membership tests.

The strong learner takes a finite positive sample $D$ and performs the following finite operations:
\begin{enumerate}[leftmargin=2.2em]
\item construct the guarded weak grammar $W_h(D)$;
\item partition observed substrings by the signatures $\sigma_{D,W_h(D)}$;
\item choose the shortlex least observed representative of each class;
\item apply the finite observed-refuter cut test above, justified by Corollary~\ref{cor:prime-certification}, to select sample primes;
\item recover prime decompositions by minimum paths in the relative prime interval DAGs;
\item enumerate the $(N,A,Q)$ triples and keep exactly the correct, non-pleonastic rules;
\item recover observed lexical rules and accepting start factorizations;
\item output the resulting grammar using canonical shortlex labels.
\end{enumerate}
Call this learner \emph{Relative-ASGOLD}.

\begin{theorem}[PTLD strong structural reconstruction]
\label{thm:ptld-strong}
Fix a finite monoid morphism $h:\Sigma^*\to M$.  Let $L\subseteq\Sigma^+$ be an $h$-substitutable context-free language satisfying unit separation, finite relative prime spectrum, and PTLD.  Then Relative-ASGOLD identifies the canonical direct relative-prime grammar $G^\star_{L,h}$ strongly from positive strings alone.

Moreover, each hypothesis update is computable in time polynomial in the size of the current finite sample.  The target admits an explicit finite characteristic sample; its cardinality is at most $q^2+q\tau+|\Sigma|+|M|$, and its maximum word length is bounded by
\[
  \max\bigl\{(q+1)\tau,\;q\tau+\eta_{\rm cut}(L,h),\;q\tau+1\bigr\}.
\]
\end{theorem}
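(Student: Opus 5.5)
The plan is to prove the three claims separately: strong identification, polynomial update time, and the size bounds on the characteristic sample. Identification comes from the characteristic sample $\CS^{\rm PTLD}(L,h)$. Fix any finite sample $D$ with $\CS^{\rm PTLD}\subseteq D\subseteq L$; every positive text eventually contains such a $D$, and every later prefix content is again such a sample. So it suffices to show $\mathcal A(D)=G^\star_{L,h}$ for every such $D$. I will follow the eight steps of Relative-ASGOLD in order.

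First, Lemma~\ref{lem:weak-core} gives $L(W_h(D))=L$. Lemma~\ref{lem:no-false-merge} then shows that the signature partition of $\operatorname{Sub}(D)$ is exactly $\theta_{L,h}$ restricted to observed substrings. Because the anchors put $\omega(P)$ into $\operatorname{Sub}(D)$, the shortlex least word of $P$ is observed. I must check that no shorter or shortlex-smaller word of $P$ is observed but missing from the true class, which is immediate since the partition is exact. Hence the representative chosen in step~3 is $\omega(P)$. Next, $\operatorname{ClassTest}_D$ is exact, because $L(W_h(D))=L$ and, by $h$-substitutability, a single accepting context plus the $h$-type determines the class.

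Corollary~\ref{cor:prime-certification}, with the cut witnesses in $\CS_{\rm cut}$, then shows that step~4 selects exactly the observed true primes. Every true prime is observed via its anchor, so the selected set is all of $\Pcal$. Step~5 returns the true unique factorizations by Corollary~\ref{cor:prime-dag}. Step~6 returns exactly $\Vcal$ by Lemma~\ref{lem:rule-recovery}, since $\CS_{\rm rule}$ supplies a canonical witness for each valid rule; the tail class $Q$ appears as a substring of $\omega(\alpha)$. Lexical rules are recovered from the lexical witnesses. Start rules are recovered from the start witnesses, because accepting classes are the classes of observed sentences: each nonempty $L_m$ is one class, by the remark after Theorem~\ref{thm:direct-presentation}. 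Their factorizations are unique by Theorem~\ref{thm:ptld-uf}. Canonical shortlex labels make the output literally equal to $G^\star_{L,h}$.

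For the time bound, $|\operatorname{Sub}(D)|=O(\|D\|^2)$ and $W_h(D)$ has polynomially many rules. Each signature requires $|\operatorname{Sub}(D)|$ CFG membership tests, which are polynomial by CYK after binarizing the unit cycles. The cut tests, interval DAGs of size $O(|w|^2)$, and the validity checks over proper blocks are all polynomial. The one thing to watch is step~6. A valid rule has length at most $1+|\alpha(Q)|\le\|D\|$, and there are at most polynomially many triples $(N,A,Q)$, so enumeration over triples, rather than over prime words, keeps it polynomial. The cardinality and length bounds were derived just before the theorem from $|\Vcal|\le q^2$ (Theorem~\ref{thm:ptld-vte-vtd}) and Lemma~\ref{lem:bounded-context}; I will simply cite that computation.

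I expect the main obstacle to be the claim that the output does not depend on the extra, non-characteristic words in $D$. One risk is that spurious observed classes could be certified prime. This cannot happen, because certification is sound for any exact partition (the converse half of Corollary~\ref{cor:prime-certification}), and that soundness does not require characteristic data. The other risk is that extra triples could add rules. This is excluded by the soundness half of Lemma~\ref{lem:rule-recovery}, since every rule kept is truly valid. I will state these monotonicity facts explicitly, as they are what turn "contains $\CS$" into stabilization on every text.
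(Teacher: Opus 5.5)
Your proposal is correct and follows essentially the same route as the paper's proof: characteristic sample $\Rightarrow$ weak-core completeness $\Rightarrow$ exact observed partition $\Rightarrow$ prime certification via cut witnesses $\Rightarrow$ interval-DAG factorization $\Rightarrow$ triple-based valid-rule recovery, followed by the same polynomial counting for updates and a citation of the previously computed sample-size bounds. Your explicit monotonicity remarks are the soundness halves of prime certification and rule recovery on an exact partition. They spell out what the paper compresses into the statement that every positive supersample of the characteristic sample produces literally the same grammar.
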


\begin{proof}
Let $D$ contain $\CS^{\rm PTLD}$.  Lemma~\ref{lem:weak-core} gives $L(W_h(D))=L$, so Lemma~\ref{lem:no-false-merge} makes the observed relative partition exact.  Corollary~\ref{cor:prime-certification} then recovers exactly the true observed primes.  All target primes are present because their anchors are present.  Corollary~\ref{cor:prime-dag} recovers exact prime factorizations, and Lemma~\ref{lem:rule-recovery} recovers exactly the valid rules.  Lexical and start witnesses recover the remaining canonical structure.  Canonical shortlex names are already present and cannot later change, so every positive supersample of $D$ produces literally the same grammar $G^\star_{L,h}$.

For complexity, let $N=\|D\|$ and put $s=|\operatorname{Sub}(D)|$.  Then $s\le N(N+1)/2=O(N^2)$.  The grammar $W_h(D)$ has $s$ substring nonterminals; its binary split rules are bounded by $O(sN)$, while its guarded substitution rules are bounded by $O(s^2)$, so its total size is polynomial in $N$.  Signature construction uses at most $s^2=O(N^4)$ CFG-membership tests, each on a word of length $O(N)$ in a grammar of polynomial size.  The observed-refuter prime tests inspect at most $O(sN)$ canonical cuts and only polynomially many observed cuts and class tests.  Every relative prime interval DAG has $O(N)$ vertices and $O(N^2)$ possible edges, so minimum-path factor recovery is polynomial.  Finally, the rule stage considers at most $s^3=O(N^6)$ triples $(N,A,Q)$ of observed classes, with correctness and non-pleonasticity checked by polynomially many class and interval tests.  Since ordinary CFG membership is polynomial in grammar and input size, every stage of one hypothesis update is polynomial in $N$.  The characteristic-sample bounds were established above.
\end{proof}

\begin{remark}[Relation to the fixed-$h$ weak learner]
Theorem~\ref{thm:ptld-strong} does not require the general fixed-$h$ weak learner as a black box: the guarded substring grammar supplies the weak core directly on the PTLD subclass.  The earlier fixed-$h$ reconstruction theorem remains more general, while the present theorem trades the algebraic PTLD restriction for convergence to a canonical structural representation.
\end{remark}

\subsection{A polynomial-data PTLD subbranch and the thickness obstruction}

The only uncontrolled length parameter in Theorem~\ref{thm:ptld-strong}
is the cut-separation radius.  There is, however, a natural PTLD subbranch
on which it vanishes identically.

\begin{definition}[Lexically anchored PTLD]
A PTLD pair with finite relative prime spectrum is \emph{lexically
anchored} if every relative prime $P$ contains at least one terminal
letter, that is,
\[
  P\cap\Sigma\ne\varnothing
  \qquad(P\in\Pcal).
\]
\end{definition}

\begin{corollary}[Polynomial time and data on the lexically anchored branch]
\label{cor:lexically-anchored-polynomial}
Fix the finite alphabet $\Sigma$ and finite observer $h$.  On the class of
unit-separated, $h$-substitutable, lexically anchored PTLD languages with
finite relative prime spectrum, Relative-ASGOLD has polynomial update time
and polynomial characteristic data in the size of the explicit canonical
direct relative-prime grammar.
\end{corollary}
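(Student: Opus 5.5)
The plan is to specialize Theorem~\ref{thm:ptld-strong} and show that the one parameter it leaves uncontrolled, $\eta_{\rm cut}(L,h)$, is zero on the lexically anchored branch. The remaining parameters $q$, $\tau$, $|\Sigma|$, $|M|$ are then bounded polynomially by the size of the explicit grammar $G^\star_{L,h}$.

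\textbf{Step 1: the cut-separation radius vanishes.} Lexical anchoring gives every prime $P$ a letter $a\in P\cap\Sigma$. Its shortlex least word $\omega(P)$ therefore has length one and admits no nontrivial cut $\omega(P)=uv$ with $u,v\ne\eps$. By the convention in the definition, $\eta_{\rm cut}(L,h)=0$, and the family $\CS_{\rm cut}$ is empty. The sanity check is that a length-one minimum exists: shortlex orders first by length, and unit separation excludes $\eps$ from $P$.

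\textbf{Step 2: bound $q$ and $\tau$ by grammar size.} Let $|G^\star|$ denote the total size of the explicit canonical grammar, namely the number of symbols over all productions. Every prime is a nonterminal of $G^\star$, so $q\le|G^\star|$; by anchoring, each prime also carries at least one lexical rule. For thickness, every prime has a one-letter yield, so $\ell(P)=1$ for each prime and $\ell$ of a terminal is $1$. Hence $\ell(\alpha)=|\alpha|$ for any right-hand side, which gives $\tau\le\max_{A\to\alpha}|\alpha|\le|G^\star|$. This is the key point and the expected obstacle. In general, shortest yields in a CFG can be exponential in grammar size; this is the thickness obstruction the subsection alludes to. Anchoring removes it because the shortest yields of all nonterminals collapse to length one. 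The start rules use exact factorizations of accepting classes whose right-hand sides are listed explicitly, so they are also bounded by $|G^\star|$.

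\textbf{Step 3: assemble the bounds.} By Theorem~\ref{thm:ptld-strong}, the characteristic sample has cardinality at most $q^2+q\tau+|\Sigma|+|M|$ and maximum word length at most $\max\{(q+1)\tau,\,q\tau+1\}$. Here $\Sigma$ and $h$ are fixed, and the $\eta_{\rm cut}$ term has disappeared. The total data size $\|\CS^{\rm PTLD}\|$ is therefore $O(|G^\star|^4)$. Update time is polynomial in the current sample size by Theorem~\ref{thm:ptld-strong}. Finally, once the sample contains the characteristic set, the learner outputs $G^\star$, so convergence is reached after polynomially much data in $|G^\star|$.
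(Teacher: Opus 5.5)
Your proposal is correct and follows the paper's proof essentially step for step. Lexical anchoring makes every $\omega(P)$ a single letter, so $\eta_{\rm cut}=0$; it also forces $\ell(\alpha)=|\alpha|$, so $\tau$ is just the maximum right-hand-side length. Theorem~\ref{thm:ptld-strong} then gives polynomial characteristic data and polynomial update time. Your explicit $O(|G^\star|^4)$ total-size estimate simply spells out the paper's ``polynomial in representation size'' conclusion.
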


\begin{proof}
If every prime contains a letter, then its canonical shortlex word has
length one.  Hence no canonical prime word has a nontrivial cut and
\[
  \eta_{\rm cut}(L,h)=0.
\]
Moreover, for a prime sequence $\alpha=P_1\cdots P_m$ one has
$\ell(\alpha)=m$, so the structural thickness $\tau$ is just the largest
right-hand-side length of the explicit canonical grammar.  Theorem~\ref{thm:ptld-strong}
therefore gives
\[
  |\CS^{\rm PTLD}|\le q^2+q\tau+|\Sigma|+|M|
\]
and
\[
  \max_{w\in\CS^{\rm PTLD}}|w|\le(q+1)\tau.
\]
For fixed $\Sigma$ and $h$, both $q$ and $\tau$ are bounded by the explicit
encoding size of the canonical grammar.  Thus the total characteristic
data size is polynomial in that representation size.  Polynomial update
time is already part of Theorem~\ref{thm:ptld-strong}.
\end{proof}

This does not imply a polynomial transfer from an arbitrary compact CFG
presentation to the canonical PTLD representation.  The obstruction is
already present for finite unary languages.

\begin{proposition}[Exponential thickness under compact CFG presentation]
\label{prop:compact-thickness-obstruction}
There is a family of unit-separated substitutable PTLD languages having
$q=1$ and $\eta_{\rm cut}=0$ that admit CFGs of size $O(n)$ but whose
canonical PTLD thickness is $2^n$.
\end{proposition}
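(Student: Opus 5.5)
The natural candidate is the unary singleton family $L_n:=\{a^{2^n}\}$ over $\Sigma=\{a\}$ with the trivial observer $h:\Sigma^*\to\{1\}$ replaced by a small observer that separates $\eps$. The trivial monoid cannot separate $\eps$, since $[\eps]$ would then contain every word outside the factor set. So I would take $M=\{1,s\}$ with $s^2=s$ and $h(a)=s$. This is fixed and independent of $n$, as the corollary's framework requires. Then $h^{-1}(1)=\{\eps\}$, so $(L_n,h)$ is unit-separated. On nonempty words $\theta$ is just $\equiv_{L_n}$, and the live classes are the singletons $\{a^i\}$ for $1\le i\le 2^n$.

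The trouble is that every $\{a^i\}$ with $i\ge2$ equals $\{a\}\cdot\{a^{i-1}\}$ exactly. So the only prime is $A=\{a\}$, which gives $q=1$ and $\eta_{\rm cut}=0$, since $\omega(A)=a$ has no nontrivial cut. Substitutability is immediate: two words sharing a context in a singleton language have equal length, hence are equal. PTLD reduces to equations $qr=s$ with $q,r$ ranging over $\{1,s\}$, and here $q=s$ admits both $r=1$ and $r=s$, so PTLD fails. I would repair this with a cancellative observer. The cleanest choice is the group $C_2=\{1,g\}$ with $h(a)=g$; PTLD then holds automatically by the groups remark. Unit separation now fails, because $h(aa)=1$, but $aa$ is distinguished from $\eps$ syntactically (the context $(a^{2^n-2},\eps)$ accepts $aa$ and not $\eps$). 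So $[\eps]_\theta=\{\eps\}$ still holds, and that is the only thing unit separation demands. The live classes remain the singletons, so $q=1$ and $\eta_{\rm cut}=0$ as before.

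Next I would read off the canonical grammar $G^\star$. The only correct rules are $A\to A^k$ with $k\ge2$ whose quotient product is prime, i.e. $\{a^k\}=A$, which is impossible. So there are no branching rules, and the grammar has $A\to a$ and the single start rule $S\to A^{2^n}$, which is the unique exact factorization of the accepting class $\{a^{2^n}\}$. By definition $\tau=\max\ell(\alpha)$ over its rules, and this is $2^n$. On the other side, the compact CFG $S_n\to S_{n-1}S_{n-1},\dots,S_0\to a$ has $O(n)$ size and generates $L_n$.

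The step I expect to need the most care is checking that the chosen observer makes all hypotheses hold simultaneously: unit separation in the $\theta$-sense, $h$-substitutability, and PTLD over $\mathsf T_{L,h}=\{1,g\}$. The group choice handles PTLD; unit separation then rests on the syntactic argument above; substitutability follows from the singleton nature of $L_n$. If the intended reading of ``substitutable'' is ordinary substitutability, that also holds trivially for singleton languages.
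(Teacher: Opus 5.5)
Your proof is correct and follows the paper's: the same family $L_n=\{a^{2^n}\}$, the same singleton live classes $\{a^i\}=\{a\}^i$ giving $q=1$ and $\eta_{\rm cut}=0$, the same start rule $S\to A^{2^n}$ forcing $\tau=2^n$, and the same $O(n)$ doubling grammar. Your $C_2$ observer also works, because $aa$ is syntactically separated from $\eps$.

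Your reason for discarding the trivial observer, which is the one the paper actually uses, is mistaken. With $M=\{1\}$ one has $\theta=\equiv_{L_n}$. Then $[\eps]_\theta$ contains no dead word, since a dead word has empty distribution while $\D_{L_n}(\eps)\neq\varnothing$. It also contains no $a^k$ with $1\le k\le 2^n$, since the completions of $a^k$ have total length $2^n-k\neq 2^n$. Hence $[\eps]_\theta=\{\eps\}$ and PTLD holds trivially. This is the same syntactic argument you later give for $aa$ under $C_2$, so the detour through $\{1,s\}$ and $C_2$ is unnecessary.
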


\begin{proof}
Let
\[
  L_n=\{a^{2^n}\}
\]
over the unary alphabet, with the trivial one-element group observer.
The live factors are $a^i$ for $0\le i\le2^n$.  If $a^i$ and $a^j$
share an accepting context $(a^r,a^s)$, then
$r+i+s=2^n=r+j+s$, so $i=j$; hence $L_n$ is substitutable.  The empty
class is the singleton $\{\eps\}$, and PTLD is automatic for the trivial
group.

Every live non-unit relative class is the singleton $\{a^i\}$, and
\[
  \{a^i\}=\{a\}^i.
\]
Thus the only relative prime is $P=\{a\}$, so $q=1$ and
$\eta_{\rm cut}=0$.  The accepting class has exact start factorization
$P^{2^n}$, whence the canonical direct grammar has
\[
  \tau=2^n.
\]
On the other hand $L_n$ has an acyclic doubling grammar of size $O(n)$:
\[
  A_0\to a,\qquad
  A_{i+1}\to A_iA_i\quad(0\le i<n),\qquad
  S\to A_n.
\]
Therefore no polynomial in the size of an arbitrary compact generating
CFG can bound the canonical PTLD thickness in general.
\end{proof}

The proposition isolates the remaining representation issue: the general
PTLD theorem is parameterized by canonical thickness, just as compact CFGs
can hide exponentially long shortest yields.  Polynomial transfer from an
arbitrary generating grammar therefore requires additional restrictions on
that presentation or on the language class.

\subsection{A PTLD language beyond every finite \texorpdfstring{$k,\ell$}{k,l}-substitutable class}

\begin{example}[A nonregular PTLD language beyond all finite context windows]
\label{ex:Lstar}
Over $\Sigma=\{a,b,x,y,z\}$ put
\[
L^\star
=
\{a^nxb^n:n\ge0\}
\ \cup\
\{a^nyb^n:n\ge0\}
\ \cup\
\{za^nxb^n:n\ge0\}.
\]
Let the observer take values in the finite group $C_2\times C_2$ and be defined on letters by
\[
h(a)=h(b)=h(x)=(0,0),\qquad
h(y)=(1,0),\qquad
h(z)=(0,1).
\]
The language is deterministic context-free: a pushdown automaton optionally records the initial $z$, pushes the $a$-run, reads the unique center $x$ or $y$, and pops on the $b$-run, rejecting the $y$-branch when the initial marker was $z$.  It is nonregular because
\[
L^\star\cap a^*xb^*=\{a^nxb^n:n\ge0\}.
\]

We first compare the example directly with Yoshinaka's bounded-context hierarchy.  Recall that $L$ is $k,\ell$-substitutable when, for every $u\in\Sigma^k$ and $v\in\Sigma^\ell$, two nonempty guarded strings $uy_1v$ and $uy_2v$ that occur in one common outer context are interchangeable in every outer context~\cite{Yoshinaka2008}.

\begin{proposition}[Outside every finite $k,\ell$ window]
\label{prop:Lstar-not-kl}
For every finite $k,\ell\ge0$, the language $L^\star$ is not $k,\ell$-substitutable.
\end{proposition}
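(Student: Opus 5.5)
For each fixed pair $(k,\ell)$ I will exhibit an explicit violation of $k,\ell$-substitutability: two guarded strings $uy_1v$ and $uy_2v$, with $u\in\Sigma^k$ and $v\in\Sigma^\ell$, that share one accepting outer context but are separated by another. The violation should exploit the one asymmetry of $L^\star$. The $z$-prefix licenses the centre $x$ but not the centre $y$, and a finite window around the centre cannot see the distant initial $z$ once $n$ is large.

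Fix $k,\ell$ and choose $n\ge\max(k,\ell)+1$. Take
\[
  u=a^k,\qquad v=b^\ell,\qquad y_1=a^{n-k}xb^{n-\ell},\qquad y_2=a^{n-k}yb^{n-\ell}.
\]
Both $y_i$ are nonempty. Then $uy_1v=a^nxb^n$ and $uy_2v=a^nyb^n$, and both lie in $L^\star$ with the common empty outer context $(\eps,\eps)$. The outer context $(z,\eps)$ separates them. It gives $za^nxb^n\in L^\star$, while $za^nyb^n\notin L^\star$, because no word of $L^\star$ contains both $z$ and $y$. So the two guarded strings occur in a common context but are not interchangeable in every context, and $k,\ell$-substitutability fails.

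The only points that need care are the following.
\begin{itemize}
\item The guard must occupy exactly $k$ and $\ell$ positions inside the common word. This holds because $n\ge k$ and $n\ge\ell$.
\item The case $k=\ell=0$ must also work. It does, since the middles are then the whole words $a^nxb^n$ and $a^nyb^n$, and the same separating context applies.
\item I must check that $za^nyb^n\notin L^\star$ by the definition of $L^\star$. Every component either lacks $z$ or has centre $x$.
\end{itemize}

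I see no real obstacle. The main thing to state explicitly is that Yoshinaka's definition quantifies over all guards $u\in\Sigma^k$ and $v\in\Sigma^\ell$, so a single bad guard suffices.
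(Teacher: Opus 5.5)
Your proof is correct and is essentially the paper's own argument: the same guards $u=a^k$ and $v=b^\ell$, the same middles $a^{n-k}xb^{n-\ell}$ and $a^{n-k}yb^{n-\ell}$, the shared empty outer context, and the separating left context $z$. The $+1$ in your choice of $n$ is harmless but unnecessary, because each middle already contains a centre letter and is therefore nonempty, so $n\ge\max\{k,\ell\}$ suffices, as in the paper.
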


\begin{proof}
Fix $k,\ell$ and choose $n\ge\max\{k,\ell\}$.  Put
\[
u=a^k,\qquad v=b^\ell,
\]
\[
y_1=a^{n-k}xb^{n-\ell},\qquad
y_2=a^{n-k}yb^{n-\ell}.
\]
Then
\[
uy_1v=a^nxb^n\in L^\star,
\qquad
uy_2v=a^nyb^n\in L^\star,
\]
so the two guarded strings occur in the common empty outer context.  But the outer left context $z$ separates them:
\[
zuy_1v=za^nxb^n\in L^\star,
\qquad
zuy_2v=za^nyb^n\notin L^\star.
\]
Hence the $k,\ell$ substitution condition fails.  Since $k,\ell$ were arbitrary, one language lies outside every finite level of the hierarchy.
\end{proof}

The live syntactic classes admit simple normal forms.  Put
\[
U=\{\eps\},\qquad
A_i=\{a^i\}\ (i\ge1),\qquad
B_j=\{b^j\}\ (j\ge1),
\]
\[
H_i=\{za^i\}\ (i\ge0),
\]
and, for $d\in\mathbb Z$,
\[
X_d=\{a^ixb^j:i,j\ge0,\ i-j=d\},
\qquad
Y_d=\{a^iyb^j:i,j\ge0,\ i-j=d\}.
\]
Finally, for $d\ge0$, put
\[
T_d=\{za^{j+d}xb^j:j\ge0\}.
\]
Every live factor is in exactly one displayed class.  Indeed, a factor either lies wholly in an $a$-run, wholly in a $b$-run, is a prefix beginning with the unique $z$, or contains the unique center $x$ or $y$; a factor containing both $z$ and $x$ must be a prefix through the center, which gives $T_d$ with $d\ge0$.

The displayed sets are precisely the syntactic classes.  For $X_d$, an accepting context has the form
\[
(a^p,b^{p+d})\quad\text{or}\quad(za^p,b^{p+d}),
\qquad p\ge\max\{0,-d\},
\]
whereas for $Y_d$ only the first family is available.  Thus each of these classes depends only on $d$.  For $T_d$ the two-sided distribution is the singleton
\[
\D_{L^\star}(T_d)=\{(\eps,b^d)\}.
\]
The singleton families $A_i,B_j,H_i$ are distinguished by their required completion lengths.  These descriptions also separate the displayed families from one another.

They make $h$-substitutability transparent.  The observer types are
\[
h(U)=h(A_i)=h(B_j)=h(X_d)=(0,0),
\]
\[
h(Y_d)=(1,0),
\qquad
h(H_i)=h(T_d)=(0,1).
\]
There is no live class of type $(1,1)$.  Among classes of type $(0,0)$, distinct $A_i$'s or $B_j$'s require different completion lengths; an $A_i$-context has the unique center to the right of the factor, a $B_j$-context has it to the left, and an $X_d$-context has the center inside the factor, so these three shapes cannot share an accepting context across families.  The empty class $U$ cannot share an accepting context with a nonempty class of the same type: inserting $A_i$ or $B_j$ destroys the balance of a completed sentence, while inserting an $X_d$ introduces a second center.  Among the $Y_d$'s a common accepting context forces the same imbalance $d$.  Among type $(0,1)$ classes, an $H_i$ requires the center in its right context while a $T_d$ already contains the center; hence the two families have disjoint distributions, and within each family the completion length fixes the index.  Therefore equal $h$-type plus one common accepting context always forces equality of the syntactic class.  Thus $L^\star$ is $h$-substitutable.  The same normal forms give $[\eps]_{L^\star,h}=U=\{\eps\}$, so unit separation holds.

Because $C_2\times C_2$ is a group, PTLD holds: from $qr=p=qr'$ one cancels $q$ on the left and obtains $r=r'$.  The relative prime spectrum has exactly five elements,
\[
A:=A_1=\{a\},\qquad
B:=B_1=\{b\},\qquad
Z:=H_0=\{z\},\qquad
X:=X_0,\qquad
Y:=Y_0.
\]
The five displayed classes are prime because each has a one-letter shortest representative.  Every other live non-unit class has one of the exact decompositions
\[
A_i=A^i,\qquad B_j=B^j,\qquad H_i=ZA^i,
\]
\[
X_d=
\begin{cases}
A^dX,&d>0,\\
X,&d=0,\\
XB^{-d},&d<0,
\end{cases}
\qquad
Y_d=
\begin{cases}
A^dY,&d>0,\\
Y,&d=0,\\
YB^{-d},&d<0,
\end{cases}
\]
and
\[
T_d=ZA^dX\qquad(d\ge0).
\]
Hence $q=5$; in particular the example has finite relative prime spectrum despite lying outside every finite $k,\ell$ window.

The normal forms also determine the valid branching rules.  A correct branching return to $X$ must have the form
\[
X\to A^mXB^m\qquad(m\ge1),
\]
and a correct branching return to $Y$ must have the form
\[
Y\to A^mYB^m\qquad(m\ge1).
\]
No other prime target admits a branching return.  When $m>1$, the right-hand side contains the proper prime-valued block $AXB$ or $AYB$, respectively.  Thus the only valid branching rules are
\[
X\to AXB,
\qquad
Y\to AYB.
\]
The canonical grammar is therefore
\[
\begin{aligned}
S&\to X\mid Y\mid ZX,\\
X&\to AXB\mid x,\\
Y&\to AYB\mid y,\\
A&\to a,\qquad B\to b,\qquad Z\to z.
\end{aligned}
\]
Here $q=5$ and $\tau=3$.  Every canonical prime word has length one, so $\CS_{\rm cut}=\varnothing$ and $\eta_{\rm cut}(L^\star,h)=0$.  Since $|\Sigma|=5$ and $|C_2\times C_2|=4$, Theorem~\ref{thm:ptld-strong} gives the explicit (not necessarily minimal) estimates
\[
|\CS^{\rm PTLD}|\le 5^2+5\cdot3+5+4=49
\]
and
\[
\max_{w\in\CS^{\rm PTLD}}|w|\le(5+1)3=18.
\]
All five primes are lexically anchored.  Hence
Corollary~\ref{cor:lexically-anchored-polynomial} applies: this same
nonregular witness lies in a PTLD subbranch with polynomial update time and
polynomial characteristic data in canonical grammar size.  Thus the
polynomial-data strong-learning branch itself reaches outside the entire
finite $k,\ell$-substitutable hierarchy, while retaining a five-prime
recursive canonical presentation.
\end{example}

\subsection{Finite-state strong reconstruction beyond FRP}

The direct learner above cannot cover the $36$-element witness, because that pair has infinitely many valid productions.  FSRP nevertheless supplies a canonical finite-state structural target through the minimal residual controllers of Section~\ref{sec:fsrp}.

\begin{theorem}[Weak-to-strong finite-state lifting]
\label{thm:fsrp-strong}
Fix $h:\Sigma^*\to M$.  Consider a class of $h$-substitutable context-free languages satisfying unit separation, finite relative prime spectrum, and FSRP.  Suppose the class has a weakly behaviorally correct positive-data learner whose hypotheses are CFGs.  Then there is a positive-data learner that identifies in the strong Gold sense the canonical finite-state relative-prime presentation, and hence its fixed canonical CFG compilation.
\end{theorem}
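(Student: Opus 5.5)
The plan is to run the given weak learner $\mathcal W$ as an oracle and, from its current CFG hypothesis, compute an approximation of the canonical tuple $\mathfrak G^{\mathrm{fs}}_{L,h}$ of Definition~\ref{def:canonical-fsrp}. The argument is meant to stabilize once $L(\mathcal W(T[n]))=L$ and enough data has been seen. The key observation is this. Once we hold a grammar $G$ with $L(G)=L$, membership in $L$ is decidable, and therefore so is membership in any live relative class. Proposition~\ref{prop:corr-cfl} gives $C=h^{-1}(h(u))\cap x^{-1}Ly^{-1}$ for any accepting context $(x,y)$ of $u$, so a context-free grammar for each live class, and for each $\Corr_P$, is computable from $G$ and a representative with a context. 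The nonregular step, namely passing to $\Val_P$ and then to its residual controller, will not be computed directly. Instead the learner enumerates finite-state candidates and eliminates them in the limit, in Gold style.

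Concretely, at stage $n$ the learner does the following. First, it takes $G_n:=\mathcal W(T[n])$ and the observed substrings of $\operatorname{content}(T[n])$, each with a sample context. Second, it partitions these substrings by $h$-type together with mutual context membership tested in $L(G_n)$, exactly as with the signatures of Lemma~\ref{lem:no-false-merge}. Third, it decides primality of each observed class. By Lemma~\ref{lem:shortest-word-cut} only the finitely many cuts of a shortest observed representative matter. Each cut gives a candidate exact product $Y\cdot Z$, whose inclusion in $X$ holds automatically by congruence. The reverse inclusion $X\subseteq Y\cdot Z$ is not decidable for CFLs in general, so the learner searches for a refuting word of $X\setminus Y\cdot Z$ up to length $n$. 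A composite class is never refuted, and a true prime is refuted at some finite stage, so this test converges. Fourth, it enumerates DFAs $\mathcal D_1,\mathcal D_2,\dots$ over the current prime alphabet, one tuple per prime. It outputs the first tuple, within a budget of $n$, that agrees with the sets $\Val_P\cap\Pcal^{\le n}$, which are computable by testing $\mathcal D$ against correctness and pleonasticity with finitely many class-membership tests. Finally, it minimizes the output to residual form and attaches shortlex names.

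Convergence is argued as follows. Because $\mathcal W$ is weakly behaviorally correct, there is a stage $N_0$ after which $L(G_n)=L$, even though $G_n$ may keep changing. After this stage, the learner's computations depend only on $L$ and on the finite sample, and not on the syntax of $G_n$. This is where canonicity comes from. Since $\Pcal$ is finite, eventually every prime has an observed representative with a context, and every composite observed class among the finitely many relevant ones has already been refuted as a prime candidate. The observed prime set and its partition then stabilize to the true one. FSRP makes each $\Val_P$ regular, so a correct DFA tuple appears at some finite index $i^\ast$. Every earlier wrong tuple disagrees with the true $\Val_P$ on some word of finite length, and after that length it is permanently eliminated. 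Hence the learner eventually always outputs the minimal residual controller. Myhill--Nerode and the Residual characterization of FSRP make this controller literally $\mathfrak G^{\mathrm{fs}}_{L,h}$. The start component is computed from the exact decompositions of the accepting classes, using the cut candidates of Corollary~\ref{cor:finite-candidates} and the same refute-in-the-limit test, and there are finitely many accepting classes by the bound $|\Acc|\le|M|$. The fixed compilation $G^{\mathrm{res}}_{L,h}$ then follows.

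I expect the main obstacle to be the instability before stage $N_0$. Before that point the grammars $G_n$ can be wrong in either direction, so spurious "primes" and inconsistent partitions may appear. The mind-change bookkeeping must ensure that stale eliminations do not persist once $L(G_n)=L$. The simplest way to achieve this is to recompute everything from scratch at each stage, so that after $N_0$ the output is a function only of $L$ and of $n$, through finite-length consistency windows that grow with $n$. Only the limit matters, so this suffices. A secondary worry is that the learner never sees proof of exactness of candidate cuts. I would handle this asymmetrically, using "refutation found" as the only positive evidence for primality, as described above.
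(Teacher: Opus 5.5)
Your proposal is correct and follows essentially the paper's argument: once $L(G_n)=L$ you use a relative-class oracle, stabilize the prime alphabet and the start decompositions by refuting cut candidates in the limit, decide validity from class tests, run a Gold-style DFA enumeration, and canonicalize by identifying equal residuals. The differences are cosmetic. You harvest candidate classes from observed substrings rather than enumerating all live classes with the decidable factorhood test, and you take the first consistent DFA tuple and then minimize rather than choosing a minimum-state consistent DFA. Your cut test also works with any representative, not just a shortest one, because an exact composite $X=Y\cdot Z$ splits every word of $X$.
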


We first isolate the effective facts used by the lifting construction.

\begin{lemma}[Effective relative-class oracle from a correct CFG]
\label{lem:fsrp-class-oracle}
Let $G$ be a CFG with $L(G)=L$, where $L$ is $h$-substitutable.  Then factorhood and membership in every live relative class are decidable uniformly from $G$ and $h$.
\end{lemma}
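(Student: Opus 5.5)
The plan is to reduce both tasks to standard decidable CFG questions, with $h$ contributing only a finite type check. Since $M$ is finite and $h$ is given on letters, $h(w)$ can be computed for any word $w$. For factorhood, $x\in\Fact(L)$ iff $\D_L(x)\neq\varnothing$, i.e.\ iff $L\cap\Sigma^*x\Sigma^*\neq\varnothing$. The language $\Sigma^*x\Sigma^*$ is regular, with an automaton built from $x$ uniformly. Hence the intersection with $L(G)$ is context-free, with a grammar computed by the triple construction. Its emptiness is then decidable by the productivity test. So factorhood is decidable uniformly from $(G,x)$.

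For class membership, I would represent a live relative class by one of its words $u$, already known to be a factor. The key tool is the type--context collapse (Lemma~\ref{lem:type-context-collapse}). First compute an explicit accepting context $(x,y)$ with $xuy\in L$. To do this, enumerate pairs $(x,y)$ in shortlex order and test $xuy\in L(G)$ by CYK. The search terminates because $u$ is live. For a candidate word $v$, I claim
\[
v\in[u]_{L,h}\iff h(v)=h(u)\ \text{and}\ xvy\in L(G).
\]
The forward direction holds since $v\equiv_L u$ preserves every context. The backward direction is exactly $h$-substitutability: $v$ and $u$ have equal type and share the context $(x,y)$. This is the same representation $C=h^{-1}(h(u))\cap x^{-1}Ly^{-1}$ used in Proposition~\ref{prop:corr-cfl}. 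Both conditions are decidable, by finite evaluation of $h$ and by CYK membership in $L(G)$.

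If a more effective context search is wanted, it can be extracted from the grammar for $L(G)\cap\Sigma^*u\Sigma^*$. That grammar is nonempty, so it yields a shortest word $w$ by the standard productive-derivation construction. Any occurrence of $u$ in $w$ then gives $(x,y)$.

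The main point needing care is uniformity. The procedure must use only $G$, $h$, and a representative word, not any non-effective knowledge of the class. It also must not rely on $h$-substitutability beyond the single collapse step. Both conditions hold in the construction above, since each step is a standard effective CFG operation and the collapse is used only in the backward direction of the membership criterion. I expect no deeper obstacle. The one subtlety is the unit class, which under unit separation is $\{\eps\}$ and is tested directly by checking $v=\eps$.
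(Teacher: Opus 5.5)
Your proposal is correct and follows the paper's proof essentially step for step. Factorhood reduces to emptiness of the context-free language $L(G)\cap\Sigma^*u\Sigma^*$, and membership uses an enumerated accepting context $(x,y)$ with the criterion $v\in[u]_{L,h}\iff h(v)=h(u)$ and $xvy\in L(G)$, where congruence gives the forward direction and $h$-substitutability the converse. Your separate treatment of the unit class is unnecessary, and it would appeal to unit separation, which this lemma does not assume; the same criterion already covers $u=\eps$.
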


\begin{proof}
For a word $u$, factorhood is equivalent to
\(L(G)\cap\Sigma^*u\Sigma^*\ne\varnothing.\)
The intersection of a CFL with a regular language is context-free and CFL emptiness is decidable, so factorhood is decidable.  If $u$ is live, enumerate pairs $(\ell,r)$ until $\ell ur\in L(G)$; such a pair exists and CFG membership is decidable.  Fix one such context.  Then, for every $v$,
\(v\in[u]_{L,h} \iff h(v)=h(u)\ \text{and}\ \ell vr\in L(G).\)
The forward implication follows from $u\equiv_Lv$.  Conversely, if the right-hand side holds, $u$ and $v$ have the same $h$-type and share the accepting context $(\ell,r)$, so $h$-substitutability gives $u\equiv_Lv$ and hence $u\theta_{L,h}v$.
\end{proof}

\begin{lemma}[Prime-spectrum stabilization]
\label{lem:fsrp-prime-stabilization}
Assume the relative prime spectrum is finite.  Using the oracle of Lemma~\ref{lem:fsrp-class-oracle}, there is a computable stagewise procedure whose finite prime hypotheses eventually stabilize to exactly $\Pcal_{L,h}$.
\end{lemma}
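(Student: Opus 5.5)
The plan is a stagewise computation in which stage $n$ only examines words of length at most $n$. It is modelled on a limit computation over the oracle of Lemma~\ref{lem:fsrp-class-oracle}.

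First I will enumerate the live words of length at most $n$ and group them into relative classes. Two live words $u,v$ are put in the same class iff $v\in[u]_{L,h}$, which the oracle decides. For each such class $X$ I record its shortest member, chosen shortlex least, as the candidate shortest word; call it $w_X^{(n)}$. Because shortest words exist, the class $X$ and its shortest representative $\omega(X)$ are fixed from stage $|\omega(X)|$ onward.

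Second, I test primality of $X$ by running over the finitely many binary cuts $w_X=uv$, as allowed by Lemma~\ref{lem:shortest-word-cut}. A cut is \emph{refuted at stage $n$} if some word $x\in X$ with $|x|\le n$ has no factorization $x=x_1x_2$ with $x_1\in[u]$ and $x_2\in[v]$. Each membership test here is decidable by the oracle. The stage-$n$ hypothesis $\Pcal^{(n)}$ consists of the classes found so far all of whose cuts are refuted; single-letter classes are always included, since their shortest word has no cut.

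Correctness rests on two points. If $X$ is prime, then for each cut the difference $X\setminus([u]\cdot[v])$ is nonempty, because $[u],[v]$ are live non-unit classes and primality forbids an exact split. So each cut has a refuter of some finite length, and from some stage on $X\in\Pcal^{(n)}$ forever. If $X$ is composite, then by Lemma~\ref{lem:shortest-word-cut} some cut of its shortest word realizes an exact factorization $X=[u]\cdot[v]$. That cut can never be refuted, since every $x\in X$ factors through it, so $X$ is never admitted.

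The main obstacle is finiteness, because infinitely many composite classes appear over the stages. They never enter the hypothesis, since refutation is monotone in $n$ and a genuine cut is never refuted. The primes are finite in number, so there is a stage after which all of them have appeared and all their cuts are refuted. Hence $\Pcal^{(n)}=\Pcal_{L,h}$ eventually, and the procedure never needs to know $|\Pcal|$ in advance. One point needs care: class identity must be represented canonically by the shortlex least word so that the hypotheses literally stabilize. This is settled once the stage reaches the maximum of the $|\omega(P)|$ and of the refuter lengths, which is a version of $\eta_{\rm cut}$ given by Lemma~\ref{lem:finite-cut-separation}.
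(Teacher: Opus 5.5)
Your argument is correct and is essentially the paper's proof: you refute the finitely many cuts of the shortlex-least shortest word with counterexamples decided by the class oracle, you note that a composite class keeps an unrefutable genuine exact cut by Lemma~\ref{lem:shortest-word-cut}, and you use finiteness of $\Pcal_{L,h}$ to get a common certification stage. Your length-budget staging is a concrete dovetailing of the paper's enumeration, and it also bounds the stabilization stage by $\max\{|\omega(P)|,\eta_{\rm cut}(L,h)\}$. There are two small fixes. First, restrict to nonempty words, that is, to live non-unit classes as the paper does; otherwise the unit class $\{\eps\}$ passes your ``all cuts refuted'' test vacuously. Second, nonemptiness of $X\setminus([u]\cdot[v])$ also needs the inclusion $[u]\cdot[v]\subseteq X$, which comes from $uv\in X$ and congruence.
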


\begin{proof}
Enumerate live relative classes by their shortlex least representatives.  By Corollary~\ref{cor:finite-candidates}, every exact binary factorization of a live non-unit class $X$ occurs at one of the finitely many cuts of its shortest representative $x$.  For a cut $x=uv$, let $Y=[u]$ and $Z=[v]$.  Congruence gives $YZ\subseteq X$.  If equality fails, enumerate words of $X$ until a counterexample in $X\setminus YZ$ is found; membership in $YZ$ is decidable by testing the finitely many cuts of the candidate word with the relative-class oracle.  Thus every false decomposition candidate is eventually refuted.

At stage $s$, let $\Pcal_s$ consist of those classes among the first $s$ enumerated live non-unit classes for which every nontrivial cut candidate of the shortest representative has already been refuted by stage $s$; unresolved classes are simply omitted from $\Pcal_s$.  If $X$ is composite, at least one genuine exact cut is never refuted, so $X$ never enters $\Pcal_s$.  If $X$ is prime, all of its finitely many cut candidates are false and hence are eventually refuted, after which $X$ belongs permanently to every subsequent $\Pcal_s$.  Since the true relative prime spectrum is finite, all true primes are enumerated and certified by some common finite stage.  Therefore $\Pcal_s$ eventually stabilizes exactly to $\Pcal_{L,h}$.
\end{proof}

\begin{lemma}[Effective validity after prime stabilization]
\label{lem:fsrp-validity-oracle}
After the prime alphabet has stabilized, membership in each language $\Val_P$ is decidable uniformly.
\end{lemma}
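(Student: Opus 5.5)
Given a prime word $\alpha=A_1\cdots A_k\in\Pcal^{\ge2}$ and a target prime $P$, I would decide $\alpha\in\Val_P$ by reducing both correctness and non-pleonasticity to finitely many calls of the relative-class oracle of Lemma~\ref{lem:fsrp-class-oracle}. The stabilized alphabet comes from Lemma~\ref{lem:fsrp-prime-stabilization}. There each prime $Q$ is presented by its shortlex least representative $\omega(Q)$. I extend this to the morphism $\phi(\alpha)=\omega(A_1)\cdots\omega(A_k)$, as in the proof of Proposition~\ref{prop:corr-cfl}. Since $\theta_{L,h}$ is a congruence, $\phi(\beta)\in[\overline\beta]$ for every nonempty prime word $\beta$. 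Hence every quotient-product question about a prime word becomes a membership question about one explicit terminal word.

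\emph{Correctness.} By definition, $\alpha\in\Corr_P$ iff $[\overline\alpha]=P$, iff $\phi(\alpha)\in P$. The class $P$ is live and is named by $\omega(P)$. By Lemma~\ref{lem:fsrp-class-oracle}, this membership is decided by two checks: first $h(\phi(\alpha))=h(\omega(P))$, and second $\ell_P\phi(\alpha)r_P\in L(G)$. Here $(\ell_P,r_P)$ is the first accepting context for $\omega(P)$ found by enumeration, and it exists because $P$ is live. The computation of $h$ is trivial because $M$ is finite and $h$ is given on letters.

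\emph{Non-pleonasticity.} The word $\alpha$ has at most $\binom{k}{2}$ proper contiguous blocks $\beta=A_i\cdots A_j$ with $j-i+1\ge2$. For each block and each of the finitely many stabilized primes $Q$, I test whether $\phi(\beta)\in Q$ with the same oracle. The block $\beta$ is prime-valued iff some such test succeeds. If $\phi(\beta)$ is not a factor of $L$, it is dead and lies in no prime; this is caught either by the decidable factorhood test or simply because every prime test fails. Then $\alpha\in\Val_P$ iff the correctness test succeeds and every block test fails. This is a finite computation, uniform in $P$, $\alpha$ and the oracle grammar $G$.

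\emph{Main obstacle.} The delicate point is that the alphabet must be \emph{exactly} $\Pcal_{L,h}$, and the answers must not depend on the chosen representatives. The first condition is exactly what ``after stabilization'' guarantees. I would argue that the definitions of $\Corr_P$ and $\Pleon$ refer only to quotient products. Since $\theta_{L,h}$ is a congruence, any choice of representatives gives the same answers. I would also verify that no proper block can be prime-valued through a prime outside the stabilized list, which is immediate once the list equals $\Pcal_{L,h}$. Finally, I would record that the correct grammar $G$ needed by the oracle is supplied in the limit by the weakly behaviorally correct learner. Any later change of $G$ to another correct grammar does not change the decided answers, because they depend only on $L(G)=L$.
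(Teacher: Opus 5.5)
Your proposal is correct and follows the paper's argument: concatenate fixed canonical representatives of the prime symbols, then use the relative-class oracle of Lemma~\ref{lem:fsrp-class-oracle} to decide correctness (the class of the whole word) and non-pleonasticity (the class of every proper block of length at least two, tested against the finitely many stabilized primes). Your additional checks — independence from the chosen representatives and from the particular correct grammar $G$ — are sound and only make explicit what the paper leaves implicit.
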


\begin{proof}
For a prime sequence $\alpha=P_1\cdots P_k$, concatenate fixed canonical representatives of the $P_i$.  The relative-class oracle decides the quotient product of every contiguous block.  Hence it decides whether the whole sequence returns to $P$ and whether any proper contiguous block of length at least two returns to a prime.  These are exactly correctness and non-pleonasticity, so membership in $\Val_P$ is decidable.
\end{proof}

\begin{lemma}[Canonical DFA convergence]
\label{lem:fsrp-dfa-convergence}
Let $R\subseteq\Pcal^*$ be regular and suppose membership in $R$ is decidable.  There is a computable sequence of normalized DFAs that eventually stabilizes to the canonical minimal DFA of $R$.
\end{lemma}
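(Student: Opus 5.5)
This is Gold-style identification by enumeration of DFAs, checked against a growing finite set of labelled examples. Fix a computable enumeration $D_0,D_1,\ldots$ of all normalized DFAs over the finite alphabet $\Pcal$, ordered by number of states, with ties broken by a fixed encoding. Here normalized means minimal, with states renumbered in shortlex order of their least access words, so that each regular language has exactly one normalized DFA. Also fix the shortlex enumeration $x_0,x_1,\ldots$ of $\Pcal^*$. At stage $s$, use the decision procedure for $R$ to compute the labels ${\bf1}[x_i\in R]$ for $i<s$. Then output the first $D_j$ with $j\le s$ that agrees with all $s$ labels; if none does, output a default DFA. Each stage is a finite computation, since only decidable membership queries and DFA runs are involved.

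For convergence, let $D_{j^\ast}$ be the canonical minimal DFA of $R$, normalized; it exists because $R$ is regular. It agrees with every label at every stage, so from stage $j^\ast$ on the output has index at most $j^\ast$. Each $D_j$ with $j<j^\ast$ recognizes a language different from $R$, since normalized DFAs are unique per language. So there is a least word $x_{i_j}$ in the symmetric difference of $L(D_j)$ and $R$. Once $s>\max_{j<j^\ast}i_j$, and also $s\ge j^\ast$, every earlier candidate has been refuted. From then on the output is exactly $D_{j^\ast}$, and it stays there.

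The main care point is the normalization. Stabilization must be to the canonical object itself, not to an equivalent automaton. Uniqueness of the normalized minimal DFA (Myhill--Nerode) is what guarantees that every smaller index is eventually refuted. It also keeps the limit from wandering between isomorphic copies. If the lemma is later applied to the residual controller, the same normalization lets the DFA be read directly as the residual-language states $\Qcal$.
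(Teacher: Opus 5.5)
Your proof is correct and is essentially the paper's argument. Both use Gold-style identification by enumeration of normalized DFAs, checked against the decidable membership labels, and both converge because the finitely many candidates preceding the canonical minimal DFA each have a finite distinguishing word. The differences are cosmetic: you enumerate only minimal normalized DFAs, which is computable because minimality is decidable, and you test against a shortlex prefix of $\Pcal^*$. The paper instead takes all normalized DFAs with at most $t$ states, tests words of length at most $t$, and selects by least state count.
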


\begin{proof}
At stage $t$, enumerate normalized complete DFAs with at most $t$ states and retain those agreeing with $R$ on all words of length at most $t$.  If no such DFA exists, output a fixed normalized one-state default DFA for that stage.  Otherwise choose first by minimum number of states and then by a fixed canonical encoding.  Let the minimal DFA of $R$ have $s$ states.  There are only finitely many normalized DFAs with at most $s$ states.  Every non-equivalent one differs from $R$ on a finite word; let $b$ be the maximum length of a shortest distinguishing word among these finitely many competitors.  For $t\ge\max\{s,b\}$ every smaller or same-size wrong DFA is eliminated, and the chosen DFA is exactly the canonical minimal DFA.  It remains so thereafter.
\end{proof}

\begin{lemma}[Stabilization of the lexical and start structure]
\label{lem:fsrp-start-stabilization}
Under unit separation and finite relative prime spectrum, the lexical component and the start component of $\mathfrak G^{\mathrm{fs}}_{L,h}$ are limit-computable from a correct CFG.
\end{lemma}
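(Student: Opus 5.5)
We handle the lexical component and the start component separately, using only the class oracle of Lemma~\ref{lem:fsrp-class-oracle} and the stabilized prime set of Lemma~\ref{lem:fsrp-prime-stabilization}.

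\textbf{Lexical component.} This part is fully computable once the primes have stabilized. For each letter $a\in\Sigma$ we decide factorhood of $a$ with the oracle. If $a$ is live, unit separation makes $[a]$ non-unit. By the length-one argument, $[a]$ is then prime. So $\mathrm{Lex}(P)=P\cap\Sigma$ is obtained by testing, for each of the finitely many letters, membership in each stabilized prime $P$. Since $\Sigma$ is finite, this map is constant as soon as the prime hypothesis $\Pcal_s$ is constant. It therefore stabilizes.

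\textbf{Start component.} This needs a genuine limit argument, because the number of accepting classes need not be bounded without extra hypotheses. In the intended $h$-substitutable setting, however, each nonempty $L_m=L\cap h^{-1}(m)$ is a single accepting class. There are therefore at most $|M|$ of them, one per type $m$.

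At stage $t$ we proceed as follows.
\begin{enumerate}[label=(\roman*)]
\item For each $m\in M$, search the words of length at most $t$ for a member of $L(G)$ of type $m$, and take the shortlex least such word $w_m$.
\item For each $w_m$ found, compute the exact prime decompositions of $[w_m]$ among the prime-labelled cuts of a shortest word, which is legitimate by Corollary~\ref{cor:finite-candidates}. Exactness of a candidate cut $\overline{P_1\cdots P_k}=[w_m]$ is only co-semidecidable: failure of equality is witnessed by some $x\in[w_m]$ with no matching cut. So we keep a candidate at stage $t$ exactly when no refuter of length at most $t$ has been found.
\item Output the collection of surviving decompositions, together with $S\to\eps$ if $\eps\in L(G)$, which is decidable.
\end{enumerate}

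Each of these pieces stabilizes. The shortlex least accepting word of each type is fixed from some stage on, and there are finitely many types. The shortest word of the class is fixed too. Among the finitely many cut candidates, each false one is refuted at some finite stage. True exact decompositions are never refuted, since every word of the class factors through them. Prime labels of cuts are evaluated against $\Pcal_s$, which is eventually exact. Taking the maximum of these finitely many stabilization stages shows that the start component eventually equals $\mathrm{Start}$.

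The main obstacle is that exactness of a decomposition is refutable but not verifiable. We therefore cannot certify a start rule at any finite stage; the argument relies on finiteness of the candidate set and on the never-refuted property of true decompositions. A secondary point is that cuts labelled by not-yet-stabilized primes may be wrong early. This only affects finitely many stages, because prime stabilization precedes the final stage.
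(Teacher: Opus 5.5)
Your proposal is correct and follows essentially the same route as the paper: finitely many letter-membership tests give $\mathrm{Lex}$; $h$-substitutability gives at most $|M|$ accepting classes; shortest-word cuts give a finite candidate list; false candidates are eventually refuted and true exact decompositions never are. The only cosmetic difference is that you locate the accepting classes in the limit by a length-bounded search, whereas the paper decides nonemptiness of each $L\cap h^{-1}(m)$ directly via CFL emptiness.
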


\begin{proof}
There are finitely many letters, so $\mathrm{Lex}(P)=P\cap\Sigma$ is determined by finitely many class-membership tests.  For each $m\in M$, the language $L\cap h^{-1}(m)$ is context-free and emptiness is decidable.  If nonempty, $h$-substitutability and the empty context imply that it is a single accepting relative class, so there are at most $|M|$ such classes.

For an accepting non-unit class $X$, choose a shortest representative $x$.  Corollary~\ref{cor:finite-candidates} gives a finite list of all exact prime-factorization candidates directly from the cuts of $x$.  For each candidate $\alpha$, congruence gives $\bar\alpha\subseteq X$ whenever its quotient product is $X$; if equality fails, enumerate words of $X$ until a counterexample in $X\setminus\bar\alpha$ appears.  Thus every false candidate is eventually eliminated and all exact start decompositions stabilize.  If $\eps\in L$, the empty start alternative is detected directly and recorded separately.
\end{proof}

\begin{proof}[Proof of Theorem~\ref{thm:fsrp-strong}]
Let $\mathcal W$ be the assumed weakly behaviorally correct learner and let $T$ be a positive text for a target $L$.  At stage $t$, apply the preceding canonicalization procedures with finite search budget $t$ to the CFG $G_t=\mathcal W(T[t])$ and output the resulting finite-state relative-prime presentation.

By weak behavioral correctness there is $t_0$ such that $L(G_t)=L$ for every $t\ge t_0$.  From that point onward Lemma~\ref{lem:fsrp-class-oracle} supplies the exact relative-class predicate, and Lemma~\ref{lem:fsrp-prime-stabilization} makes the finite prime alphabet stabilize.  Lemma~\ref{lem:fsrp-validity-oracle} then gives exact membership predicates for all $\Val_P$.  Since the target has FSRP, each $\Val_P$ is regular, so Lemma~\ref{lem:fsrp-dfa-convergence} makes every canonical minimal DFA stabilize.  Equality of residual languages represented by the stabilized DFAs is decidable by standard DFA equivalence; identifying equal residuals across the finitely many start languages $\Val_P$, and then discarding any occurrence of the empty residual $\varnothing$, therefore yields exactly the shared nonempty partial residual-state controller $\Qcal_{L,h}$ of Definition~\ref{def:canonical-fsrp}.  Lemma~\ref{lem:fsrp-start-stabilization} simultaneously stabilizes the lexical and start components.

All choices are canonical---shortlex representatives, normalized DFA encodings, and residual languages themselves as controller states---and depend only on $(L,h)$ once the weak grammar is language-correct.  Hence there is a stage after which the whole tuple $\mathfrak G^{\mathrm{fs}}_{L,h}$ is output literally unchanged.  The fixed compilation of Section~\ref{sec:fsrp} therefore also stabilizes to its canonical finite CFG.  This is strong Gold identification.
\end{proof}

\begin{remark}
Theorem~\ref{thm:fsrp-strong} is a computable limit result, not a polynomial-time claim.  The state bound of the target residual controller is not assumed known, and the DFA search is deliberately enumeration-based.  Its contribution is the canonical relative structural target being reconstructed, not the standard finite-automaton enumeration argument.  Theorem~\ref{thm:ptld-strong} is the efficient algebraically rigid subcase; the $36$-element UF--FRP witness lies outside that direct branch but inside the finite-state branch.
\end{remark}

\section{Current hierarchy and separations}

Throughout the following diagram, $L$ is assumed $h$-substitutable and unit-separated.  Implications involving FRP and the quadratic rule bound additionally assume a finite relative prime spectrum.

The established implication structure is now
\[
\begin{gathered}
\text{factor cancellation}
\Longrightarrow
\mathrm{PTLD}
\Longrightarrow
\mathrm{UF},
\\[1mm]
\mathrm{PTLD}
\Longrightarrow
\mathrm{VTE}
\Longrightarrow
\mathrm{VTE}_1
\Longrightarrow
\mathrm{FRP}
\Longrightarrow
\mathrm{FSRP},
\\[1mm]
\mathrm{PTLD}
\Longrightarrow
\mathrm{VTD}.
\end{gathered}
\]
Thus VTE and VTD are independent consequences of PTLD; no implication between VTE and VTD is asserted here.
The structural arrows out of PTLD are Theorems~\ref{thm:ptld-uf} and~\ref{thm:ptld-vte-vtd}.  The important non-implication $\mathrm{UF}\not\Rightarrow\mathrm{FRP}$ is Corollary~\ref{cor:uf-frp-separation}; Example~\ref{ex:nonuf-frp} gives the reverse non-implication, so UF and FRP are incomparable.

Theorems~\ref{thm:ptld-uf} and~\ref{thm:ptld-vte-vtd} give
\[
  \mathrm{PTLD}\Longrightarrow \mathrm{UF}+\mathrm{VTE}+\mathrm{VTD}+\mathrm{FRP},
  \qquad |\Vcal|\le q^2.
\]
The two nonregular direct-presentation examples clarify that PTLD is sufficient but not necessary:
\begin{center}
\begin{tabular}{lccccc}
\toprule
 & nonregular CFL & $h$-substitutable & PTLD & UF & FRP\\
\midrule
$L_0$ & yes & yes & no & yes & yes\\
$L^\star$ & yes & yes & yes & yes & yes\\
\bottomrule
\end{tabular}
\end{center}

\begin{center}
\scriptsize
\begin{tabular}{lccccc}
\toprule
Example & finite quotient & PTLD & exact UF & VTE$_1$ & presentation\\
\midrule
$L_0=\{a^nb^n\}^*$ & no & no & yes & yes & FRP, $\rho=3$\\
$L^\star$ & no & yes & yes & yes & FRP, $\rho=3$\\
finite FRP/non-UF witness & finite & no & no & not needed & FRP\\
$36$-element quotient witness & yes & no & yes & no & FSRP, not FRP\\
$L^{\mathrm{wrap}}$ & no & no & yes & no & nonregular CFL; FSRP, not FRP\\
\bottomrule
\end{tabular}
\end{center}

The profile $\Pi(L,h)=(q,r_{\rm res},m_{\rm ex};\rho_\partial,\kappa_\partial)$ separates prime spectrum, residual splitting, exact-factor multiplicity, and under-saturation complexity.  Section~\ref{sec:36} realizes the sharp combination $m_{\rm ex}=1$ with $\rho_\partial=\infty$, showing that exact-factor ambiguity and saturation defect are independent axes.

The learning hierarchy mirrors the presentation hierarchy but is not identical to it.  On the PTLD branch, Theorem~\ref{thm:ptld-strong} gives direct canonical SGOLD reconstruction with polynomial-time updates and an explicit finite characteristic sample; Lemma~\ref{lem:cut-prefix-escape} identifies its remaining quantitative parameter with prime-prefix escape.  The lexically anchored PTLD subbranch has $\eta_{\rm cut}=0$ and therefore polynomial characteristic data in explicit canonical grammar size, while Proposition~\ref{prop:compact-thickness-obstruction} shows that no such bound can be transferred uniformly from arbitrary compact CFG size.  On the broader FSRP branch, Theorem~\ref{thm:fsrp-strong} gives computable strong reconstruction of the minimal finite-state controller whenever a weakly behaviorally correct CFG learner is available.  The $36$-element witness and the infinite-quotient wrapped witness lie in the second branch but not the first, while Example~\ref{ex:Lstar} lies in the polynomial-data PTLD subbranch and is outside every finite $k,\ell$-substitutable class.

\section{Open problems and next steps}

\begin{question}[Absolute minimization of the UF--FRP separation]
The $36$-element witness of Section~\ref{sec:36} is quotient-minimal within its own monoid: every proper quotient has FRP.  Is there nevertheless a different finite monoid of size below $36$ admitting a unit-separated $h$-substitutable pair with exact UF but without FRP?  The present computation does not establish an absolute lower bound.
\end{question}

\begin{question}[Existence of non-FSRP finite-prime examples]
Does there exist a context-free fixed-$h$ pair $(L,h)$ that is $h$-substitutable and unit-separated, has finite relative prime spectrum, but fails FSRP?  Equivalently, can some canonical valid right-hand-side language $\Val_P$ be nonregular under these hypotheses?  We are not aware of such an example.  The wrapped witness of Section~\ref{sec:wrapped} shows that FSRP can be nontrivial even for a nonregular language with infinite relative quotient, but it still satisfies FSRP.  Thus the present results do not rule out the possibility that FSRP holds automatically for all context-free fixed-$h$ pairs with finite relative prime spectrum.
\end{question}

\begin{question}[FSRP versus context-free controller presentation]
If non-FSRP finite-prime examples exist, can one find one for which every $\Val_P$ is context-free but some $\Val_P$ is nonregular?  Such an example would separate finite-state structural presentation from finite context-free-controller presentation.  Proposition~\ref{prop:corr-cfl} shows that the corresponding correct-return languages $\Corr_P$ are already context-free; the extra difficulty lies entirely in factor-minimalization.
\end{question}

\begin{question}[Efficient learning beyond PTLD]
Theorem~\ref{thm:fsrp-strong} learns the FSRP residual controller only by a general limit enumeration of finite automata.  Which semantic conditions, weaker than PTLD but stronger than bare FSRP, provide an effective a priori bound or a polynomial procedure for recovering the minimal controller?  In particular, can bounded residual splitting, bounded exact-factor multiplicity, or a bounded defect-state parameter replace prime-target thinness in an efficient strong learner?
\end{question}

\begin{question}[Polynomial prime-prefix escape]
Lemma~\ref{lem:cut-prefix-escape} shows that under PTLD the remaining
cut-separation parameter is exactly a shortest prime-prefix escape length.
Can $\eta_{\rm cut}(L,h)$ be bounded by a polynomial in $q$ and $\tau$?
Equivalently, do the finitely many false prefix classes exposed by cuts of
canonical prime words always admit polynomial-length escape witnesses?
Corollary~\ref{cor:lexically-anchored-polynomial} answers this positively
with $\eta_{\rm cut}=0$ on the lexically anchored branch.
\end{question}

\begin{question}[Representation-size transfer under restricted presentations]
Proposition~\ref{prop:compact-thickness-obstruction} rules out any general
polynomial bound on canonical thickness in the size of an arbitrary compact
CFG, even for finite unary PTLD languages.  For which familiar restricted
CFG subclasses can $q$, $\tau$, and $\eta_{\rm cut}(L,h)$ nevertheless be
bounded polynomially in the size of the given generating grammar?  A
positive answer for a natural linear fixed-$h$ subclass would connect the
general weak polynomial bounds of the fixed-$h$ reconstruction paper
directly to polynomial strong structural learning.
\end{question}

\begin{question}[Multidimensional extension]
What is the correct analogue of exact relative prime factorization, semantic residual splitting, and under-saturation controllers for multidimensional syntactic congruence and bounded-fan-out MCFGs?  Yoshinaka and Clark's tuple congruence and non-permuting linear regular functions provide the natural compositional substrate~\cite{YoshinakaClark2012}.
\end{question}

\section{Conclusion}

The central structural separation is
\[
  \mathrm{UF}\not\Rightarrow\mathrm{FRP}
  \qquad\text{and}\qquad
  \mathrm{FRP}\subsetneq\mathrm{FSRP}.
\]
The $36$-element quotient witness has globally unique exact factorization but infinitely many valid prime returns, while the wrapped language $L^{\mathrm{wrap}}$ lifts the same phenomenon to a nonregular context-free language with infinite relative quotient and finite prime spectrum.  Together they show that exact-factor ambiguity, quotient infinitude, and direct-rule infinitude are genuinely different axes; FSRP captures the finite-state presentation level that remains when a finite direct rule list fails.

PTLD isolates a rigid branch of this theory.  It forces unique exact factorization, valid-tail exactness and determinism, and the quadratic rule bound.  The five-prime language $L^\star$ shows that this branch reaches beyond every finite $k,\ell$-substitutable class.  On the learning side, Relative-ASGOLD gives canonical strong reconstruction with polynomial-time updates and an explicit finite characteristic sample; its remaining general data-length parameter is the prime-prefix escape radius, which vanishes on the lexically anchored branch containing $L^\star$.  The unary doubling family shows separately that arbitrary compact CFGs may hide exponential canonical thickness, so representation-size transfer requires additional restrictions.

The main unresolved structural boundary lies outside FSRP: we are not aware of a context-free fixed-$h$ finite-prime example with a nonregular valid-return language.  It also remains open whether PTLD prime-prefix escape is polynomially bounded in canonical parameters.  Natural multidimensional and bounded-fan-out extensions remain to be developed.

\section*{Code and data availability}
The computational artifacts supporting the $36$-element witness are publicly available in a fixed GitHub snapshot at commit
\texttt{e38a9b09fbba7933777d7a00d7de4a7d77c4a916}:
\begin{quote}\small
\url{https://github.com/growupkuriyama-hub/lean_cfg_project/tree/e38a9b09fbba7933777d7a00d7de4a7d77c4a916/LeanCfgProject/Relative-Factorization}
\end{quote}
The snapshot contains the standard-library-only verifier \texttt{verify\_36.py} and the deterministic machine-readable certificate \texttt{certificate\_36.json}.  Appendix~\ref{app:36-computation} describes the theorem-facing checks.

\appendix
\section{Reproducible verification of the 36-element UF--FRP witness}
\label{app:36-computation}

The fixed snapshot cited in the Code and data availability statement contains the theorem-facing verifier \texttt{verify\_36.py} and the corresponding deterministic machine-readable certificate \texttt{certificate\_36.json}.

The file \texttt{verify\_\allowbreak 36.py} is a standard-library-only verifier for all theorem-facing finite claims in Section~\ref{sec:36}.  From the directory containing the two files, running
\begin{verbatim}
python3 verify_36.py --json certificate_36.generated.json
\end{verbatim}
reconstructs the $64$-element transformation monoid, computes the least monoid congruence $\sim_0$ generated by
\(h_0(ab)\sim_0 h_0(abaaab),\)
and then performs the following independent finite checks.

\begin{enumerate}[leftmargin=2.2em]
\item The quotient has $36$ elements and the identity congruence block
      contains only the identity transformation, hence
      $h^{-1}(1)=\{\eps\}$.
\item All $35$ nonidentity quotient classes are reachable by nonempty
      words, and every quotient element has a shortest representative
      of length at most eight.
\item Exact binary fiber-product equality identifies precisely the fifteen relative primes listed in Theorem~\ref{thm:36-finite-structure}.  The equality test is a finite product search comparing the DFA for the target positive fiber with the NFA for the concatenation of the positive factor fibers; a mismatch state is exactly a witness in their symmetric difference.
\item Lemma~\ref{lem:shortest-word-cut} reduces exact factorization to
      cuts of shortest representatives.  There are $222$ prime-labeled
      candidates.  The regular-language equality checker rejects $187$
      candidates and stores a shortest counterexample for each; exactly
      one candidate survives for each of the $35$ live non-unit classes.
\item The quotient multiplication satisfies $q^2=q^3$ for $q=h(baaa)$ and, with $\alpha=h(a)$ and $\beta=h(b)$, also verifies $\alpha q=\alpha q^2$ and $q\beta=q^2\beta$, together with
\(h(a)h(b)=h(a)qh(b)=h(a)q^2h(b)=h(ab).\)
      The proper interval classes occurring in
      $P_aP_q^mP_b$ are exactly
      $[abaaa]$, $[baaab]$, and $[baaabaaa]$, and the verifier checks
      the three exact composite identities displayed in Section~\ref{sec:36}.
\item The tail class is $H=[baaab]$, its unique exact factorization is
      $[b][aaab]$, and the setwise product $\overline{P_q^mP_b}$ is a strict subset of
      $H$ for every $m\ge1$.  The certificate records explicit
      counterexamples for the periodic cases.
\item The prime-avoidance automaton has a productive lasso.  After the
      prefix $P_aP_q^2$ it reaches the state whose whole-prefix class
      is $[abaaa]$; its proper-suffix classes are $[baaa]$ and
      $[baaabaaa]$.  Another $[baaa]$ is a self-loop, while
      the final symbol $[b]$ returns to the prime $[ab]$.
\item\label{check:quotient-minimality} Exhaustive congruence closure yields exactly twenty monoid
      congruences on $M_{36}$.  The nineteen proper quotient sizes are
\(11,9,8,7,7,6,6,5,5,5,4,4,4,3,3,2,2,2,1.\)
      For each induced observer quotient the verifier recomputes the
      live relative classes and relative primes and checks that the
      productive part of the prime-avoidance graph is acyclic.  Hence
      every proper quotient has FRP.  When a proper observer quotient
      maps nonempty words to its identity, the verifier keeps the
      syntactic unit $\{\eps\}$ separate from the positive identity
      fiber, as required by $\theta=\equiv_L\cap\ker h$.
\end{enumerate}

The committed \texttt{certificate\_\allowbreak 36.json} contains the $35$ unique exact factorizations, all $187$ rejected shortest-cut candidates and their counterexamples, the three composite interval certificates, the prime-avoidance lasso data, and the audit record for every proper congruence quotient.  The generated certificate is deterministic and can therefore be compared byte-for-byte with the committed \texttt{certificate\_\allowbreak 36.json} in the fixed repository snapshot, for example by running
\begin{verbatim}
cmp certificate_36.generated.json certificate_36.json
\end{verbatim}
after the verifier completes.

\end{document}